\appto\appendix{\addtocontents{toc}{\protect\setcounter{tocdepth}{1}}}
\appto\listoffigures{\addtocontents{lof}{\protect\setcounter{tocdepth}{1}}}
\appto\listoftables{\addtocontents{lot}{\protect\setcounter{tocdepth}{1}}}
\numberwithin{equation}{section}
\title{Many-body Fu-Kane-Mele index}
\author[1]{Sven Bachmann \thanks{email: \href{sbach@math.ubc.ca}{sbach@math.ubc.ca}}}
\author[2]{Alex Bols \thanks{email: \href{abols01@phys.ethz.ch}{abols01@phys.ethz.ch}}}
\author[3]{Mahsa Rahnama \thanks{email: \href{mahsa.r@math.ubc.ca}{mahsa.r@math.ubc.ca}}}
\affil[2]{Institute for Theoretical Physics, ETH Z{\"u}rich}
\affil[1, 3]{Department of Mathematics, University of British Columbia, Vancouver}
\begin{document}

\date{\today}
\maketitle

\begin{center}
Dedicated to the memory of Professor Huzihiro Araki
\end{center}

\begin{abstract}
We define a $\Z_2$-valued index for stably short-range entangled states of two-dimensional fermionic lattice systems with charge conservation and time reversal symmetry. The index takes its non-trivial value precisely if the `fluxon', the state obtained by inserting a $\pi$-flux through the system, transforms under time reversal as part of a Kramers pair. This index extends the Fu-Kane-Mele index of free fermionic topological insulators to interacting systems.
\end{abstract}

\tableofcontents

\section{Introduction}

The discovery of the integer quantum Hall effect \cite{klitzing1980new} and its understanding in terms of the topology of band insulators \cite{TKNN,avron1983homotopy} set the stage for the study of topological phases of matter. The discovery of the quantum spin Hall effect for time reversal symmetric insulators \cite{kane2005quantum, kane2005z} quickly led to the recognition that the landscape of topological phases of free fermion insulators and superconductors is extremely rich, eventually leading to a complete classification in terms of $K$-theory, summarised in the periodic table of topological invariants~\cite{kitaev2009periodic, ryu2010topological}.

A natural question is whether the phases identified and classified by the periodic table of topological invariants survive if interactions are allowed. The answer is no in general \cite{fidkowski2010effects, fidkowski2011topological}, but yes at least for those phases for which the topological invariant has a clear physical interpretation. This is the case for the integer quantum Hall effect where the invariant is a linear response coefficient. The fact that the Hall conductance is a well-defined quantised invariant of topological phases of interacting matter was explained early~\cite{laughlin1981quantized} and it is by now very well understood, in particular in the framework of quantum lattice systems \cite{hastings2015quantization, bachmann2018quantization, bachmann2020many, oshikawa2020filling, bachmann2020note, kapustin2020hall}. Of course, the stability of the non-interacting phases comes in general with the appearance of new, purely interacting ones such as the fractional Hall phases that can also be classified~\cite{frohlich1997classification}.

The invariant that characterises the quantum spin Hall effect is the $\Z_2$-valued Fu-Kane-Mele index \cite{kane2005quantum, kane2005z, fu2006time, fu2007topological}. It was argued in \cite{ran2008spin} \cite{qi2008spin} that this invariant has a physical interpretation in terms of time reversal transformation properties of defects bound to $\pm\pi$ magnetic fluxes. If the invariant is non-trivial then these defects form a Kramers pair for the time reversal symmetry. We review the argument below. While early works placed some importance on the role of spin, they all recognized that the invariant is in fact only relying on fermionic time reversal symmetry: no component of spin must be conserved for the invariant to be meaningful.

Following these ideas we construct a $\Z_2$-valued index for short-range entangled states of two-dimensional interacting lattice fermion systems that are invariant under a fermionic time reversal symmetry that commutes with a $U(1)$-symmetry. We show that this index agrees with the Fu-Kane-Mele index of non-interacting systems, namely for symmetric quasi-free states, and is stable under locally generated automorphisms that respect the symmetries. In particular, we show that the quantum spin Hall phase indeed persists in the interacting regime.

\subsection{Idea behind the construction}

Let us now summarise the Kramers pair binding idea of \cite{ran2008spin, qi2008spin} and describe how our many-body index is constructed based on this idea.

The systems we consider are interacting generalisations of gapped free fermion Hamiltonians on the two-dimensional square lattice with a time reversal symmetry $\scrT$ that squares to $-\I$ (Altland-Zirnbauer class AII \cite{AltlandZirnbauer}). Such Hamiltonians come in two topological phases distinguished from each other by the $\Z_2$-valued Fu-Kane-Mele index. A nice characterisation of the index is obtained by piercing magnetic flux through the system \cite{DeNittisSchulzBaldes2016spectral} as follows. Let $\scrH$ be a gapped free fermion Hamiltonian such that $\scrT \scrH \scrT^* = \scrH$, and let $\scrH(\phi)$ be the free fermion Hamiltonian obtained from $\scrH$ by inserting flux $\phi$. Then the spectrum of $\scrH(\phi)$ inherits the bulk spectrum of $\scrH$, but may in addition have eigenvalues in the bulk gap. As a function of $\phi$, these eigenvalues form bands within the gap of $\scrH$ corresponding to modes bound to the magnetic flux, see Figure \ref{fig:flux insertion spectrum}. The time reversal symmetry forces the spectra of $\scrH(\phi)$ and $\scrH(-\phi)$ to be equal, and for the time reversal invariant fluxes $\phi_* = 0, \pm \pi$ the time reversal symmetry forces all eigenvalues of $\scrH(\phi_*)$ to be doubly degenerate due to Kramers pairing. Energy bands satisfying these constraints come in two classes distinguished from each other as follows. Let $E_F$ be a generic Fermi energy in the bulk gap of the Hamiltonian $\scrH$ (for simplicity we assume that $E_F$ is not an eigenvalue of $\scrH(\phi_*)$) and let $N \in \Z_2$ be the parity of the number of times a band crosses the Fermi energy in the interval $\phi \in [0, \pi)$. Then $N$ is invariant under any homotopy of the bands that respects the time reversal symmetry constraints. The Fu-Kane-Mele index of $\scrH$ (w.r.t.\ the chosen gap) is equal to $N$. The spectrum of Figure \ref{fig:flux insertion spectrum} corresponds to an $\scrH$ with a non-trivial Fu-Kane-Mele index. This characterisation of the Fu-Kane-Mele index parallels the characterisation for translation invariant free fermion systems through the bulk-boundary correspondence \cite{graf2013bulk, fonseca2020two, bols2023fredholm} where the bands associated to modes bound to the magnetic flux are replaced by bands formed by edge modes.

\begin{figure}[ht]
\centering \includegraphics[height=4cm]{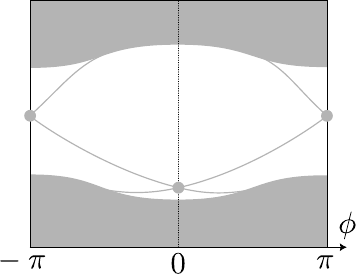}
\caption{The spectrum of a gapped free fermion Hamiltonian $\scrH(\phi)$ as a function of pierced flux. The filled grey regions are bulk spectrum. The bands represent spectrum associated to modes bound to the magnetic flux. The bands cross at time reversal invariant values of the flux $\phi_* = 0, \pm \pi$ forming Kramers pairs represented by dots. The spectrum depicted here corresponds to a non-trivial insulator because for any Fermi energy $E_F$ in the bulk gap, the eigenvalue bands cross $E_F$ an odd number of times (in fact, $\pm 1$) in the interval $\phi \in [0, \pi)$.}
\label{fig:flux insertion spectrum}
\end{figure}

We now imagine preparing the system at $\phi = 0$ in its ground state where all single particle states with energy below $E_F$ in the bulk gap are occupied, and increasing the flux $\phi$ adiabatically from $0$ to $\pi$. The  state is adiabatically transported along this path and at each value of $\phi \in [0, \pi]$ the occupied single particle states are indicated in the left panel of Figure~\ref{fig:flux insertion occupations}. We see that one mode of the Kramers pair at $\phi = \pi$ is occupied while the other mode of the pair remains unoccupied. If the flux is instead decreased from $\phi = 0$ to $\phi = -\pi$ we get a similar picture, except that now the other mode of the Kramers pair becomes occupied, see the right panel of Figure~\ref{fig:flux insertion occupations}. The two states obtained from the ground state by either increasing the flux from $0$ to $\pi$ or decreasing the flux from $0$ to $-\pi$ are mapped into each other by time reversal.

\begin{figure}[ht]
\centering \includegraphics[height=4cm]{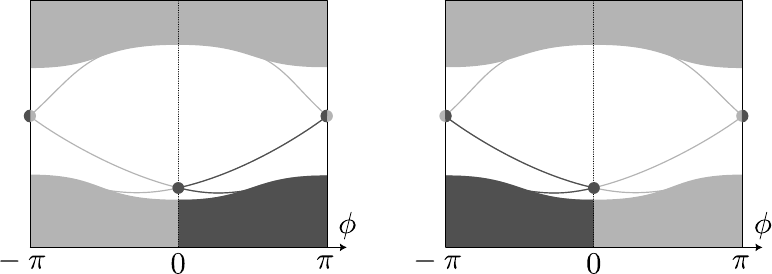}
\caption{The occupied single particle spectrum (dark) as the ground state at $\phi = 0$ is adiabatically transported from $\phi = 0$ to $\phi = \pi$ (left panel) or to $\phi = -\pi$ (right panel). The states at $\pm \pi$ differ only in the occupation of the Kramers pair of modes bound to the $\pm \pi$ flux. The mode of this Kramers pair that is occupied in the state obtained  by adiabatic transport from $0$ to $\pi$ flux is unoccupied in the state obtained by adiabatic transport from $0$ to $-\pi$ flux and vice versa.}
\label{fig:flux insertion occupations}
\end{figure}

Since these two states differ from each other in the occupation of a single Kramers pair it seems reasonable to suspect that the corresponding many-body states form a Kramers pair. The results of this paper provide a rigorous justification of this intuition by showing that the two states obtained by $\pm \pi$ flux insertion are unitary equivalent, and are therefore given by unit vectors in the same irreducible representation of the observable algebra. We prove that the time reversal symmetry is implemented in this representation by an antiunitary operator $T$, and that the vector representatives of the two states are indeed a Kramers pair for $T$.

To define a many-body $\Z_2$ index for a time reversal and $U(1)$ symmetric state $\omega$ we therefore want to construct states $\omega^{\pm}$ corresponding to adiabatically inserting $\pi$ or $-\pi$ magnetic flux in the system. Under the assumption that $\omega$ is a symmetric short-range entangled state one can construct a Hamiltonian $H = \sum_{x} H_x$ with symmetric terms $H_x$ localized near the site $x$ and such that $\omega$ is the unique gapped ground state of $H$. Let $H(\phi) = \sum_x H_x(\phi)$ with $H_x(\phi)$ obtained from $H_x$ by conjugation with the $U(1)$ symmetry restricted to the upper half plane. Since the $H_x$ are symmetric the $H_x(\phi)$ substantially differ from $H_x$ only for those $x$ near the horizontal axis. The Hamiltonians $H(\phi)$ have unique gapped ground states $\omega_{\phi}$ related to $\omega$ by conjugation by the $U(1)$ symmetry restricted to the upper half plane. The states $\omega_{\phi}$ differ from $\omega$ only near the horizontal axis. Moreover, $\omega_\pi = \omega_{-\pi}$ since the symmetry is $U(1)$. We then consider the quasi-adiabatic flow~\cite{hastingswen} for the family $H(\phi)$ (Section \ref{sec:gauge transformation}). Since $H(\phi)$ only changes as a function of $\phi$ near the horizontal axis, the quasi-adiabatic flow acts non-trivially only near this axis. An important technical merit of the quasi-adiabatic flow is that it is generated by a local interaction localised near the horizontal axis and so it can be restricted to act non-trivially only near the left horizontal axis (Section \ref{sec:defect states}). We define the states $\omega^{\pm}$ by acting with this restricted quasi-adiabatic flow for flux $\pm \pi$ on the ground state. Because we work in the half-line gauge, both states $\omega^+$ and $\omega^-$ look like $\omega_{\pi} = \omega_{-\pi}$ far to the left of the origin near the left horizontal axis, and look like $\omega$ far from the origin and away from the left axis, see Figure~\ref{fig:Half-line gauge}.
\begin{figure}[h]
\centering \includegraphics[width=0.5\textwidth]{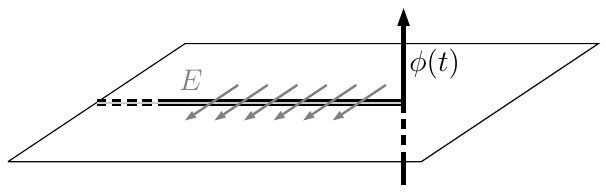}
\caption{The geometry of flux insertion. In the half-line gauge, the electric field $E$ generated by the time-dependent flux $\phi(t)$ exists only along the left half-line, affecting the state only in the vicinity of that line. The states $\omega^\pm$ are the result of the process $\phi=0\to\phi=\pm\pi$.}
\label{fig:Half-line gauge}
\end{figure}
Under the assumption that $\omega$ is short-range entangled it follows that the states $\omega^{\pm}$ are unitarily equivalent (Proposition \ref{prop:almost local unitary equivalence}). The states $\omega^{\pm}$ therefore have unit vector representatives $\Omega^{\pm} \in \caH$ in the GNS representation $(\pi, \caH)$ of either of them. Since they are time reversal images of each other, time reversal is implemented in the representation by an antiunitary operator $T$ (Lemma \ref{lem:TR implementation}). It turns out that either $T^2 \Omega^+ = \Omega^+$ or $T^2 \Omega^+ = - \Omega^+$ (Lemma~\ref{lem:Kramers or not}). The first possibility corresponds to a trivial value for the many-body index, while the second possibility corresponds to a non-trivial value of the index. In the non-trivial case we have that $\Omega^- = T \Omega^+$ and $\Omega^+$ form an orthogonal Kramers pair for~$T$.

\subsection{Organisation of the paper}

The paper is organised as follows. In Section \ref{sec:setup and results} we give precise definitions of the fermionic lattice systems and the class of pure states on these systems that we will study. We then state our main theorem, namely (i) that there is a $\Z_2$-valued index associated to these states, (ii) that this index is an invariant of symmetry protected topological phases (it is stable under locally generated automorphisms which respect the symmetries), and (iii) that this index extends the Fu-Kane-Mele index for the non-interacting quantum spin Hall effect. Specifically, we prove equality with the $mod\:2$ spectral flow of~\cite{DeNittisSchulzBaldes2016spectral}, which is known to be equivalent to the other expressions of the index whenever they can be compared. The rest of the paper is devoted to proving the main theorem. In Section \ref{sec:definition of the index} we define the index of a symmetric short-range entangled (SRE) state by carrying out the construction sketched above. In Section \ref{sec:properties of the index} we first prove that the index is well-defined in the sense that it is independent of various choices made in the definition. We then turn to its properties, namely multiplicativity under stacking and stability under locally generated automorphisms that respect the symmetries. We also show that symmetric product states have trivial index which allows us to extend the index to stably SRE states. Section \ref{sec:free fermion examples} is devoted to the non-interacting situation. We first show that all translation-invariant, time reversal invariant quasi-free states are stably SRE and therefore have a well defined index. We then prove that our index agrees with the Fu-Kane-Mele index for any symmetric stably SRE quasi-free state (in particular for  translation invariant non-interacting spin Hall states). The next Section \ref{sec:proof of the main theorem} collects these results to prove the main theorem. The appendices are devoted to some technical results. In Appendix \ref{app:almost local transitivity} we prove a transitivity result for fermionic SRE states that is used in Section \ref{sec:definition of the index} to show that the $\pm\pi$ flux states are unitarily equivalent. In Appendix \ref{app:flux insertion for free fermions} we describe the free fermion analogue of the adiabatic flux insertion used in Section \ref{sec:free fermion examples}.

\section{Setup and results} \label{sec:setup and results}

\subsection{Observable algebras}

Let $\caA$ be the CAR algebra over $l^2(\Z^2 ; \C^n)$, i.e. $\caA$ is the $C^*$-algebra generated by an identity element and annihilation operators $\{ a_{x, i} \}_{x \in \Z^2, i = 1, \cdots, n}$ that satisfy the canonical anticommutation relations
\begin{equation}
    \lbrace a_{x, i}, a_{y, j} \rbrace = \lbrace a^*_{x, i}, a^*_{y, j} \rbrace = 0, \quad \lbrace a_{x, i}, a^*_{y, j} \rbrace = \delta_{x, y} \delta_{i, j}
\end{equation}
for all \emph{sites} $x, y \in \Z^2$ and all $i, j = 1, \cdots, n$. For any $\Gamma \subset \Z^2$ we denote by $\caA_{\Gamma}$ the unital $C^*$-subalgebra of $\caA$ generated by the $\{ a_{x, i} \}_{x \in \Gamma, i=  1, \cdots, n}$. We also write $\caA_x = \caA_{\{x\}}$ for any site $x \in \Z^2$. The \emph{fermion parity} is the *-automorphism $\theta$ of $\caA$ uniquely determined by $\theta(a_{x, i}) = -a_{x, i}$ for all $x \in \Z^2$ and all $i = 1,\ldots, n$. We have $\theta( \caA_{\Gamma} ) = \caA_{\Gamma}$ for all $\Gamma \subset \Z^2$. An operator $O \in \caA$ is called \emph{even} if $\theta(O) = O$, \emph{odd} if $\theta(O) = -O$ and \emph{homogeneous} if it is either even or odd. For any $\Gamma \subset \Z^2$ the subset of even operators of $\caA_{\Gamma}$ is a $C^*$-subalgebra of $\caA_{\Gamma}$ which we denote by $\caA_{\Gamma}^+$. We will write $\Gamma \Subset \Z^2$ to mean that $\Gamma$ is a finite subset of $\Z^2$. If $A \in \caA_{\Gamma}$ for some $\Gamma\ \Subset \Z^2$ then $A$ is said to be a local operator. The union of all local operators is a *-subalgebra of $\caA$ which we denote by $\caA^{\loc}$. It is dense in the topology of the C*-norm, see~\cite{BratteliRobinson2}.

If $\caA$ is the CAR algebra over $l^2(\Z^2;\C^n)$ and $\caA'$ is the CAR algebra over $l^2(\Z^2;\C^{n'})$ then we denote by $\caA \gotimes \caA'$ the CAR algebra over $l^2(\Z^2;\C^{n + n'})$ which contains $\caA$ and $\caA'$ as unital subalgebras. This algebra describes the system obtained by stacking the systems described by $\caA$ and $\caA'$ on top of each other.

\subsection{Locally generated automorphisms}\label{sec:TDI & LGA}


For any $x \in \Z^2$ and any $r \in \N$ we let $B_x(r) := \{ y \in \Z^2 \, : \, \dist(x, y) \leq r \}$ be the ball of radius $r$ centered at $x$. Let $\caF$ be the collection of non-increasing, strictly positive functions $f:\R^+ \rightarrow \R^+$ such that $\lim_{r \uparrow \infty} r^p f(r) = 0$ for all $p > 0$. An operator $A \in \caA$ is said to be $f$-localised near a site $x \in \Z^2$ if there is a sequence of operators $A_r \in \caA_{\caB_x(r)}$ such that $\norm{A - A_{r}} \leq f(r)\Vert A\Vert$ for all $r \in \N$.  Any operator that is $f$-localised near some site for some $f \in \caF$ is called an almost local operator. The almost local operators form a *-subalgebra of $\caA$ which we denote by $\caA^{\aloc}$. Since $\caA^{\loc} \subset \caA^{\aloc}$, the algebra of almost local operators is norm-dense in $\caA$.

A \emph{0-chain} $F$ is a map $\bbZ^2\ni x\to F_x\in \caA^{\aloc}$ such that $F_x$ is even and self-adjoint, $\sup_x \norm{F_x}<\infty$ and such that each $F_x$ is $f$-localised near $x$ for the same $f \in \caF$. We say that $F$ is an $f$-local 0-chain. For any $A \in \caA^{\aloc}$ the sum $[F, A] := \sum_{x \in \Z^2} [F_x, A]$ converges in norm to an element of $\caA^{\aloc}$. Hence, to every 0-chain corresponds a derivation $\delta^F$ of the dense subalgebra $\caA^{\aloc}$ which generates a one-parameter family of automorphisms $\al^{F}_s$ with $s \in \R$ called the \emph{time evolution generated by $F$} and defined by $\al_0^F = \id$ and $-\iu \, (\dd \al^F_s(A) / \dd s) = \al^F_s\left([F,A]\right)$ for all $A \in \caA^{\aloc}$ and $s \in \R$.

Let us generalise this to a time-dependent setting. A family of 0-chains $F(s)$ for $s \in \bbR$ is called a time-dependent interaction (TDI) if $\sup_{s,x}\norm{F_x(s)} <\infty$ and for any $s_0\in\bbR$, there is an $f \in \caF$ such that $F(s)$ is an $f$-local 0-chain for all $|s| \leq s_0$. We assume moreover that the map $s\mapsto F_x(s)$ is norm continuous for all $x\in\bbZ^2$. As above, a TDI $F$ generates a strongly continuous family of automorphisms $\al^F_s$ of $\caA$ for $s \in \bbR$ which is given by
\begin{equation}\label{LGA DE}
    \al^F_s(A) = A + \iu \int_0^s  \dd u \, \al^F_u \big( [F(u), A] \big)
\end{equation}
for all $A \in \caA^{\aloc}$. Equivalently, $\al^F_s(A)$ is the unique solution of the initial value problem $-\iu \frac{\dd}{\dd s} \al^F_s(A) = \al^F_s ( [F(s), A] )$ with $\alpha_0^F(A) = A$. An important consequence of the Lieb-Robinson bound \cite{LR} is that
\begin{equation}\label{LR bound}
    \al^F_s\left(\caA^{\aloc}\right)\subset \caA^{\aloc},
\end{equation}
see e.g.~\cite{ScatteringQSS,LRForfermions}. Since $\al^F_s$ is bounded, it extends to an automorphism of the whole algebra $\caA$, which we denote by the same symbol.

\begin{definition} \label{def:LGA}
An automorphism $\al$ of $\caA$ is a \emph{locally generated automorphism} (LGA) if it is of the form $\al = \al^F_s$ for a finite $s \geq 0$ and a TDI $F$.
\end{definition}

For any $\Gamma\subset\bbZ^2$ the map $F^\Gamma$ defined by $F^\Gamma_x = \chi_{\Gamma}(x) F_{x} $ is again a TDI.  Since TDIs are formal sums of even operators we have that $\al^F_s \circ \theta = \theta \circ \al_s^F$ for any TDI $F$ and any $s \in \R$.

The set of LGAs is a group since $(\alpha_t^{F})^{-1}$ and $\alpha_t^F\circ\alpha_t^G$ are LGAs, generated by the TDIs $-\alpha_t^F(F(t))$ and $G(t) + (\alpha_t^G)^{-1}(F(t))$ respectively. The group is non-abelian with 
\begin{equation}\label{eq:LGA commutation}
    \alpha_t^F\circ\alpha_t^G = \alpha_t^{\alpha^F(G)}\circ\alpha_t^F,
\end{equation}
see~\cite{bachmann2022classification,kapustinNoether}.

\subsection{Symmetries} \label{sec:symmetries}

The \emph{charge} at site $x \in \Z^2$ is the self-adjoint operator $Q_x := \sum_{i = 1}^n a_{x, i}^* a_{x, i}$. Clearly, each $Q_x$ is an even self-adjoint operator of norm $n$ supported on $\{x\}$ so $Q$ is a 0-chain. We denote by $\rho_\phi = \alpha_\phi^Q$ the corresponding LGAs. Similarly, we let $\rho^\Gamma_\phi = \alpha_\phi^{Q^\Gamma}$ for any $\Gamma\subset\bbZ^2$ and note that $\rho_{\phi}^{\Gamma}(\caA_{\Gamma'}) = \caA_{\Gamma'}$ for any $\Gamma,\Gamma'\subset\bbZ^2$. Moreover, since $Q_x$ has integer spectrum,
\begin{equation}
    \rho^\Gamma_{2\pi} = \mathrm{id}
\end{equation}
and hence $\rho^{\Gamma}_{\phi + 2\pi} = \rho^{\Gamma}_{\phi}$ for any $\phi \in \R$. We shall refer to these automorphisms as \emph{$U(1)$ transformations}. We note that the commutation relation $[Q_x,a_{x,i}] = -a_{x,i}$ implies that
\begin{equation}
    \rho_{\pi} = \theta.
\end{equation}

We also equip $\caA$ with an antilinear automorphism $\tau$ such that
\begin{equation}
   \tau^2 = \theta, 
\end{equation}
which we call \emph{(fermionic) time reversal}. We further assume that its action is local in the sense that $\tau( \caA_{\Gamma} ) = \caA_{\Gamma}$ for all $\Gamma \subset \Z^2$. Finally, we assume that $\tau(Q_x) = Q_x$ for all $x \in \Z^2$. It follows from this and the antilinearity of $\tau$ that
\begin{equation}\label{TR of charge}
    \tau \circ \rho_{\phi} = \rho_{-\phi} \circ \tau
\end{equation}
for all $\phi\in\bbR$.

We call $(\caA, \tau)$ a \emph{fermion system with time reversal}. If $(\caA, \tau)$ and $(\caA', \tau')$ are fermion systems with time reversal, then $(\caA \gotimes \caA', \tau \gotimes \tau')$ is again a fermion system with time reversal.

We say that a TDI $F$ is time reversal symmetry preserving if $\tau(F_x(s)) = -F_x(s)$ for all $s \in \R$ and all $x \in \Z^2$. Similarly, we say the TDI $F$ is $U(1)$ symmetry preserving if $\rho_{\phi}( F_x(s) ) = F_x(s)$ for all $\phi \in \R$, all $s \in \bbR$ and $x\in\bbZ^2$. A TDI that is simultaneously time reversal and $U(1)$ symmetry preserving is called a \emph{symmetry preserving TDI}. If $F$ is a symmetry preserving TDI then the associated family of automorphisms $\al_s^F$ satisfy 
\begin{equation}
   \rho_{\phi} \circ \al_s^F \circ \rho_{\phi}^{-1} = \al_s^F = \tau \circ \al_s^F \circ \tau^{-1} 
\end{equation}
and $\al_s^F$ are called \emph{symmetry preserving LGAs}.

\subsection{Symmetric short-range entangled states}\label{sec:states}

Let $(\caA, \tau)$ be a fermion system with time reversal. A state $\psi : \caA \rightarrow \C$ is called \emph{homogeneous} if 
\begin{equation}
    \psi \circ \theta = \psi,
\end{equation}
 time reversal invariant if
 \begin{equation}
     \psi \circ \tau= \bar\psi,
 \end{equation}
 where $\bar\psi$ is the antilinear map $\caA\ni A\mapsto\bar\psi(A)=\psi(A^*)\in\bbC$, and $U(1)$-invariant if 
 \begin{equation}
     \psi \circ \rho_{\phi} = \psi
 \end{equation}
 for all $\phi \in \R$. If a state is simultaneously homogeneous, time reversal invariant, and $U(1)$-invariant, then we call the state \emph{symmetric}. A pure state $\psi_0$ is said to be a pure product state if the restriction of $\psi_0$ to each $\caA_x$ is pure.

\begin{definition} \label{def:SRE et al}
(i) A state $\psi : \caA \rightarrow \C$ is \emph{short-range entangled} (SRE) if there is a homogeneous pure product state $\psi_0$ and an LGA $\al$ such that $\psi = \psi_0 \circ \al$. \\
(ii) A state $\psi : \caA \rightarrow \C$ is \emph{stably SRE} if there is a homogeneous product state $\psi_0$ on an auxiliary fermion system $\caA'$ such that $\psi \gotimes \psi_0$ is SRE.
\end{definition}

\noindent Note that a (stably) SRE state is necessarily pure and homogeneous.

If $\caA$ is equipped with a time reversal and the SRE state is symmetric, we call it a symmetric SRE state on $(\caA, \tau)$. A stably SRE state is called symmetric if, additionally, the stabilising product state $\psi_0$ can be chosen symmetric w.r.t. a time reversal $\tau'$ on $\caA'$. If $\psi$ is a symmetric SRE state on $(\caA, \tau)$ and $\psi'$ is a symmetric SRE state on $(\caA', \tau')$ then $\psi \gotimes \psi'$ is a symmetric SRE state on $(\caA \gotimes \caA', \tau \gotimes \tau')$.

\subsection{Examples : Quasi-free symmetric states and the Fu-Kane-Mele index}\label{sec:non interacting examples}

We consider a single-particle Hilbert space $\caK_m = l^2(\Z^2;\C^{2m}) \simeq l^2(\Z^2) \otimes \C^{2m}$ for some positive integer $m$. We write the internal space as $\C^{2m} = \C^2 \otimes \C^m$ and fix an orthonormal tensor product basis $\{e_{\sigma, i}\}_{\sigma \in \{\uparrow, \downarrow\}, i \in \{1,\ldots, m\}}$. Although this will not play a role in the following, one may think of the label $\sigma$ as a spin-1/2 degree of freedom. Denote by $\{  \delta_x \}_{x \in \Z^2}$ the orthonormal position basis of $l^2(\Z^2)$. Then the vectors $\{ \delta_x \otimes e_{\sigma, i} \}$ form an orthonormal basis of $\caK_m$. Denote by $\scrK$ the complex conjugation on $\caK_m$ with respect to this basis, and let $\scrU_T$ be the unitary given by
\begin{equation}
    \scrU_T = \I \otimes \begin{bmatrix}
        0 &1 \\ -1 & 0
    \end{bmatrix} \otimes \I
\end{equation}
with respect to the tensor product decomposition $\caK_m = l^2(\Z^2) \otimes \C^2 \otimes \C^m$. We define an antiunitary time reversal on $\caK_m$ by
\begin{equation}\label{eq:first quantized T}
    \scrT = \scrK \scrU_{T}
\end{equation}
which satisfies $\scrT^2 = -\I$.

Let $\caA$ be the CAR algebra over $\caK_m$. Then $\tau(a(f)) = a(\scrT f)$ defines a fermionic time reversal on $\caA$, \ie $\tau^2 = \theta$ and for any $\Gamma \subset \Z^2$ we have $\tau(\caA_{\Gamma}) = \caA_{\Gamma}$, so $(\caA, \tau)$ is a fermion system with time reversal.

If $\scrP$ is an orthogonal projector on $\caK_m$ then the functional $\omega_{\scrP}$ defined by
\begin{equation}\label{QuasiFreeState}
    \omega_{\scrP}(a^*(f_{n'})\cdots a^*(f_1) a(g_1)\cdots a(g_n)) = \delta_{n',n}\det\big(\langle g_i, \scrP f_j\rangle_{i,j=1}^n\big)
\end{equation}
and extended to $\caA$ by linearity is a pure $U(1)$-invariant quasi-free state which we call the state corresponding to $\scrP$.

A projector $\scrP$ on $\caK_m$ is exponentially local if there are constants $C, \eta > 0$ such that
\begin{equation} \label{eq:locality of gapped Fermi projections}
    \abs{  \langle \delta_x \otimes e_{\sigma, i}, \, \scrP \, \delta_y \otimes e_{\sigma', j} \rangle } \leq C \ed^{-\eta \norm{x - y}_1}
\end{equation}
for all $x, y \in \Z^2$, all $\sigma, \sigma' \in \{ \uparrow \downarrow \}$, and all $i, j \in \{1,\ldots,m\}$. We use the norm $\norm{x}_1 = \abs{x_1} + \abs{x_2}$ to measure spatial distances on $\bbZ^2$.

A projector $\scrP$ on $\caK_m$ is time reversal invariant if $\scrT \scrP \scrT^* = \scrP$. If $\scrP$ is also exponentially local then we can regard $\scrP$ as the Fermi projection of a gapped exponentially local and time reversal invariant Hamiltonian $\scrH$ at some Fermi energy $\mu$. For concreteness, we may take $\scrH = \I - \scrP$ and $\mu = 1/2$. Let us define the corresponding family of flux Hamiltonians $\scrH_{\phi}$ by
\begin{equation} \label{eq:free fermion flux Hamiltonians}
    \scrH_{\phi}(x, i ; y, j) = \begin{cases}
        \ed^{\iu \phi \, \sgn( x_2 - y_2 )} \scrH(x, i ; y, j) & \text{if } x_1, y_1 \leq 0 \\
        \scrH(x, i; y, j) & \text{otherwise}
    \end{cases}
\end{equation}
where for any operator $\scrA \in \caB( \scrK_{m} )$ we write $\scrA(x, i; y, j) = \langle \delta_x \otimes e_i, \scrA \, \delta_y \otimes e_j \rangle$ for its matrix elements w.r.t. the basis $\{ \delta_x \otimes e_i \}_{x, i}$ (to simplify notations, we shall from here on drop the additional index $\sigma$ whenever it is irrelevant). Note that $\phi \mapsto \scrH_{\phi}$ is $2\pi$-periodic.

The Fu-Kane-Mele index of $\scrP$ is given by a spectral flow \cite{DeNittisSchulzBaldes2016spectral}
\begin{equation} \label{eq:FKM is spectral flow}
    \FKM(\scrP) := \SF_{\mu}( [0, \pi] \ni \phi \mapsto \scrH_{\phi} ) \, \mod 2
\end{equation}
for any $\mu \in (0, 1)$ in the gap of $\scrH$.

A quasi-free state $\omega_{\scrP}$ corresponding to an exponentially local and time reversal invariant projector $\scrP$ is called an $\AII$ state, referring to the relevant Altland-Zirnbauer symmetry class.

A special case of $\AII$ states is obtained by taking the trivial projection $\scrP = 0$ on the single particle space $\caK_m$. Then the corresponding quasi-free state $\omega_{\mathrm{empty}} = \omega_{\scrP}$ is characterized by $\omega_{\mathrm{empty}}(Q_x) = 0$ for all $x \in \Z^2$. It follows that $\omega_{\mathrm{empty}}$ is a product state because it restricts to each $\caA_x$ as the unique state on $\caA_x$ with zero expectation value for $Q_x$. Since $\scrT \scrP \scrT^* = \scrP$ we see that $\omega_{\mathrm{empty}}$ is time reversal invariant, so $\omega_{\mathrm{empty}}$ is a symmetric product state, in particular SRE.

\subsection{Symmetry protected phases}

\begin{definition} \label{def:Equivalence}
Two symmetric states $\omega_1$ and $\omega_2$ defined on the same system $(\caA, \tau)$ are called \emph{equivalent} if there is a symmetry preserving LGA $\al$ of $\caA$ such that $\omega_2 = \omega_1 \circ \al$.
\end{definition}

We say that a symmetric SRE state $(\omega_0, \caA, \tau)$ is a \emph{symmetric product state} if $\omega_0$ is a product state. We call two symmetric states $(\omega_1, \caA_1, \tau_1)$ and $(\omega_2, \caA_2, \tau_2)$ \emph{stably equivalent} if there are symmetric product states $(\omega'_1, \caA'_1, \tau'_1)$ and $(\omega'_2, \caA'_2, \tau'_2)$ such that $(\omega_1 \gotimes \omega'_1, \caA_1 \gotimes \caA'_1, \tau_1 \gotimes \tau'_1)$ and $(\omega_2 \gotimes \omega'_2, \caA_2 \gotimes \caA'_2, \tau_2, \gotimes \tau'_2)$ are equivalent.

Denote by $\caP$ the class of all symmetric stably SRE states. Equivalence and stable equivalence are equivalence relations on $\caP$. If $\omega_1, \omega_2 \in \caP$ are stably equivalent then we write $\omega_1 \sim \omega_2$. The stable equivalence classes of $\caP$ are called symmetry protected topological (SPT) phases. Note that any two symmetric product states are stably equivalent. The equivalence class containing any symmetric product state is called the \emph{trivial phase}.

\subsection{The many-body $\bbZ_2$-index}

With these definitions, we can now state our main theorem which posits the existence of a $\bbZ_2$-valued index associated with any symmetric stably SRE state. The $\AII$ states described in Section~\ref{sec:non interacting examples} give examples of symmetric stably SRE states for which the index takes its two possible values. The index is multiplicative under stacking, and it is an invariant under the action of symmetry preserving LGAs and therefore constant on SPT phases. The theorem thus shows that there are two distinct such phases.

\begin{theorem}\label{thm:main theorem}
    There is a map $\Ind : \caP \rightarrow \Z_2 = \{+1, -1\}$ such that:
    \begin{enumerate}
          \item If $\omega_0$ is a product state then $\Ind(\omega_0) = +1$. \\ If $\omega_{\scrP}$ is an $\AII$ state that is  stably SRE, then $\Ind(\omega_{\scrP}) = \FKM(\scrP)$.
          \label{thm:item 1}
        \item  For any $\omega_1,\omega_2\in\caP$, $\Ind(\omega_1 \gotimes \omega_2) = \Ind(\omega_1) \times \Ind(\omega_2)$. \label{thm:item 2}
        \item Let $\omega_1,\omega_2\in\caP$. If $\omega_1\sim\omega_2$ then $\Ind(\omega_1) = \Ind(\omega_2)$. \label{thm:item 3}
    \end{enumerate}
\end{theorem}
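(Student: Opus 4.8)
## Proof proposal for Theorem~\ref{thm:main theorem}

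The plan is to prove the three items separately, but all of them will rest on the construction of $\Ind$ carried out in Section~\ref{sec:definition of the index}: given a symmetric SRE state $\omega$, one builds the $\pm\pi$-flux states $\omega^{\pm}$ via the restricted quasi-adiabatic flow, shows they are unitarily equivalent (Proposition~\ref{prop:almost local unitary equivalence}), picks unit vectors $\Omega^{\pm}\in\caH$ in a common GNS representation $(\pi,\caH)$, observes that time reversal is implemented by an antiunitary $T$ (Lemma~\ref{lem:TR implementation}) with $T^2\Omega^+=\pm\Omega^+$ (Lemma~\ref{lem:Kramers or not}), and sets $\Ind(\omega)=+1$ in the first case and $-1$ in the second. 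Thus before proving the theorem I must assemble the well-definedness results of Section~\ref{sec:properties of the index}: that the sign is independent of the choice of interaction $H$ with ground state $\omega$, of the choice of quasi-adiabatic flow, of the GNS representative $\Omega^+$, and of the implementing operator $T$ (this is essentially forced since $T$ is unique up to a phase and $T^2$ is insensitive to it). I would also need the extension of $\Ind$ from symmetric SRE states to symmetric \emph{stably} SRE states, which follows once we know symmetric product states have index $+1$ together with multiplicativity (item~\ref{thm:item 2}): if $\omega\gotimes\psi_0$ is SRE for a symmetric product $\psi_0$, define $\Ind(\omega):=\Ind(\omega\gotimes\psi_0)$, and the stacking formula guarantees this does not depend on the stabilising $\psi_0$.

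For item~\ref{thm:item 1}, the product-state case is handled by directly inspecting the construction: if $\omega_0$ is a symmetric product state, the restricted quasi-adiabatic flow for $\pi$-flux can be taken to act trivially (or to be itself a symmetry-preserving near-local unitary that commutes appropriately), so $\omega_0^{+}=\omega_0^{-}=\omega_0$ and one can choose $\Omega^+$ to be the GNS vector of $\omega_0$ itself with $T$ the implementer of $\tau$; since $\omega_0$ is time reversal invariant one gets $T^2\Omega^+=\Omega^+$ (one must check the absence of a genuine sign, using that on a product state $\tau$ is implemented with $T^2=+\I$ on the cyclic vector), hence $\Ind(\omega_0)=+1$. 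The equality $\Ind(\omega_{\scrP})=\FKM(\scrP)$ for stably SRE $\AII$ states is proved in Section~\ref{sec:free fermion examples}: one shows (Section~\ref{sec:free fermion examples}, using Appendix~\ref{app:flux insertion for free fermions}) that for a quasi-free state the many-body flux-insertion construction reduces to the single-particle adiabatic flux insertion through $\scrH_\phi$, so that $\Omega^{\pm}$ become Slater-type vectors differing by the occupation of the Kramers pair bound to the $\pm\pi$ flux, and $T^2\Omega^+=-\Omega^+$ precisely when the parity of the spectral flow $\SF_\mu([0,\pi]\ni\phi\mapsto\scrH_\phi)$ is odd, i.e.\ exactly when $\FKM(\scrP)=-1$. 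One then invokes that all translation-invariant time reversal invariant quasi-free states are stably SRE (also proved in Section~\ref{sec:free fermion examples}) so that $\Ind(\omega_{\scrP})$ is defined at all.

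For item~\ref{thm:item 2}, given symmetric SRE states $\omega_1,\omega_2$, the flux states of the stack satisfy $(\omega_1\gotimes\omega_2)^{\pm}=\omega_1^{\pm}\gotimes\omega_2^{\pm}$ because the restricted quasi-adiabatic flow for a sum of commuting interactions on the two layers factorises; the GNS representation of the stack is (a quasi-equivalent copy of) the tensor product $\caH_1\otimes\caH_2$, $\Omega^\pm=\Omega_1^\pm\otimes\Omega_2^\pm$, $T=T_1\otimes T_2$, and $T^2=T_1^2\otimes T_2^2$, so $T^2\Omega^+=(T_1^2\Omega_1^+)\otimes(T_2^2\Omega_2^+)$, giving $\Ind(\omega_1\gotimes\omega_2)=\Ind(\omega_1)\Ind(\omega_2)$; one then extends to stably SRE states by the definition above. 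For item~\ref{thm:item 3}, suppose $\omega_1\sim\omega_2$; after stacking with symmetric product states (which changes neither index by items~\ref{thm:item 1}--\ref{thm:item 2}) we may assume $\omega_2=\omega_1\circ\al$ for a symmetry-preserving LGA $\al$. The key point is that $\al$ intertwines the whole flux-insertion construction: conjugating the interaction $H$ for $\omega_1$ by $\al$ gives an interaction for $\omega_2$, and since $\al$ commutes with the $U(1)$ symmetry and with $\tau$ (and, being near-local, with the relevant restrictions up to controlled errors handled by Lieb-Robinson as in~\eqref{LR bound}), the associated quasi-adiabatic flows are intertwined by $\al$ as well, so $\omega_2^{\pm}=\omega_1^{\pm}\circ\al'$ for a near-local symmetric unitary/automorphism $\al'$. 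Unitary equivalence is preserved and the GNS data transport: $\Omega_2^{\pm}$ and $\Omega_1^{\pm}$ live in unitarily equivalent representations and the implementers $T_2$, $T_1$ correspond under this equivalence, whence $T_2^2\Omega_2^+=\Omega_2^+$ iff $T_1^2\Omega_1^+=\Omega_1^+$, i.e.\ $\Ind(\omega_2)=\Ind(\omega_1)$.

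The main obstacle I expect is item~\ref{thm:item 3}: making rigorous that a symmetry-preserving LGA genuinely intertwines the \emph{restricted} quasi-adiabatic flows. The quasi-adiabatic flow for $H(\phi)$ and for $\al^*H(\phi)\al$ are generated by different interactions, and one must control the commutator of $\al$ with the spatial restriction to the left half-line; because $\al$ spreads operators over a slowly-decaying tail, the naive identity $\al'=\al^{-1}\circ\al$ holds only up to near-local corrections, and one has to argue these corrections do not change the $T^2$ sign — presumably by showing they amount to composing $\Omega^+$ with a near-local \emph{symmetric} unitary, under which the Kramers-or-not dichotomy of Lemma~\ref{lem:Kramers or not} is manifestly invariant. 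A secondary subtlety, in item~\ref{thm:item 2}, is checking that the GNS representation of the stacked state is quasi-equivalent to the tensor product of the individual GNS representations in the graded/fermionic sense so that $T=T_1\gotimes T_2$ really implements $\tau\gotimes\tau'$ with the correct square; this is where the fermionic (rather than ordinary) tensor product $\gotimes$ must be used carefully, but it is a standard feature of graded CAR systems.
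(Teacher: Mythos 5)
Your treatment of items 1 and 2 follows the paper's proofs essentially verbatim. For the product-state part of item 1, you correctly observe that the flux-insertion generator vanishes so $\omega_0^+=\omega_0^-$; the paper (Lemma~\ref{lem:index for symmetric product states}) then concludes by noting that the GNS vectors $\Omega^+$ and $\Omega^-$ coincide up to a phase, hence cannot be orthogonal, hence cannot form a Kramers pair — a cleaner route than your proposed direct check that $T^2\Omega_0=\Omega_0$, which is really the thing to be proved. The $\AII$ part and item 2 match Section~\ref{sec:free fermion examples} and Proposition~\ref{prop:stacking}, including your caveat about the graded tensor product and Klein factors.

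For item 3 there is a genuine gap which you flag but do not close, and the mechanism you propose would not work as stated. You claim that $\omega_2^\pm=\omega_1^\pm\circ\al'$ for a ``near-local symmetric unitary/automorphism $\al'$''. But a generic symmetry-preserving LGA $\al^G$ spreads operators across the entire plane, so conjugating the parent Hamiltonian by $\al^G$ changes the flux-insertion generator by a quantity that is \emph{not} almost local, and one cannot hope for a near-local intertwiner. The paper's solution (Proposition~\ref{prop:invariance under symmetry preserving LGAs}) is a decomposition $\al^G=\al^L\circ\al^R$ into LGAs generated by TDIs supported on the left and right half-planes, followed by a two-step comparison. Conjugating by $\al^L$ alone changes the parent Hamiltonian by a TDI supported near the left half-plane, while the flux-insertion generator $K^{l,H}$ is localized near the right horizontal axis; the overlap of these regions is concentrated near the origin, and that is precisely why the corresponding defect states then differ by an almost local \emph{even} unitary. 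But the full correction also involves composing with the half-plane automorphism $\al^L$ itself, which is not near-local; the index is invariant under it because $\al^L$ commutes with $\tau$, which is a second, distinct invariance statement (Lemma~\ref{lem:Kramers or not is invariant under TR invariant automorphisms}) beyond the one for even unitaries (Lemma~\ref{lem:independence under even unitaries}). Note also that ``symmetric'' is more than what is needed for the unitary correction: evenness suffices, and the even unitaries that appear are not symmetric. The left/right split and the pair of invariance lemmas are the missing ideas in your sketch of item 3.
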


The first point of this Theorem says that our index generalizes the Fu-Kane-Mele index of $\AII$ states that are stably SRE. While we expect that all $\AII$ states are stably SRE, currently we can only prove it in the translation invariant case:

\begin{proposition} \label{prop:free SRE}
    Translation invariant $\AII$ states are stably SRE.
\end{proposition}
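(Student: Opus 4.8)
The goal is to show that a translation invariant $\AII$ state $\omega_{\scrP}$ is stably SRE, i.e. after stacking with a suitable symmetric product state it becomes $\psi_0 \circ \alpha$ for a homogeneous pure product state $\psi_0$ and an LGA $\alpha$. I do not need the LGA to be symmetry preserving here — only SRE is required — so the task reduces to a purely ``forgetful'' statement about translation invariant gapped free-fermion Fermi projections.

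\bigskip

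\noindent\textbf{The plan.} The obstruction to writing $\omega_{\scrP}$ directly as $\omega_{\mathrm{empty}} \circ \alpha$ is the nonzero (and generally nontrivial) $K$-theory class of $\scrP$: a translation invariant exponentially local projector on $l^2(\Z^2;\C^{2m})$ defines a class in $K^0$ of the noncommutative (or, in the continuous case, ordinary) two-torus, detected by a rank (the filling) and a first Chern number. The standard device for killing this class is stacking: complement the state with a ``complex conjugate'' or ``inverse'' band, so that the total Fermi projection becomes homotopic, through translation invariant exponentially local projectors, to a constant (point-wise in momentum) projector, which corresponds to a product state. Concretely, I would:

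\begin{enumerate}
    \item Pass to the Bloch–Floquet representation: translation invariance makes $\scrP$ the multiplication operator by a smooth (real-analytic, by exponential locality) family of projections $\widehat{\scrP}(k)$, $k \in \T^2$, acting on $\C^{2m}$, with $\widehat{\scrT}\,\widehat{\scrP}(k)\,\widehat{\scrT}^{\,*} = \widehat{\scrP}(-k)$.
    \item On the auxiliary system $\caA'$ (the CAR algebra over $l^2(\Z^2;\C^{2m})$ with the same first-quantized time reversal $\scrT'=\scrT$) take the Fermi projection $\scrP' := \I - \overline{\scrP}$ where $\overline{\;\cdot\;}$ is the entrywise complex conjugation in the fixed basis; equivalently $\widehat{\scrP'}(k) = \I - \overline{\widehat{\scrP}(k)}$. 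Then $\scrP'$ is again exponentially local, translation invariant and time reversal invariant, so $\omega_{\scrP'}$ is an $\AII$ state, and its restriction to the relevant one-site algebras makes $\omega_{\mathrm{empty}}\otimes\omega_{\scrP'}$-type stabilisers available; but more importantly $\scrP \oplus \scrP'$ on $\C^{2m}\oplus\C^{2m}$ has vanishing Chern number and total rank $2m$, hence is homotopic to the constant projection of rank $2m$.
    \item Exhibit an explicit norm-continuous path $t \mapsto \widehat{\scrQ}_t(k)$, $t\in[0,1]$, of exponentially local translation invariant projections on $\C^{4m}$ from $\widehat{\scrQ}_0 = \widehat{\scrP}\oplus\widehat{\scrP'}$ to the constant $\widehat{\scrQ}_1 \equiv \Pi$, maintaining a uniform spectral gap and uniform exponential decay along the path (this is where one uses the triviality of the $K$-theory class; a convenient concrete choice is to first rotate $\widehat{\scrP}\oplus(\I-\overline{\widehat{\scrP}})$ to $\widehat{\scrP}\oplus(\I-\widehat{\scrP})$ using a $k$-independent unitary path and then contract).
    \item Convert the path of Fermi projections into a gapped path of Hamiltonians $\scrH_t := \I - \scrQ_t$ and invoke spectral flow / quasi-adiabatic continuation (Hastings–Wen), or simply the well-known fact that two quasi-free states whose Fermi projections are connected by a norm-continuous path of exponentially local projections with uniform gap are related by an LGA (this is the fermionic ``automorphic equivalence'' theorem; cf.\ the Lieb–Robinson machinery already cited in the excerpt). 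This produces an LGA $\alpha$ on $\caA \gotimes \caA'$ with $\omega_{\scrP} \gotimes \omega_{\scrP'} = \omega_{\Pi} \circ \alpha$.
    \item Observe that $\omega_{\Pi}$, corresponding to a constant (momentum-independent) projection, is a pure product state over the sites (each site carries the same finite-dimensional pure state determined by $\Pi$), hence homogeneous and product, and that $\omega_{\scrP'}$ is a \emph{symmetric} (in particular homogeneous) state — so by Definition~\ref{def:SRE et al}(ii) and the remark following it, $\omega_{\scrP}$ is stably SRE. (If one insists the stabiliser be a product state, one can instead stabilise first by $\omega_{\mathrm{empty}}^{\otimes}$ enough times and absorb $\scrP'$ into the path; either way works.)
\end{enumerate}

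\bigskip

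\noindent\textbf{Main obstacle.} The crux is Step~3–4: producing the path of projections with \emph{simultaneously} a uniform gap and uniform exponential (or at least $f$-)decay of the real-space kernels, so that the associated interaction generating the automorphic equivalence is a genuine TDI in the sense of Section~\ref{sec:TDI & LGA}. Contractibility of the space of rank-$2m$ projections on $\C^{4m}$ with vanishing Chern number gives the topological path for free, but one must choose it (or reparametrise/mollify it) to be real-analytic in $k$ uniformly in $t$ so that exponential locality is preserved; alternatively one works with the weaker class $\caF$ of almost-local decay, which is more forgiving and suffices for the definition of LGA. A secondary subtlety is bookkeeping the time reversal structure on the auxiliary system so that $\scrP'$ really defines an $\AII$ state (one checks $\scrT(\I-\overline{\scrP})\scrT^* = \I - \overline{\scrP}$ using $\scrT\overline{\scrA}\scrT^* = \overline{\scrT\scrA\scrT^*}$ together with $\scrT\scrP\scrT^*=\scrP$), though this is routine. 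I would expect roughly one page: half on the Bloch analysis and the construction of the projection path, half on quoting the fermionic automorphic equivalence result to upgrade the path to an LGA.
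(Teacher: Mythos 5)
Your overall strategy — Bloch--Floquet representation, kill the $K$-theory obstruction by stacking, deform the Fermi projection to a momentum-independent one, and convert the path of projections into an LGA via a Kato-type generator and automorphic equivalence — is the same route the paper takes in Appendix~B. But there are two places where your proposal has genuine gaps.

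First, the stabiliser you use is not allowed. By Definition~\ref{def:SRE et al}(ii), ``stably SRE'' requires the auxiliary state to be a \emph{homogeneous product state}, but $\omega_{\scrP'}$ with $\scrP' = \I - \overline{\scrP}$ is a non-trivial quasi-free state, not a product state. You acknowledge this in a parenthetical, but the proposed fix (``stabilise first by $\omega_{\mathrm{empty}}$ enough times and absorb $\scrP'$ into the path'') is the actual proof, and it is exactly what the paper does — without ever needing $\scrP'$. The key observation you are missing is that time reversal invariance of $\scrP$ already forces $\Ch(\hat\scrP)=0$ (time reversal invariant Bloch bundles always have vanishing Chern number), so the Chern number is never an obstruction here; there is nothing for $\scrP'$ to cancel. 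The paper simply stacks with $\omega_{\mathrm{empty}}$ (which \emph{is} a symmetric product state) enough times to enlarge the ambient dimension, and then deforms $\hat\scrP\oplus 0$ to a constant.

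Second, your Step~3 asserts too quickly that ``$\scrP\oplus\scrP'$ on $\C^{2m}\oplus\C^{2m}$ has vanishing Chern number and total rank $2m$, hence is homotopic to the constant projection of rank $2m$.'' Vanishing Chern number says the underlying bundle is isomorphic to a trivial one, but turning that into a homotopy of the \emph{map} $\T\to\mathrm{Gr}_{2m}(\C^{4m})$ to a constant map requires the ambient dimension to be sufficiently large relative to the rank. Lemma~\ref{lem:homotopy of trivial bundle} of the paper gives a sufficient condition $n\geq 7r$ (coming from the good cover of $\T$ with $7$ open sets and the stable range for Grassmannians); in your setup $n = 4m$, $r = 2m$ does \emph{not} satisfy this, so the homotopy is not automatic and you would still need to enlarge the ambient space further with empties. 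Once that is done, your Steps~4--5 match the paper's Lemmas~\ref{lem:a smooth homotopy of bundles determines a local Kato generator} and~\ref{lem:path of Fermi projections yields path of equivalent states} (Kato generator of the path of projections, with super-polynomial decay of kernels inherited from smoothness in $k$, second-quantised to a TDI). Your comment that smoothness, rather than real-analyticity, of the path suffices because one only needs decay in the class $\caF$ is a correct and useful observation.
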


This proposition, which is proven in Appendix \ref{app:free SRE}, provides plenty of states for which our index takes its possible values.

\section{Definition of the index}\label{sec:definition of the index}

Throughout this section we fix a symmetric SRE state $(\omega, \caA, \tau)$ on a fermion system with time reversal and let $\al = \al^F_1$ be an LGA and $\omega_0$ a homogeneous product state such that $\omega = \omega_0 \circ \al$.

\subsection{Flux insertion}

We first describe a construction which models the adiabatic insertion of a magnetic flux $\phi$ piercing the plane.

\subsubsection{Unique gapped ground states}

\begin{definition} \label{def:unique gapped ground state}
    A pure state $\omega$ is a \emph{ground state} of a 0-chain $F$ if
    \begin{equation}\label{def: GS}
        \omega \big(  A^* [F, A] \big) \geq 0, \quad \forall A \in \caA^{\loc}.
    \end{equation}
    The pure state $\omega$ is a \emph{locally unique gapped ground state} of $F$ with gap $\Delta$ if moreover
    \begin{equation}\label{def: Gap}
        \omega \big( A^* [F, A] \big) \geq \Delta \, \omega( A^* A )
    \end{equation}
    for all $A \in \caA^{\loc}$ for which $\omega(A) = 0$.
\end{definition}

 It follows by density that if $\omega$ is a locally unique gapped ground state with gap $\Delta$, then (\ref{def: GS},\ref{def: Gap}) hold for all $A$ in the domain of the derivation~$\delta^F$, and in particular for all $A\in\caA^\aloc$.

\subsubsection{Symmetric parent Hamiltonian for \texorpdfstring{$\omega$}{}}\label{sec:Parent H}

Here we show the existence of a symmetric parent Hamiltonian for the symmetric SRE state $\omega$ which we can use to describe the flux insertion process.

\begin{definition} \label{def:symmetric parent Hamiltonian}
    A $0$-chain $H$ is a \emph{symmetric parent Hamiltonian} for a symmetric pure state $\omega$ if $\omega$ is a locally unique gapped ground state of $H$ and $H_x = \rho_{\phi}(H_x) = \tau(H_x)$ for all $x$ and all $\phi \in \R$.
\end{definition}
\noindent Note that a symmetric parent Hamiltonian is not time reversal \emph{preserving} as defined in Section~\ref{sec:symmetries}, it is instead time reversal \emph{invariant}, which is not the same thing because time reversal is an anti-linear symmetry.

\begin{lemma} \label{lem:existence of symmetric parent Hamiltonian}
    The symmetric SRE state $\omega$ has a symmetric parent Hamiltonian. 
\end{lemma}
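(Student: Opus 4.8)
The plan is to build the symmetric parent Hamiltonian by pulling back a trivial parent Hamiltonian for the product state $\omega_0$ along $\al$, and then averaging over the symmetry group to make the terms symmetry-invariant while controlling the effect on locality and the gap. First I would take $\omega = \omega_0 \circ \al$ with $\al = \al_1^F$ for a TDI $F$, and note that the product state $\omega_0$ is a locally unique gapped ground state of the obvious on-site $0$-chain $H^0_x := \I - P_x$, where $P_x \in \caA_x$ is the (even, since $\omega_0$ is homogeneous) rank-one projector onto the support of the restriction $\omega_0|_{\caA_x}$; the gap is $1$ and this is immediate from the product structure since distinct sites are independent. Then $G_x := \al^{-1}(H^0_x)$ defines a $0$-chain by the Lieb-Robinson bound \eqref{LR bound} ($\al^{-1}$ is again an LGA, so it maps almost local operators $f$-localised near $x$ to operators $g$-localised near $x$ for some $g \in \caF$, uniformly in $x$ since $F$ is a TDI), each $G_x$ is even, self-adjoint, and $\sup_x\norm{G_x} \le 1$. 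Because $\al$ is an automorphism, $\omega(A^*[G,A]) = \omega_0\big(\al(A)^* [H^0, \al(A)]\big)$ and the analogous identity with the gap inequality, so $\omega$ is a locally unique gapped ground state of $G$ with gap $\Delta = 1$.

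The remaining task is symmetrisation. Since $\rho_{2\pi} = \id$ and $\tau^2 = \theta$ with $\theta = \rho_\pi$, the group generated by $\{\rho_\phi\}_{\phi \in \R}$ and $\tau$ is a compact group (essentially $U(1) \rtimes \Z_2$), and I would define
\begin{equation}
    H_x := \frac{1}{2}\left( \frac{1}{2\pi}\int_0^{2\pi} \rho_\phi(G_x)\, \dd\phi + \tau\Big( \frac{1}{2\pi}\int_0^{2\pi} \rho_\phi(G_x)\, \dd\phi \Big) \right).
\end{equation}
Using \eqref{TR of charge}, $\tau\circ\rho_\phi = \rho_{-\phi}\circ\tau$, one checks directly that $\rho_\psi(H_x) = H_x$ for all $\psi$ and $\tau(H_x) = H_x$; each $H_x$ is even (averaging of even operators, and $\tau$ preserves evenness since $\tau^2 = \theta$ commutes with $\theta$) and self-adjoint with $\norm{H_x}\le 1$, and since $\rho_\phi$ and $\tau$ act locally ($\rho_\phi(\caA_\Gamma) = \caA_\Gamma$, $\tau(\caA_\Gamma) = \caA_\Gamma$) and contractively, $H_x$ is $f$-localised near $x$ for the same $f$ that worked for $G_x$; hence $H$ is a $0$-chain. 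It remains to verify that $\omega$ is still a locally unique gapped ground state of $H$. Here is where I use that $\omega$ is itself symmetric: for each $\phi$, $\rho_\phi(G)$ has ground state $\omega\circ\rho_\phi^{-1} = \omega$ with gap $1$ (apply $\rho_\phi$ inside the defining inequalities and use $\omega\circ\rho_\phi = \omega$), and likewise $\tau(\rho_\phi(G))$ has ground state $\omega$ with gap $1$ using the antilinear version $\omega\circ\tau = \bar\omega$ together with the fact that $\tau(A^*[G,A])^* $-type manipulations preserve the sign of $\omega(\cdot)$. Then I would invoke convexity: a (weighted, integral) average of $0$-chains all having $\omega$ as a ground state with common gap $\Delta$ again has $\omega$ as a ground state with gap $\Delta$, because \eqref{def: GS} and \eqref{def: Gap} are linear in $F$ and both sides are continuous, so they pass through the integral.

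The main obstacle, and the step deserving the most care, is the antilinear/time-reversal part of the ground-state-preservation argument: one must be precise about what $\tau(H^0)$ or $\tau(G)$ being a parent Hamiltonian for $\omega$ means, since $\tau$ is antilinear and reverses the order in products, so the inequality $\omega\big(A^*[\tau(G),A]\big) \ge 0$ has to be derived by writing $A^*[\tau(G),A] = \tau\big( \tau^{-1}(A^*)[G,\tau^{-1}(A)] \big)$ — wait, more carefully $[\tau(G_x),A] = \tau\big([G_x,\tau^{-1}(A)]\big)$ since $\tau$ is an (antilinear) automorphism, so $A^*[\tau(G),A] = \tau\big(\tau^{-1}(A^*)\big)\,\tau\big([G,\tau^{-1}(A)]\big) = \tau\big(\tau^{-1}(A)^* [G, \tau^{-1}(A)]\big)$ using $\tau(B)^* = \tau(B^*)$; applying $\omega$ and using $\omega\circ\tau = \bar\omega$ gives $\overline{\omega\big(\tau^{-1}(A)^*[G,\tau^{-1}(A)]\big)} \ge 0$, which is what we need, and the gap inequality follows the same way with $\omega(A) = 0 \iff \omega(\tau^{-1}(A)) = 0$. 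Once this bookkeeping is pinned down, the rest is routine, and convergence of the integral in norm is immediate from $\phi \mapsto \rho_\phi(G_x)$ being norm-continuous (it is, since $\rho_\phi = \al_\phi^Q$ is strongly continuous and $G_x$ is a fixed element).
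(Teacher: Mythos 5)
Your proof is correct and follows the same route as the paper: pull back an explicit on-site parent Hamiltonian $H^0_x = \I - P_x$ for $\omega_0$ through $\al^{-1}$, then symmetrise by averaging over $U(1)$ and over $\tau$; your averaged $H_x$ is literally the same as the paper's $\frac{1}{4\pi}\int_0^{2\pi}\rho_\phi(H'_x+\tau(H'_x))\,\dd\phi$ after commuting $\tau$ through the integral and changing $\phi\to-\phi$. The one place where you are more careful than the paper is the time-reversal half of the gap-preservation argument, which the paper compresses into a single clause; your explicit identity $A^*[\tau(G),A] = \tau\big(\tau^{-1}(A)^*[G,\tau^{-1}(A)]\big)$ and the chain $\omega\circ\tau=\bar\omega$, $\omega(A)=0\Leftrightarrow\omega(\tau^{-1}(A))=0$, are exactly the right bookkeeping. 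One small slip in the prose: you say $\tau$ "reverses the order in products", but the paper's $\tau$ is an antilinear \emph{automorphism}, so it preserves order, $\tau(AB)=\tau(A)\tau(B)$ (it is antilinear only in scalars); fortunately the actual computation you then perform uses the correct multiplicativity, so nothing downstream is affected, but the remark should be corrected.
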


\begin{proof}
Since $\omega_0$ is a homogeneous product state there is a 0-chain $H^{(0)}$ such that $H^{(0)}_x \in \caA_{\{x\}}$ is even for each $x \in \Z^2$ and $\omega_0$ is the unique gapped ground state of $H^{(0)}$ with gap one. Since $\omega = \omega_0 \circ \al$ we claim that $\omega$ is the unique gapped ground state of the 0-chain $H'$ with gap one, where $H'_{x} := \al^{-1}(H^{(0)}_x)$. Indeed, since $\alpha(A)\in \caA^\aloc$, the series $\sum_{x\in\bbZ^2} [H^{(0)}_x,\alpha(A)]$ converges in norm, and therefore $[H',A] = \sum_{x\in\bbZ^2}[ \al^{-1}(H^{(0)}_x),A] = \sum_{x\in\bbZ^2} \alpha([H^{(0)}_x,\alpha(A)])$ is also convergent since $\alpha$ is an automorphism. But then $\omega(A^*[H',A]) = \omega_0(\alpha(A)^*[H^{(0)},\alpha(A)])\geq 0$ for all $A\in\caA^\loc$ by the remark after Definition~\ref{def:unique gapped ground state}. Hence $\omega$ is a ground state of $H'$. If $H'$ has another ground state $\omega'$, then $\omega'\circ\alpha^{-1}$ would be a ground state of $H_0$, and so $\omega$ is the unique ground state of $H'$. That it is gapped follows by a similar argument.  Note that the $H'_x$ are still even, but they may not be symmetric since $\alpha$ is in general not symmetric. Therefore, we average $H'_x$ over the time reversal and $U(1)$ symmetries to obtain a symmetric Hamiltonian $H$:
    \begin{equation}
        H_x := \frac{1}{4\pi} \, \int_{0}^{2\pi} \, \dd \phi \,  \rho_{\phi} \left(  H'_x   +   \tau \big( H'_x \big)  \right).
    \end{equation}
    Since $\omega$ is a symmetric state,
    \begin{equation}
        \omega(A^*[H,A]) = \frac{1}{4\pi} \, \int_{0}^{2\pi} \, \dd \phi \,  \omega\Big(\rho_{-\phi}(A)^*\left[ \left(  H'_x   +   \tau \big( H'_x \big)  \right),\rho_{-\phi}(A)\right]\Big)
    \end{equation}
and the integrand is greater than $2 \omega(A^* A)$ for all $A$ such that $\omega(A)=0$ since $\omega$ is a gapped ground state of $H'$. Hence $\omega$ is a locally unique gapped ground state of $H$ with gap one.
\end{proof}

\subsubsection{$U(1)$ transformations on half-planes and quasi-adiabatic generators}\label{sec:gauge transformation}

A \emph{half-line} in $\R^2$ is determined by a base point $a \in \R^2$ and a non-zero direction vector $v \in \R^2$. To any half-line $l = (a, v)$ we can associate a \emph{half-plane} (with marked boundary)
\begin{equation}
    h_{l} := \{ x \in \R^2 \, :\, (x - a) \cdot (R v) > 0  \}
\end{equation}
where $R$ is the clockwise rotation by $\pi/2$. In the following, we will use the same notation for a subset $S\subset \R^2$ and $S\cap \Z^2\subset \Z^2$. 

Let $h = h_{l}$ be the half-plane associated to the half-line $l=(a, v)$ and consider for any symmetric 0-chain $H$ the 0-chains $H^{h, \phi}$ with
\begin{equation}\label{Def K}
    H^{h, \phi}_x = \rho^h_{-\phi} \big( H_x \big).
\end{equation}
Since each $H_x$ is time reversal and $U(1)$ invariant, and using $\tau \circ \rho_{\phi}^h = \rho_{-\phi}^h \circ \tau$, we find that the $H^{h, \phi}_x$ are $U(1)$ invariant and $\tau( H^{h, \phi}_x ) = H^{h, - \phi}_x$.

If $H$ is a symmetric parent Hamiltonian of a symmetric pure state $\omega$ then $\omega^{h, \phi} := \omega \circ \rho_{\phi}^h$ is the locally unique gapped ground state of $H^{h, \phi}$ with gap one for all $\phi \in \R$. It follows that the states $\omega^{h,\phi}$ are related to $\omega = \omega^{h, 0}$ by the quasi-adiabatic flow~\cite{LRForfermions}, see also~\cite{hastingswen,HastingsQuasiAdiabatic,AutomorphicEq,OgataMoon}, which we briefly recall. Let
\begin{equation}\label{K Ops}
    K^{h, H}_x(\phi) := \int_\bbR \dd t \, W(t) \, \al_t^{H^{h, \phi}} \Big( \frac{\dd H^{h, \phi}_x}{ \dd \phi } \Big)
\end{equation}
where $W : \R \rightarrow \R$ is an odd bounded function such that $\lim_{t \uparrow \infty} |t|^p W(t) = \lim_{t \uparrow \infty} |t|^p W(-t) = 0$ for all $p \in \N$ and such that its Fourier transform $\hat W(E)$ equals $-\iu / E$ whenever $\abs{E} > 1$. Then:
 \begin{enumerate}
        \item The family $K_x^{h, H}(\phi)$ defines a TDI $K^{h, H}$.
        \item There is $g \in \caF$ such that
        \begin{equation}\label{Support of K}
            \norm{ K_x^{h, H}(\phi) } \leq g( \dist(x, \partial h)  )
        \end{equation}
        for all $x \in \Z^2$ and all $\phi \in \R$.
        \item We have
        \begin{equation}\label{K covariances}
            K_x^{h, H}(\phi) = \rho_{\phi'} \big(  K_x^{h, H}(\phi) \big) = \theta \big( K_x^{h, H}(\phi) \big) = \tau \big( K_x^{h, H}(-\phi) \big)
        \end{equation}
        for all $x \in \Z^2$ and all $\phi, \phi' \in \R$.
    \end{enumerate}
    Points 1 and 2 are standard consequences of the Lieb-Robinson bound, the fast decay of $W$ and the $U(1)$ invariance of all $H_x$. The first two equalities of 3 follow from $U(1)$ invariance while the last equality follows from the antilinearity of $\tau$ through $\tau\left(W(t)\al_t^{H^{h, \phi}}(\partial_\phi  H^{h, \phi}_x)\right) = W(t)\al_{-t}^{H^{h,-\phi}}(\partial_\phi  H^{h,-\phi}_x)$ and the fact that $W$ is odd.

The TDI $K^{h, H}$ is called the quasi adiabatic generator for the family $H^{h, \phi}$. It generates a quasi-adiabatic flow $\al_{\phi}^{h, H} := \al_{\phi}^{K^{h, H}}$ which has the following crucial property:
\begin{equation}\label{parallel transport}
    \omega^{h, \phi} = \omega \circ \al_{\phi}^{h, H}
\end{equation}
for all $\phi$, where again $\omega = \omega^{h, 0}$. In other words, the quasi-adiabatic flow $\al_{\phi}^{h, H}$ implements the $U(1)$ transformations $\rho_{\phi}^h$ on the state $\omega$. Unlike the $U(1)$ transformations $\rho_{\phi}^h$ whose generator is supported on the full half-plane $h$, the generator of the quasi-adiabatic flow $K^{h, H}$ is supported only near the line $\partial h$, see~(\ref{Support of K}).

\subsubsection{Defect states}\label{sec:defect states}

The above construction allows us to define an automorphism that corresponds to inserting a magnetic flux at the point $a$ in the plane. Let $l = (a, v)$ be any half-line with base point $a$. We define a TDI $K^{l, H}$ as a restriction of the TDI $K^{h, H}$ as follows:
\begin{equation}\label{Flux Insertion K}
    K^{l ,H}_x(\phi) = \begin{cases}  K^{h, H}_x(\phi) \quad & \text{if} \,\,  (x - a) \cdot v > 0 \\
    0 & \text{otherwise}\end{cases}
\end{equation}
for all $\phi \in \R$. We denote by $$\flux_{\phi}^{l, H} := \al_{\phi}^{K^{l , H}}$$ the LGA it generates, and we shall refer to such LGAs as \emph{flux insertion automorphisms}. We define \emph{defect states} by
\begin{equation}\label{def of defect state}
    \omega^{l, H, \phi} := \omega \circ \flux_{\phi}^{l, H}.
\end{equation}
As emphasized by the notation, the states $\omega^{l ,H, \phi}$ depend on the symmetric parent Hamiltonian $H$, on the choice of half-line $l$, and on the flux $\phi$.

\begin{lemma} \label{lem:properties of flux insertion}
    Let $H$ be a symmetric parent Hamiltonian of $\omega$ and let $l$ be a half-line. For any $\phi, \phi' \in \R$ we have
    \begin{equation}
        \flux_{\phi}^{l, H} \circ \rho_{\phi'} = \rho_{\phi'} \circ \flux_{\phi}^{l, H}, \quad  \flux_{\phi}^{l, H} \circ \theta = \theta \circ \flux_{\phi}^{l, H} \quad \text{and} 
 \,\,\, \flux_{\phi}^{l, H} \circ \tau = \tau \circ \flux_{-\phi}^{l, H}.
    \end{equation}
    It follows in particular that
    \begin{equation}
        \omega^{l, H, \phi} = \omega^{l, H, \phi} \circ \rho_{\phi'} = \overline{\omega^{l, H, -\phi}} \circ \tau.
    \end{equation}
\end{lemma}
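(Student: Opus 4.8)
The plan is to derive the three intertwining relations for $\flux_\phi^{l,H}$ directly from the corresponding covariance properties of the generating TDI $K^{l,H}$, and then obtain the statements about $\omega^{l,H,\phi}$ as immediate corollaries. The key observation is a general principle: if $F$ is a TDI and $\beta$ is a (linear or antilinear) automorphism such that the transformed family $\beta \circ F(\cdot) \circ \beta^{-1}$ is again a TDI, then conjugating the defining integral equation \eqref{LGA DE} by $\beta$ shows that $\beta \circ \al_s^F \circ \beta^{-1} = \al_s^{\beta(F)}$, where $\beta(F)$ is the conjugated TDI — with the caveat that for \emph{antilinear} $\beta$ one must also track the sign flip coming from the $\iu$ in \eqref{LGA DE}, so that $\tau \circ \al_s^F \circ \tau^{-1} = \al_s^{-\tau(F)}$ in our sign conventions (this is exactly the mechanism already invoked for symmetry-preserving TDIs in Section~\ref{sec:symmetries}). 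So the whole lemma reduces to reading off how $\rho_{\phi'}$, $\theta$, and $\tau$ act on the TDI $K^{l,H}$.

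First I would record these actions. From \eqref{K covariances} we have $\rho_{\phi'}(K_x^{h,H}(\phi)) = K_x^{h,H}(\phi)$, $\theta(K_x^{h,H}(\phi)) = K_x^{h,H}(\phi)$, and $\tau(K_x^{h,H}(-\phi)) = K_x^{h,H}(\phi)$. The restriction in \eqref{Flux Insertion K} multiplies each $K_x^{h,H}(\phi)$ by the indicator of the geometric condition $(x-a)\cdot v > 0$, which depends only on $x$ and not on $\phi$ and is untouched by $\rho_{\phi'}$, $\theta$, $\tau$; hence $K^{l,H}$ inherits exactly the same three covariances: $\rho_{\phi'}(K_x^{l,H}(\phi)) = K_x^{l,H}(\phi)$, $\theta(K_x^{l,H}(\phi)) = K_x^{l,H}(\phi)$, and $\tau(K_x^{l,H}(\phi)) = K_x^{l,H}(-\phi)$. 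I should also note each $K_x^{l,H}(\phi)$ is even (from the $\theta$-invariance) and self-adjoint (inherited from $K^{h,H}$), so $K^{l,H}$ is genuinely a TDI and generates the LGA $\flux_\phi^{l,H}$.

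Now apply the general principle. For $\rho_{\phi'}$: since $\rho_{\phi'}$ fixes every $K_x^{l,H}(\phi)$, the conjugated TDI equals $K^{l,H}$ itself, giving $\rho_{\phi'}\circ\flux_\phi^{l,H}\circ\rho_{\phi'}^{-1} = \flux_\phi^{l,H}$, i.e. the first relation. For $\theta$: identically, $\theta\circ\flux_\phi^{l,H}\circ\theta^{-1} = \flux_\phi^{l,H}$ (this also follows from the general fact quoted after Definition~\ref{def:LGA} that every LGA commutes with $\theta$). For $\tau$: here $\tau$ is antilinear, the conjugated family is $s\mapsto \tau(K^{l,H}(s))$, and one must be careful that the flow parameter for $\flux$ is the flux $\phi$ which runs in the \emph{role of} the time variable $s$ in \eqref{LGA DE}; running the conjugation through the integral equation and using both the antilinearity-induced sign and $\tau(K_x^{l,H}(\phi)) = K_x^{l,H}(-\phi)$, the two sign effects are consistent with the statement $\tau\circ\flux_{-\phi}^{l,H}\circ\tau^{-1} = \flux_\phi^{l,H}$, equivalently $\flux_\phi^{l,H}\circ\tau = \tau\circ\flux_{-\phi}^{l,H}$. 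This bookkeeping is the one genuinely delicate point — getting the sign of the flux argument right in the antilinear case — and it parallels the computation already done in Section~\ref{sec:gauge transformation} showing $\tau(W(t)\al_t^{H^{h,\phi}}(\partial_\phi H_x^{h,\phi})) = W(t)\al_{-t}^{H^{h,-\phi}}(\partial_\phi H_x^{h,-\phi})$ using that $W$ is odd.

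Finally, the statements about the defect states follow by composing with $\omega$. Since $\omega$ is $U(1)$-invariant, $\omega^{l,H,\phi}\circ\rho_{\phi'} = \omega\circ\flux_\phi^{l,H}\circ\rho_{\phi'} = \omega\circ\rho_{\phi'}\circ\flux_\phi^{l,H} = \omega\circ\flux_\phi^{l,H} = \omega^{l,H,\phi}$. For time reversal, using $\omega\circ\tau = \bar\omega$ together with $\flux_\phi^{l,H}\circ\tau = \tau\circ\flux_{-\phi}^{l,H}$: for any $A\in\caA$, $\overline{\omega^{l,H,-\phi}}(\tau(A)) = \omega^{l,H,-\phi}(\tau(A)^*) = \omega(\flux_{-\phi}^{l,H}(\tau(A^*))) = \omega(\tau(\flux_\phi^{l,H}(A^*))) = \bar\omega(\flux_\phi^{l,H}(A^*)) = \overline{\omega(\flux_\phi^{l,H}(A))} \cdot {}$, which upon unwinding the conjugations equals $\omega^{l,H,\phi}(A)$; hence $\omega^{l,H,\phi} = \overline{\omega^{l,H,-\phi}}\circ\tau$. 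I expect the main obstacle to be purely the sign/flux-direction bookkeeping in the antilinear case; everything else is a direct transcription of \eqref{K covariances} through the conjugation argument.
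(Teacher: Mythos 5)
Your proposal is correct and follows essentially the same route as the paper: read the covariances of the generator $K^{l,H}$ off from \eqref{K covariances} (noting that the restriction indicator in \eqref{Flux Insertion K} is untouched by the symmetries), then conjugate the defining integral equation \eqref{LGA DE} by $\rho_{\phi'}$, $\theta$, and $\tau$, tracking the extra sign from the $\iu$ in the antilinear case to obtain $\tau^{-1}\circ\flux_{\phi}^{l,H}\circ\tau = \flux_{-\phi}^{l,H}$, and finally compose with $\omega$. The only (cosmetic) difference is the last step: the paper inserts $\theta=\tau^2$ and uses parity invariance of $\omega$, while you unwind pointwise on $A$ --- and in that chain the equality $\bar\omega\big(\flux_{\phi}^{l,H}(A^*)\big) = \overline{\omega\big(\flux_{\phi}^{l,H}(A)\big)}$ as written is a typo (the left side is $\omega\big(\flux_{\phi}^{l,H}(A)\big)$ directly, with no complex conjugation), but it does not affect the conclusion.
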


\begin{proof}
The first two identities are immediate consequences of~(\ref{K covariances}). For the last one, we let $\tilde \flux_{\phi}^{l, H} := \tau^{-1} \circ \flux_{\phi}^{l, H} \circ \tau$. Then for any $A \in \caA$
\begin{align}
    \tilde \flux_{\phi}^{l, H}(A)
    &= A - \iu\int_0^\phi  \dd \phi' \, \tilde \flux_{\phi'}^{l, H}\left([\tau^{-1}(K^{l, H}(\phi')),A]\right)
\end{align}
where we used the antilinearity of $\tau^{-1}$. By~(\ref{K covariances}), $\tau^{-1}(K^{l, H}(\phi')) = \tau(K^{l, H}(\phi')) = K^{l, H}(-\phi')$. Hence, $\tilde \flux_{\phi}^{l, H}(A)$ solves the same equation~(\ref{LGA DE}) as $\flux_{-\phi}^{l, H}(A)$, and since $A$ is arbitrary we conclude that $\tau^{-1} \circ \flux_{\phi}^{l, H} \circ \tau = \flux_{-\phi}^{l, H}$ indeed. The identity $\omega^{l, H, \phi} = \omega^{l, H, \phi} \circ \rho_{\phi'}$ is now immediate. As for the last one, we use the invariance of $\omega$ under parity $\theta = \tau^2$ to conclude that
\begin{equation}
    \omega^{l, H, \phi} = \omega\circ\tau^2\circ \flux_{\phi}^{l, H}
    =\omega\circ\tau\circ \flux_{-\phi}^{l, H}\circ\tau
    =\bar\omega\circ \flux_{-\phi}^{l, H}\circ\tau
\end{equation}
since $\omega$ is time reversal invariant.
\end{proof}

\subsubsection{Locality}\label{sec:locality}

We conclude this section with a locality result for LGAs generated by TDIs satisfying~(\ref{Support of K}).

Let $l = (a,v)$ be a half-line. For any $\varsigma \in (0, 2\pi)$, the open subset of $\R^2$ given by
\begin{equation}
	\Lambda_{l, \varsigma} := \{ x \in \R^2 \, : \, (x - a) \cdot v > \norm{x-a} \cos (\varsigma/2)   \}.
\end{equation}
will be called the \emph{cone} with apex at $a \in \R^2$, axis $v \in \R^2$ of unit length, and opening angle $\varsigma$.

\begin{lemma} \label{lem:Locality along line}
    Let $l$ be a half-line and let $\Lambda_{l,\varsigma}$ be a cone. Let $F$ be an $f$-local TDI. Assume moreover that there is $g\in\caF$ such that $\Vert F_x\Vert \leq g(\mathrm{dist}(x,l))$ for all $x\in\bbZ^2$. Then for any $s_0\in\bbR$ and any $\varsigma \in (0, 2\pi)$, there is $h\in\caF$ such that 
    \begin{equation}
        \Vert \alpha_s^F(A) - A\Vert\leq \Vert A\Vert h(r)
    \end{equation}
    for all $|s|\leq s_0$ and all $A\in\caA_{\Lambda_{l,\varsigma}^c\cap B_a(r)^c}$. Moreover, the same holds if $A$ is replaced by $\alpha^G_{s'}(A)$ for any TDI $G$.
\end{lemma}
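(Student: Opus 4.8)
The plan is to show that the automorphism $\alpha_s^F$ moves operators supported in $\Lambda_{l,\varsigma}^c \cap B_a(r)^c$ only by a small amount, by exploiting that the generator $F$ is spatially concentrated near the half-line $l$, so any site $y$ that contributes appreciably to $[F(u), A]$ must be both close to $l$ (by the $g$-decay assumption) and close to $\mathrm{supp}(A)$ (by the $f$-locality of $F_y$ and the Lieb-Robinson bound). Geometrically, if $A$ is supported outside the cone $\Lambda_{l,\varsigma}$ and outside the ball $B_a(r)$, then the set of sites that are simultaneously within distance $\sim\rho$ of $l$ and within distance $\sim\rho$ of $\mathrm{supp}(A)$ is empty unless $\rho \gtrsim c(\varsigma)\, r$ for some constant $c(\varsigma)>0$ depending only on the opening angle: a point near the half-line but at distance $\geq c(\varsigma) r$ from $a$ along $l$ has distance $\geq c(\varsigma) r$ from the complement of the cone, because the cone contains a neighbourhood of the half-line that widens linearly. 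This is the key geometric input and I expect the bookkeeping of this "linear separation" estimate to be the main obstacle — one has to be careful that the relevant constant does not degenerate as $\varsigma \to 0$ in a way that breaks the argument, and to handle the region near the apex $a$ using the $B_a(r)^c$ condition.

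Concretely, first I would recall the standard Lieb-Robinson/locality estimate for LGAs generated by TDIs: for $A \in \caA_X$, one has $\|\alpha_s^F(A) - A\| \leq \|A\| \int_0^{|s|} \|[F(u), \alpha_u^F(A)] - 0\|$-type bounds, which after the usual iteration gives $\|\alpha_s^F(A) - A\| \le \|A\|\, \tilde h(\mathrm{dist}(X, \mathrm{supp}(F)))$ where $\mathrm{supp}(F)$ is controlled by the $g$-decay, and $\tilde h \in \caF$ depends on $s_0$, $f$, and $g$. I would cite the reference already invoked in the excerpt (\cite{LRForfermions}, and the discussion around~(\ref{LR bound})) rather than reprove this. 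Then I would quantify: writing $X = \mathrm{supp}(A) \subset \Lambda_{l,\varsigma}^c \cap B_a(r)^c$, I claim $\mathrm{dist}\big(X,\, \{y : g(\mathrm{dist}(y,l)) \text{ not negligible}\}\big) \geq c(\varsigma)\, r$ up to the decay of $g$. Splitting a site $y$ into the case $\|y-a\| \le r/2$ (then $y$ is far from $X$ since $X \subset B_a(r)^c$, giving distance $\ge r/2$) and $\|y - a\| > r/2$ (then if $y$ is within distance $\delta$ of $l$ it lies in $\Lambda_{l,\varsigma'}$ for a slightly smaller angle as soon as $\delta \le c'(\varsigma)\|y-a\|$, hence $\mathrm{dist}(y, X) \ge \mathrm{dist}(y, \Lambda_{l,\varsigma}^c) \gtrsim c(\varsigma) r$) yields the bound. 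Feeding $\mathrm{dist}(X, \mathrm{supp}(F)) \gtrsim c(\varsigma) r$ into $\tilde h$ and using that $r \mapsto \tilde h(c(\varsigma) r)$ is still in $\caF$ produces the desired $h \in \caF$ with $\|\alpha_s^F(A) - A\| \le \|A\| h(r)$.

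For the final sentence, replacing $A$ by $\alpha_{s'}^G(A)$ for an arbitrary TDI $G$: here I would \emph{not} assume anything about the localisation of $G$, only that $G$ is a TDI (so $\alpha_{s'}^G$ is an automorphism with $\|\alpha_{s'}^G(A)\| = \|A\|$). The point is that $\alpha_{s'}^G(A)$ need not be localised in $\Lambda_{l,\varsigma}^c \cap B_a(r)^c$, so the previous estimate does not apply directly. Instead I would use $\alpha_s^F(\alpha_{s'}^G(A)) - \alpha_{s'}^G(A) = \alpha_{s'}^G\big( (\alpha_{s'}^G)^{-1}\alpha_s^F \alpha_{s'}^G(A) - A \big)$ and observe that $(\alpha_{s'}^G)^{-1} \circ \alpha_s^F \circ \alpha_{s'}^G = \alpha_s^{F'}$ where $F'$ is the TDI obtained by conjugating $F$ by the $G$-flow (as in~(\ref{eq:LGA commutation}) and the surrounding discussion). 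Now the crucial observation is that conjugation by $\alpha_{s'}^G$ preserves the defining bound $\|F'_x(u)\| \leq g(\mathrm{dist}(x,l))$ — \emph{the norm of each term is unchanged and it stays indexed by the same site $x$} — although its $f$-locality constant degrades to some $f' \in \caF$ depending on $G$ and $s'$. Since the estimate above only used the $g$-bound on the term norms together with \emph{some} $f$-locality, applying the already-established case to $F'$ and $A$ gives $\|\alpha_s^{F'}(A) - A\| \le \|A\| h'(r)$ with $h' \in \caF$, and since $\alpha_{s'}^G$ is norm-preserving the same bound (with $h'$ in place of $h$, and one may take the max) holds for $\alpha_s^F(\alpha_{s'}^G(A)) - \alpha_{s'}^G(A)$. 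I expect the only subtlety in this last part to be verifying that conjugation by a TDI flow genuinely keeps the site-indexed norm bound~(\ref{Support of K})-type control intact, which follows because the generator of the conjugated flow is $(\alpha_{s'}^G)^{-1}(F_x(u))$ term by term.
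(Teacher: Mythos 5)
Your proposal is correct and takes essentially the same route as the paper's proof: the geometric decomposition you describe (a site near $l$ and a site near $X=\mathrm{supp}(A)$ are mutually exclusive until a linear threshold in $r$) is precisely the paper's split of $\Z^2$ into the keyhole $\Lambda_{l,\varsigma/2}\cup B_a(r/2)$ and its complement, and your treatment of the second claim via $(\alpha_{s'}^G)^{-1}\circ\alpha_s^F\circ\alpha_{s'}^G = \alpha_s^{(\alpha_{s'}^G)^{-1}(F)}$, noting that conjugation leaves $\|F_x(s)\|$ unchanged while only degrading the $f$-locality profile, is exactly the paper's argument. The one imprecision is in the first half, where you appeal to a ``standard Lieb-Robinson/locality estimate'' of the form $\|\alpha_s^F(A)-A\|\le\|A\|\,\tilde h(\mathrm{dist}(X,\mathrm{supp}\,F))$: a TDI has no genuine support, and here there are two distinct types of decay to combine (each $F_x$ is $f$-localized near $x$, and $\|F_x\|$ decays like $g(\mathrm{dist}(x,l))$), so no single off-the-shelf bound applies; the paper instead estimates $\sum_x\|[F_x,A]\|$ directly — $f$-locality controls $x$ in the keyhole, the $g$-bound controls $x$ outside it — and then feeds the resulting bound on $\|[F,A]\|$ into the integral formula~(\ref{LGA DE}). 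Since that is precisely the computation you gesture at, this is a presentational gap, not a mathematical one.
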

\begin{proof}
For the duration of the proof, we write $\Lambda = \Lambda_{l,\varsigma}$ and $\tilde \Lambda = \Lambda_{l,\varsigma/2}$, see Figure~\ref{fig:cones}.
\begin{figure}[ht]
\centering \includegraphics[width=0.3\textwidth]{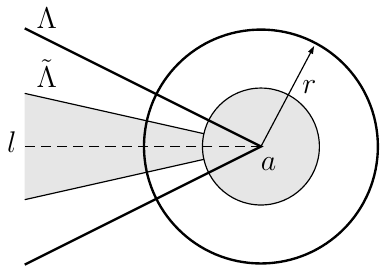}
\caption{The sets forming the keyhole.}
\label{fig:cones}
\end{figure}
If $x\in\tilde\Lambda\cup B_a(\frac{r}{2})$ belongs to the `keyhole' then
\begin{equation}
    \Vert [F_x,A]\Vert
    =\Vert [F_x-F_{x,\mathrm{dist}(x,\Lambda^c\cap B_a(r)^c)},A]\Vert
    \leq 2\Vert A\Vert f(\mathrm{dist}(x,\Lambda^c\cap B_a(r)^c)
\end{equation}
since $F_x$ is $f$-localized near $x$, see Section~\ref{sec:TDI & LGA}. On the other hand, if $x\in\tilde\Lambda^c\cap B_a(\frac{r}{2})^c$, then
\begin{equation}
    \Vert [F_x,A]\Vert \leq 2\Vert F_x\Vert \Vert A\Vert
    \leq 2\Vert A\Vert g(\mathrm{dist}(x,l))
\end{equation}
We now sum these estimates over $x$. If $x\in B_a(\frac{r}{2})$, then $(\mathrm{dist}(x,\Lambda^c\cap B_a(r)^c)\geq \frac{r}{2}$. If $x\in \tilde\Lambda\setminus B_a(\frac{r}{2})$, then $(\mathrm{dist}(x,\Lambda^c\cap B_a(r)^c)\geq \frac{\theta}{4}\Vert x\Vert$. Hence,
\begin{equation}
    \sum_{x\in\tilde\Lambda\cup B_a(\frac{r}{2})}\Vert [F_x,A]\Vert
    \leq 2\Vert A\Vert \Big( c r^2 f(\frac{r}{2}) + \sum_{x\in\tilde\Lambda,\Vert x\Vert>r/2}f(\frac{\theta}{4}\Vert x\Vert)\Big)
\end{equation}
which is bounded above by $2\Vert A\Vert \tilde f(r)$ for some $\tilde f\in\caF$. For the complement of the keyhole, we have that $\mathrm{dist}(x,l)\geq \frac{\theta}{4}\Vert x\Vert$ and so by assumption
\begin{equation}
    \sum_{x\in\tilde\Lambda^c\cap B_a(\frac{r}{2})^c}\Vert [F_x,A]\Vert
    \leq 2\Vert A\Vert \sum_{x\in\tilde\Lambda^c,\Vert x\Vert>r/2}g(\frac{\theta}{4}\Vert x\Vert)
\end{equation}
which is again bounded by $2\Vert A\Vert \tilde g(r)$ for some $\tilde g\in\caF$. We conclude that there is $\tilde h\in\caF$ such that
\begin{equation}
    \Vert[F,A]\Vert\leq \Vert A\Vert \tilde h(r).
\end{equation}
The first claim of the lemma now follows from~(\ref{LGA DE}). For the second claim, it suffices to use~(\ref{eq:LGA commutation}) to conclude that
\begin{equation}
    \Vert \alpha_s^F(\alpha_r^G(A)) - \alpha_r^G(A)\Vert
    =\Vert (\alpha_r^G)^{-1}\circ \alpha_s^F\circ \alpha_r^G(A) - A\Vert
    =\Vert \alpha_s^{(\alpha_r^G)^{-1}(F)}(A) - A\Vert
\end{equation}
and note that $(\alpha_r^G)^{-1}(F)$ is a TDI satisfying the assumptions of the lemma.
\end{proof}

\subsection{Defect states at \texorpdfstring{$\pm \pi$}{} flux}

For the remainder of this section we fix a half-line $l$ and a symmetric parent Hamiltonian $H$ of $\omega$, and write $\omega^{\pm} := \omega^{l, H, \pm\pi}$. These two states correspond to inserting a $\pm \pi$ magnetic flux defect into $\omega$. Our main goal is to show that these states are unitary equivalent and related to each other by time reversal. We can then study the transformation properties of vector representatives of these states under time reversal, finally leading to the definition of the index.

\subsubsection{Almost local equivalence}

We show that the states $\omega^{\pm}$ are almost local perturbations of each other, which will imply by the results of Appendix \ref{app:almost local transitivity} that they are unitarily equivalent through an operator that is almost local.

\begin{definition} \label{def:f-close}
    Let $f \in \caF$ and let $\psi$ and $\psi'$ be states on $\caA$. We say $\psi$ and $\psi'$ are \emph{$f$-close} if for all $r > 0$ we have
    \begin{equation}
        \abs{ \psi(A) - \psi'(A) } \leq f(r) \norm{A} \quad \forall A \in \caA^{\loc}_{B(r)^c}.
    \end{equation}
    The state $\psi'$ is an \emph{almost local perturbation} of the state $\psi$ if $\psi,\psi'$ are $f$-close for some $f\in\caF$.
\end{definition}

Recall that $\omega = \omega_0 \circ \al$ for some $\al = \al^F_1$ and define homogeneous states
\begin{equation}\label{tilde defect state}
    \tilde \omega^\pm := \omega^\pm \circ (\al \circ \flux_{\pi}^{l, H})^{-1}.
\end{equation}
It follows immediately from these definitions that
\begin{equation}
    \tilde \omega^+ = \omega_0,\qquad 
    \tilde \omega^- =  \omega_0 \circ \al \circ \flux_{-\pi}^{l, H} \circ \big( \flux_{\pi}^{l, H} \big)^{-1} \circ \al^{-1}.
\end{equation}

\begin{lemma} \label{lem:f-close on half-planes}
    Let $\Lambda = \Lambda_{l, \pi}$. There is $f \in \caF$ such that for any $A \in \caA^{\loc}_{\Lambda\cap B_a(r)^c}\cup\caA^{\loc}_{\Lambda^c\cap B_a(r)^c}$ we have
    \begin{equation}
        \abs{ \tilde \omega^-(A) - \tilde \omega^+(A) } \leq f(r) \norm{A}.
    \end{equation}
    for all $r>0$.
\end{lemma}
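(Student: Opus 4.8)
The plan is to show that the only relevant discrepancy between $\tilde\omega^+ = \omega_0$ and $\tilde\omega^-$ comes from the difference between the two flux-insertion automorphisms $\flux_{-\pi}^{l,H}$ and $\flux_\pi^{l,H}$, and that this difference is localized near the \emph{point} $a$ rather than along the whole half-line $l$, so that it becomes invisible on observables supported far from $a$ and on either side of the cone $\Lambda = \Lambda_{l,\pi}$. Concretely, writing $\beta := \flux_{-\pi}^{l,H}\circ(\flux_\pi^{l,H})^{-1}$ we have $\tilde\omega^- = \omega_0\circ\al\circ\beta\circ\al^{-1}$, so for $A\in\caA^\loc_{\Lambda\cap B_a(r)^c}\cup\caA^\loc_{\Lambda^c\cap B_a(r)^c}$,
\begin{equation}
    \abs{\tilde\omega^-(A) - \tilde\omega^+(A)} = \abs{\omega_0\big(\al(\beta(\al^{-1}(A))) - A\big)} \le \norm{\al(\beta(\al^{-1}(A))) - \al(\al^{-1}(A))} = \norm{\beta(\al^{-1}(A)) - \al^{-1}(A)},
\end{equation}
so it suffices to show $\beta$ acts almost trivially on $\al^{-1}(A)$. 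Since $\al^{-1}$ is an LGA, by the Lieb--Robinson bound $\al^{-1}(A)$ is almost-localized in a slightly enlarged region; the standard trick is to replace $\al^{-1}(A)$ by its local approximant and absorb the tails into $\caF$.

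The key point is to understand the support of the generator of $\beta$. Recall that $K^{l,H}_x(\phi)$ is the restriction of $K^{h,H}_x(\phi)$ to the half-line side $(x-a)\cdot v > 0$, and by~(\ref{Support of K}) it satisfies $\norm{K^{h,H}_x(\phi)}\le g(\dist(x,\partial h))$, i.e. it decays away from the boundary line $\partial h$, which is the \emph{full} line through $l$. After restricting to $(x-a)\cdot v>0$, the TDI $K^{l,H}$ is supported near the half-line $l$ and decays both transversally (away from $l$) and — crucially — it is genuinely a TDI whose terms decay away from $l$. The automorphism $\beta = \flux_{-\pi}^{l,H}\circ(\flux_\pi^{l,H})^{-1}$ is generated, via the group law from Section~\ref{sec:TDI & LGA}, by a TDI $G$ built from $K^{l,H}(\phi)$ evaluated at $\phi = \pm\pi$; this $G$ inherits the property $\norm{G_x}\le \tilde g(\dist(x,l))$ for some $\tilde g\in\caF$, being $f$-local for a suitable $f\in\caF$. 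Therefore $\beta$ is exactly the kind of automorphism to which Lemma~\ref{lem:Locality along line} applies, with the half-line $l$ and a cone $\Lambda_{l,\varsigma}$ of opening angle $\varsigma$ slightly larger than $\pi$ (so that $\Lambda_{l,\pi}\subset\Lambda_{l,\varsigma}$ and $\Lambda_{l,\pi}^c$ still has positive angular overlap with $\Lambda_{l,\varsigma}^c$ — one must check the geometry allows a single cone containing $\Lambda$ whose complement still contains a cone around $-v$, which it does since $\pi<2\pi$).

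So the argument is: fix a cone $\Lambda_{l,\varsigma}$ with $\pi<\varsigma<2\pi$ such that $\Lambda=\Lambda_{l,\pi}\subset\Lambda_{l,\varsigma}$; then $\Lambda^c\supset\Lambda_{l,\varsigma}^c$ up to a bounded-angle region, and since we only need the estimate for $A$ supported in $\Lambda\cap B_a(r)^c$ or $\Lambda^c\cap B_a(r)^c$, in either case $A$ is supported in $\Lambda_{l,\varsigma}^c\cap B_a(r')^c$ for $r'$ comparable to $r$ — wait, this is the subtle point: observables in $\Lambda\cap B_a(r)^c$ are \emph{not} in the complement of the cone. Here one must instead apply Lemma~\ref{lem:Locality along line} in the form appropriate to each piece: for $A\in\caA^\loc_{\Lambda^c\cap B_a(r)^c}$ this is direct. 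For $A\in\caA^\loc_{\Lambda\cap B_a(r)^c}$, the point is that $\Lambda = \Lambda_{l,\pi}$ is the open half-plane $h_l$ shifted to apex $a$, i.e. one of the two half-planes cut by the \emph{line} through $l$; observables deep inside $\Lambda$ and far from $a$ are far from the half-line $l$ on one transversal side, and Lemma~\ref{lem:Locality along line}'s proof (the ``keyhole'' estimate, summing $\norm{[G_x,A]}$ using $f$-locality of $G_x$ near $x$ together with $\norm{G_x}\le\tilde g(\dist(x,l))$) goes through verbatim with $\Lambda_{l,\varsigma}$ replaced by the reflection handling this side. \textbf{The main obstacle} I anticipate is exactly this geometric bookkeeping: verifying that both families $\caA^\loc_{\Lambda\cap B_a(r)^c}$ and $\caA^\loc_{\Lambda^c\cap B_a(r)^c}$ are covered by (possibly two applications of) Lemma~\ref{lem:Locality along line} with appropriate cones, and checking that the generator $G$ of $\beta$ indeed satisfies the hypotheses of that lemma — in particular that it is $f$-local for a single $f\in\caF$ uniformly in $\phi\in[-\pi,\pi]$ and decays away from $l$, which follows from property~2 of the quasi-adiabatic generator and the fact that restriction to a half-line, composition, and inversion of LGAs all preserve these decay classes. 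The tails from replacing $\al^{-1}(A)$ by a local operator (using $\al=\al^F_1$ an LGA and the Lieb--Robinson bound~(\ref{LR bound})) contribute another $\caF$-function, and the final $f$ is the sum, absorbed using that $\caF$ is closed under finite sums.
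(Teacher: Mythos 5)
There is a genuine gap. You reduce the lemma to showing that $\beta := \flux_{-\pi}^{l,H}\circ(\flux_{\pi}^{l,H})^{-1}$ ``acts almost trivially'' on $\al^{-1}(A)$ for both families of observables, and then try to apply Lemma~\ref{lem:Locality along line}. That lemma controls the action of an LGA whose generator decays away from $l$ on observables supported in $\Lambda_{l,\varsigma}^c\cap B_a(r)^c$, \emph{i.e.}\ far from the half-line $l$. For $A\in\caA^\loc_{\Lambda^c\cap B_a(r)^c}$ this is exactly the situation and your argument is fine. But for $A\in\caA^\loc_{\Lambda\cap B_a(r)^c}$ the claim fails outright: $\Lambda=\Lambda_{l,\pi}=\{x:(x-a)\cdot v>0\}$ is the half-plane whose axis of symmetry is the half-line $l$ itself, so $\Lambda\cap B_a(r)^c$ contains observables sitting directly on $l$ at distance $\sim r$ from $a$. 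The generator $K^{l,H}$ is supported precisely near $l$ (its transversal decay is controlled, not its longitudinal one), so $\flux_{\pm\pi}^{l,H}$ and hence $\beta$ act substantially on such observables; your claim that ``observables deep inside $\Lambda$ and far from $a$ are far from the half-line $l$ on one transversal side'' is a misreading of the geometry (you are conflating $\Lambda_{l,\pi}$ with the orthogonal half-plane $h_l$), and $\beta$ is genuinely not close to the identity there.

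The missing idea, which the paper's proof uses and yours does not, is that one should compare $\tilde\beta$ with the \emph{half-plane} automorphism $\beta=\al\circ\al_{-\pi}^{h,H}\circ(\al_\pi^{h,H})^{-1}\circ\al^{-1}$ (not with the identity). The generator difference $K^{h,H}-K^{l,H}=K^{-l,H}$ is localized near the opposite half-line $-l$, so $\tilde\beta(A)\approx\beta(A)$ for $A$ far from $-l$, which covers $\Lambda\cap B_a(r)^c$. One then needs the exact \emph{state-level} identity $\omega_0\circ\beta=\omega_0$, which follows from $\omega\circ\al_{-\pi}^{h,H}\circ(\al_\pi^{h,H})^{-1}=\omega^{h,-\pi}\circ(\al_\pi^{h,H})^{-1}=\omega^{h,\pi}\circ(\al_\pi^{h,H})^{-1}=\omega$ using $\omega^{h,-\pi}=\omega^{h,\pi}$ (periodicity of the $U(1)$ action, $\rho^h_{-\pi}=\rho^h_\pi$) and the parallel transport relation~(\ref{parallel transport}). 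This is the algebraic input that replaces the false statement ``$\beta\approx\mathrm{id}$ on $\Lambda$'': $\beta$ is not close to the identity there, but $\omega_0$ is \emph{invariant} under it, which is what the estimate ultimately needs. Without this step your argument cannot close.
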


\begin{proof}
    In this proof, $f$ stands for a function in $\caF$ and it changes from equation to equation. We write $\tilde \omega^- = \omega_0 \circ \tilde \beta$, where $\tilde \beta := \al \circ \flux_{-\pi}^{l, H} \circ \big( \flux_{\pi}^{l, H} \big)^{-1} \circ \al^{-1}$. Let first $A\in\Lambda\cap B_a(r)^c$. Let $h = h_{l}$ be the half-plane associated to the half-line $l = (a, v)$ and let 
    \begin{equation}
        \beta = \al \circ \al_{-\pi}^{h, H} \circ \big( \al_{\pi}^{h, H} \big)^{-1} \circ \al^{-1}.
    \end{equation} 
    We claim that
    \begin{equation}\label{beta betatilde}
        \left\Vert\tilde \beta(A) - \beta(A)\right\Vert \leq f(r) \norm{A}
    \end{equation}
    where $f$ is independent of $A$. We first prove the claim with the replacements $\tilde \beta\to \flux_{-\pi}^{l, H} \circ \big( \flux_{\pi}^{l, H} \big)^{-1}$ and $\beta\to \al_{-\pi}^{h, H} \circ \big( \al_{\pi}^{h, H} \big)^{-1}$ since this yields the claim by the second part of Lemma~\ref{lem:Locality along line}. The family of automorphisms $\big(\flux_{\phi}^{l, H}\big)^{-1} \circ \al_{\phi}^{h, H}$ is an LGA generated by
    \begin{equation}\label{eq:complicated G}
        (K^{h,H}(\phi) - K^{l,H}(\phi)) + \iu\big(\alpha^{h,H}_\phi\big)^{-1}
        \Big[\int_0^\phi \dd u \,  \alpha^{h,H}_u\left(\left[K^{h,H}(u) - K^{l,H}(u), \gamma^{l,H}_{\phi,u}(K^{l,H}(u))\right]\right) \Big],
    \end{equation}
 see e.g.~\cite[Lemma~5.4]{bachmann2022classification}. If we denote by $-l$ the half-line $-l=(a,-v)$, we have that  $K^{l,H} - K^{h,H} = K^{-l,H}$ which satisfies the assumptions of Lemma~\ref{lem:Locality along line}, hence so does the TDI~(\ref{eq:complicated G}). The Lemma (with $\Lambda$ replacing $\Lambda^c$ since it is applied to the cone $\Lambda^c$) shows that
    \begin{equation}
        \left\Vert \big(\flux_{-\pi}^{l, H}\big)^{-1} \circ \al_{-\pi}^{h, H}\left((\alpha^{h,H}_\pi\big)^{-1}(A)\right) - \left((\alpha^{h,H}_\pi\big)^{-1}(A)\right)\right\Vert\leq f(r) \Vert A\Vert
    \end{equation}
    for all $A\in\Lambda\cap B_a(r)^c$. Similarly, $\flux_{\phi}^{l, H} \circ\big( \al_{\phi}^{h, H} \big)^{-1}$ is generated by $-\al_{\phi}^{h, H}(K^{h,H}(\phi) - K^{l,H}(\phi))$ which implies that
    \begin{equation}
        \left\Vert
        A 
        -\flux_{\pi}^{l, H} \circ\big( \al_{\pi}^{h, H} \big)^{-1}(A)
        \right\Vert \leq f(r) \Vert A\Vert.
    \end{equation}
    These estimates yield the claim~(\ref{beta betatilde}) since
    \begin{multline*}
        \left\Vert
        \flux_{-\pi}^{l, H} \circ \big( \flux_{\pi}^{l, H} \big)^{-1}(A)
        -\al_{-\pi}^{h, H} \circ \big( \al_{\pi}^{h, H} \big)^{-1}(A)
        \right\Vert 
        =\left\Vert
        A 
        -\flux_{\pi}^{l, H} \circ\big( \flux_{-\pi}^{l, H} \big)^{-1}\circ\al_{-\pi}^{h, H} \circ \big( \al_{\pi}^{h, H} \big)^{-1}(A)
        \right\Vert  \\
        \leq 
        \left\Vert
        A 
        -\flux_{\pi}^{l, H} \circ\big( \al_{\pi}^{h, H} \big)^{-1}(A)
        \right\Vert
        +\left\Vert
        \flux_{\pi}^{l, H} \circ\big( \al_{\pi}^{h, H} \big)^{-1}(A)
        -\flux_{\pi}^{l, H} \circ\big( \flux_{-\pi}^{l, H} \big)^{-1}\circ\al_{-\pi}^{h, H} \circ \big( \al_{\pi}^{h, H} \big)^{-1}(A)
        \right\Vert.
    \end{multline*}

    Now note that
    \begin{equation}
        \omega \circ \al_{-\pi}^{h, H} \circ \big( \al_{\pi}^{h, H} \big)^{-1} = \omega^{h, -\pi} \circ \big( \al_{\pi}^{h, H} \big)^{-1} = \omega^{h, \pi} \circ \big( \al_{\pi}^{h, H} \big)^{-1} = \omega,
    \end{equation}
    where the first and last equalities are by the definition of the states~(\ref{parallel transport}) and the second one follows from $\omega^{h, -\pi} = \omega \circ \rho_\pi =\omega \circ \rho_{-\pi} = \omega^{h, \pi}$. It follows that $\omega_0 \circ \beta = \omega_0 = \tilde \omega^+$ and so for all $A\in\Lambda\cap B_a(r)^c$ we find
    \begin{equation}
        \abs{\tilde \omega^-(A) -  \tilde \omega^+(A)} = \abs{ \omega_0 \circ \left(  (\tilde \beta - \beta)(A) \right)} \leq f(r) \norm{A}.
    \end{equation}
    If $A\in\Lambda^c\cap B_a(r)^c$, then Lemma~\ref{lem:Locality along line} implies that $\big\Vert\big( \flux_{\phi}^{l, H} \big)^{-1}(A)-A\big\Vert\leq f(r)\Vert A\Vert$ by similar arguments and therefore that
    \begin{equation}
        \left\Vert\tilde \beta(A) - A \right\Vert \leq f(r) \norm{A}
    \end{equation}
    instead of~(\ref{beta betatilde}), with the same conclusion as above since $\omega_0 = \tilde \omega^+$.
\end{proof}

\begin{proposition} \label{prop:almost local unitary equivalence}
    There exists a homogeneous unitary $U \in \caA^{\aloc}$ such that $\omega^+ = \omega^- \circ \Ad[U]$.
\end{proposition}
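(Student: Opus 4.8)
The plan is to deduce Proposition \ref{prop:almost local unitary equivalence} by combining Lemma \ref{lem:f-close on half-planes} with the almost-local transitivity result for fermionic SRE states promised in Appendix \ref{app:almost local transitivity}. First I would observe that $\tilde\omega^+ = \omega_0$ is a homogeneous pure product state, and $\tilde\omega^- = \omega_0 \circ \tilde\beta$ with $\tilde\beta$ an LGA (a composition of the LGAs $\al$, $\flux^{l,H}_{\pm\pi}$ and their inverses), so $\tilde\omega^-$ is itself an SRE state. In particular both $\tilde\omega^\pm$ are pure homogeneous states on $\caA$. By Lemma \ref{lem:f-close on half-planes} they agree, up to an $\caF$-error, on local observables supported in $\Lambda\cap B_a(r)^c$ and on local observables supported in $\Lambda^c\cap B_a(r)^c$. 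Since $\bbZ^2 = \Lambda \cup \Lambda^c$ (as $\Lambda = \Lambda_{l,\pi}$ is an open half-plane and $\Lambda^c$ its closed complement), every ball complement $B_a(r)^c$ is covered by these two regions, so after splitting a general local $A \in \caA^{\loc}_{B_a(r')^c}$ into its pieces in the two cones one gets that $\tilde\omega^+$ and $\tilde\omega^-$ are $f$-close in the sense of Definition \ref{def:f-close}; i.e., $\tilde\omega^-$ is an almost local perturbation of the product state $\tilde\omega^+ = \omega_0$.

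Next I would invoke the transitivity result of Appendix \ref{app:almost local transitivity}: if two homogeneous pure states differ by an almost local perturbation, and one of them (or both) is SRE, then there is a homogeneous unitary $V \in \caA^{\aloc}$ with $\tilde\omega^- = \tilde\omega^+ \circ \Ad[V]$. (This is precisely the role Appendix \ref{app:almost local transitivity} is advertised to play in the introduction.) Applying this to $\tilde\omega^\pm$ yields such a $V$. Finally I would transport the statement back through the LGA $\gamma := \al\circ\flux^{l,H}_\pi$, which relates the defect states to their tilded versions via $\omega^\pm = \tilde\omega^\pm \circ \gamma$. Then
\begin{equation}
\omega^- = \tilde\omega^- \circ \gamma = \tilde\omega^+ \circ \Ad[V] \circ \gamma = \tilde\omega^+ \circ \gamma \circ \Ad[\gamma^{-1}(V)] = \omega^+ \circ \Ad[\gamma^{-1}(V)],
\end{equation}
so setting $U := \gamma^{-1}(V)$ gives $\omega^+ = \omega^- \circ \Ad[U^{-1}] = \omega^- \circ \Ad[U^*]$; replacing $U$ by $U^*$ (still homogeneous and almost local) gives the claimed form $\omega^+ = \omega^- \circ \Ad[U]$. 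Here one uses the Lieb-Robinson bound \eqref{LR bound} to ensure $\gamma^{-1}(V) \in \caA^{\aloc}$, and the fact that LGAs commute with $\theta$ to see that $\gamma^{-1}(V)$ is still homogeneous (even).

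I expect the main obstacle to be the careful covering/splitting argument needed to pass from the two one-sided estimates of Lemma \ref{lem:f-close on half-planes} (valid separately on $\Lambda$ and on $\Lambda^c$) to a single $f$-closeness estimate on all of $B_a(r)^c$ — one must choose $r$ comparable to the scale at which $A$ is supported, decompose $A$ (or rather estimate $|\tilde\omega^-(A)-\tilde\omega^+(A)|$ by triangulating through a state that coincides with one of them on each cone), and absorb the cross terms into a new function in $\caF$; this is routine but is the one place real work happens. The other potential subtlety is making sure the transitivity statement in Appendix \ref{app:almost local transitivity} applies in the fermionic $\Z_2$-graded setting with the homogeneity (evenness) of $U$ preserved — but since both $\tilde\omega^\pm$ are homogeneous and one is a product state, this is exactly the hypothesis that appendix is designed to handle, so invoking it is legitimate.
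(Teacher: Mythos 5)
Your overall architecture matches the paper exactly: reduce to the tilded states, use Lemma \ref{lem:f-close on half-planes} to get cone-wise closeness, upgrade to full $f$-closeness, invoke the transitivity result of Appendix \ref{app:almost local transitivity} to get $\widetilde U$, and conjugate back through $\al \circ \flux^{l,H}_\pi$ using the Lieb--Robinson bound \eqref{LR bound} for almost-locality and parity-commutation for homogeneity. That part is fine.

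The gap is in the step you flag as ``routine'': passing from the two one-sided estimates of Lemma \ref{lem:f-close on half-planes} (on $\Lambda\cap B_a(r)^c$ and on $\Lambda^c\cap B_a(r)^c$) to a single estimate on all of $\caA^{\loc}_{B_a(r)^c}$. Your proposed mechanism --- ``splitting a general local $A$ into its pieces in the two cones'' or ``triangulating through a state that coincides with one of them on each cone'' --- does not work. A general operator supported in $B_a(r)^c$ is not a sum of operators each supported entirely inside one cone, and even for a product $A=BC$ with $B$ in one cone and $C$ in the other, knowing $|\tilde\omega^\pm(B)-\tilde\omega^\pm(C)|$ separately tells you nothing about $\tilde\omega^-(BC)$ without control on the correlations of $\tilde\omega^-$ across the cone boundary. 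Moreover ``a state that coincides with one of them on each cone'' is not determined by its two marginals.

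What the paper actually does at this point is Lemma \ref{lem:f-close on cones implies f-close}, which is a genuinely different (and not routine) argument at the density-matrix level: one bounds the single-site von Neumann entropies of $\psi$ using Fannes' inequality and the cone-wise closeness to the pure product state $\omega_0$, sums them using Araki's subadditivity of fermionic entropy, concludes that the mutual information of $\psi$ across $\Lambda\cap\Gamma$ versus $\Lambda^c\cap\Gamma$ outside $B_a(r)$ is of order $h(r)$ for some $h\in\caF$, and then applies the quantum Pinsker inequality to show $\rho^{(\Gamma)}$ is trace-norm close to the tensor product of its cone marginals. Only then does the pointwise closeness to $\omega_0$ on each cone combine to give closeness of $\rho^{(\Gamma)}$ to $\rho_0^{(\Gamma)}$. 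So this is not an observable-splitting argument at all; it relies crucially on $\omega_0$ being a pure product state (so that its single-site entropies vanish) and on entropic tools. You should state this step as a separate lemma and prove it via relative entropy and Pinsker, rather than treating it as a covering estimate.

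One small additional remark: when you write ``if two homogeneous pure states differ by an almost local perturbation, and one of them is SRE, then there is a homogeneous unitary \ldots'', the hypothesis the paper actually uses (Proposition \ref{prop:f-close implies almost local unitary equivalence}) is that one of the two states is a homogeneous pure \emph{product} state, not merely SRE. In your application this is satisfied since $\tilde\omega^+ = \omega_0$ is a product state, so the conclusion goes through, but the hypothesis you quote is stronger than what is proved in the appendix and would need its own justification.
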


\begin{proof}
    Since $\tilde \omega^+ = \omega_0$ is a product state, it follows from Lemma~\ref{lem:f-close on half-planes} and Lemma \ref{lem:f-close on cones implies f-close} that the homogeneous state $\tilde \omega^+$ is an almost local perturbation of $\tilde \omega^-$, see Definition~\ref{def:f-close}. By Lemma \ref{prop:f-close implies almost local unitary equivalence} this implies that there is a homogeneous unitary $\widetilde U \in \caA^{\aloc}$ such that $\tilde \omega^+ = \tilde \omega^- \circ \Ad[\widetilde U]$. Recalling that $\omega^{\pm} = \tilde \omega^{\pm} \circ (\al \circ \flux_{\pi}^{l, H})$, we have that
    \begin{equation}
        \omega^+ = \tilde \omega^{+} \circ (\al \circ \flux_{\pi}^{l, H})
        =\tilde \omega^- \circ \Ad[\widetilde U]\circ (\al \circ \flux_{\pi}^{l, H})
        =\tilde \omega^- \circ(\al \circ \flux_{\pi}^{l, H})\circ \Ad[U] 
        =\omega^-\circ \Ad[U] 
    \end{equation}
    where $U = (\al \circ \flux_{\pi}^{l, H})^{-1}(\widetilde U)$ is again almost local by~(\ref{LR bound}), and is homogeneous since $\al$ and $\flux_{\pi}^{l, H}$ commute with parity.
\end{proof}

\subsubsection{GNS representation and time reversal}

Let $(\Pi, \caH, \Omega^{+})$ be the GNS triple of $\omega^+$ and fix a homogeneous unitary $U \in \caA^{\aloc}$ such that $\omega^+ = \omega^- \circ \Ad[U]$, which exists be Proposition \ref{prop:almost local unitary equivalence}. Then the unit vector $\Omega^- := \Pi( U ) \, \Omega^+$ represents the state $\omega^-$:
\begin{equation}
    \omega^-(A) = \langle \Omega^-, \, \Pi(A) \, \Omega^- \rangle
\end{equation}
for all $A \in \caA$. We now investigate the transformation properties of the vectors $\Omega^{\pm}$ under time reversal. To this end we first note the following general fact.

\begin{lemma} \label{lem:antiunitary implementation}
	Let $\nu : \caA \rightarrow \C$ be a pure state that is invariant under an antilinear automorphism $\tau : \caA \rightarrow \caA$, namely $\nu\circ\tau = \bar\nu$. Let $(\Pi_{\nu}, \caH_{\nu}, \Omega_\nu)$ be the GNS triple of $\nu$. Then there exists a unique antiunitary operator $T$ acting on $\caH_\nu$ such that $T  \Omega_\nu = \Omega_\nu$ and
	\begin{equation}\label{T implementation}
		T \, \Pi_{\nu}(A) \, T^* = \Pi_{\nu} \big( \tau(A) \big)
	\end{equation}
	for all $A \in \caA$.
\end{lemma}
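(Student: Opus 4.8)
The plan is to construct $T$ explicitly on the dense subspace $\Pi_\nu(\caA)\Omega_\nu \subset \caH_\nu$ by the formula suggested by~\eqref{T implementation}, namely $T\bigl(\Pi_\nu(A)\Omega_\nu\bigr) := \Pi_\nu(\tau(A))\Omega_\nu$, and then verify that this is well-defined, antilinear, isometric, and densely defined, hence extends uniquely to an antiunitary on all of $\caH_\nu$ with the stated properties. First I would check well-definedness and the isometry property simultaneously: for $A, B \in \caA$,
\begin{equation*}
    \langle \Pi_\nu(\tau(A))\Omega_\nu, \Pi_\nu(\tau(B))\Omega_\nu \rangle = \nu\bigl(\tau(A)^*\tau(B)\bigr) = \nu\bigl(\tau(A^* B)\bigr) = \bar\nu(A^*B) = \overline{\nu(A^*B)} = \overline{\langle \Pi_\nu(A)\Omega_\nu, \Pi_\nu(B)\Omega_\nu\rangle},
\end{equation*}
where I use that $\tau$ is an antilinear $*$-automorphism (so $\tau(A)^*\tau(B) = \tau(A^*)\tau(B) = \tau(A^*B)$) together with the invariance hypothesis $\nu\circ\tau = \bar\nu$. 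In particular if $\Pi_\nu(A)\Omega_\nu = \Pi_\nu(A')\Omega_\nu$ then taking $B$ ranging over $\caA$ shows $\Pi_\nu(\tau(A))\Omega_\nu = \Pi_\nu(\tau(A'))\Omega_\nu$, so $T$ is well-defined on the dense domain; the same computation with $A = B$ gives $\|T(\Pi_\nu(A)\Omega_\nu)\|^2 = \overline{\|\Pi_\nu(A)\Omega_\nu\|^2} = \|\Pi_\nu(A)\Omega_\nu\|^2$, i.e. $T$ is isometric on its domain. Antilinearity is immediate from the antilinearity of $\tau$ and the linearity of $\Pi_\nu$: $T(\lambda \Pi_\nu(A)\Omega_\nu) = T(\Pi_\nu(\lambda A)\Omega_\nu) = \Pi_\nu(\tau(\lambda A))\Omega_\nu = \bar\lambda\,\Pi_\nu(\tau(A))\Omega_\nu$.

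Next I would extend $T$ to all of $\caH_\nu$ by density: an antilinear isometry defined on a dense subspace extends uniquely by continuity to an antilinear isometry on the whole space. Surjectivity (hence antiunitarity) follows because $\tau$ is an automorphism: for any $A$, $\Pi_\nu(A)\Omega_\nu = T\bigl(\Pi_\nu(\tau^{-1}(A))\Omega_\nu\bigr)$ since $\tau(\tau^{-1}(A)) = A$, so the range of $T$ contains the dense set $\Pi_\nu(\caA)\Omega_\nu$ and is closed, hence is all of $\caH_\nu$. The fixing of the cyclic vector, $T\Omega_\nu = \Omega_\nu$, is the special case $A = \I$ together with $\tau(\I) = \I$ (a unital $*$-automorphism fixes the identity). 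The intertwining relation~\eqref{T implementation} is then checked on the dense domain: for $A, B \in \caA$,
\begin{equation*}
    T\,\Pi_\nu(A)\,T^*\bigl(\Pi_\nu(\tau(B))\Omega_\nu\bigr) = T\,\Pi_\nu(A)\,\Pi_\nu(B)\Omega_\nu = T\,\Pi_\nu(AB)\Omega_\nu = \Pi_\nu(\tau(AB))\Omega_\nu = \Pi_\nu(\tau(A))\Pi_\nu(\tau(B))\Omega_\nu,
\end{equation*}
using $T^* = T^{-1}$ on the range of $T$ and $\tau(AB) = \tau(A)\tau(B)$; since vectors of the form $\Pi_\nu(\tau(B))\Omega_\nu$ are dense, this proves $T\,\Pi_\nu(A)\,T^* = \Pi_\nu(\tau(A))$ as operators.

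Finally, uniqueness: if $T'$ is another antiunitary with $T'\Omega_\nu = \Omega_\nu$ and $T'\Pi_\nu(A)T'^* = \Pi_\nu(\tau(A))$, then $T'\bigl(\Pi_\nu(A)\Omega_\nu\bigr) = T'\Pi_\nu(A)T'^*\,T'\Omega_\nu = \Pi_\nu(\tau(A))\Omega_\nu = T\bigl(\Pi_\nu(A)\Omega_\nu\bigr)$ on the dense domain, so $T' = T$. There is essentially no hard step here — the statement is a standard GNS-uniqueness argument, the fermionic antilinearity of $\tau$ causing only the cosmetic appearance of complex conjugates; the one point to be careful about is that $\tau$ being an antilinear $*$-automorphism means $\tau(A^*) = \tau(A)^*$ (the $*$-operation is preserved, not twisted), which is what makes the inner-product computation close. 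I would include a one-line remark that purity of $\nu$ is not actually needed for this lemma (it is only flagged because it will be used downstream), unless the authors want to invoke irreducibility of $\Pi_\nu$ elsewhere.
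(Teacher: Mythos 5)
Your proof is correct but takes a genuinely different route from the paper. You construct $T$ directly on the dense subspace $\Pi_\nu(\caA)\Omega_\nu$ by the formula $T(\Pi_\nu(A)\Omega_\nu) := \Pi_\nu(\tau(A))\Omega_\nu$, verify conjugate-isometry via $\langle \Pi_\nu(\tau(A))\Omega_\nu, \Pi_\nu(\tau(B))\Omega_\nu \rangle = \overline{\langle \Pi_\nu(A)\Omega_\nu, \Pi_\nu(B)\Omega_\nu\rangle}$, and extend by density. The paper instead fixes an arbitrary conjugation $K$ on $\caH_\nu$ with $K\Omega_\nu = \Omega_\nu$, observes that $\tilde\Pi_\nu := \Ad[K]\circ\Pi_\nu\circ\tau$ is a \emph{linear} $*$-representation with $\Omega_\nu$ a cyclic vector realizing $\nu$, and invokes the uniqueness (up to a unitary fixing the cyclic vector) of the GNS representation to produce a unitary $V$, setting $T = KV$. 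Your approach is more elementary and self-contained — it never appeals to the GNS uniqueness theorem — and makes the construction of $T$ explicit, which is arguably more illuminating. The paper's approach is shorter given that GNS uniqueness is a standard black box, and it sidesteps the need to verify well-definedness, isometry, antilinearity, surjectivity, and the intertwining relation by hand. One small point: you should be slightly more explicit in the chain $\bar\nu(A^*B) = \overline{\nu(A^*B)}$ — by the paper's definition $\bar\nu(X) = \nu(X^*)$, so this step is using that $\nu$ is a positive functional (hence $\nu(X^*) = \overline{\nu(X)}$); you use this fact silently. Your closing remark that purity is not actually used in this lemma is correct — neither your proof nor the paper's invokes it (GNS uniqueness holds for arbitrary states), and it is indeed present in the statement only because the lemma is applied to pure states downstream.
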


\begin{proof}
 Let $K$ be a conjugation on $\caH_{\nu}$ such that $K\Omega_\nu = \Omega_\nu$. Then the map $\tilde \Pi_{\nu} = \Ad[K] \circ \Pi_{\nu} \circ \tau : \caA \rightarrow \caB(\caH_\nu)$ is a (linear) *-representation of $\caA$ on $\caH_{\nu}$. Moreover, $\Omega_\nu$ is cyclic for $\tilde \Pi_{\nu}$ and
	\begin{equation}
		\langle \Omega_\nu, \tilde \Pi_{\nu} (A) \, \Omega_\nu \rangle = \langle  \Pi_{\nu}( \tau(A))K\Omega_\nu, K \Omega_\nu \rangle = \overline{\nu( \tau(A) ) }= \nu(A)
	\end{equation}
	for all $A \in \caA$. It follows that $(\tilde \Pi_{\nu}, \caH_{\nu}, \Omega_\nu)$ is a GNS triple for $\nu$. By uniqueness of the GNS representation, there is a unique unitary $V$ such that $V \Omega_\nu = \Omega_\nu$ and
	\begin{equation}
		\tilde \Pi_{\nu}(A) = V \Pi_{\nu}(A) V^*.
	\end{equation}
	This is equivalent to
	\begin{equation}
		\Pi_{\nu}( \tau(A) ) = (KV) \, \Pi_{\nu}(A) \, (KV)^*,
	\end{equation}
	and $(KV) \Omega_\nu = \Omega_\nu$. Hence (\ref{T implementation}) holds for the choice $T = KV$.

	If $T'$ were another antiunitary that satisfies the conditions of the lemma, then $T' = K V'$ for a unitary $V' = K T'$, and $V' \Omega_\nu = \Omega_\nu$. Moreover,
	\begin{equation}
		V' \, \Pi_{\nu}(A) \, (V')^* = K T' \, \Pi_{\nu}(A) \, (T')^* K = (\Ad[K] \circ \Pi_\nu \circ \tau)(A) = \tilde \Pi_\nu(A).
	\end{equation}
	Since $V$ was the unique unitary with these properties, we conclude that $V' = V$ and therefore $T' = K V' = KV = T$, showing uniqueness.
\end{proof}

With this in hand, we come back to our specific setting with $(\Pi, \caH, \Omega^+)$ the GNS triple of $\omega^+$ and $\Omega^- \in \caH$ given by $\Omega^- = \Pi(U) \Omega^+$.

\begin{lemma} \label{lem:TR implementation}
	There is a unique antiunitary $T$ acting on the GNS Hilbert space $\caH$ such that $ \Omega^{-}=T \Omega^+$ and such that $T$ implements time reversal, \ie
	\begin{equation}
		\Pi(\tau(A)) = T \Pi(A) T^*
	\end{equation}
	for all $A \in \caA$.
\end{lemma}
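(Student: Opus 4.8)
The plan is to apply Lemma~\ref{lem:antiunitary implementation} to the state $\nu = \omega^+$, which is pure (being an LGA image of the pure state $\omega$) and, by Lemma~\ref{lem:properties of flux insertion}, satisfies $\omega^+ = \overline{\omega^-}\circ\tau$. The issue is that the lemma in its stated form applies to states invariant under $\tau$ in the sense $\nu\circ\tau=\bar\nu$, whereas here $\tau$ maps $\omega^+$ to $\omega^-$, a \emph{different} (though unitarily equivalent) state. So the first step is to promote the unitary equivalence $\omega^+ = \omega^-\circ\Ad[U]$ of Proposition~\ref{prop:almost local unitary equivalence} into a statement that $\omega^+$ is invariant under a \emph{modified} antilinear automorphism. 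Concretely, set $\tau' := \Ad[U]\circ\tau$ (or $\tau\circ\Ad[U']$ with a suitable $U'$ — one has to check which composition order makes the computation close). Then $\tau'$ is an antilinear automorphism of $\caA$ and, using $\omega^+=\omega^-\circ\Ad[U]$ together with $\omega^-=\overline{\omega^+}\circ\tau^{-1}$ (the inverse form of Lemma~\ref{lem:properties of flux insertion}), one computes $\omega^+\circ\tau' = \overline{\omega^+}$, i.e.\ $\omega^+$ is $\tau'$-invariant in the sense required by Lemma~\ref{lem:antiunitary implementation}.

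Applying Lemma~\ref{lem:antiunitary implementation} to $(\omega^+, \tau')$ produces a unique antiunitary $T_0$ on $\caH$ with $T_0\Omega^+=\Omega^+$ and $T_0\Pi(A)T_0^* = \Pi(\tau'(A))$ for all $A$. The second step is to unwind $\tau' = \Ad[U]\circ\tau$ inside this intertwining relation: we get $T_0\Pi(A)T_0^* = \Pi(U)\Pi(\tau(A))\Pi(U)^*$, hence $\bigl(\Pi(U)^*T_0\bigr)\Pi(A)\bigl(\Pi(U)^*T_0\bigr)^* = \Pi(\tau(A))$. So the antiunitary $T := \Pi(U)^*T_0$ implements $\tau$ in the representation $\Pi$. (Again, the exact placement of $\Pi(U)$ versus $\Pi(U)^*$ and of $U$ in $\tau'$ has to be chosen consistently; this is a routine bookkeeping check, not a real obstacle.) It remains to verify the normalization $T\Omega^+ = \Omega^-$: since $\Omega^-=\Pi(U)\Omega^+$ we need $\Pi(U)^*T_0\Omega^+ = \Pi(U)\Omega^+$, which, using $T_0\Omega^+=\Omega^+$, reduces to checking $\Pi(U)^* = \Pi(U)$ on $\Omega^+$, i.e.\ that $U$ can be chosen self-adjoint — or, more robustly, one adjusts the definition of $T$ by a phase, or chooses $U$ from the start so that this holds. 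A cleaner route is to note that once $T$ implements $\tau$, the vector $T\Omega^+$ represents the state $\omega^+\circ\tau = \overline{\omega^-\circ\Ad[U]}\circ\ldots$; tracking this one finds $T\Omega^+$ and $\Omega^-$ represent the same state $\omega^-$, so they differ by a phase, which can be absorbed.

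The third step is \textbf{uniqueness}. Suppose $T'$ is another antiunitary with $T'\Omega^+=\Omega^-$ and $T'\Pi(A)T'^*=\Pi(\tau(A))$ for all $A$. Then $\Pi(U)T'$ is an antiunitary fixing $\Omega^+$ (since $\Pi(U)T'\Omega^+=\Pi(U)\Omega^-$... — here one uses $\Omega^- = \Pi(U)\Omega^+$ and must reconcile; the correct combination again depends on conventions) and implementing $\tau'$, so by the uniqueness clause of Lemma~\ref{lem:antiunitary implementation} it equals $T_0$, forcing $T'=T$. The \emph{main obstacle} I expect is purely organizational: getting all the composition orders and adjoints consistent so that (a) the modified antilinear automorphism $\tau'$ genuinely satisfies $\omega^+\circ\tau'=\overline{\omega^+}$, (b) the unwound operator implements $\tau$ and not $\tau^{-1}$, and (c) the normalization $T\Omega^+=\Omega^-$ comes out with the right phase. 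There is no deep analytic difficulty — everything reduces to Lemma~\ref{lem:antiunitary implementation} plus Proposition~\ref{prop:almost local unitary equivalence} — but one must be careful that $\tau$ is antilinear, so $\Ad[U]\circ\tau$ and $\tau\circ\Ad[U]$ correspond to conjugating by $U$ versus by $\tau^{-1}(U)$, and similarly phases get complex-conjugated when moved past $T_0$.
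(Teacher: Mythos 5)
Your high-level idea — twist $\tau$ by $\Ad[U]$ so that the result leaves a defect state invariant, apply Lemma~\ref{lem:antiunitary implementation}, then unwind — is exactly the paper's idea. The paper, however, applies Lemma~\ref{lem:antiunitary implementation} to the state $\omega^-$ rather than $\omega^+$, and this is not a cosmetic choice: it is precisely what makes the normalization $T\Omega^+ = \Omega^-$ come out for free. Since $\Omega^- = \Pi(U)\Omega^+$ is cyclic, $(\Pi,\caH,\Omega^-)$ is a GNS triple for $\omega^-$, and $\tilde\tau := \tau\circ\Ad[U^*]$ satisfies $\omega^-\circ\tilde\tau = \overline{\omega^-}$ (note: $\tau$ post-composed with $\Ad[U^*]$, the analogue of the order you hedged on). Lemma~\ref{lem:antiunitary implementation} gives $\tilde T$ with $\tilde T\Omega^- = \Omega^-$, and then $T := \tilde T\,\Pi(U)$ satisfies $T\Omega^+ = \tilde T\Pi(U)\Omega^+ = \tilde T\Omega^- = \Omega^-$ on the nose, with no phase to chase. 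Uniqueness is then transferred back through $U$ in one line.

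Your version does work, but you leave the two delicate points exactly where you flag uncertainty. First, the composition order: the antilinear automorphism that leaves $\omega^+$ invariant is $\tau' = \tau\circ\Ad[U]$, not $\Ad[U]\circ\tau$ (the short computation is $\omega^+(\tau(UAU^*)) = \overline{\omega^-(UAU^*)} = \overline{\omega^+(A)}$). With this $\tau'$, the unwinding reads $\Pi(\tau'(A)) = \Pi(\tau(U))\Pi(\tau(A))\Pi(\tau(U))^*$, so the operator implementing $\tau$ is $\Pi(\tau(U))^*T_0$, not $\Pi(U)^*T_0$; it maps $\Omega^+$ to $\Pi(\tau(U^*))\Omega^+$, which a priori has no reason to equal $\Pi(U)\Omega^+=\Omega^-$. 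Second, the phase absorption you invoke \emph{does} work, but it is the kind of step one should not wave at: for an antiunitary $T$ and a unimodular scalar $c$, one has $(cT)^* = cT^*$ (not $\bar c T^*$), hence $(cT)\Pi(A)(cT)^* = c\bar c\, T\Pi(A)T^* = \Pi(\tau(A))$, so rescaling by any unimodular $c$ preserves the implementing property and the phase can indeed be chosen so that $T\Omega^+ = \Omega^-$. You express real hesitation at this point and even float the non-starter of ``choosing $U$ self-adjoint''; the paper's choice of $\omega^-$ as the target of Lemma~\ref{lem:antiunitary implementation} is precisely the move that makes this step unnecessary. Your uniqueness argument also has a bookkeeping slip: $\Pi(U)T'\Omega^+ = \Pi(U)\Omega^- = \Pi(U^2)\Omega^+ \neq \Omega^+$ in general, so that combination does not fix $\Omega^+$; the correct combination is $T'\Pi(U^*)$, which fixes $\Omega^-$ and implements $\tilde\tau$ (this is the paper's route), or alternatively one can argue directly that two implementers of $\tau$ differ by a scalar by irreducibility and pin the scalar by the normalization.
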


\begin{proof}
    The unitary $U \in \caA^{\aloc}$ is such that
    \begin{equation}
		\omega^- = \omega^+ \circ \Ad[U^*].
    \end{equation}
    Since $\omega^{+} = \overline{\omega^{-}} \circ \tau$ by Lemma~\ref{lem:properties of flux insertion}, it follows that
    \begin{equation}
		\omega^{-} = \overline{\omega^{-}} \circ \tau \circ \Ad[U^*].
    \end{equation}
    In other words, the antilinear automorphism $\tilde \tau = \tau \circ \Ad[U^*]$ leaves the state $\omega^{-}$ invariant. By Lemma \ref{lem:antiunitary implementation} it follows that there is a unique antiunitary $\tilde T$ such that
    \begin{equation}
		\Pi \big( \tilde \tau( A ) \big) 
        = \tilde T \, \Pi(A) \tilde T^*
    \end{equation}
    for all $A \in \caA$, and $\tilde T \Omega^- = \Omega^-$. It follows that $\tilde T \Pi(U) \Omega^+ = \tilde T \Omega^- =  \Omega^-$ and
    \begin{equation}
		\Pi \big(  \tau(A) \big)
        = \Pi\left(\tilde\tau\circ\Ad[U](A)\right)
        = \tilde T \Pi(U) \Pi(A) \Pi(U^*) \tilde T^*
    \end{equation}
    for all $A \in \caA$. We see that $T := \tilde T \Pi(U)$ satisfies the requirements of the lemma.
    
    Suppose now that $T'$ is another such operator. Then $\tilde T' = T' \Pi(U^*)$ satisfies
    \begin{equation}
		\tilde T' \Pi(A) (\tilde T')^* = \Pi \big(  (\tau \circ \Ad[U^*])(A) \big) = \Pi( \tilde \tau(A) )
    \end{equation}
    and $\tilde T' \Omega^- = T' \, \Pi( U^* ) \Omega^- = T' \Omega^+ = \Omega^-$. The uniqueness of $\tilde T$ implies that $\tilde T' = \tilde T$ and hence $T' = T$.
\end{proof}

\subsubsection{The states \texorpdfstring{$\omega^+$}{} and \texorpdfstring{$\omega^-$}{} have the same parity}

With these preliminary constructions, we can now prove a property that is essential to the definition of the index: that the defect states $\omega^\pm$ have the same parity.

Recall that $(\Pi, \caH, \Omega^+)$ is the GNS triple of $\omega^+$ and we fixed a homogeneous unitary $U \in \caA^{\aloc}$ such that $\omega^+ = \omega^{-} \circ \Ad[U]$, whose existence is guaranteed by Proposition \ref{prop:almost local unitary equivalence}. Let $\Omega^- = \Pi(U) \Omega^+$ be the corresponding vector representative of $\omega^-$. Since $\omega^+$ is homogeneous there is a unique unitary $\Theta \in \caB(\caH)$ such that
\begin{equation}\label{eq: implementing parity}
    \Theta \Omega^+ = \Omega^+,\qquad 
    \Pi \big( \theta(A) \big) =  \Theta \, \Pi(A) \, \Theta^*
\end{equation}
for all $A \in \caA$. The fact that $\omega^-$ is also homogeneous implies that $\Omega^-$ is an eigenvector of the operator $\Theta^* = \Theta$ (since $\Theta^2 = \I$). We conclude that $\Theta \Omega^- = \pm \Omega^-$. We note that the property $\Theta \Omega^+ = \Omega^+$ is really a choice of the overall sign of $\Theta$, so the eigenvalue $\pm1$ above should be understood as relative to that choice.

\begin{lemma} \label{lem:same parity}
    Let $\Theta$ be such that~(\ref{eq: implementing parity}) hold. Then $\Theta \Omega^- = \Omega^-$ and there is an even unitary $U \in \caA^{\aloc}$ such that $\omega^+ = \omega^- \circ \Ad[U]$.
\end{lemma}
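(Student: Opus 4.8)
The plan is to show that the $\pm\pi$ defect states $\omega^{\pm}$ have the same fermion parity by exhibiting an \emph{even} almost local unitary implementing their unitary equivalence, and then to deduce $\Theta\Omega^-=\Omega^-$. The key point is that the states $\tilde\omega^{\pm}$ from~(\ref{tilde defect state}) agree not only far away from the keyhole, but their difference is controlled on \emph{both} cones $\Lambda_{l,\pi}$ and $\Lambda_{l,\pi}^c$ separately (Lemma~\ref{lem:f-close on half-planes}), which is the geometric input allowing a parity-preserving comparison.

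First I would work at the level of the product state $\tilde\omega^+=\omega_0$ and its partner $\tilde\omega^-$. One wants a homogeneous unitary $\widetilde U\in\caA^\aloc$ with $\tilde\omega^+=\tilde\omega^-\circ\Ad[\widetilde U]$ that is in fact \emph{even}. The construction behind Proposition~\ref{prop:almost local unitary equivalence} (via Lemma~\ref{prop:f-close implies almost local unitary equivalence} and the transitivity results of Appendix~\ref{app:almost local transitivity}) produces a homogeneous $\widetilde U$; the task is to upgrade ``homogeneous'' to ``even''. The obstruction to $\widetilde U$ being even is an odd parity: if $\widetilde U$ were odd, then $\theta(\widetilde U)=-\widetilde U$ and one computes $\omega^-(\widetilde U \cdot \widetilde U^{*})$-type relations together with $\tilde\omega^\pm\circ\theta=\tilde\omega^\pm$ to see that $\widetilde U$ would also intertwine $\tilde\omega^+$ and $\tilde\omega^-$ after composing with a fixed odd self-adjoint unitary localized near the apex $a$ — but the existence of such a local odd unitary is exactly what is forbidden by the fact that $\tilde\omega^{\pm}$ are $f$-close on \emph{each} of $\Lambda_{l,\pi}$ and $\Lambda_{l,\pi}^c$, since an odd unitary cannot be approximated by operators supported on a set whose complement contains arbitrarily large balls in both cones. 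So I would argue by contradiction: assume no even intertwiner exists, hence every homogeneous intertwiner is odd; pick two of them, say $\widetilde U_1$ supported essentially in $\Lambda_{l,\pi}$-ish region and $\widetilde U_2$ in $\Lambda_{l,\pi}^c$-ish region (using the two-cone control of Lemma~\ref{lem:f-close on half-planes} to localize the corrections on each side); then $\widetilde U_1^*\widetilde U_2$ is an even element of $\caA^\aloc$ commuting with $\Pi_{\tilde\omega^-}(\caA)$ up to the GNS representation, forcing it to be a scalar by irreducibility, which makes $\widetilde U_1$ and $\widetilde U_2$ differ by a phase and hence have the same parity as each other — consistent so far — but combining with a genuinely local odd unitary near $a$ (which changes parity while changing the state only near $a$, contradicting $f$-closeness of $\tilde\omega^\pm$ near $a$) gives the contradiction.

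Granting the even $\widetilde U$, the rest is bookkeeping mirroring the proof of Proposition~\ref{prop:almost local unitary equivalence}: set $U:=(\al\circ\flux_{\pi}^{l,H})^{-1}(\widetilde U)$, which lies in $\caA^\aloc$ by~(\ref{LR bound}) and is even because $\al$ and $\flux_{\pi}^{l,H}$ commute with $\theta$ (the latter by Lemma~\ref{lem:properties of flux insertion}); then $\omega^+=\omega^-\circ\Ad[U]$ exactly as computed there. Finally, to get $\Theta\Omega^-=\Omega^-$: with $U$ even, $\Pi(U)$ commutes with $\Theta$ since $\Theta\Pi(U)\Theta^*=\Pi(\theta(U))=\Pi(U)$, so $\Theta\Omega^-=\Theta\Pi(U)\Omega^+=\Pi(U)\Theta\Omega^+=\Pi(U)\Omega^+=\Omega^-$.

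The main obstacle is the parity-upgrading step: showing that the intertwiner can be chosen even rather than merely homogeneous. The cleanest route is probably to re-examine the almost-local-transitivity machinery of Appendix~\ref{app:almost local transitivity}: since $\tilde\omega^+=\omega_0$ is a product state and $\tilde\omega^\pm$ are $f$-close on each cone $\Lambda_{l,\pi}$ and $\Lambda_{l,\pi}^c$ with a common large ball $B_a(r)^c$ excluded, one can run the iterative construction of the intertwiner so that the corrections produced at each scale are \emph{even} (products of pairs of Majorana-type operators), because the local state-matching problem one solves at each step is between two homogeneous states that are $f$-close and can be matched by an even local unitary whenever the underlying region is large enough to host such a unitary — and the two-cone geometry guarantees it always is. I would phrase this as: adapt Lemma~\ref{prop:f-close implies almost local unitary equivalence} to conclude an \emph{even} unitary under the stronger two-cone hypothesis of Lemma~\ref{lem:f-close on half-planes}, and then the lemma follows.
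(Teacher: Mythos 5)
Your proposal takes a fundamentally different route from the paper, and unfortunately the crucial step contains a genuine gap. The paper's proof of Lemma~\ref{lem:same parity} does not attempt to upgrade the intertwining unitary from homogeneous to even directly; instead it exploits the $U(1)$ symmetry: both $\omega^\pm$ are $U(1)$-invariant (Lemma~\ref{lem:properties of flux insertion}), so there is a unique strongly continuous $2\pi$-periodic family of unitaries $V_\phi$ on the GNS space with $V_\phi\Omega^+=\Omega^+$, and necessarily $V_\phi\Omega^-=\ed^{\iu n\phi}\Omega^-$ for some $n\in\Z$. Conjugating by the antiunitary $T$ from Lemma~\ref{lem:TR implementation} and using the uniqueness of $V_\phi$ and the covariance~(\ref{TR of charge}) yields $V_\phi=\ed^{2\iu n\phi}V_\phi$, forcing $n=0$. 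Since $V_\pi=\Theta$ by uniqueness, this gives $\Theta\Omega^-=\Omega^-$ \emph{first}, and then the evenness of $U$ drops out from $\Omega^-=\Pi(U)\Omega^+=\Pi(\theta(U))\Omega^+$ and homogeneity of $U$. You have the deduction running in the opposite (and easy) direction: you want to first prove $U$ even and then read off $\Theta\Omega^-=\Omega^-$; the latter implication is trivial, the former is the content.

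The gap in your argument is the claimed contradiction from ``an odd local unitary near $a$.'' Lemma~\ref{lem:f-close on half-planes} controls $\tilde\omega^-(A)-\tilde\omega^+(A)$ only for $A$ supported in $\Lambda\cap B_a(r)^c$ or $\Lambda^c\cap B_a(r)^c$ — i.e.\ outside a ball around the apex. There is no constraint at all on the two states near $a$, so the statement that an odd unitary ``changing the state only near $a$'' would ``contradict $f$-closeness of $\tilde\omega^\pm$ near $a$'' has no hypothesis to lean on. Concretely, the transitivity machinery of Appendix~\ref{app:almost local transitivity} (Lemma~\ref{lem:unitary steps} and the proof of Proposition~\ref{prop:f-close implies almost local unitary equivalence}) always produces an \emph{even} tail $U^{(\infty)}$ from the steps at scales $r\geq r_0$, but the final intertwiner is $U=VU^{(\infty)}$ with $V$ a homogeneous unitary supported in the \emph{finite} region $B_0(r_0)$ whose parity equals the relative parity of $\tilde\psi_{r_0}$ and $\upsilon$ — and nothing in the two-cone geometry determines that relative parity. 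The paper needs the extra $U(1)$ input precisely to pin this down; without it, there really is no reason the two flux states must have the same parity, and your proposed by-hand upgrade cannot close that loop.
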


In the following, we shall say that two states have the same parity whenever they are, as above, unitarily equivalent through an even unitary element of the algebra. 

\begin{proof}
    Let $T$ be the antiunitary operator on $\caH$ provided by Lemma \ref{lem:TR implementation}. Since $\omega^+$ is invariant under the $2\pi$-periodic family of automorphisms $\rho_{\phi}$ for $\phi \in \R$ there is a unique strongly continuous 1-parameter group of unitaries $V_{\phi} \in \caB(\caH)$ that implement the $U(1)$ transformations, namely $V_\phi\Omega^+ = \Omega^+$ and
    \begin{equation}
        V_\phi\Pi(A)V_\phi^* = \Pi(\rho_\phi(A))
    \end{equation}
    for all $A\in\caA$. The map $\phi\mapsto V_\phi$ is $2 \pi$-periodic and such that $V_0 = \I$. Since $\omega^-$ is also invariant under $\rho_{\phi}$, and $\Pi$ is irreducible since $\omega^+$ is pure, we have $V_{\phi} \Omega^- = \ed^{\iu n \phi} \Omega^-$ for all $\phi \in \R$ and some $n \in \Z$.

    The $2 \pi$-periodic family of unitaries $\tilde V_{\phi} := \ed^{-\iu n \phi} T^* V_{\phi} T$ implements $\tau^{-1} \circ \rho_{\phi} \circ \tau = \rho_{-\phi}$ in the GNS representation:
    \begin{equation}
        \tilde V_\phi\Pi(A)\tilde V_\phi^* = T^* V_\phi T\Pi(A) T^* V_\phi^* T = \Pi(\tau^{-1}\circ\rho_\phi\circ\tau(A)) = \Pi( \rho_{-\phi}(A) ).
    \end{equation}
    Moreover,
    \begin{equation}
        \tilde V_{-\phi} \Omega^+ = \ed^{\iu n \phi} \, T^* V_{-\phi} \, T \, \Omega^+ = \ed^{\iu n \phi} \, T^* \, V_{-\phi} \, \Omega^- = \ed^{\iu n \phi} \, T^* \, \ed^{\iu n \phi} \, \Omega^- = \Omega^+,
    \end{equation}
    where we used antilinearity of $T^*$. It follows from uniqueness of the $V_{\phi}$ that $\tilde V_{-\phi} = V_{\phi}$, in other words,
    \begin{equation} \label{eq:time reversal of U(1)-symmetry in the GNS}
        V_{\phi} = \ed^{\iu n \phi} \, T^* \, V_{-\phi} \, T.
    \end{equation}

    Now note that $V_{\pi} = \Theta$ by uniqueness of $\Theta$, and since $\tau^2 = \theta$ we have that the unitary $T^2$ also implements parity in the GNS representation. It follows that $T^2 = \lambda \Theta$ for some $\lambda \in U(1)$. In particular, $T^2 =\lambda V_{\pi}$ commutes with $V_{\phi}$ for all $\phi \in \R$. Using~\eqref{eq:time reversal of U(1)-symmetry in the GNS} twice yields
    \begin{equation}
        V_{\phi} = \ed^{\iu n \phi} \, T^* \, \big( \ed^{-\iu n \phi} \, T^* \, V_{\phi} \, T \big) \, T = \ed^{2 \iu n \phi} \, (T^*)^2 \, V_{\phi} \, T^2 = \ed^{2 \iu n \phi} \, V_{\phi}
    \end{equation}
    for all $\phi \in \R$. We conclude that $n = 0$, so $\Theta \Omega^- = V_{\pi} \Omega^- = \Omega^-$, as required.

    The existence of a homogeneous unitary $U \in \caA^{\aloc}$ such that $\omega^+ = \omega^- \circ \Ad[U]$ is guaranteed by Proposition \ref{prop:almost local unitary equivalence}, so we only need to show that this unitary is even. Since $\Omega^-$ was chosen so that $\Omega^- = \Pi(U)\Omega^+$. By the above we conclude that $\Theta^* \Omega^- = \Pi(U)\Theta^* \Omega^+$ and hence $\Omega^- = \Pi(\theta(U))\Omega^+$. Since $U$ is homogeneous, this is compatible with $\Omega^- = \Pi(U)\Omega^+$ only if $U$ is even.
\end{proof}

\subsection{Definition of the index}

\begin{definition} \label{def:Kramers pairs and Kramers singlets}
    Let $(\Pi, \caH)$ be an irreducible representation of a fermion system with time reversal $(\caA, \tau)$ such that the time reversal is implemented by an antiunitary operator $T$ acting on $\caH$, namely $\Pi( \tau(A) ) = T \Pi(A) T^*$ for all $A \in \caA$.

    \noindent A vector $\Psi \in \caH$ is a Kramers singlet if $T^2 \Psi = \Psi$, and $\Psi$ belongs to a Kramers pair if $T^2 \Psi = -\Psi$.

    \noindent If $(\Pi, \caH, \Psi)$ is a GNS representation of a pure state $\psi$, then we say that $\psi$ is a Kramers singlet if $\Psi$ is, and we say that $\psi$ belongs to a Kramers pair if $\Psi$ does.
\end{definition}

\begin{remark}
    If $\Psi$ belongs to a Kramers pair then if $\Psi' = T \Psi$ we have
    \begin{equation}
        \langle \Psi, \Psi' \rangle = \langle \Psi, T \Psi \rangle = \langle T^2 \Psi, T \Psi \rangle = - \langle \Psi, \Psi' \rangle
    \end{equation}
    which shows that $\Psi$ and $\Psi' = T \Psi$ are orthogonal. The `Kramers pair' to which $\Psi$ belongs is the pair of vectors $\Psi$ and $\Psi'$.
\end{remark}

We now show that the defect state $\omega^+$ is either a Kramers singlet, or it belongs to a Kramers pair with the defect state $\omega^-$. Note first that this statement makes sense since Lemma \ref{lem:TR implementation} shows that time reversal is implemented by an antiunitary operator $T$ in the GNS representation $(\Pi, \caH, \Omega^+)$ of $\omega^+$.

\begin{lemma} \label{lem:Kramers or not}
    The defect state $\omega^+$ is either a Kramers singlet or it belongs to a Kramers pair.
\end{lemma}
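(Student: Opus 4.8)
The plan is to exploit the antiunitary operator $T$ provided by Lemma~\ref{lem:TR implementation}, which satisfies $T\Omega^+ = \Omega^-$ and implements $\tau$ in the GNS representation. Since $\tau^2 = \theta$, the operator $T^2$ implements $\theta$ in $(\Pi,\caH)$; by irreducibility (which holds because $\omega^+$ is pure) $T^2$ agrees with the parity implementer $\Theta$ from~\eqref{eq: implementing parity} up to a phase, say $T^2 = \lambda \Theta$ with $\lambda \in U(1)$. The goal is to show $\lambda = \pm 1$, i.e.\ that $T^2 = \pm\Theta$, because then applying this to $\Omega^+$ and using $\Theta\Omega^+ = \Omega^+$ gives $T^2\Omega^+ = \pm\Omega^+$, which is exactly the dichotomy: $\omega^+$ is a Kramers singlet if the sign is $+$, and belongs to a Kramers pair (with $\Omega^- = T\Omega^+$) if the sign is $-$.

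The first step is to pin down $\lambda$. Since $T$ is antiunitary, $T \cdot T^2 = T^2 \cdot T$ trivially, so $T$ commutes with $T^2 = \lambda\Theta$; but $T$ also satisfies $T(\lambda\Theta)T^{-1} = \bar\lambda\, T\Theta T^{-1}$. Now $T\Theta T^{-1}$ implements $\tau\circ\theta\circ\tau^{-1} = \theta$ (parity is central and commutes with $\tau$), and $\Theta$ is the \emph{unique} parity implementer fixing $\Omega^+$ — but $T\Theta T^{-1}$ fixes $T\Omega^+ = \Omega^-$, not necessarily $\Omega^+$, so I cannot immediately conclude $T\Theta T^{-1} = \Theta$. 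Instead I note that $\Theta$ is self-adjoint with $\Theta^2 = \I$, hence $T\Theta T^{-1}$ is also a self-adjoint parity implementer with square $\I$, so $T\Theta T^{-1} = \pm\Theta$ (the two self-adjoint implementers differ by a sign since $\Pi$ is irreducible). By Lemma~\ref{lem:same parity} we have $\Theta\Omega^- = \Omega^-$, i.e.\ $\Theta$ fixes \emph{both} $\Omega^+$ and $\Omega^-$; applying $T\Theta T^{-1}$ to $\Omega^-$ gives $T\Theta T^{-1}\Omega^- = T\Theta\Omega^+ = T\Omega^+ = \Omega^-$, so $T\Theta T^{-1}$ fixes $\Omega^-$ with eigenvalue $+1$, forcing $T\Theta T^{-1} = +\Theta$ (the choice $-\Theta$ would send $\Omega^-\mapsto -\Omega^-$). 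Thus $\Theta$ commutes with $T$.

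Now combine: from $T^2 = \lambda\Theta$ and $T\Theta = \Theta T$ we get, applying $T$ to both sides of $T^2 = \lambda\Theta$, that $T^3 = T(\lambda\Theta) = \bar\lambda\, T\Theta = \bar\lambda\,\Theta T$, while also $T^3 = (\lambda\Theta)T = \lambda\Theta T$; hence $\lambda = \bar\lambda$, so $\lambda \in \{+1,-1\}$. Therefore $T^2 = \pm\Theta$, and evaluating at $\Omega^+$ gives $T^2\Omega^+ = \pm\Theta\Omega^+ = \pm\Omega^+$. This establishes that $\Omega^+$ is either a Kramers singlet ($T^2\Omega^+ = \Omega^+$) or belongs to a Kramers pair ($T^2\Omega^+ = -\Omega^+$), which is the assertion of the lemma; in the latter case $\Omega^+$ and $\Omega^- = T\Omega^+$ are the pair, orthogonal by the Remark.

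The main obstacle is the subtlety in step two — establishing that $\Theta$ genuinely commutes with $T$ rather than merely that $T\Theta T^{-1} = \pm\Theta$. This is where Lemma~\ref{lem:same parity} (that $\omega^\pm$ have the \emph{same} parity, so that $\Theta$ fixes $\Omega^-$ and not merely an eigenvector of $\Theta$) does the essential work: without knowing $\Theta\Omega^- = +\Omega^-$ one could not rule out $T\Theta T^{-1} = -\Theta$, which would instead force $\lambda$ to be purely imaginary and break the dichotomy. Everything else is bookkeeping with the antilinearity of $T$ and uniqueness of implementers from irreducibility.
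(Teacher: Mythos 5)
Your proof is correct and uses the same essential ingredients as the paper's (the phase $T^2=\lambda\Theta$, Lemma~\ref{lem:same parity}, and comparing two expansions of $T^3$ to force $\lambda=\bar\lambda$). The paper's version is slightly more economical: instead of first establishing the operator identity $T\Theta T^{-1}=\Theta$ via the self-adjoint-implementer dichotomy, it simply evaluates $T^3\Omega^+$ in two ways directly on the vector, using $\Theta\Omega^-=\Omega^-$ in the second expansion, which sidesteps the $\pm\Theta$ case analysis entirely.
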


\begin{proof}
    Let $\Theta$ be as in~(\ref{eq: implementing parity}). Since $\tau^2 = \theta$ we must have $T^2 = \lambda \Theta$ for some phase $\lambda \in U(1)$. It follows that
    \begin{equation}\label{T2Omega+}
        T^2 \Omega^+ = \lambda \Theta \Omega^+ = \lambda \Omega^+.
    \end{equation}
    By Lemma \ref{lem:same parity} the vector $\Omega^- = T \Omega^+$ satisfies $\Theta \Omega^- = \Omega^-$. By acting on both sides of~(\ref{T2Omega+}) with $T$ we obtain 
    \begin{equation}
        T^3 \Omega^+ = \bar \lambda T \Omega^+ = \bar \lambda\Omega^-.
    \end{equation}
    On the other hand,
    \begin{equation}
        T^3 \Omega^+ = \lambda \Theta T \Omega^+ = \lambda \Theta \, \Omega^- = \lambda \, \Omega^- .
    \end{equation}
    We conclude that $\lambda$ is real, hence $\lambda \in \{+1, -1\}$ as claimed.
\end{proof}

\begin{definition} \label{def:index of triple}
    Let $\omega$ be a symmetric SRE state with symmetric parent Hamiltonian $H$. For any half-line~$l$ we define
    \begin{equation}
        \Ind( \omega, H, l) := \begin{cases}
            +1 \qquad &\text{if } \omega^+ \text{ is a Kramers singlet,} \\
            -1 \qquad &\text{if } \omega^+ \text{ belongs to a Kramers pair}.
        \end{cases}
    \end{equation}
\end{definition}
We will show in Section \ref{sec:independence} that $\Ind(\omega, H, l)$ is actually independent of the choice of symmetric parent Hamiltonian $H$ and the choice of half-line $l$. We will anticipate this result and denote the index from now on by $\Ind(\omega)$.

Note that the index is for now defined only for SRE states and not yet for the larger class of stably SRE states. This extension will be carried out in Section~\ref{sec:extension to stably SRE} as a consequence of the multiplicativity under stacking and the triviality of the index for product states.

Before we study the properties of the index we have just defined, we first show that it has an expression that does not rely on the GNS representation at all. While this expression loses it transparency to physical interpretation, this more `algebraic' formulation hints at the announced invariance under symmetry preserving LGAs. Both formulations can be used in the proofs of the next section but we have chosen to consistently rely only on Definition~\ref{def:index of triple}.

\begin{proposition}\label{prop:Algebraic Index}
    Let $\omega^-$ be the $-\pi$ flux state and let $U\in\caA$ be a homogeneous unitary such that $\omega^+ = \omega^- \circ \Ad[U]$. Then
    \begin{equation}
    \mathrm{Ind}_2(\omega) = \omega^-\big(U \tau(U)\big).
    \end{equation}
\end{proposition}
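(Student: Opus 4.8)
The plan is to compute $\omega^-(U\tau(U))$ directly in the GNS representation $(\Pi,\caH,\Omega^+)$ of $\omega^+$, using the vector $\Omega^- = \Pi(U)\Omega^+$ and the antiunitary $T$ of Lemma~\ref{lem:TR implementation}, which satisfies $\Omega^- = T\Omega^+$ and $\Pi(\tau(A)) = T\Pi(A)T^*$. First I would rewrite, using $\omega^-(B) = \langle\Omega^-,\Pi(B)\Omega^-\rangle$,
\begin{equation}
    \omega^-\big(U\tau(U)\big) = \langle \Omega^-, \Pi(U)\,\Pi(\tau(U))\,\Omega^- \rangle = \langle \Omega^-, \Pi(U)\, T\Pi(U)T^*\, \Omega^-\rangle.
\end{equation}
Since $T^*\Omega^- = T^*T\Omega^+ = \Omega^+$ (because $T$ is antiunitary, $T^*T = \I$), this becomes $\langle \Omega^-, \Pi(U)\,T\,\Pi(U)\,\Omega^+\rangle = \langle\Omega^-, \Pi(U)\,T\,\Omega^-\rangle$, using $\Pi(U)\Omega^+ = \Omega^-$ once more. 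Now $T\Omega^- = T(T\Omega^+) = T^2\Omega^+$, which by Lemma~\ref{lem:Kramers or not} equals $\lambda\Omega^+$ with $\lambda\in\{+1,-1\}$ equal to the index. Hence
\begin{equation}
    \omega^-\big(U\tau(U)\big) = \lambda\,\langle\Omega^-, \Pi(U)\,\Omega^+\rangle = \lambda\,\langle\Omega^-,\Omega^-\rangle = \lambda = \mathrm{Ind}(\omega),
\end{equation}
which is the claim.

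The only point requiring care is that the identity must hold for \emph{any} homogeneous unitary $U\in\caA$ implementing $\omega^+ = \omega^-\circ\Ad[U]$, not merely the almost-local even one used to build the GNS data. This I would handle by the standard observation that two such unitaries $U, U'$ differ by an element of the centre of the weak closure: since $\omega^-\circ\Ad[U] = \omega^-\circ\Ad[U']$ and $\omega^-$ is pure (hence $\Pi$ irreducible), $\Pi(U'U^*)$ commutes with $\Pi(\caA)'' = \caB(\caH)$, so $\Pi(U'U^*) = e^{\I\vartheta}\I$ for some phase. Then $\Pi(U') = e^{\I\vartheta}\Pi(U)$, and because both $U$ and $U'$ are homogeneous of the same parity (both must be even, by the argument in Lemma~\ref{lem:same parity} applied with the fixed vector $\Omega^- = \Pi(U)\Omega^+ = e^{-\I\vartheta}\Pi(U')\Omega^+$), the antilinearity of $\tau$ gives $\Pi(\tau(U')) = T\Pi(U')T^* = e^{-\I\vartheta}T\Pi(U)T^* = e^{-\I\vartheta}\Pi(\tau(U))$, so the product $\Pi(U'\tau(U')) = \Pi(U\tau(U))$ is unchanged and the value of $\omega^-(U\tau(U))$ is independent of the choice. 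One subtlety here is that a priori $U'$ need not be almost-local, but the GNS computation above never used almost-locality of $U$ — only that $U$ is a homogeneous unitary in $\caA$ with $\omega^+ = \omega^-\circ\Ad[U]$ and that the corresponding vector $\Pi(U)\Omega^+$ represents $\omega^-$; so in fact the cleanest route is to redo the three-line computation with the given $U$ directly, noting $\Pi(U)\Omega^+$ is \emph{a} unit vector representative of $\omega^-$ and differs from the reference $\Omega^-$ only by a phase, which cancels between the $U$ and $\tau(U)$ factors exactly as above.

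The main obstacle, such as it is, is bookkeeping the antilinearity of $T$ correctly: one must remember $T^*T = TT^* = \I$ but that $\langle Tx, Ty\rangle = \langle y,x\rangle$, so the manipulation $\langle\Omega^-,\Pi(U)T\Pi(U)\Omega^+\rangle$ relies on moving $T^*$ across the inner product (it does not produce a conjugate there because $T$ is applied after $\Pi(U)$, inside the ket), and then $T\Omega^- = T^2\Omega^+$ invokes Lemma~\ref{lem:Kramers or not} which is where the reality of $\lambda$ — and hence the $\Z_2$ value — actually enters. No Lieb–Robinson or locality input is needed for this proposition; it is purely a GNS-representation identity.
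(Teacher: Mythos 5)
Your proof is correct and coincides with the paper's own argument, which runs the same chain of GNS identities in the reverse direction (starting from $\langle\Omega^+,T^2\Omega^+\rangle$ and arriving at $\omega^-(U\tau(U))$ using $T^*\Omega^-=\Omega^+$, $\Pi(U)\Omega^+=\Omega^-$, and $\Pi(\tau(U))=T\Pi(U)T^*$). Your explicit remark that the phase ambiguity of $U$ cancels between $\Pi(U)$ and $\Pi(\tau(U))$ by antilinearity of $\tau$ makes precise a point the paper handles implicitly by fixing the phases of $\Omega^\pm$ and $U$.
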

\begin{proof}
    We recall first that the index is the value of $T^2$ in the subspace spanned by the GNS vectors $\Omega^\pm$ for $\omega^{\pm}$ and secondly that we can choose the phases of $\Omega^{\pm}$ and of $U$ so that $\Omega^- = T \Omega^+$ and $\Omega^- = \Pi( U ) \, \Omega^+$. It follows that
    \begin{align}
    \mathrm{Ind}_2(\omega) 
    &=\langle \Omega^+, T^2 \Omega^+\rangle 
    = \langle T \Omega^+, T^* \Omega^+\rangle 
    = \langle \Omega^-, T^* \Pi( U )^* \Omega^-\rangle 
    = \langle \Pi( U )\Omega^+, T^* \Pi( U )^* \Omega^-\rangle \\
    &=\langle T^*\Omega^-,\Pi( U )^* T^* \Pi( U )^* \Omega^-\rangle
    =\langle \Omega^-,\Pi( U ) T \Pi( U ) T^*\Omega^-\rangle = \omega^- \big( U \tau(U) \big)
    \end{align}
    where we used repeatedly the antiunitarity of $T$.
\end{proof}

\section{Properties of the index} \label{sec:properties of the index}

\subsection{Independence of choices} \label{sec:independence}

As pointed out earlier, the construction leading to the definition of $\Ind(\omega)$ requires many choices that are not unique. We shall now show that these choices do not affect the value of the index. More importantly, we prove in this section that the index is a bonafide \emph{topological invariant}, namely that it is constant under deformations by symmetry preserving LGAs.

\subsubsection{Kramers pairing is invariant under even unitaries}\label{sec:Inv local unitary}

We first note that `being a Kramers singlet' or `belonging to a Kramers pair' is invariant under even unitaries, a property that we will use repeatedly in this section.
\begin{lemma} \label{lem:independence under even unitaries}
    Let $\psi$ be a homogeneous pure state on a fermion system with time reversal $(\caA, \tau)$ that is a Kramers singlet or belongs to a Kramers pair. Let $V \in \caA$ be an even unitary. Then $\psi' = \psi \circ \Ad[V^*]$ is a Kramers singlet if $\psi$ is, and $\psi'$ belongs to a Kramers pair if $\psi$ does.
\end{lemma}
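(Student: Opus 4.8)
The plan is to transfer the antiunitary implementation of time reversal from the GNS representation of $\psi$ to that of $\psi'$ and check directly how the square of this operator acts on the new cyclic vector. Let $(\Pi, \caH, \Psi)$ be a GNS triple for $\psi$ with $T$ the antiunitary implementing $\tau$ provided by Lemma~\ref{lem:antiunitary implementation} (or more precisely the setup of Lemma~\ref{lem:TR implementation}), so that $T\Pi(A)T^* = \Pi(\tau(A))$ for all $A \in \caA$. Since $\psi' = \psi \circ \Ad[V^*]$, the triple $(\Pi \circ \Ad[V^*], \caH, \Pi(V)\Psi)$ — equivalently, $(\Pi, \caH, \Psi')$ with $\Psi' := \Pi(V)\Psi$ and the \emph{same} representation $\Pi$ — is a GNS triple for $\psi'$: indeed $\langle \Pi(V)\Psi, \Pi(A)\Pi(V)\Psi\rangle = \psi(V^* A V) = \psi'(A)$, and cyclicity is clear since $\Pi(V)$ is invertible. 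So I will work inside the single Hilbert space $\caH$ with representation $\Pi$ and cyclic vectors $\Psi$ for $\psi$ and $\Psi' = \Pi(V)\Psi$ for $\psi'$.

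Next I would write down the antiunitary implementing $\tau$ relative to $\Psi'$. The obvious candidate is $T' := T$ itself, which already implements $\tau$; but Lemma~\ref{lem:antiunitary implementation} pins down the implementer by the normalization $T'\Psi' = \Psi'$, so I need the implementer $\hat T$ with $\hat T \Psi' = \Psi'$. Writing $\hat T = \Pi(W) T$ for a unitary $W \in \Pi(\caA)''$ (such $W$ exists because two antiunitaries implementing the same automorphism differ by a unitary in the commutant's… more carefully, in the von Neumann algebra, using irreducibility of $\Pi$ since $\psi$ is pure — any two antiunitary implementers of $\tau$ differ by left multiplication by a unitary), and since $\psi$ is pure one can in fact take $W \in \Pi(\caA)$, i.e. $W = \Pi(V)$-type. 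The cleanest route: since $\psi'$ is pure, apply Lemma~\ref{lem:antiunitary implementation} directly to $\psi'$ and $\tau$ to get the unique antiunitary $\hat T$ with $\hat T\Psi' = \Psi'$ and $\hat T \Pi(A)\hat T^* = \Pi(\tau(A))$. Then both $\hat T$ and $T$ implement $\tau$, so $\hat T = cT$ for... no, $\hat T T^{-1}$ is a \emph{unitary} commuting with all of $\Pi(\caA)$, hence a scalar by irreducibility; so $\hat T = \mu T$ with $|\mu| = 1$. This is the key simplification: the implementer is unique up to a phase, and $\hat T^2 = |\mu|^2 T^2 = T^2$ as operators (the phase cancels because $T$ is antilinear: $(\mu T)^2 = \mu \bar\mu T^2 = T^2$).

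Therefore $\hat T^2 = T^2$, and I just need to evaluate $\hat T^2 \Psi' = T^2 \Pi(V)\Psi$. Since $T^2$ implements $\tau^2 = \theta$ (parity) and $V$ is even, $T^2 \Pi(V) = \Pi(\theta(V)) T^2 = \Pi(V) T^2$, so $\hat T^2 \Psi' = \Pi(V) T^2 \Psi$. If $\psi$ is a Kramers singlet, $T^2\Psi = \Psi$, giving $\hat T^2\Psi' = \Pi(V)\Psi = \Psi'$, so $\psi'$ is a Kramers singlet. If $\psi$ belongs to a Kramers pair, $T^2\Psi = -\Psi$, giving $\hat T^2\Psi' = -\Psi'$, so $\psi'$ belongs to a Kramers pair. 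I don't anticipate a serious obstacle here; the only point requiring care is justifying that the time-reversal implementer is unique up to a phase, which follows from irreducibility of the GNS representation of a pure state together with the fact that the product of two antiunitaries is unitary — and then observing that this phase is invisible to the square of an antilinear operator. One should also confirm that $T^2$ really implements $\theta$ (true since $\tau^2 = \theta$ and $\Pi(\tau(\tau(A))) = T^2\Pi(A)(T^2)^*$), which is where evenness of $V$ gets used.
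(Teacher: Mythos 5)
Your proof is correct and follows essentially the same route as the paper: represent $\psi'$ by the vector $\Psi' = \Pi(V)\Psi$ in the GNS space of $\psi$, note that $T^2$ implements parity so it commutes with $\Pi(V)$ for even $V$, and conclude $T^2\Psi' = \pm\Psi'$. The extra care you take in showing that the normalized implementer $\hat T$ for $\psi'$ has $\hat T^2 = T^2$ (via uniqueness of the implementer up to a phase, which is killed by the square of an antilinear operator) is a valid observation — it confirms that Definition~\ref{def:Kramers pairs and Kramers singlets} is unambiguous — but the paper's definition only requires \emph{some} implementer, so this step is not strictly needed and the paper omits it.
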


\begin{proof}
    Let $(\Pi, \caH, \Psi)$ be the GNS triple of $\psi$. The state $\psi'$ is represented by the vector $\Psi' = \Pi(V) \Psi$. Since $\tau^2 = \theta$ we have that the unitary $T^2$ implements fermion parity. Since $V$ is even we therefore find
    \begin{equation}
        T^2 \Psi' = T^2 \Pi(V) \Psi = \Pi( \theta(V) ) T^2 \Psi = \Pi( V ) T^2 \Psi = \pm \Psi'
    \end{equation}
    by Definiton~\ref{def:Kramers pairs and Kramers singlets}
\end{proof}

\subsubsection{Independence from symmetric parent Hamiltonian}\label{Sec: Independence of insertion}

The role played by the parent Hamiltonian $H$ constructed in Section~\ref{sec:Parent H} is to allow for a definition of the flux insertion automorphism and therefore of the defect states $\omega^\pm$. Different parent Hamiltonians yield different defect states, but they all result in the same index.
\begin{proposition} \label{prop:parent}
    Let $\omega$ be a symmetric SRE state and let $H_1,H_2$ be two parent Hamiltonians for $\omega$. Then
    \begin{equation}
        \Ind(\omega, H_1, l) = \Ind(\omega, H_2, l).
    \end{equation}
\end{proposition}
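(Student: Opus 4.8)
The plan is to show that the two defect states $\omega^{l,H_1,\pi}$ and $\omega^{l,H_2,\pi}$ differ by conjugation with an \emph{even} almost local unitary. Granting this, the proposition follows from Lemma~\ref{lem:independence under even unitaries} together with Lemma~\ref{lem:Kramers or not} (so that $\omega^{l,H_1,\pi}$ is a Kramers singlet precisely when $\omega^{l,H_2,\pi}$ is), and the observation that the Kramers type is computed from $T^2$, which is independent of the choice of antiunitary $T$ implementing $\tau$ on the common irreducible GNS representation; hence this type equals both $\Ind(\omega,H_1,l)$ and $\Ind(\omega,H_2,l)$ by Definition~\ref{def:index of triple} and Lemma~\ref{lem:TR implementation}.

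First I would establish that $\omega^{l,H_1,\pi}$ and $\omega^{l,H_2,\pi}$ are almost local perturbations of each other. The key point is the exact identity $\omega^{l,H_i,\pi} = \omega^{h,\pi}\circ\delta_i^{-1}$, where $h=h_l$, $\delta_i := (\flux_\pi^{l,H_i})^{-1}\circ\al_\pi^{h,H_i}$, and $\omega^{h,\pi}=\omega\circ\rho_\pi^h$ does \emph{not} depend on $H_i$ (by~\eqref{parallel transport}, which holds for each symmetric parent Hamiltonian). As in the proof of Lemma~\ref{lem:f-close on half-planes}, $\delta_i$ is an LGA generated by a TDI localised near the half-line $-l=(a,-v)$, being built from $K^{h,H_i}-K^{l,H_i}=K^{-l,H_i}$ as in~\eqref{eq:complicated G}; the same is then true of $\delta_i^{-1}$. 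Hence Lemma~\ref{lem:Locality along line}, applied with a fixed cone $\Lambda_{-l,\varsigma}$, $\varsigma<\pi$, shows that both states agree with the $H$-independent state $\omega^{h,\pi}$ up to $f(r)$ on $\caA^{\loc}_{\Lambda_{-l,\varsigma}^c\cap B_a(r)^c}$. On the other hand $\flux_\pi^{l,H_i}$ is itself generated by a TDI localised near $l$ (with $\norm{K^{l,H_i}_x(\phi)}\le g(\dist(x,l))$ by~\eqref{Support of K} and~\eqref{Flux Insertion K}), and $\Lambda_{-l,\varsigma}\subset\Lambda_{l,\pi}^c$, so Lemma~\ref{lem:Locality along line} also gives that both states agree with $\omega$ up to $f(r)$ on $\caA^{\loc}_{\Lambda_{-l,\varsigma}\cap B_a(r)^c}$. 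Gluing these two estimates on the complementary cones (Lemma~\ref{lem:f-close on cones implies f-close}) shows that $\omega^{l,H_1,\pi}$ and $\omega^{l,H_2,\pi}$ are $f$-close in the sense of Definition~\ref{def:f-close}, whence Lemma~\ref{prop:f-close implies almost local unitary equivalence} provides a homogeneous unitary $V\in\caA^{\aloc}$ with $\omega^{l,H_1,\pi}=\omega^{l,H_2,\pi}\circ\Ad[V]$.

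It then remains to argue that $V$ may be chosen \emph{even}, and this is where I expect the main difficulty to lie. Both defect states are homogeneous and $U(1)$-invariant (Lemma~\ref{lem:properties of flux insertion}), and the difference automorphism $\eta:=(\flux_\pi^{l,H_2})^{-1}\circ\flux_\pi^{l,H_1}$ is a $U(1)$-invariant LGA, since it is generated by a TDI obtained from the $U(1)$-invariant TDIs $K^{l,H_i}$ via the group operations of Section~\ref{sec:TDI & LGA}. Passing to a common GNS representation, the relative $U(1)$-charge $n\in\Z$ of $\omega^{l,H_1,\pi}$ with respect to $\omega^{l,H_2,\pi}$ is well defined, and $V$ is even exactly when $n$ is even. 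Using $\flux_{\pm\pi}^{l,H_i}\circ\tau=\tau\circ\flux_{\mp\pi}^{l,H_i}$ one obtains the analogous relation for the $-\pi$ defect states, with $V$ replaced by $\tau^{-1}(V)$, which has the same parity; running the covariance argument of Lemma~\ref{lem:same parity} — in particular the double use of~\eqref{eq:time reversal of U(1)-symmetry in the GNS} — then forces $n$ to be even. Morally this holds because the two constructions insert the \emph{same} flux $\pi$ and therefore pump the same charge modulo two, independently of the parent Hamiltonian.

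Finally, with $V$ even, Lemma~\ref{lem:independence under even unitaries} shows that $\omega^{l,H_1,\pi}$ is a Kramers singlet iff $\omega^{l,H_2,\pi}$ is (both are one or the other by Lemma~\ref{lem:Kramers or not}), and since this type is insensitive to the choice of antiunitary implementing $\tau$ on the common irreducible representation, it equals $\Ind(\omega,H_i,l)$ for $i=1,2$. Hence $\Ind(\omega,H_1,l)=\Ind(\omega,H_2,l)$. The routine parts are the locality estimates of the second paragraph; the genuinely delicate point is the evenness of $V$, i.e.\ the parity argument of the third paragraph.
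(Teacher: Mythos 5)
Your first two paragraphs are sound and follow the same strategy as the paper: show that $\omega^{l,H_1,\pi}$ and $\omega^{l,H_2,\pi}$ are almost local perturbations of each other by comparing both $\flux_\pi^{l,H_i}$ to the corresponding full half-plane quasi-adiabatic evolution $\al_\pi^{h,H_i}$ (each of which transports $\omega$ to the $H_i$-independent state $\omega\circ\rho_\pi^h$), control the difference on the keyhole and its complement via Lemma~\ref{lem:Locality along line}, glue with Lemma~\ref{lem:f-close on cones implies f-close}, and invoke Proposition~\ref{prop:f-close implies almost local unitary equivalence}. Your opening observation that $T^2$ is independent of the choice of implementing antiunitary is also correct.

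The gap is exactly where you flagged it: the evenness of $V$. Your proposed argument via time-reversal covariance does not close. The pair $(\omega^{l,H_1,\pi},\omega^{l,H_2,\pi})$ consists of two $+\pi$ defect states; it is \emph{not} a pair of time-reversal images, so the mechanism of Lemma~\ref{lem:same parity} (which exploits that $T$ exchanges the two vectors) does not transfer. Concretely, in the GNS representation of $\omega^{l,H_2,\pi}$, Lemma~\ref{lem:same parity} gives $V_\phi=T^*V_{-\phi}T$, which means $T$ \emph{commutes} with the $U(1)$ generator (antilinearity turns the sign flip into a commutation). Writing $V_\phi\Omega^{1,+}=\ed^{\iu n\phi}\Omega^{1,+}$, one then finds $V_\phi(T\Omega^{1,+})=\ed^{\iu n\phi}(T\Omega^{1,+})$: the time-reversed vector has the \emph{same} charge $n$, not the opposite one, so no constraint on $n$ results. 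Replacing $V$ by $\tau^{-1}(V)$ for the $-\pi$ states and using the relations from Lemma~\ref{lem:same parity} for $H_1$ and $H_2$ likewise leads to identities in which the parity of $V$ appears on both sides and cancels. The paper circumvents this by a continuity argument rather than a symmetry argument: the family of states $\phi\mapsto\omega_0\circ\alpha\circ\flux_{-\phi}^{l,H_1}\circ(\flux_{-\phi}^{l,H_2})^{-1}\circ\alpha^{-1}$ is weak-$*$ continuous, the $f$-closeness estimates hold uniformly in $\phi$ so that the resulting unitaries $W_{-\phi}$ of Proposition~\ref{prop:f-close implies almost local unitary equivalence} are uniformly almost local, and since the family equals $\omega_0$ at $\phi=0$ (with $W_0=\I$ even), Lemma~\ref{lem:continuity of parity} forces all $W_{-\phi}$ to be even. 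Your closing heuristic (``same flux, same pumped charge'') is the correct intuition, but it is precisely the continuity in $\phi$ back to the common starting point $\phi=0$ — not time-reversal covariance — that turns it into a proof.
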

\begin{proof}
    We claim that the two defect states $\omega_j^- = \omega\circ\flux_{-\pi}^{l, H_j}, j=1,2$ are unitarily equivalent through an even unitary. The argument follows closely the proof of Lemma~\ref{lem:f-close on half-planes} and Proposition~\ref{prop:almost local unitary equivalence}. Instead of considering $\flux_{-\pi}^{l, H} \circ \big( \flux_{\pi}^{l, H} \big)^{-1}$, we have here $\flux_{-\pi}^{l, H_1} \circ \big( \flux_{-\pi}^{l, H_2} \big)^{-1}$. The same argument allows one to replace $\flux_{-\phi}^{l, H_1} \circ \big( \flux_{-\phi}^{l, H_2} \big)^{-1}$ by $\al_{-\phi}^{h, H_1} \circ \big( \al_{-\phi}^{h, H_2} \big)^{-1}$ for all $\phi\in[-\pi,\pi]$ and all $A\in\Lambda\cap B_a(r)^c$, see~(\ref{beta betatilde}). We also have that
    \begin{equation}
        \omega\circ\al_{-\phi}^{h, H_1} \circ \big( \al_{-\phi}^{h, H_2} \big)^{-1}
        = \omega^{h,-\phi} \circ \big( \al_{-\phi}^{h, H_2} \big)^{-1}
        = \omega
    \end{equation}
    and hence again
    \begin{equation}\label{eq:indep of symmetric parent Hamiltonian}
        \vert \omega_0(A) - \omega_0\circ\alpha\circ\flux_{-\phi}^{l, H_1} \circ \big( \flux_{-\phi}^{l, H_2} \big)^{-1}\circ\alpha^{-1}(A)\vert\leq f(r)\Vert A\Vert
    \end{equation}
    for all $\phi\in[-\pi,\pi]$ and all $A\in\Lambda\cap B_a(r)^c$. The case $A\in\Lambda^c\cap B_a(r)^c$ follows similarly with $\flux_{-\phi}^{l, H_1} \circ \big( \flux_{-\phi}^{l, H_2} \big)^{-1}$ being close to the identity, which yields the same estimate. By Lemma~\ref{lem:f-close on cones implies f-close}, we conclude that the two states are local perturbations of each other, see Definition~\ref{def:f-close}, and hence by Proposition~\ref{prop:f-close implies almost local unitary equivalence} that there is a unitary $W_\phi$ such that $\omega_0\circ\alpha\circ\flux_{-\phi}^{l, H_1} \circ \big( \flux_{-\phi}^{l, H_2} \big)^{-1}\circ(\alpha)^{-1} = \omega_0\circ\Ad[W_{-\phi}]$. The family of homogeneous states $\omega_0\circ\alpha\circ\flux_{-\phi}^{l, H_1} \circ \big( \flux_{-\phi}^{l, H_2} \big)^{-1}\circ\alpha^{-1}$ is weak-* continuous so the $W_{-\phi}$ are all even by Lemma~\ref{lem:continuity of parity}.    
    
    Altogether, we conclude that
    \begin{equation}
        \omega_1^- = \omega_0\circ\Ad[W_{-\pi}]\circ\alpha\circ\flux_{-\pi}^{l, H_2} = \omega_2^-\circ\Ad[(\alpha\circ\flux_{-\pi}^{l, H_2})^{-1}(W_{-\pi})].
    \end{equation}
    where the unitary $V = (\alpha\circ\flux_{-\pi}^{l, H_2})^{-1}(W_{-\pi})$ is even. 
    We conclude by Lemma~\ref{lem:independence under even unitaries}.
\end{proof}

\subsubsection{Independence of the half-line}

We finally turn to the irrelevance of the choice of half-line $l$.

An argument similar to the above yields invariance under moving the endpoint of the half-line along the half-line.  If $l_1 = (a, v)$ and $l_2 = (a + bv, v)$ then $\gamma_{\phi}^{l_1, H} \circ (\gamma_{\phi}^{l_2, H})^{-1}$ is generated by $-\gamma_{\phi}^{l_2, H}(K^{l_1, H}(\phi) - K^{l_2, H}(\phi))$ which is an almost local even observable and hence $\gamma_{\phi}^{l_1, H} \circ (\gamma_{\phi}^{l_2, H})^{-1} = \Ad[V_\phi]$ for a family of even unitaries $V_{\phi}$. Invariance of the index then follows from Lemma~\ref{lem:independence under even unitaries}.

Let us now consider half-lines $l_1 = (a,v_1)$ and $l_2= (a,v_2)$ with common endpoint $a$. Now the corresponding defect states are truly distinct at infinity. However, they differ at infinity by a $U(1)$ transformation with phase $\pi$ that does not impact the flux, which explains why the index is invariant:

\begin{lemma} \label{lem:Kramers or not is invariant under U(1)}
    Let $\Gamma \subset \Z^2$. If $\psi$ is a Kramers singlet or belongs to a Kramers pair then the state $\psi' = \psi \circ \rho_{\pi}^{\Gamma}$ is a Kramers singlet if $\psi$ is, and $\psi'$ belongs to a Kramers pair if $\psi$ did.
\end{lemma}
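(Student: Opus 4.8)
The plan is to reduce the statement to the already-proven invariance under even unitaries (Lemma~\ref{lem:independence under even unitaries}) by unraveling the $U(1)$ transformation $\rho_\pi^\Gamma$ in the GNS representation. First I would take the GNS triple $(\Pi, \caH, \Psi)$ of $\psi$ together with the antiunitary $T$ implementing $\tau$, so that $\Pi(\tau(A)) = T\Pi(A)T^*$; by hypothesis $T^2\Psi = \pm\Psi$ and we want to show $\psi' = \psi\circ\rho_\pi^\Gamma$ has the same sign. Since $\psi$ is $U(1)$-invariant only for the \emph{full} $\rho_\phi$, not for $\rho_\phi^\Gamma$, the state $\psi'$ is genuinely different in general, but the key point is that $\rho_\pi^\Gamma$ and $\rho_\pi^{\Gamma^c}$ compose to $\rho_\pi = \theta$, which \emph{is} a symmetry of $\psi$ (homogeneity). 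So I would write $\psi' = \psi\circ\rho_\pi^\Gamma = \psi\circ\theta\circ\rho_\pi^{\Gamma^c} = \psi\circ\rho_{-\pi}^{\Gamma^c} = \psi\circ\rho_\pi^{\Gamma^c}$ using $\rho_{2\pi}^{\Gamma^c} = \mathrm{id}$. This symmetrizes the roles of $\Gamma$ and $\Gamma^c$, but by itself does not yet produce an even \emph{local} unitary.

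The actual mechanism I expect to work is the following. The automorphism $\rho_\pi^\Gamma$ is not inner, but it differs from fermion parity $\theta = \rho_\pi$ by the automorphism $\rho_{-\pi}^{\Gamma^c}$ supported on the complement; more precisely $\rho_\pi^\Gamma = \rho_\pi \circ \rho_{-\pi}^{\Gamma^c}$. The plan is to build an antiunitary $T'$ on $\caH$ implementing $\tau$ with respect to the new GNS vector and track $(T')^2$. Concretely, using the implementing unitaries: let $V_\phi$ implement $\rho_\phi$ in $\caH$ with $V_\phi\Psi = \Psi$ (available since $\psi$ is $U(1)$-invariant, as in the proof of Lemma~\ref{lem:same parity}). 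Then $\Psi' := $ some unit vector representing $\psi'$. The subtlety is that $\rho_\phi^\Gamma$ is \emph{not} globally implemented by a vector-preserving unitary because $\psi$ is not $\rho_\phi^\Gamma$-invariant, so one cannot naively use an implementer of $\rho_\pi^\Gamma$. Instead I would invoke the structure already exploited elsewhere: write $\psi' = \psi\circ\Ad[V^*]\circ\beta$ where, by an argument like that establishing Proposition~\ref{prop:almost local unitary equivalence} or directly, the discrepancy between $\rho_\pi^\Gamma$ and $\rho_\pi$ is implemented (up to an \emph{even} local unitary) within the representation — because inserting $\pi$ flux through $\partial\Gamma$ rather than globally changes the state only by a perturbation that can be absorbed. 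The cleanest route: since $\rho_\pi^\Gamma \circ \rho_\pi^{\Gamma^c} = \theta$ and both $\rho_\pi^\Gamma$-image and $\rho_\pi^{\Gamma^c}$-image of $\psi$ must be compared, and since $\psi\circ\theta = \psi$, the vector $\Psi$ is simultaneously (up to phases) the representative of all three; the operator implementing $\rho_\pi^\Gamma$ composed with its $\tau$-conjugate will square to the same scalar as $T^2$ because $\tau$ commutes with $\theta$ and anticommutes in the appropriate sense with the charge (recall $\tau\circ\rho_\phi = \rho_{-\phi}\circ\tau$, hence $\tau\circ\rho_\pi^\Gamma = \rho_{-\pi}^\Gamma\circ\tau = \rho_\pi^\Gamma\circ\theta^\Gamma\circ\tau$, and the $\theta^\Gamma$ factor contributes trivially to the square by evenness considerations).

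I expect the main obstacle to be exactly this last bookkeeping: $\rho_\pi^\Gamma$ is an \emph{honest automorphism that is not inner}, so the proof of Lemma~\ref{lem:independence under even unitaries} does not apply verbatim, and one must instead work with implementers in the GNS space and verify that the twist introduced by restricting the $U(1)$ rotation to $\Gamma$ does not flip the sign of $T^2$. The guiding principle making this true is that $\tau(Q_x) = Q_x$, so $\tau$ and $\rho_\pi^\Gamma$ "commute up to parity on $\Gamma$", i.e. $\tau\rho_\pi^\Gamma\tau^{-1} = \rho_{-\pi}^\Gamma = \rho_\pi^\Gamma$ (since $\rho_{2\pi}^\Gamma = \mathrm{id}$); thus conjugating $T$ by an implementer $V_\pi^\Gamma$ of $\rho_\pi^\Gamma$ gives an antiunitary $T' = V_\pi^\Gamma T (V_\pi^\Gamma)^*$ — wait, more carefully, one needs $T'$ implementing $\tau$ relative to the \emph{new} state — and the relation $\tau\rho_\pi^\Gamma = \rho_\pi^\Gamma\tau$ forces $(T')^2 = T^2$ on the relevant cyclic vector. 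If an implementer $V_\pi^\Gamma$ with $V_\pi^\Gamma\Psi$ representing $\psi'$ exists (which it does after composing with the even local unitary provided by the $f$-closeness arguments already in the paper, since $\psi$ and $\psi\circ\rho_\pi^\Gamma$ differ only near $\partial\Gamma$ up to such a unitary), then one defines $\Psi' = V_\pi^\Gamma\Psi$, checks $T'\Psi' = \Psi'$ can be arranged, and computes $(T')^2\Psi' = V_\pi^\Gamma T^2 (V_\pi^\Gamma)^*\Psi' = V_\pi^\Gamma T^2\Psi = \pm V_\pi^\Gamma\Psi = \pm\Psi'$, the sign being inherited unchanged. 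The remaining care is to ensure that the even local unitary correction does not spoil this, which is precisely where Lemma~\ref{lem:independence under even unitaries} is finally invoked.
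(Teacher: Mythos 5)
Your instinct that the heart of the matter is the commutation $\tau\circ\rho_\pi^\Gamma = \rho_\pi^\Gamma\circ\tau$ (using $\tau(Q_x)=Q_x$ and $\rho_{2\pi}^\Gamma=\mathrm{id}$) is exactly right, but the mechanism you build around it has a genuine gap. Your argument hinges on the existence of a unitary $V_\pi^\Gamma\in\caB(\caH)$ implementing $\rho_\pi^\Gamma$ in the GNS representation of $\psi$, which would require $\psi$ and $\psi'=\psi\circ\rho_\pi^\Gamma$ to be unitarily equivalent. That is not given for arbitrary $\Gamma\subset\Z^2$, and the justification you offer --- that ``$\psi$ and $\psi\circ\rho_\pi^\Gamma$ differ only near $\partial\Gamma$ up to such a unitary'' --- is simply false in general: $\rho_\pi^\Gamma = \theta^\Gamma$ twists the state on \emph{all} of $\Gamma$, not just near its boundary, so the two states are not $f$-close in the sense of Definition~\ref{def:f-close} (take $\Gamma$ a half-plane). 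The $f$-closeness machinery in the paper is never applied to the pair $(\psi,\psi\circ\rho_\pi^\Gamma)$ and could not be. You sense this tension yourself in writing ``so one cannot naively use an implementer of $\rho_\pi^\Gamma$,'' but then you proceed to use one anyway.

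The paper's proof is far more economical and sidesteps the whole issue: it does not keep the representation and move the vector, it keeps the vector and moves the representation. The triple $(\Pi',\caH,\Psi)$ with $\Pi' := \Pi\circ\rho_\pi^\Gamma$ is a GNS triple for $\psi'$ (same Hilbert space, same cyclic vector $\Psi$), since $\rho_\pi^\Gamma$ is an automorphism. The same antiunitary $T$ that implements $\tau$ in $\Pi$ also implements $\tau$ in $\Pi'$, precisely because
\begin{equation}
\Pi'(\tau(A)) = \Pi\bigl(\rho_\pi^\Gamma(\tau(A))\bigr) = \Pi\bigl(\tau(\rho_{-\pi}^\Gamma(A))\bigr) = \Pi\bigl(\tau(\rho_\pi^\Gamma(A))\bigr) = T\,\Pi'(A)\,T^*.
\end{equation}
Since $T$ and $\Psi$ are unchanged, $T^2\Psi=\pm\Psi$ carries over verbatim. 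No implementing unitary, no $f$-closeness, no transitivity: just the algebraic commutation you already spotted, applied at the level of representations. I would encourage you to internalize this move --- composing the representation rather than the state with an automorphism --- as it recurs in the proofs of Lemma~\ref{lem:antiunitary implementation} and Lemma~\ref{lem:Kramers or not is invariant under TR invariant automorphisms} as well.
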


\begin{proof}
    Let $(\Pi, \caH, \Psi)$ be the GNS triple of $\psi$ and let $T$ implement time reversal. Then $T^2 \Psi = \pm \Psi$ depending on whether $\psi$ is a Kramers singlet, or belongs to a Kramers pair. Since $\psi' = \psi \circ \rho^{\Gamma_{\pi}}$ we have that $(\Pi', \caH, \Psi)$ with $\Pi' = \Pi \circ \rho_{\pi}^{\Gamma}$ is the GNS triple of $\psi$. Using (\ref{TR of charge}) and $\rho_{-\pi}^{\Gamma} = \rho_{\pi}^{\Gamma}$ we see that time reversal is still implemented in $\Pi'$ by $T$. The claim follows immediately.
\end{proof}

\begin{proposition} \label{prop:Indep of line}
Let $\omega$ be a symmetric SRE state. Then
    \begin{equation}
        \Ind(\omega,H,l_1) = \Ind(\omega,H,l_2)
    \end{equation}
    for any two half-lines $l_1, l_2$.
\end{proposition}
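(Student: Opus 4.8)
The plan is to reduce to half-lines sharing an endpoint, and then to show that for such half-lines the two defect states $\omega^+_{l_1}:=\omega\circ\flux^{l_1,H}_\pi$ and $\omega^+_{l_2}:=\omega\circ\flux^{l_2,H}_\pi$ differ by a $U(1)$ transformation by $\pi$ on a wedge, up to conjugation by an even unitary, so that Lemma~\ref{lem:Kramers or not is invariant under U(1)} and Lemma~\ref{lem:independence under even unitaries} finish the argument. For the reduction: given $l_1=(a_1,v_1)$, $l_2=(a_2,v_2)$ with $a_1\neq a_2$, connect them by the chain $(a_1,v_1)\to(a_1,u)\to(a_2,u)\to(a_2,v_2)$ with $u=(a_2-a_1)/\norm{a_2-a_1}$; the middle step slides the endpoint along the ray and has already been handled above, so it remains to treat $l_1=(a,v_1)$, $l_2=(a,v_2)$ with a common endpoint $a$ (and, inserting an auxiliary transverse half-line if needed, we may assume the directed angle $\theta\in(0,2\pi)$ from $v_1$ to $v_2$ is different from $\pi$).

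For the common-endpoint case, write $h_i=h_{l_i}$, let $W\subset\R^2$ be the open wedge with apex $a$ swept from $v_1$ to $v_2$ through the side of $\partial h_1$ on which $h_1$ lies, and put $\Gamma=W\cap\Z^2$. The key claim is that $\omega^+_{l_2}$ is an almost local perturbation of $\omega^+_{l_1}\circ\rho_\pi^\Gamma$. Granting this, Lemma~\ref{prop:f-close implies almost local unitary equivalence} provides a homogeneous unitary $V\in\caA^\aloc$ with $\omega^+_{l_2}=\omega^+_{l_1}\circ\rho_\pi^\Gamma\circ\Ad[V]$; $V$ can be taken even exactly as in the proof of Proposition~\ref{prop:parent} (the finitely many jumps one meets when interpolating the direction from $v_1$ to $v_2$ — caused by lattice sites crossing the rotating line $\partial h_{l_t}$ — occur through composition with even almost local unitaries, so Lemma~\ref{lem:continuity of parity} still applies); and then, since $\omega^+_{l_1}$ is a Kramers singlet or belongs to a Kramers pair by Lemma~\ref{lem:Kramers or not}, the same holds for $\omega^+_{l_1}\circ\rho_\pi^\Gamma$ with the same Kramers type by Lemma~\ref{lem:Kramers or not is invariant under U(1)}, and hence for $\omega^+_{l_2}$ by Lemma~\ref{lem:independence under even unitaries}, giving $\Ind(\omega,H,l_1)=\Ind(\omega,H,l_2)$. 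To prove the claim one argues as in the proof of Lemma~\ref{lem:f-close on half-planes}: since $K^{l_i,H}_x=K^{h_i,H}_x\,\chi[(x-a)\cdot v_i>0]$, the estimate (\ref{Support of K}) and the generator formula (\ref{eq:complicated G}) for $(\flux_\pi^{l_i,H})^{-1}\circ\al_\pi^{h_i,H}$ (a TDI supported near $-l_i$) let Lemma~\ref{lem:Locality along line} show, for $\varsigma$ small enough that $\Lambda_{\pm l_1,\varsigma},\Lambda_{\pm l_2,\varsigma}$ are mutually disjoint away from $a$, that $\omega^+_{l_i}$ is $f$-close to $\omega$ away from $\Lambda_{l_i,\varsigma}\cup B_a(r)$ and $f$-close to $\omega^{h_i,\pi}=\omega\circ\rho_\pi^{h_i}$ away from $\Lambda_{-l_i,\varsigma}\cup B_a(r)$ (using (\ref{parallel transport})). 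Feeding these into a region-by-region comparison around the four rays $\pm l_1,\pm l_2$, using that $W\cap\Lambda_{l_i,\varsigma}$ equals $h_i\cap\Lambda_{l_i,\varsigma}$ or $h_i^c\cap\Lambda_{l_i,\varsigma}$, that $W$ does not separate either cone $\Lambda_{-l_i,\varsigma}$ (as $\theta\neq\pi$), and the exact identity $\omega\circ\rho_\pi^{\Gamma'}=\omega\circ\rho_\pi^{(\Gamma')^c}$ (which follows from $\omega\circ\theta=\omega$ and $\rho_\pi^{\Lambda}=\theta\circ\rho_\pi^{\Lambda^c}$), one checks that on each such region the $\rho_\pi^\Gamma$-twist either cancels the flux cut carried by $\omega^+_{l_1}$, reproduces the flux cut carried by $\omega^+_{l_2}$, or acts as $\theta$ or the identity on a local algebra on which $\omega$ is invariant; hence $\omega^+_{l_2}$ and $\omega^+_{l_1}\circ\rho_\pi^\Gamma$ agree up to $f(r)\norm{A}$ on $\caA^\loc_{B_a(r)^c}$.

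The step I expect to be the main obstacle is precisely this last claim. Physically it is clear — relocating the branch cut of the inserted $\pi$-flux from $l_1$ to $l_2$ is a $U(1)$ rotation by $\pi$ on the wedge between the two rays, up to a correction localised at the puncture $a$ — but turning it into the almost-local-perturbation estimate requires careful bookkeeping of how the quasi-adiabatic flux-insertion automorphisms differ from the sharp $U(1)$ transformations and of the cancellations between the cuts and the wedge twist, uniformly in $r$ and in every angular sector around the four rays $\pm l_1,\pm l_2$; this is the analogue of, and somewhat more intricate than, the estimate carried out in the proof of Lemma~\ref{lem:f-close on half-planes}.
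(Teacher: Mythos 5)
Your strategy is the same as the paper's: reduce to two half-lines $l_1,l_2$ sharing an endpoint $a$ (via a slide--rotate--slide chain, which the paper also uses implicitly), show that $\omega^+_{l_1}\circ\rho_\pi^{\Gamma}$ and $\omega^+_{l_2}$ differ by conjugation with an even almost-local unitary where $\Gamma$ is the wedge between the two rays, and then invoke Lemma~\ref{lem:Kramers or not is invariant under U(1)} together with Lemma~\ref{lem:independence under even unitaries}. You have also correctly identified that the delicate part is the almost-local-perturbation estimate.

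Where you and the paper diverge is in the execution of that estimate and in the evenness argument. The paper runs the comparison at a general flux $\phi$ (not just $\phi=\pi$) and splits observables into only two regions, $\Lambda_1 = \Lambda_{l_1,\varsigma}$ and its complement. On $\Lambda_1^c$ it uses the chain
$\omega\circ\gamma_{-\phi}^{l_1,H}\circ\rho_\phi^\Lambda \approx \omega\circ\rho_\phi^\Lambda = \omega\circ\rho_{-\phi}^{\Lambda^c}=\omega\circ\alpha_{-\phi}^{K^{\Lambda^c,H}}\approx \omega\circ\gamma_{-\phi}^{l_2,H}$,
where the middle step is the exact identity $\rho_\phi^\Lambda=\rho_\phi\circ\rho_{-\phi}^{\Lambda^c}$ plus $U(1)$-invariance of $\omega$, and the last step uses the quasi-adiabatic generator $K^{\Lambda^c,H}$ associated with the cone $\Lambda^c$ rather than a half-plane. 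This cone generator, whose support hugs $\partial\Lambda=l_1\cup l_2$, is what makes the comparison near $l_2$ clean; your sketch, which tries to patch together estimates around the four rays $\pm l_1,\pm l_2$ and reason about $W\cap\Lambda_{l_i,\varsigma}$ versus $h_i\cap\Lambda_{l_i,\varsigma}$ directly, can be made to work but is the ``somewhat more intricate'' bookkeeping you flag, and is avoided in the paper. Keeping $\phi$ general also supplies, for free, the continuous family of even unitaries $W_{-\phi}$ needed for Lemma~\ref{lem:continuity of parity}. Your parenthetical about making $V$ even --- interpolating the \emph{direction} from $v_1$ to $v_2$ and handling the jumps as lattice sites cross the rotating line --- is a detour and, as written, a gap: there is no argument that those jump unitaries are even, and this is not what the proof of Proposition~\ref{prop:parent} does. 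The correct move (which that proof actually uses, and which works here too) is to hold the geometry fixed and let $\phi$ run from $0$ to $-\pi$, so that $\phi\mapsto \omega\circ\gamma_{-\phi}^{l_1,H}\circ\rho_\phi^\Lambda\circ(\gamma_{-\phi}^{l_2,H})^{-1}$ is a weak-* continuous family of states unitarily equivalent to $\omega_0$ through uniformly almost-local $W_{-\phi}$, whence evenness at $\phi=\pi$ follows from evenness at $\phi=0$.
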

\begin{proof}
     By the remarks at the beginning of this section we can restrict our attention to $l_1 = (a,v_1)$ and $l_2= (a,v_2)$. Let $\varsigma = \arccos{v_1\cdot v_2}$ be the angle between $l_1$ and $l_2$. It is sufficient to consider the case $\varsigma < \pi$, then $v = \frac{1}{2}(v_1 + v_2)$ is non-zero and we put $l = (a,v)$. We denote
     \begin{equation}
         \Lambda = \Lambda_{l,\varsigma},\qquad
         \Lambda_1 = \Lambda_{l_1,\varsigma}.
     \end{equation}
     see Figure~\ref{fig:cones 2}

    \begin{figure}[ht]
        \centering \includegraphics[width=0.25\textwidth]{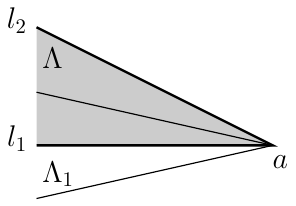}
        \caption{The $U(1)$ transformation in $\Lambda$ slides the half-line $l_1$ onto $l_2$.}
        \label{fig:cones 2}
    \end{figure}

    First of all, if $A\in \caA_{B_a(r)^c\cap\Lambda_1^c}$, then $\Vert \gamma_{-\phi}^{l_1, H} \circ \rho^\Lambda_{\phi}(A)-\rho^\Lambda_{\phi}(A)\Vert\leq f(r)\Vert A \Vert$ by Lemma~\ref{lem:Locality along line}. Hence
    \begin{equation}
        \vert \omega  \circ  \gamma_{-\phi}^{l_1, H} \circ \rho^\Lambda_{\phi}(A)-\omega\circ \rho^\Lambda_{\phi}(A)\vert \leq f(r)\Vert A \Vert.
    \end{equation}
    We now use that $\rho_\phi^\Lambda = \rho_\phi\circ\rho_{-\phi}^{\Lambda^c}$ and the $U(1)$ invariance of $\omega$ to conclude that
    \begin{equation}
        \omega\circ \rho^\Lambda_{\phi}
        = \omega\circ \rho^{\Lambda^c}_{-\phi}
        = \omega\circ \al_{-\phi}^{K^{\Lambda^c,H}}
    \end{equation}
    where $K^{\Lambda^c,H}(\phi)$ is defined as in~(\ref{K Ops}) but with the cone $\Lambda^c$ replacing the half-plane $h$. Lemma~\ref{lem:Locality along line}  yields that $\Vert \alpha_{-\phi}^{K^{\Lambda^c,H}}(A) - \gamma_{-\phi}^{l_2, H}(A)\Vert\leq f(r)\Vert A\Vert$ by the choice of $\Lambda$. Altogether, we have that
    \begin{equation}\label{outside}
    \vert \omega\circ\gamma_{-\phi}^{l_1, H} \circ \rho^\Lambda_{\phi}(A)-\omega\circ \gamma^{l_2, H}_{-\phi}(A)\vert \leq f(r)\Vert A \Vert
    \end{equation}
    for all $A \in \caA_{B_a(r)^c\cap\Lambda_1^c}$. Secondly, if $A \in \caA_{B_a(r)^c\cap\Lambda_1}$ we use that $\rho^\Lambda_{\phi}(A)$ is almost localized in the same set to conclude that $\Vert \gamma_{-\phi}^{l_1, H} \circ \rho^\Lambda_{\phi}(A) - \rho^\Lambda_{-\phi}\circ \rho^\Lambda_{\phi}(A) \Vert \leq f(r)\Vert A\Vert$ and hence
    \begin{equation}
    \vert \omega \circ \gamma_{-\phi}^{l_1, H} \circ \rho^\Lambda_{\phi}(A) - \omega\circ\rho^\Lambda_{-\phi}\circ \rho^\Lambda_{\phi}(A)\vert\leq f(r)\Vert A\Vert,
    \end{equation}
    namely $\vert \omega \circ \gamma_{-\phi}^{l_1, H} \circ \rho^\Lambda_{\phi}(A) - \omega(A)\vert\leq f(r)\Vert A\Vert$. It remains to note that $\Vert \gamma^{l_2, H}_{-\phi}(A) - A\Vert\leq f(r)\Vert A \Vert$ to conclude that
    \begin{equation}\label{inside}
    \vert \omega\circ\gamma_{-\phi}^{l_1, H} \circ \rho^\Lambda_{\phi}(A) -\omega\circ \gamma^{l_2, H}_{-\phi}(A)\vert \leq f(r)\Vert A \Vert
    \end{equation}
    for all $A\in \caA_{B_a(r)^c\cap\Lambda_1}$. The two estimates~(\ref{outside},\ref{inside}) and Lemma~\ref{lem:f-close on cones implies f-close} imply that $\omega_0 \circ \al \circ \gamma_{-\phi}^{l_1, H} \circ \rho_\phi^\Lambda \circ \big( \gamma_{-\phi}^{l_2, H} \big)^{-1} \circ  \alpha^{-1}$ is an almost local perturbation of $\omega_0$ and so Lemma~\ref{prop:f-close implies almost local unitary equivalence} yields a unitary $W_{-\phi}$ such that
    \begin{equation}
    \omega \circ \gamma^{l_1, H}_{-\phi} \circ \rho^\Lambda_{\phi} = \omega \circ \gamma^{l_2, H}_{-\phi}\circ\Ad[W_{-\phi}].
    \end{equation}
    By continuity, the unitaries are even, see Lemma~\ref{lem:continuity of parity}. The proof is concluded by putting $\phi = -\pi$ and applying Lemmas \ref{lem:Kramers or not is invariant under U(1)} and \ref{lem:independence under even unitaries}.
\end{proof}

Propositions~\ref{prop:parent} and~\ref{prop:Indep of line} allow us to make the following definition announced earlier.
\begin{definition} \label{def:index}
    Let $\omega$ be a symmetric SRE state. We define
    \begin{equation}
        \Ind(\omega) := \Ind(\omega, H, l)
    \end{equation}
    where $H$ is any symmetric parent Hamiltonian for $\omega$ and $l$ is any half-line.
\end{definition}
\noindent Recall that the existence of a symmetric parent Hamiltonian is guaranteed by Lemma \ref{lem:existence of symmetric parent Hamiltonian}.

With this, we have concluded the proof that the index of a symmetric SRE state is a well-defined property of the state. We conclude this section with two fundamental properties of the index: its multiplicativity under stacking and its `topological invariance', namely the fact that it is invariant under the action of symmetry preserving LGAs. This makes it an index of the symmetry protected phases of symmetric stably SRE states.

\subsection{Stacking}

\begin{proposition} \label{prop:stacking}
    Let $(\omega_1, \caA_1, \tau_1)$ and $(\omega_2, \caA_2, \tau_2)$ be symmetric SRE states. Then $\omega_1 \gotimes \omega_2$ is a symmetric SRE state on the fermion system with time reversal $(\caA_1 \gotimes \caA_2, \tau_1 \gotimes \tau_2)$ and
    \begin{equation}
        \Ind(\omega_1 \gotimes \omega_2) = \Ind(\omega_1) \times \Ind(\omega_2).
    \end{equation}
\end{proposition}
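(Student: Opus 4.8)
The plan is to choose parent Hamiltonians and half-lines for the factors compatibly and to show that the flux insertion for $\omega_1 \gotimes \omega_2$ factorizes through the flux insertions for the two factors, so that the GNS construction and the time reversal operator $T$ split as a tensor product. Concretely: let $H^{(1)}$ and $H^{(2)}$ be symmetric parent Hamiltonians for $\omega_1$ and $\omega_2$ (existence by Lemma~\ref{lem:existence of symmetric parent Hamiltonian}), and let $H$ be the $0$-chain on $\caA_1 \gotimes \caA_2$ with $H_x = H^{(1)}_x + H^{(2)}_x$. Since the two summands commute (they act on the disjoint CAR factors and are even), $\omega_1 \gotimes \omega_2$ is a locally unique gapped ground state of $H$, so $H$ is a symmetric parent Hamiltonian for the stacked state. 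Fix one half-line $l$. Because the quasi-adiabatic generator $K^{l,H}_x(\phi)$ depends on $H$ through $\al_t^{H^{h,\phi}}$ and $\dd H^{h,\phi}_x/\dd\phi$, and both the dynamics and the derivative split as a direct sum over the two commuting pieces, one gets $K^{l,H}_x(\phi) = K^{l,H^{(1)}}_x(\phi) + K^{l,H^{(2)}}_x(\phi)$ and hence $\flux^{l,H}_\phi = \flux^{l,H^{(1)}}_\phi \gotimes \flux^{l,H^{(2)}}_\phi$ (the two commuting flows on disjoint factors). Therefore the defect states factorize: $(\omega_1\gotimes\omega_2)^{\pm} = \omega_1^{\pm} \gotimes \omega_2^{\pm}$.

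Next I would build the GNS data as a tensor product. If $(\Pi_j, \caH_j, \Omega_j^+)$ is the GNS triple of $\omega_j^+$, then $(\Pi_1 \gotimes \Pi_2, \caH_1 \otimes \caH_2, \Omega_1^+ \otimes \Omega_2^+)$ is a GNS triple for $(\omega_1\gotimes\omega_2)^+$ (one must be slightly careful here: the fermionic graded tensor product of GNS representations of homogeneous states is again a GNS representation — this is standard for even states, using the grading operators $\Theta_j$ to define the graded product representation). By Proposition~\ref{prop:almost local unitary equivalence} applied to each factor there are even homogeneous unitaries $U_j \in \caA_j^{\aloc}$ with $\omega_j^+ = \omega_j^- \circ \Ad[U_j]$; then $U = U_1 U_2 \in (\caA_1\gotimes\caA_2)^{\aloc}$ is even and implements $(\omega_1\gotimes\omega_2)^+ = (\omega_1\gotimes\omega_2)^- \circ \Ad[U]$, and the vector $\Omega^- := \Pi(U)\Omega^+ = (\Pi_1(U_1)\Omega_1^+)\otimes(\Pi_2(U_2)\Omega_2^+) = \Omega_1^- \otimes \Omega_2^-$. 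By Lemma~\ref{lem:TR implementation}, applied to each factor and to the stack, the time reversal operators satisfy $T = T_1 \otimes T_2$ on $\caH_1 \otimes \caH_2$ (uniqueness in Lemma~\ref{lem:TR implementation} forces this, since $T_1\otimes T_2$ is antiunitary, sends $\Omega^+$ to $\Omega^-$ because $T_j\Omega_j^+ = \Omega_j^-$, and implements $\tau_1\gotimes\tau_2 = \tau$). Consequently $T^2 = T_1^2 \otimes T_2^2$, so $T^2 \Omega^+ = (T_1^2\Omega_1^+)\otimes(T_2^2\Omega_2^+) = \Ind(\omega_1)\Ind(\omega_2)\,\Omega^+$, which is exactly $\Ind(\omega_1\gotimes\omega_2) = \Ind(\omega_1)\times\Ind(\omega_2)$ by Lemma~\ref{lem:Kramers or not} and Definition~\ref{def:index of triple}.

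The first part of the statement — that $\omega_1\gotimes\omega_2$ is again a symmetric SRE state — is the easy bookkeeping already recorded in Section~\ref{sec:states}: if $\omega_j = \omega_{0,j}\circ\al_j$ with $\al_j = \al^{F^{(j)}}_1$, then $\omega_1\gotimes\omega_2 = (\omega_{0,1}\gotimes\omega_{0,2})\circ(\al_1\gotimes\al_2)$, the product of homogeneous pure product states is a homogeneous pure product state, and $\al_1\gotimes\al_2$ is an LGA generated by the TDI $F^{(1)} + F^{(2)}$; symmetry of the factors gives symmetry of the stack. The main obstacle is the fermionic subtlety in the two places where I wrote "tensor product": (a) verifying that the graded tensor product of the GNS representations of the two homogeneous states is indeed (unitarily equivalent to) the GNS representation of the product state, with the grading handled by $\Theta_1\otimes\Theta_2$, and (b) checking that $\flux^{l,H}_\phi$ really equals the graded product $\flux^{l,H^{(1)}}_\phi \gotimes \flux^{l,H^{(2)}}_\phi$ rather than just restricting correctly to each factor — this needs the observation that $K^{l,H^{(1)}}(\phi)$ and $K^{l,H^{(2)}}(\phi)$ are supported on disjoint sets of fermionic modes and are even, so their flows commute and the combined flow on $\caA_1\gotimes\caA_2$ is the obvious one. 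Once these compatibility facts are in place the computation of $T^2$ is immediate. The key steps, in order, are: (1) construct $H = H^{(1)}+H^{(2)}$ and check it is a symmetric parent Hamiltonian; (2) factorize $K^{l,H}$ and hence $\flux^{l,H}_\phi$, yielding $(\omega_1\gotimes\omega_2)^\pm = \omega_1^\pm\gotimes\omega_2^\pm$; (3) identify the GNS triple and the operators $T$, $U$, $\Theta$ of the stack as graded tensor products of those of the factors; (4) compute $T^2\Omega^+$ and read off the index.
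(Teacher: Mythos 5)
Your proof is correct and follows essentially the same route as the paper: construct the stacked parent Hamiltonian $H_x = H^{(1)}_x + H^{(2)}_x$, factor the quasi-adiabatic flux insertion so $(\omega_1\gotimes\omega_2)^\pm = \omega_1^\pm\gotimes\omega_2^\pm$, identify the GNS triple of the stack as the graded tensor product of the factors with $T = T_1\otimes T_2$, and compute $T^2\Omega^+$. The only difference is that you spell out the factorization of $K^{l,H}$ and the uniqueness argument for $T$, which the paper compresses into ``one easily checks''; this is a useful expansion, not a divergence.
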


\begin{proof}
    Let $(\Pi_i, \caH_i, \Omega_i^+)$ be the GNS representation of $\omega_i^+$ and let $T_i$ be the antiunitary that implements time reversal in the representation $(\Pi_i, \caH_i)$. The index $\Ind(\omega_i)$ is defined by $T_i^2 \Omega_i^+ = \Ind(\omega_i) \, \Omega_i^+$, see Lemma \ref{lem:Kramers or not}.

    Let $\omega = \omega_1 \gotimes \omega_2$. Then the 0-chain $H$ with $H_x = H^{(1)}_x \gotimes \I + \I \gotimes H^{(2)}_x$ is a symmetric parent Hamiltonian for $\omega$ which we can use to construct a defect state $\omega^+ = \omega \circ \gamma_{\pi}^{l, H}$. Let $\Theta_1 \in \caB(\caH_1)$ be the unique unitary that implements parity in the representation $(\Pi_1, \caH_1)$ such that $\Theta_1 \Omega_1^+ = \Omega_1^+$. One easily checks that $\omega^+ = \omega_1^+ \gotimes \omega_2^+$, and the GNS representation of $\omega^+$ can be identified with $(\Pi, \caH, \Omega^+)$ where $\caH = \caH_1 \otimes \caH_2$, $\Omega^+ = \Omega_1^+ \otimes \Omega_2^+$, and $\Pi$ is determined by $\Pi(A_1 \gotimes A_2) = \Pi(A_1) \Theta_1^{\sigma(A_2)} \otimes \Pi_2(A_2)$ for homogeneous $A_1 \in \caA_1$ and $A_2 \in \caA_2$ and where $\sigma(A) = 0$ if $A$ is even, and $\sigma(A) = 1$ if $A$ is odd. Time reversal $\tau = \tau_1 \gotimes \tau_2$ is implemented in this representation by $T = T_1  \otimes T_2$. The index $\Ind(\omega)$ is defined by $T^2 \Omega^+ = \Ind(\omega) \Omega^+$, but
    \begin{equation}
        T^2 \Omega^+ = ( T_1^2 \Omega_1^+ ) \otimes (T_2^2 \Omega_2^+) = (\Ind(\omega_1) \Omega_1^+) \otimes (\Ind(\omega_2) \Omega_2^+) = \Ind(\omega_1) \times \Ind(\omega_2) \, \Omega^+.
    \end{equation}
    This proves the claim.
\end{proof}

\subsection{Invariance under symmetry preserving LGAs}

We now turn to showing that $\Ind(\omega)$ is a bonafide topological invariant.

In order to show invariance of the index under symmetry preserving LGAs we recall Definition~\ref{def:Kramers pairs and Kramers singlets} and first prove the following lemma:

\begin{lemma} \label{lem:Kramers or not is invariant under TR invariant automorphisms}
    Let $\psi$ be a pure state on a fermionic system with time reversal $(\caA, \tau)$ and let $\beta$ be an automorphism of $\caA$ that commutes with $\tau$. If $\psi$ is a Kramers singlet then so is $\psi \circ \beta$, and if $\psi$ belongs to a Kramers pair then so does $\psi \circ \beta$.
\end{lemma}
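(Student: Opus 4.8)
The plan is to work entirely in the GNS representation and exploit the uniqueness statement of Lemma~\ref{lem:antiunitary implementation}. Let $(\Pi, \caH, \Psi)$ be the GNS triple of $\psi$. Since $\psi$ is a Kramers singlet or belongs to a Kramers pair, time reversal is implemented in this representation by an antiunitary $T$ with $T\Psi = \Psi$ and $T\Pi(A)T^* = \Pi(\tau(A))$ for all $A \in \caA$, and $T^2\Psi = \pm\Psi$ according to which case we are in.

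First I would check that $\psi\circ\beta$ is again time reversal invariant, so that the notions of Definition~\ref{def:Kramers pairs and Kramers singlets} apply to it: using $\beta\circ\tau = \tau\circ\beta$ and that $\beta$ is a $*$-automorphism, one has $(\psi\circ\beta)\circ\tau = \psi\circ\tau\circ\beta = \bar\psi\circ\beta = \overline{\psi\circ\beta}$. Next, since $\beta$ is an automorphism, $(\Pi\circ\beta, \caH, \Psi)$ is a GNS triple for $\psi\circ\beta$: the vector $\Psi$ is cyclic because $\Pi(\beta(\caA))\Psi = \Pi(\caA)\Psi$ is dense, and $\langle\Psi, (\Pi\circ\beta)(A)\Psi\rangle = \psi(\beta(A)) = (\psi\circ\beta)(A)$ for all $A\in\caA$.

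The key observation is then that the same antiunitary $T$ implements time reversal in this GNS triple of $\psi\circ\beta$: using $\tau\circ\beta = \beta\circ\tau$,
\begin{equation}
    T\,(\Pi\circ\beta)(A)\,T^* = \Pi\big(\tau(\beta(A))\big) = \Pi\big(\beta(\tau(A))\big) = (\Pi\circ\beta)\big(\tau(A)\big)
\end{equation}
for all $A \in \caA$, together with $T\Psi = \Psi$. By the uniqueness part of Lemma~\ref{lem:antiunitary implementation}, $T$ is \emph{the} antiunitary implementing time reversal in the GNS representation of $\psi\circ\beta$ associated with the cyclic vector $\Psi$. Therefore the operator $T^2$, and hence the sign in $T^2\Psi = \pm\Psi$, is literally the same as for $\psi$, which is exactly the claim. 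There is no substantial obstacle here; the only point requiring care is to invoke the uniqueness of the implementing antiunitary, so that the value of $T^2$ on $\Psi$ is unambiguous and is transported correctly from $\psi$ to $\psi\circ\beta$.
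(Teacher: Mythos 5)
Your proof is correct and takes the same approach as the paper's: both use $(\Pi\circ\beta, \caH, \Psi)$ as a GNS triple for $\psi\circ\beta$ and observe that the commutation $\beta\circ\tau = \tau\circ\beta$ makes the same antiunitary $T$ implement time reversal in the new representation, whence $T^2\Psi$ is unchanged. The extra appeal to the uniqueness part of Lemma~\ref{lem:antiunitary implementation} is a fine way to make the phase-independence of the sign explicit, but is not strictly necessary since the sign of $T^2$ in an irreducible representation is already unambiguous.
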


\begin{proof}
    Let $(\Pi, \caH, \Psi)$ be the GNS triple of $\psi$ and suppose time reversal is implemented by an antiunitary operator $T$ acting on $\caH$. The GNS triple of $\psi' = \psi \circ \beta$ is $(\Pi', \caH, \Psi)$ with $\Pi' = \Pi \circ \beta$. Since $\tau^{-1} \circ \beta \circ \tau = \beta$ we have $\Pi'(\tau(A)) = T \Pi'(A) T^*$, so time reversal is also implemented by $T$ in the representation $\Pi'$. The claim follows immediately.
\end{proof}

\begin{proposition} \label{prop:invariance under symmetry preserving LGAs}
    Let $\omega$ be a symmetric SRE state and let $G$ be a symmetry preserving TDI. Then
    \begin{equation}
        \Ind(\omega) = \Ind(\omega\circ\alpha^G_1).
    \end{equation}
\end{proposition}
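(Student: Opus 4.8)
The plan is to reduce, via Lemma~\ref{lem:same parity}, to a comparison of the $-\pi$ defect states of $\omega$ and of $\omega':=\omega\circ\al_1^G$, and to show that these differ only by conjugation by an even unitary and by automorphisms commuting with $\tau$. Write $\beta:=\al_1^G$. Since $G$ is symmetry preserving, $\beta$ commutes with $\theta$, with every $\rho_\phi$, and with $\tau$, so $\omega'$ is a symmetric SRE state; and if $H$ is a symmetric parent Hamiltonian of $\omega$ then $H'_x:=\beta^{-1}(H_x)$ defines a symmetric parent Hamiltonian $H'$ of $\omega'$ --- it is a $0$-chain by the Lieb--Robinson bound~\eqref{LR bound}, $\omega'$ is its locally unique gapped ground state by the conjugation argument in the proof of Lemma~\ref{lem:existence of symmetric parent Hamiltonian}, and each $H'_x$ is $U(1)$- and time-reversal invariant because $\beta$ commutes with $\rho_\phi$ and $\tau$. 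Fix a half-line $l=(a,v)$ and put $h=h_l$. Applying Lemma~\ref{lem:same parity} to $\omega$ (with $H$) and to $\omega'$ (with $H'$), and invoking Lemma~\ref{lem:independence under even unitaries} and Proposition~\ref{prop:parent}, the index $\Ind(\omega)$ is the Kramers type (singlet versus pair) of $\omega^-:=\omega\circ\gamma_{-\pi}^{l,H}$ and $\Ind(\omega')$ is the Kramers type of $(\omega')^-:=\omega'\circ\gamma_{-\pi}^{l,H'}$, so it suffices to show these two states have the same Kramers type.

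The obstruction to doing this directly is that $\beta$ commutes with the \emph{global} $U(1)$ but only almost commutes with the \emph{half-plane} transformations $\rho_\phi^h$ entering the quasi-adiabatic flow. I would handle this by isolating a time-reversal-commuting boundary correction. Set $\beta_*:=\rho_\pi^h\circ\beta\circ\rho_\pi^h$ and $\lambda:=\beta_*^{-1}\circ\beta$. Using $\rho_{2\pi}^h=\id$ (hence $\rho_{-\pi}^h=\rho_\pi^h$) and $\tau\circ\rho_\phi^h=\rho_{-\phi}^h\circ\tau$, one checks that $\beta_*$ commutes with $\tau$ and is the LGA generated by the time-reversal-preserving TDI $s\mapsto\rho_\pi^h(G(s))$, whose terms agree with those of $G(s)$ away from $\partial h$ (each $G_x(s)$ being even and almost local, and $\rho_\phi^h$ acting site-wise). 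Standard comparison of flows with nearby generators then gives that $\lambda$ is an LGA generated by a time-reversal-preserving TDI $L$ with $\|L_x(s)\|$ decaying in $\dist(x,\partial h)$, together with the identities $\rho_\pi^h\circ\beta_*=\beta\circ\rho_\pi^h$ and $\beta\circ\lambda^{-1}=\beta_*$. Let $\lambda_{\mathrm{left}}$ be the LGA generated by $L$ restricted to the half-plane $\Lambda_{l,\pi}=\{x:(x-a)\cdot v>0\}$ containing $l$. Then $\lambda_{\mathrm{left}}$ still commutes with $\tau$ (a restriction of a time-reversal-preserving TDI stays time-reversal preserving); its generator decays in $\dist(x,l)$, so by Lemma~\ref{lem:Locality along line} it acts trivially on observables outside a cone about $l$; and its generator agrees with $L$ away from the opposite ray $-l=(a,-v)$, so $\lambda_{\mathrm{left}}$ and $\lambda$ act identically far from $-l$. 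Since $\beta$ and $\lambda_{\mathrm{left}}$ commute with $\tau$, Lemma~\ref{lem:Kramers or not is invariant under TR invariant automorphisms} shows $\omega^-$, $\omega^-\circ\beta$ and $(\omega^-\circ\beta)\circ\lambda_{\mathrm{left}}^{-1}$ all share the same Kramers type.

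It then remains to show $(\omega')^-=(\omega^-\circ\beta)\circ\lambda_{\mathrm{left}}^{-1}\circ\Ad[V]$ for an even unitary $V$, after which Lemma~\ref{lem:independence under even unitaries} finishes the argument. This step would run exactly as the proofs of Propositions~\ref{prop:parent} and~\ref{prop:Indep of line}. Fix a cone $\Lambda$ about $l$. On $A\in\caA_{\Lambda\cap B_a(r)^c}$, Lemma~\ref{lem:Locality along line} lets us replace $\gamma_{-\pi}^{l,H'}$ by $\al_{-\pi}^{h,H'}$, $\gamma_{-\pi}^{l,H}$ by $\al_{-\pi}^{h,H}$, and $\lambda_{\mathrm{left}}$ by $\lambda$ (the replacements differing from the originals only near $-l$), and then the parallel transport identity~\eqref{parallel transport} for $\omega'$ relative to $H'$ and for $\omega$ relative to $H$, combined with $\rho_{-\pi}^h=\rho_\pi^h$ and the identities above, gives
\begin{equation*}
  (\omega')^-(A)\ \approx\ \omega'\circ\al_{-\pi}^{h,H'}(A)=\omega'\circ\rho_\pi^h(A)=\omega\circ\beta\circ\rho_\pi^h(A),
\end{equation*}
and, using $\beta\circ\lambda_{\mathrm{left}}^{-1}(A)\approx\beta_*(A)$ for $A$ deep in $\Lambda$,
\begin{equation*}
  (\omega^-\circ\beta\circ\lambda_{\mathrm{left}}^{-1})(A)\ \approx\ \omega\circ\al_{-\pi}^{h,H}\circ\beta_*(A)=\omega\circ\rho_\pi^h\circ\beta_*(A)=\omega\circ\beta\circ\rho_\pi^h(A),
\end{equation*}
each up to an error at most $f(r)\|A\|$ for some $f\in\caF$; on $A\in\caA_{\Lambda^c\cap B_a(r)^c}$ all three automorphisms act trivially, so both states agree with $\omega\circ\beta=\omega'$ up to $f(r)\|A\|$. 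Lemma~\ref{lem:f-close on cones implies f-close} then makes the two states $f$-close, Proposition~\ref{prop:f-close implies almost local unitary equivalence} produces a homogeneous almost-local $V$ implementing the equivalence, and repeating the comparison with $-\pi$ replaced by $-\phi$ for $\phi\in[0,\pi]$ yields a weak-$*$-continuous family of such equivalences starting at the identity, so $V$ is even by Lemma~\ref{lem:continuity of parity}. The step I expect to be hardest is precisely the control of the correction $\lambda$ --- that it is generated by a time-reversal-preserving TDI decaying away from $\partial h$ and that it may be truncated to $l$ without breaking time-reversal commutation --- which hinges on the identity $\rho_{-\pi}^h=\rho_\pi^h$ special to flux $\pm\pi$.
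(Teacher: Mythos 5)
Your proof is correct, and it takes a genuinely different route from the paper's. The paper splits $G$ into left and right half-plane pieces $G^L,G^R$, writes $\al^G=\al^L\circ\al^R$, and proves invariance in two steps (first for $\al^L$ with the flux line on the right axis, then for $\al^R$), the key mechanism being that $\tilde H^L-H$ lives on the left half-plane while $K^{l,\cdot}$ lives on the right, so their mismatch is a single almost-local even unitary. You instead introduce the conjugate LGA $\beta_*=\rho_\pi^h\circ\beta\circ\rho_\pi^h$, observe that it commutes with $\tau$ (this is exactly where $\rho_{-\pi}^h=\rho_\pi^h$ enters) and satisfies $\rho_\pi^h\circ\beta_*=\beta\circ\rho_\pi^h$, and treat the discrepancy $\lambda=\beta_*^{-1}\circ\beta$ as a $\tau$-commuting LGA generated along $\partial h$, which you truncate to a $\tau$-commuting LGA $\lambda_{\mathrm{left}}$ near the flux line. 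This yields the comparison in a single pass. The price is that you must establish directly that the generator of $\lambda$ is a time-reversal-preserving TDI decaying away from $\partial h$ — you call this ``standard comparison of flows with nearby generators,'' and indeed the required statement is of the same type as the paper's appeal to Lemma~5.4 of~\cite{bachmann2022classification} (used in Lemma~\ref{lem:f-close on half-planes}) and follows from the group law~\eqref{eq:LGA commutation} together with the fact that $G_*-G$ decays away from $\partial h$, but it is the one technical step not spelled out. The other implicit step — that the final comparison of $(\omega')^-$ with $\omega^-\circ\beta\circ\lambda_{\mathrm{left}}^{-1}$ must be pushed back to the product state $\omega_0$ by composing with $(\al\circ\gamma_{-\pi}^{l,H}\circ\beta\circ\lambda_{\mathrm{left}}^{-1})^{-1}$ before applying Lemma~\ref{lem:f-close on cones implies f-close} and Proposition~\ref{prop:f-close implies almost local unitary equivalence} — is fine to elide since you flag that it ``runs exactly as'' Propositions~\ref{prop:parent} and~\ref{prop:Indep of line}. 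Conceptually your approach makes more visible the reason the index survives a symmetric LGA: at $\pm\pi$ the half-plane $U(1)$ twist squares to the identity, so conjugation by $\rho_\pi^h$ is a symmetry of the construction, and $\beta$ differs from a genuinely $\tau$-commuting automorphism only by a local correction near the flux line; the paper's two-step decomposition keeps the argument closer to the toolkit already established for half-plane localization.
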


\begin{proof}
    Let $G^R$ be the restriction of the TDI $G$ to the right half plane $R = \{ x \in \Z^2 \, : \, x_1 > 0 \}$ and put $G^L = G-G^R$. Then $\alpha_s^G\circ(\alpha_s^{G^R})^{-1}$ is an LGA generated by $\alpha_s^{G^R}(G^L(s))$. The TDIs $G^R_s$ and $\tilde G^L=\alpha_s^{G^R}(G^L(s))$ satisfy the assumptions of Lemma~\ref{lem:Locality along line} for the left half plane and right half plane respectively, yielding a decomposition
    \begin{equation}
        \alpha_s^G =  \al_s^{\tilde G^L}\circ \alpha_s^{G^R}.
    \end{equation}
    We will write $\al^G = \al_1^G$, $\al^L = \al_1^{\tilde G^L}$ and $\al^R = \al_1^{G^R}$ so $\al^G = \al^L \circ \al^R$. Let furthermore $\tilde \omega = \omega \circ \al^G$ and consider the state $\tilde \omega_L = \omega \circ \alpha^L$ which looks like $\omega$ far to the right and like $\tilde \omega$ for to the left of the vertical axis. We will prove the proposition by showing that the index of $\omega$ is equal to the index of $\tilde \omega_L$, and that the index of $\tilde \omega_L$ is equal to the index of $\tilde \omega$.
    
    If $H$ is a symmetric parent Hamiltonian for $\omega$ then the collection of almost local operators $(\alpha^L)^{-1}(H_x)$ for $x\in\bbZ^2$ define a symmetric parent Hamiltonian $\tilde H^L$ for $\tilde \omega_L$. The $\pi$-flux defect states for $\omega$ and $\tilde \omega_L$ may be taken to be
    \begin{equation}
        \omega^+ = \omega \circ \gamma^+, \qquad \tilde \omega_L^+ = \tilde \omega_L \circ \tilde \gamma^+
    \end{equation}
    where $\gamma^+ = \gamma_{\pi}^{l, H}$, $\tilde \gamma^+ = \gamma_{\pi}^{l, \tilde H_L}$ and $l = (0, (1, 0))$ is the right horizontal axis.
    
    Lemma~\ref{lem:Locality along line} implies that $\tilde H^L - H$ is a TDI that is almost localized on the left half plane. Denoting by $h$ the upper half-plane, it follows from this and the definition~(\ref{Def K}) that $K^{h,\tilde H^L} - K^{h,H}$ is localized along the left horizontal axis $(0, (-1, 0))$.  The LGA $\flux_{\phi}^{l, \tilde H^L}\circ(\flux_{\phi}^{l, H})^{-1}$ is generated by $\flux_{\phi}^{l, H}(K^{l,\tilde H^L} - K^{l,H})$. Since $K^{l,\cdot}$ is the restriction of $K^{h,\cdot}$ to the right half-plane, it follows by the above that $K^{l,\tilde H^L} - K^{l,H}$ is in fact an almost local even operator so there is a family of even unitaries $\tilde V_\phi$ such that $\flux_{\phi}^{l, \tilde H^L} = \Ad[\tilde V_\phi]\circ\flux_{\phi}^{l, H}$, hence
    \begin{equation}
        \flux_{\phi}^{l, \tilde H^L} = \flux_{\phi}^{l, H}\circ \Ad[V_\phi],
    \end{equation}
    where the $V_\phi = (\flux_{\phi}^{l, H})^{-1}(\tilde V_\phi)$ are again even. We therefore find
    \begin{equation} \label{eq:expression for new defect state}
        \tilde \omega_L^+ = \omega^+\circ \beta^+_L \circ \Ad[V_{\pi}]
    \end{equation}
    where $\beta^+_L = (\flux^+)^{-1}\circ \al_L \circ \flux^+$.

   In order to unburden the notation, let us write $K(s) = K^{l, H}(\pi s)$ so $\gamma^+ = \al_{1}^{K}$ and put $\widetilde K = \al_1^{\tilde G^L}(K)$. Using~(\ref{eq:LGA commutation}) we find that
    \begin{equation}
        \beta^+_L = (\al_1^{K})^{-1} \circ \al_1^{\widetilde K} \circ \al_L.
    \end{equation}
    We will show that the automorphism $(\al_1^{K})^{-1} \circ \al_1^{\widetilde K}$ is given by conjugation with an almost local even unitary. Indeed we have $\al_1^{\widetilde K} = \al_1^{\widetilde K - K + K} = \al_1^J \circ \al_1^{K}$ with $J(s) = \al_s^K( \widetilde K(s) - K(s) )$. Now,
    \begin{equation}
        \widetilde K(s) - K(s) = \iu \int_0^1 \dd s' \, \al_{s'}^{\tilde G^L} \big( [ \tilde G^L(s'), K(s)  ] \big)
    \end{equation}
    is an even almost local self-adjoint operator because $\tilde G^L$ is supported near the left half plane while $K$ is supported near the right horizontal axis. It follows that $J(s)$ is an almost local even self-adjoint operator for all $s$ and so $\al_1^J = \Ad[\widetilde W]$ for an even almost local unitary $\widetilde W$. We therefore obtain that
    \begin{equation}
        \beta_L^+ = (\al_1^K)^{-1} \circ \Ad[\widetilde W] \circ \al_1^K \circ \al_L = \al_L \circ \Ad[W]
    \end{equation}
    where $W = (\al_1^K \circ \al_L)^{-1}(\widetilde W)$ is also an almost local even unitary.
    Recalling~\eqref{eq:expression for new defect state} this yields $\tilde \omega_L^+ = \omega^+ \circ \al^L \circ \Ad[ W V_{\pi} ]$. Since $W V_{\pi}$ is an even unitary and $\al^L$ commutes with $\tau$, we conclude by Lemma~\ref{lem:independence under even unitaries} and Lemma~\ref{lem:Kramers or not is invariant under TR invariant automorphisms} that the defect state $\tilde \omega_L$ is a Kramers singlet/belongs to a Kramers pair if and only if $\omega^+$ does.

    Repeating the same argument with the replacements $\omega \rightarrow \tilde\omega_L$ and $\tilde \omega_L \rightarrow \tilde \omega = \omega \circ \alpha^G$ concludes the proof.
\end{proof}

\subsection{Extension of the index to symmetric stably SRE states}\label{sec:extension to stably SRE}

In order to extend the index to symmetric stably SRE states we must first show that the index of symmetric product states is trivial.

\begin{lemma} \label{lem:index for symmetric product states}
    Let $\omega_0$ be a symmetric product state, then $\Ind(\omega) = 1$.
\end{lemma}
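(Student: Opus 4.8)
The plan is to exhibit a particularly simple symmetric parent Hamiltonian for $\omega_0$ for which the flux insertion automorphism $\flux_{\pm\pi}^{l,H}$ is literally the identity, so that $\omega^{\pm}=\omega_0$ and the Kramers analysis becomes trivial.

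First I would produce an \emph{on-site} symmetric parent Hamiltonian for $\omega_0$. Since a product state is SRE with the trivial LGA $\al=\id$, I run the construction in the proof of Lemma~\ref{lem:existence of symmetric parent Hamiltonian} with this choice: there is an on-site $0$-chain $H^{(0)}$ with $H^{(0)}_x\in\caA_x$ even of which $\omega_0$ is the unique gapped ground state, $H'_x=\al^{-1}(H^{(0)}_x)=H^{(0)}_x$, and the averaged terms $H_x:=\tfrac1{4\pi}\int_0^{2\pi}\dd\phi\,\rho_\phi\big(H^{(0)}_x+\tau(H^{(0)}_x)\big)$ define a symmetric parent Hamiltonian $H$ for $\omega_0$. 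Because $\rho_\phi$ and $\tau$ preserve $\caA_x$, the integrand lies in $\caA_x$ for every $\phi$, and hence so does $H_x$; thus $H$ has on-site terms. (Equivalently one checks directly that the rank-one projector $P_x$ representing the pure state $\omega_0|_{\caA_x}$ on the matrix algebra $\caA_x$ satisfies $\theta(P_x)=\rho_\phi(P_x)=\tau(P_x)=P_x$ by the symmetries of $\omega_0$ and uniqueness of the projector representing a state, so that $H_x=\I-P_x\in\caA_x$ works.)

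Next I would show that the quasi-adiabatic generator vanishes for this $H$. Fix a half-line $l$ with associated half-plane $h$. For each $x$ the automorphism $\rho^h_{-\phi}$ restricts on $\caA_x$ to $\rho_{-\phi}|_{\caA_x}$ when $x\in h$ and to the identity when $x\notin h$; in either case, $U(1)$-invariance of $H_x$ gives $H^{h,\phi}_x=\rho^h_{-\phi}(H_x)=H_x$ for all $\phi$. Hence $\dd H^{h,\phi}_x/\dd\phi=0$, so $K^{h,H}_x(\phi)=0$ by \eqref{K Ops}, and therefore $K^{l,H}_x(\phi)=0$ by \eqref{Flux Insertion K}. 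The flux insertion automorphism $\flux_\phi^{l,H}=\al_\phi^{K^{l,H}}$ is thus the identity for every $\phi$, and in particular the defect states are $\omega^{\pm}=\omega_0\circ\flux_{\pm\pi}^{l,H}=\omega_0$.

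Finally I would read off the index. Since $\omega^+=\omega^-=\omega_0$, one may take $U=\I$ in Proposition~\ref{prop:almost local unitary equivalence}; then in the GNS triple $(\Pi,\caH,\Omega^+)$ of $\omega^+=\omega_0$ the vector $\Omega^-:=\Pi(U)\Omega^+=\Omega^+$ represents $\omega^-$, and the antiunitary $T$ of Lemma~\ref{lem:TR implementation} satisfies $T\Omega^+=\Omega^-=\Omega^+$, whence $T^2\Omega^+=\Omega^+$. Thus $\omega^+$ is a Kramers singlet, so $\Ind(\omega_0,H,l)=+1$ by Definition~\ref{def:index of triple}, and by independence of the index from the choice of symmetric parent Hamiltonian and of half-line (Propositions~\ref{prop:parent} and~\ref{prop:Indep of line}, Definition~\ref{def:index}) we conclude $\Ind(\omega_0)=+1$. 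There is no genuine obstacle here; the only step requiring a little care is checking that $\omega_0$ admits an \emph{on-site} (rather than merely almost local) symmetric parent Hamiltonian, which is precisely what makes $\rho^h_{-\phi}$ act trivially on each $H_x$ and hence annihilate the quasi-adiabatic generator.
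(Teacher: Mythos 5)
Your proof is correct and follows essentially the same route as the paper's: construct an on-site symmetric parent Hamiltonian, observe that $H^{h,\phi}_x = H_x$ so the quasi-adiabatic generator $K^{l,H}$ vanishes, conclude $\omega^+ = \omega^- = \omega_0$, and read off $T^2\Omega^+ = \Omega^+$. The only cosmetic difference is that the paper argues $\Omega^\pm$ are proportional and hence cannot be orthogonal (so cannot form a Kramers pair), whereas you take $U=\I$ and compute $T^2\Omega^+ = T\Omega^- = T\Omega^+ = \Omega^+$ directly; both are fine.
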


\begin{proof}
    Since $\omega_0$ is a symmetric product state we can choose a symmetric parent Hamiltonian $H^{(0)}$ that consists of purely on-site terms, see the proof of Lemma~\ref{lem:existence of symmetric parent Hamiltonian}. It follows that $H^{h, \phi}_x = H^{h, 0}_x$ for all $\phi$, see~(\ref{Def K}), and in turn that the generator $K^{l ,H}_x(\phi)$ of the flux insertion automorphism vanishes for any half-line $l$, see~(\ref{K Ops}) and (\ref{Flux Insertion K}). It follows that $\omega^+ = \omega^-$ so the vectors $\Omega^+,\Omega^-$ in the GNS representation must be proportional to each other. They can therefore not form a Kramers pair and $T^2 \Omega^+ = \Omega^+$. This means by definition that $\Ind(\omega_0) = 1$. 
\end{proof}

\begin{lemma} \label{lem:index of stabilisation is independent of the stabilising state}
    Let $\omega$ be a symmetric stably SRE state and suppose $\omega'$ and $\omega''$ are symmetric product states such that $\omega \gotimes \omega'$ and $\omega \gotimes \omega''$ are symmetric SRE states. Then
    \begin{equation}
        \Ind(\omega \gotimes \omega') = \Ind(\omega \gotimes \omega'').
    \end{equation}
\end{lemma}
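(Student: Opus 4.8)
The plan is to reduce the statement to two facts already available: the multiplicativity of the index under stacking (Proposition~\ref{prop:stacking}) and the triviality of the index on symmetric product states (Lemma~\ref{lem:index for symmetric product states}). The idea is that stacking a further system carrying a symmetric product state onto $\omega\gotimes\omega'$ (resp.\ $\omega\gotimes\omega''$) does not change its index, and that the two triple-stacked states one gets are one and the same.

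First I would record that a symmetric product state is in particular a symmetric SRE state, so $\Ind(\omega')$ and $\Ind(\omega'')$ are defined and equal to $+1$ by Lemma~\ref{lem:index for symmetric product states}. By hypothesis $(\omega\gotimes\omega',\caA\gotimes\caA',\tau\gotimes\tau')$ and $(\omega\gotimes\omega'',\caA\gotimes\caA'',\tau\gotimes\tau'')$ are symmetric SRE states. Applying Proposition~\ref{prop:stacking} to the pair $(\omega\gotimes\omega')$ and $\omega''$, and then to the pair $(\omega\gotimes\omega'')$ and $\omega'$, gives
\begin{align*}
  \Ind(\omega\gotimes\omega') &= \Ind(\omega\gotimes\omega')\times\Ind(\omega'') = \Ind\big((\omega\gotimes\omega')\gotimes\omega''\big),\\
  \Ind(\omega\gotimes\omega'') &= \Ind(\omega\gotimes\omega'')\times\Ind(\omega') = \Ind\big((\omega\gotimes\omega'')\gotimes\omega'\big).
\end{align*}

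Finally I would observe that $(\omega\gotimes\omega')\gotimes\omega''$ and $(\omega\gotimes\omega'')\gotimes\omega'$ are the \emph{same} symmetric SRE state on the fermion system with time reversal $(\caA\gotimes\caA'\gotimes\caA'',\tau\gotimes\tau'\gotimes\tau'')$: the graded tensor product of homogeneous states is associative and symmetric, both expressions being the unique state whose restrictions to the three canonical subalgebras are $\omega$, $\omega'$, $\omega''$ with these subalgebras graded-statistically independent. Hence $\Ind\big((\omega\gotimes\omega')\gotimes\omega''\big)=\Ind\big((\omega\gotimes\omega'')\gotimes\omega'\big)$, and combining with the two displayed lines yields $\Ind(\omega\gotimes\omega')=\Ind(\omega\gotimes\omega'')$.

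I do not expect a genuine obstacle: the entire content is carried by Proposition~\ref{prop:stacking} and Lemma~\ref{lem:index for symmetric product states}. The only points demanding a little care are bookkeeping ones — checking that the hypotheses of Proposition~\ref{prop:stacking} (each factor a symmetric SRE state) hold at each stacking step, and that regrouping the three tensor factors leaves the state, and therefore the index, untouched.
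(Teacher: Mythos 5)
Your reduction mirrors the paper's exactly: expand both sides by stacking with the missing product factor using Proposition~\ref{prop:stacking} and Lemma~\ref{lem:index for symmetric product states}, then compare the two triple stacks. The one place where you skip a step is the claim that $(\omega\gotimes\omega')\gotimes\omega''$ and $(\omega\gotimes\omega'')\gotimes\omega'$ are \emph{literally the same state}. With the paper's concrete definition of the graded tensor product---$\caA\gotimes\caA'$ is the CAR algebra over $l^2(\Z^2;\C^{n+n'})$ with $\caA$ occupying internal modes $1,\dots,n$ and $\caA'$ occupying modes $n+1,\dots,n+n'$---both triple products are states on the CAR algebra over $\C^{n+n'+n''}$, but with the $\caA'$-block and the $\caA''$-block sitting in different positions. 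Unless $n'=n''$ and $\omega'=\omega''$ these are genuinely distinct states, and your uniqueness characterization does not close the gap, because ``the three canonical subalgebras'' are literally different sets of internal modes in the two cases.

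What \emph{is} true, and what the paper's proof appeals to, is that the two states are related by the automorphism that permutes the $\caA'$ and $\caA''$ blocks on every site; this is an even on-site unitary conjugation, hence a symmetry preserving LGA, so Proposition~\ref{prop:invariance under symmetry preserving LGAs} identifies the two indices. That invocation is exactly the piece the paper makes explicit and your write-up omits. The fix is small---replace ``same state'' with ``related by a symmetry preserving LGA, so the indices agree by Proposition~\ref{prop:invariance under symmetry preserving LGAs}''---but as stated the argument has an unjustified identification of two different states.
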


\begin{proof}
    Since the index of a symmetric product state is trivial (Lemma \ref{lem:index for symmetric product states}) and the index is multiplicative under stacking (Proposition \ref{prop:stacking}) we have
    \begin{equation}
        \Ind(\omega \gotimes \omega') = \Ind(\omega \gotimes \omega' \gotimes \omega'') = \Ind(\omega \gotimes \omega'' \gotimes \omega') = \Ind(\omega \gotimes \omega'')
    \end{equation}
    where we used Proposition \ref{prop:invariance under symmetry preserving LGAs} and the fact that $\omega' \gotimes \omega''$ and $\omega'' \gotimes \omega'$, being symmetric product states on isomorphic fermion systems with time reversal, are in the same SPT phase and therefore related by a symmetry preserving LGA.
\end{proof}

We can now make the following extension of the index to symmetric stably SRE states:
\begin{definition} \label{def:index for symmetric stably SRE states}
    Let $\omega$ be a symmetric stably SRE state, and let $\omega'$ be any symmetric product state such that $\omega \gotimes \omega'$ is symmetric SRE. We define
    \begin{equation}
        \Ind(\omega) := \Ind(\omega \gotimes \omega').
    \end{equation}
\end{definition}

 \noindent By Lemma \ref{lem:index of stabilisation is independent of the stabilising state}, this definition is independent of the choice of $\omega'$.

\section{Index of \texorpdfstring{$\AII$}{AII} states}\label{sec:free fermion examples}

In this section, we establish that stably SRE $\AII$ states have an index that agrees with the Fu-Kane-Mele invariant, as claimed in Theorem \ref{thm:main theorem}. We fix a single particle Hilbert space $\caK_{m} = l^2(\Z^2 ; \C^{2m})$ with time reversal $\scrT$ as described in Section \ref{sec:non interacting examples}. We consider an orthogonal projector $\scrP$ on $\caK_m$ that is exponentially local (see~\eqref{eq:locality of gapped Fermi projections}) and time reversal invariant ($\scrT \scrP \scrT^* = \scrP$). We further assume that the corresponding quasi-free state $\omega_{\scrP}$ is stably SRE, \ie we can stack with an empty product state $\omega_{\emp}$ on the CAR algebra of some $\scrK_{m'}$ to obtain an SRE state which is again quasi-free, corresponding to the exponentially local time reversal invariant projector $\scrP \oplus 0$ on $\scrK_m \oplus \scrK_{m'}$. By definition, the index of $\omega_{\scrP}$ is equal to the index of this SRE state. Since the Fu-Kane-Mele index of $\scrP$ and $\scrP \oplus 0$ are the same, we can without loss of generality assume that $\omega_{\scrP}$ is SRE to begin with.

\subsection{Flux insertion for $\AII$ states in the free fermion formalism} \label{sec:free fermion flux insertion}

The many-body index is defined in terms of an adiabatic flux insertion which we describe for the projection $\scrP$ as follows. Let $l = (0, (-1, 0))$ be the left horizontal axis so $h = h_{l}$ is the upper half plane, and let $\Pi_h$ be the orthogonal projection onto $h$, namely $\Pi_h \, | x \otimes e_i \rangle = \chi_h(x) | x \otimes e_i \rangle$ for all $x \in \Z^2$ and all $i = 1, \cdots, 2m$. The analog of the many-body $U(1)$-transformation of the upper half-plane $h$ is now
\begin{equation}\label{FF Upper plane transformation}
	\scrP^h_{\phi} := \ed^{\iu \phi \Pi_h} \scrP \ed^{-\iu \phi \Pi_h}.
\end{equation}
Since $\scrP^h_{\phi} (\partial_{\phi} \scrP^h_{\phi}) \scrP^h_{\phi} = (\scrP^h_{\phi})^{\perp} (\partial_{\phi} \scrP^h_{\phi}) (\scrP^h_{\phi})^{\perp} = 0$ we have
\begin{equation}
	\big[ [\partial_{\phi} \scrP^h_{\phi}, \scrP^h_{\phi}],  \scrP^h_{\phi} \big]  = \partial_{\phi} \scrP^h_{\phi},
\end{equation}
and therefore the flow $\phi \mapsto \scrP^h_{\phi}$ is also generated by 
\begin{equation}\label{eq:Kato generator of U(1) transformations}
	\scrK^h_{\phi} := - \iu [\partial_{\phi} \scrP^h_{\phi}, \scrP^h_{\phi}].
\end{equation}
If we let $\scrU^h_{\phi}$ be the unitary solution to
\begin{equation}
	\scrU^h_0 = \I, \qquad
 \partial_{\phi} \scrU^h_{\phi} = \iu \scrK^h_{\phi} \scrU^h_{\phi},
\end{equation}
then $\scrP^h_{\phi} = \scrU^h_{\phi} \scrP (\scrU^h_{\phi})^*$. Let now $\Pi_{L}$ be the projection on the left half-plane. Adiabatic flux insertion in the left horizontal half-line gauge is generated by the restriction
\begin{equation}\label{eq:Kato generator for flux insertion}
	\scrK_{\phi} := \Pi_{L} \scrK^h_{\phi} \Pi_L,
\end{equation}
which we call the generator for quasi-adiabatic flux insertion. Namely, if  $\scrU^{\qa}_{\phi_2, \phi_1}$ is the solution of
\begin{equation}
	\scrU^{\qa}_{\phi_1, \phi_1} = \I, \qquad \iu \partial_{\phi_2} \scrU^{\qa}_{\phi_2, \phi_1} = \scrK_{\phi_2} \scrU^{\qa}_{\phi_2, \phi_1}
\end{equation}
then the projection
\begin{equation}\label{eq:projection with phi flux}
	\scrQ^{\qa}_{\phi} := \scrU^{\qa}_{\phi, 0} \scrP \, (\scrU^{\qa}_{\phi, 0})^*
\end{equation}
corresponds to a state obtained from $\scrP$ by inserting a $\phi$ flux at the origin. We are now interested in whether the quasi-free state $\omega_{\scrQ^{\qa}_{\pi}}$ belongs to a Kramers pair or not.

\subsection{Does the state \texorpdfstring{$\omega_{\scrQ^{\qa}_{\pi}}$}{} belong to a Kramers pair?} \label{sec:quasi-free flux state is Kramers?}

For any $\mu \in \R$ we denote by $\scrP_{\phi}^{(\mu)} = \chi_{(-\infty, \mu]}(\scrH_{\phi})$ the Fermi projections for the Hamiltonians $\scrH_{\phi}$ introduced in \eqref{eq:free fermion flux Hamiltonians}.

The crucial step which will allow us to connect the time reversal behaviour of $\scrQ^{\qa}_{\pi}$ to the Fu-Kane-Mele index is expressing the spectral flow~\eqref{eq:FKM is spectral flow} defining $\FKM(\scrP)$ as the index of the pair of projections $(\scrQ^{\qa}_{\pi},\scrP^{(\mu)}_{\pi})$.
\begin{proposition} \label{prop:adiabatic vs spectral}
    Let $\mu \in \Delta$ be such that all eigenvalues crossings of $\phi \mapsto \scrH_{\phi}$ across $\mu$ are simple. Then the difference $\scrQ^{\qa}_{\phi} - \scrP^{(\mu)}_{\phi}$ is trace class and
    \begin{equation}
        \FKM(\scrP) = \Tr \lbrace \scrQ^{\qa}_{\pi} - \scrP^{(\mu)}_{\pi} \rbrace \, \mod 2.
    \end{equation}
\end{proposition}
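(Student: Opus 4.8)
The plan is to follow the integer-valued quantity $N(\phi) := \Tr\{\scrQ^{\qa}_{\phi} - \scrP^{(\mu)}_{\phi}\}$ as $\phi$ increases from $0$ to $\pi$ and to show that it picks up the (signed) number of eigenvalues of $\phi \mapsto \scrH_{\phi}$ that cross $\mu$, so that $N(\pi) = \SF_{\mu}([0,\pi] \ni \phi \mapsto \scrH_{\phi})$; reducing modulo $2$ and invoking~\eqref{eq:FKM is spectral flow} then gives the statement. The starting point is immediate: $\scrH_0 = \scrH$ has $\mu$ in its gap, so $\scrP^{(\mu)}_0 = \scrP$, and $\scrQ^{\qa}_0 = \scrP$ as well, hence $N(0) = 0$. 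The work splits into three pieces: (a) $\scrQ^{\qa}_{\phi} - \scrP^{(\mu)}_{\phi}$ is trace class for every $\phi$ with $\mu$ in the resolvent set of $\scrH_{\phi}$; (b) $N$ is locally constant away from the crossings; (c) $N$ jumps by $\pm 1$, with the sign of the crossing, at each simple crossing.

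For (a) and (b), the point I would make is that, away from the origin, both $\scrP^{(\mu)}_{\phi}$ and $\scrQ^{\qa}_{\phi}$ arise from $\scrP$ by one and the same gauge transformation supported along the left horizontal half-line. For $\scrP^{(\mu)}_{\phi}$ this is built into the definition~\eqref{eq:free fermion flux Hamiltonians} of the flux Hamiltonian $\scrH_{\phi}$; for $\scrQ^{\qa}_{\phi}$ it follows because the restricted Kato generator $\scrK_{\phi} = \Pi_L \scrK^h_{\phi} \Pi_L$ coincides with the unrestricted $\scrK^h_{\phi}$ except near the apex of the half-line, together with the exponential locality of the Kato flow $\scrU^h_{\phi}$ and of the gapped Fermi projection. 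Granting these locality estimates --- which are the content of Appendix~\ref{app:flux insertion for free fermions} --- the matrix elements of $\scrQ^{\qa}_{\phi} - \scrP^{(\mu)}_{\phi}$ are bounded by $g(\norm{x}_1 + \norm{y}_1)$ for some $g \in \caF$, uniformly for $\phi$ in a neighbourhood of any non-crossing value. This shows that the difference is trace class and that $\phi \mapsto \scrQ^{\qa}_{\phi} - \scrP^{(\mu)}_{\phi}$ is trace-norm continuous on every crossing-free subinterval of $[0,\pi]$; being a continuous $\Z$-valued function there, $N$ is constant on it.

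For (c), fix a crossing $\phi_c$. By the simplicity hypothesis there is, for $\phi$ near $\phi_c$, a single simple eigenvalue $\lambda(\phi)$ of $\scrH_{\phi}$ with $\lambda(\phi_c) = \mu$ and $\lambda'(\phi_c) \neq 0$, the rest of the spectrum staying bounded away from $\mu$. Decomposing $\scrP^{(\mu)}_{\phi}$ into the contribution of $\lambda(\phi)$ and the norm-continuous contribution of the rest, one sees that $\scrP^{(\mu)}_{\phi}$ gains or loses the rank-one spectral projection onto the corresponding eigenvector as $\phi$ passes $\phi_c$, the sign of the change being that of $-\lambda'(\phi_c)$, whereas $\scrQ^{\qa}_{\phi} = \scrU^{\qa}_{\phi,0}\scrP(\scrU^{\qa}_{\phi,0})^{*}$ is a norm-continuous family of conjugates of $\scrP$ and is insensitive to the crossing. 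Thus $N(\phi_c^{+}) - N(\phi_c^{-}) = \sgn \lambda'(\phi_c)$ in the convention for which $\SF$ counts upward crossings positively, and summing these jumps over $(0,\pi)$ yields $N(\pi) = \SF_{\mu}([0,\pi] \ni \phi \mapsto \scrH_{\phi})$. Reducing mod $2$ and using $\FKM(\scrP) = \SF_{\mu}([0,\pi] \ni \phi \mapsto \scrH_{\phi}) \mod 2$ completes the argument.

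I expect the main obstacle to be the locality input behind (a)--(b): showing that the \emph{restricted} quasi-adiabatic flow genuinely reproduces, far from the origin, the Dirac-string gauge transformation that enters the definition of $\scrH_{\phi}$, so that $\scrQ^{\qa}_{\phi}$ and $\scrP^{(\mu)}_{\phi}$ differ only near the origin and their difference is trace class. This needs a quantitative comparison of the Kato flow of the half-plane family $\scrP^h_{\phi}$, its restriction to $\Pi_L$, and the Fermi projections of~\eqref{eq:free fermion flux Hamiltonians}, using Lieb--Robinson-type bounds for the generator~\eqref{eq:Kato generator of U(1) transformations} and the exponential locality~\eqref{eq:locality of gapped Fermi projections} --- precisely what Appendix~\ref{app:flux insertion for free fermions} is designed to supply. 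Step (c) is conceptually routine but requires care, since $\scrP^{(\mu)}_{\phi}$ is genuinely discontinuous at the crossings and one must use the uniform localization near the origin to make the jump of $N$ well defined.
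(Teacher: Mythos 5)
Your overall strategy is the same as the paper's: compare $\scrQ^{\qa}_{\phi}$ to $\scrP^{(\mu)}_{\phi}$ over $[0,\pi]$, use the locality estimates from Appendix~\ref{app:flux insertion for free fermions} to get the trace-class statement (this is precisely the content of Proposition~\ref{prop:spectral - qa is trace class}), and attribute the nonzero contributions to the crossings. The one place you diverge is step~(c). The paper does not argue that $N(\phi)$ jumps by $\pm1$ through a continuity argument; instead it shifts the Fermi energy by $\pm\delta_i$ inside a small window around each crossing so that the shifted level stays in a gap of $\scrH_\phi$ throughout that window, applies Proposition~\ref{prop:spectral - qa is trace class} at the shifted level, and accounts for the difference $\scrP^{(\mu+\sigma_i\delta_i)}_{\phi_i+\ep_i} - \scrP^{(\mu)}_{\phi_i+\ep_i}=\sigma_i\scrR_i$ at the endpoints. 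The upshot is the telescoping identity $\Tr\{\scrQ^{\qa}_\pi - \scrP^{(\mu)}_\pi\} = \sum_i \sigma_i\Tr\scrR_i$, with no appeal to trace-norm continuity at all. Your route instead requires that $\phi\mapsto\scrQ^{\qa}_\phi-\scrP^{(\mu)}_\phi$ be trace-norm continuous on the crossing-free pieces; a uniform bound on matrix elements of the form $g(\Vert x\Vert_1+\Vert y\Vert_1)$ gives trace-class but not, by itself, trace-norm continuity of the family. The cleaner way to make your step~(b) rigorous is to invoke not continuity of the trace but stability of the index of a pair of projections under simultaneous operator-norm-continuous deformation (Avron--Seiler--Simon), since both $\scrQ^{\qa}_\phi$ and $\scrP^{(\mu)}_\phi$ are norm-continuous away from crossings and their difference stays compact. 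With that adjustment your step~(c) (the rank-one jump of $\scrP^{(\mu)}_\phi$, with $\scrQ^{\qa}_\phi$ norm-continuous across the crossing) goes through and reproduces the same answer as the paper's shifted-Fermi-energy device; the two resolutions of the crossing are equivalent, the paper's being tidier because it never leaves the regime where Proposition~\ref{prop:spectral - qa is trace class} applies verbatim.
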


\begin{proof}
    Since all eigenvalue crossing across $\mu$ are assumed simple, there is a finite number of values $0 < \phi_1 < \cdots < \phi_n < \pi$ at which crossings occur, and near any one of these values there is an eigenvalue branch locally described by a function $E_i(\phi)$ that is smooth and such that $E_i(\phi_i) = \mu$ while $\abs{\partial_{\phi} E_i(\phi)} > 0$. This implies in particular that we can find $\ep_i, \delta_i > 0$ such that for all $\phi \in [\phi_i - \ep_i, \phi_i + \ep_i]$ the spectrum of $\scrH_{\phi}$ in the interval $(E_i(\phi_i) - \delta_i, E_i(\phi_i) + \delta_i)$ consists solely and entirely of the eigenvalue $E_i(\phi)$, see Figure \ref{fig:crossing}.

    \begin{figure}[ht]
    \centering \includegraphics[width=0.35\textwidth]{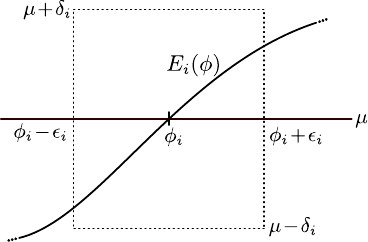}
    \caption{An eigenvalue branch $E_i(\phi)$ of the family $\phi \mapsto \scrH_{\phi}$ crossing the Fermi energy $\mu$.}
    \label{fig:crossing}
    \end{figure}

    For any $i = 1, \cdots, n$ we then have
    \begin{equation}
        \scrP^{(\mu + \sigma_i \delta_i)}_{\phi_{i} + \ep_i} - \scrP^{(\mu)}_{\phi_{i} + \ep_i} = \sigma_i \scrR_{i}
    \end{equation}
    where $\scrR_i = \chi_{(\mu - \delta_i, \mu + \delta_i)}(\scrH_{\phi_i + \ep_i})$ is a finite rank projector and $\sigma_i = +1$ if $\partial_{\phi} E_i(\phi_i) > 0$ while $\sigma_i = -1$ if $\partial_{\phi} E_i(\phi_i) < 0$. Since there are no crossings of the line $\mu + \sigma_i \delta_i$ in the interval $[\phi_i - \ep_i, \phi_i + \ep_i]$ (see again Figure~\ref{fig:crossing}), we have by Proposition~\ref{prop:spectral - qa is trace class} that
    \begin{equation}
        \scrA_i := \scrU^{\qa}_{\phi_i + \ep_i, \phi_i - \ep_i} \, \scrP^{(\mu + \sigma_i \delta_i)}_{\phi_i - \ep_i} \, (\scrU^{\qa}_{\phi_i + \ep_i, \phi_i - \ep_i})^* - \scrP^{(\mu + \sigma_i \delta_i)}_{\phi_i + \ep_i}
    \end{equation}
    is trace class and has vanishing trace. Since $\scrP_{\phi_i - \ep_i}^{(\mu + \sigma_i \delta_i)} = \scrP_{\phi_i - \ep_i}^{(\mu)}$ we therefore find that 
    \begin{equation} \label{eq:step across the crossing}
        \scrU^{\qa}_{\phi_i + \ep_i, \phi_i - \ep_i} \, \scrP^{(\mu)}_{\phi_i - \ep_i} \, (\scrU^{\qa}_{\phi_i + \ep_i, \phi_i - \ep_i})^* - \scrP^{(\mu)}_{\phi_{i} + \ep_i} = \sigma_i \scrR_i + \scrA_i.
    \end{equation}

    Similarly, for $i = 1,\ldots, n-1$ there are no eigenvalue crossings of $\scrH_{\phi}$ across $\mu$ in the interval $[\phi_i + \ep_i, \phi_{i+1} - \ep_{i+1}]$ so Proposition~\ref{prop:spectral - qa is trace class} implies that
    \begin{equation} \label{eq:step from one crossing to the next}
        \scrB_i := \scrU^{\qa}_{\phi_{i+1} - \ep_{i+1}, \phi_i + \ep_i} \, \scrP^{(\mu)}_{\phi_i + \ep_i} \, (\scrU^{\qa}_{\phi_{i+1} - \ep_{i+1}, \phi_i + \ep_i})^* - \scrP^{(\mu)}_{\phi_{i+1} - \ep_{i+1}}
    \end{equation}
    is trace class and has vanishing trace.

    Finally, again by Proposition~\ref{prop:spectral - qa is trace class} and the fact that there are no level crossings of $\scrH_{\phi}$ across $\mu$ in the intervals $[0, \phi_1 - \ep_1]$ and $[\phi_n + \ep_n, \pi]$, we have that
    \begin{align}
        \scrB_0 &:= \scrU^{\qa}_{\phi_{1} - \ep_{1}, 0} \, \scrP \, (\scrU^{\qa}_{\phi_{1} - \ep_{1}, 0})^* - \scrP^{(\mu)}_{\phi_{1} - \ep_{1}} \\
        \scrB_{n+1} &:= \scrU^{\qa}_{\pi, \phi_n + \ep_n} \, \scrP^{(\mu)}_{\phi_n + \ep_n} \, (\scrU^{\qa}_{\pi, \phi_n + \ep_n})^* - \scrP^{(\mu)}_{\pi}
    \end{align}
    are trace class and have vanishing traces. This together with a repeated application of \eqref{eq:step across the crossing} and \eqref{eq:step from one crossing to the next} yields
    \begin{equation}
        \Tr \left\lbrace \scrQ^{\qa}_{\pi} - \scrP^{(\mu)}_{\pi} \right\rbrace = \sum_{i = 1}^n  \sigma_i \Tr R_i = \SF_{\mu}([0, \pi] \ni \phi \mapsto \scrH_{\phi})
    \end{equation}
    where we use cyclicity of the trace and the fact that $\scrU^{\qa}_{\phi'', \phi'} \scrU^{\qa}_{\phi', \phi} = \scrU^{\qa}_{\phi'', \phi}$ for any $\phi, \phi', \phi'' \in \R$. The claim now follows from~\eqref{eq:FKM is spectral flow}.
\end{proof}

\begin{proposition} \label{prop:connection between Kramers and spectral flow}
    The state $\omega_{\scrQ^{\qa}_{\pi}}$ is a Kramers singlet if $\FKM(\scrP) = 0$ while it belongs to a Kramers pair if $\FKM(\scrP) = 1$.
\end{proposition}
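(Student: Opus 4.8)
The plan is to compare the quasi-free flux state $\omega^+ := \omega_{\scrQ^{\qa}_\pi}$, which by the construction of Section~\ref{sec:free fermion flux insertion} is the $\pi$-flux defect state of $\omega_{\scrP}$, with the quasi-free state built from the Fermi projection of the \emph{time reversal invariant} Hamiltonian $\scrH_\pi$, and to show that the relative charge of the two, computed in Proposition~\ref{prop:adiabatic vs spectral}, controls the Kramers type. The structural point is that $\scrH_\pi$ is time reversal invariant: at $\phi = \pi$ the phases $\ed^{\iu \pi \sgn(x_2 - y_2)}$ in~\eqref{eq:free fermion flux Hamiltonians} are all real, so $\scrH_\pi$ commutes with the complex conjugation $\scrK$, and since $\scrU_T$ acts only on the spin index while $\scrT \scrH \scrT^* = \scrH$ we get $\scrT \scrH_\pi \scrT^* = \scrH_\pi$, hence $\scrT \scrP_0 \scrT^* = \scrP_0$ for $\scrP_0 := \scrP^{(\mu)}_\pi$. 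Thus the pure quasi-free state $\nu := \omega_{\scrP_0}$ is homogeneous, $U(1)$-invariant, and satisfies $\nu \circ \tau = \bar\nu$. With $(\Pi_\nu, \caH_\nu, \Omega_\nu)$ its GNS triple, Lemma~\ref{lem:antiunitary implementation} provides the unique antiunitary $T_\nu$ on $\caH_\nu$ with $T_\nu \Omega_\nu = \Omega_\nu$ that implements $\tau$; since $\tau^2 = \theta$, $T_\nu^2$ implements parity, and $T_\nu^2 \Omega_\nu = T_\nu \Omega_\nu = \Omega_\nu$ forces $T_\nu^2 = \Theta_\nu$, the parity implementer on $\caH_\nu$ fixing $\Omega_\nu$ (in particular $\nu$ is a Kramers singlet, as any time reversal invariant pure homogeneous state must be).

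By Proposition~\ref{prop:adiabatic vs spectral} the self-adjoint operator $\scrQ^{\qa}_\pi - \scrP_0$ is trace class, in particular Hilbert--Schmidt, so by the standard equivalence criterion for quasi-free states the pure states $\omega^+ = \omega_{\scrQ^{\qa}_\pi}$ and $\nu$ are quasi-equivalent, hence unitarily equivalent; let $\mathscr{W} : \caH_\nu \to \caH$ be a unitary with $\mathscr{W} \Pi_\nu(A) = \Pi(A) \mathscr{W}$, where $(\Pi, \caH, \Omega^+)$ is the GNS triple of $\omega^+$. Since $\scrT \scrP \scrT^* = \scrP$ and $\Pi_h, \Pi_L$ are real, the generator~\eqref{eq:Kato generator for flux insertion} obeys $\scrT \scrK_\phi \scrT^* = \scrK_{-\phi}$, whence $\scrT \scrQ^{\qa}_\pi \scrT^* = \scrQ^{\qa}_{-\pi}$; therefore the $-\pi$-flux defect state $\omega^- = \omega_{\scrQ^{\qa}_{-\pi}}$ equals $\overline{\omega^{+}} \circ \tau$ and, by Lemma~\ref{lem:TR implementation}, time reversal is implemented in $\caH$ by an antiunitary $T$ with $T\Omega^+ = \Omega^-$. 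Because $\Pi$ is irreducible, any two antiunitaries implementing $\tau$ in $\caH$ differ by a phase that drops out on squaring, and $\mathscr{W} T_\nu \mathscr{W}^*$ is one such antiunitary, so $T^2 = \mathscr{W} T_\nu^2 \mathscr{W}^* = \mathscr{W} \Theta_\nu \mathscr{W}^*$. Setting $\Psi := \mathscr{W}^* \Omega^+$, a unit vector representative of $\omega^+$ in $\caH_\nu$ and hence an eigenvector of $\Theta_\nu$ with eigenvalue $\pm 1$ (as $\omega^+$ is homogeneous), we get $T^2 \Omega^+ = \langle \Psi, \Theta_\nu \Psi \rangle\, \Omega^+$; so by Definition~\ref{def:Kramers pairs and Kramers singlets} and Lemma~\ref{lem:Kramers or not}, $\omega^+$ is a Kramers singlet if $\langle \Psi, \Theta_\nu \Psi \rangle = +1$ and belongs to a Kramers pair with $\omega^-$ if $\langle \Psi, \Theta_\nu \Psi \rangle = -1$.

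To identify $\langle \Psi, \Theta_\nu \Psi \rangle$ I would use $U(1)$-covariance. Let $\phi \mapsto V^\nu_\phi$ be the $2\pi$-periodic unitary representation of $U(1)$ on $\caH_\nu$ with $V^\nu_\phi \Omega_\nu = \Omega_\nu$ implementing $\rho_\phi$; then $V^\nu_\pi = \Theta_\nu$ since $\rho_\pi = \theta$. As $\omega^+$ is $U(1)$-invariant and $\Pi_\nu$ is irreducible, $V^\nu_\phi \Psi = \ed^{\iu q \phi} \Psi$ for a unique $q \in \Z$, whence $\langle \Psi, \Theta_\nu \Psi \rangle = (-1)^q$. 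The integer $q$ is precisely the relative charge of the pair of Fermi projections $(\scrQ^{\qa}_\pi, \scrP_0)$, and for a trace-class difference this equals $\Tr\{ \scrQ^{\qa}_\pi - \scrP_0 \}$. Combining with Proposition~\ref{prop:adiabatic vs spectral},
\begin{equation}
    T^2 \Omega^+ = (-1)^{\Tr\{ \scrQ^{\qa}_\pi - \scrP_0 \}}\, \Omega^+ = (-1)^{\FKM(\scrP)}\, \Omega^+,
\end{equation}
so $\omega_{\scrQ^{\qa}_\pi}$ is a Kramers singlet when $\FKM(\scrP) = 0$ and belongs to a Kramers pair when $\FKM(\scrP) = 1$.

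The main obstacle is the last identification: that the vector representing a gauge-invariant quasi-free state $\omega_{\scrQ_1}$ inside the GNS space of another such state $\omega_{\scrQ_2}$, with $\scrQ_1 - \scrQ_2$ trace class, carries relative charge $\Tr\{\scrQ_1 - \scrQ_2\}$ --- equivalently, that the Bogoliubov automorphism relating them is implemented by a homogeneous unitary of parity $(-1)^{\Tr\{\scrQ_1 - \scrQ_2\}}$. For a finite-rank difference this is elementary: one connects $\scrQ_2$ to $\scrQ_1$ by finitely many single-mode occupations/vacatings, each implemented by an odd unitary of the form $a^*(\psi) + a(\psi)$, and the number of such moves is congruent to $\Tr\{\scrQ_1 - \scrQ_2\}$ modulo $2$. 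The general trace-class case follows by trace-norm approximation together with weak-$*$ continuity of charge along the approximants (cf. Lemma~\ref{lem:continuity of parity}), which also keeps the limiting unitary homogeneous. An equivalent packaging of this last step, avoiding Fock-space bookkeeping, uses Proposition~\ref{prop:Algebraic Index}: if $V$ is a homogeneous unitary with $\omega^+ = \nu \circ \Ad[V]$, then $\omega^- = \overline{\omega^{+}} \circ \tau = \bar\nu \circ \Ad[V] \circ \tau = \nu \circ \Ad[\tau^{-1}(V)]$ by time reversal invariance of $\nu$, so $U := \tau^{-1}(V)^* V$ is homogeneous with $\omega^+ = \omega^- \circ \Ad[U]$ and $U \tau(U) = \theta(\tau(V))^* \tau(V) = (-1)^{\sigma(V)}\, \I$, which gives $\langle \Omega^+, T^2 \Omega^+ \rangle = \omega^-\big(U \tau(U)\big) = (-1)^{\sigma(V)}$ --- again reducing to the parity-equals-trace fact.
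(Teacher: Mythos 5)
Your structural observations are correct and provide a somewhat cleaner conceptual framing than the paper's proof: noting that $\scrH_\pi$ is time-reversal invariant and hence $\nu = \omega_{\scrP_\pi^{(\mu)}}$ is a Kramers singlet with $T_\nu^2 = \Theta_\nu$; that any two antiunitaries implementing $\tau$ on an irreducible representation have the same square, giving $T^2 = \mathscr W\,\Theta_\nu\,\mathscr W^*$ without identifying $T$ explicitly; and the $U(1)$-covariance reduction to the relative charge $q$, plus the equivalent packaging via Proposition~\ref{prop:Algebraic Index} showing $\omega^-(U\tau(U)) = (-1)^{\sigma(V)}$. All of this is a genuinely different route to the same target: the identity $(-1)^{\sigma(V)} = (-1)^{\Tr\{\scrQ^{\qa}_\pi - \scrP_\pi^{(\mu)}\}}$.

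However, you have not actually established that target identity, and it is not a subsidiary fact — it is the substance of the proposition. Your finite-rank sketch (``finitely many single-mode occupations/vacatings'') is too coarse: two projections with finite-rank difference are in general \emph{not} related by a finite sequence of single-mode fillings/emptyings; a finite-rank difference still admits a ``generic position'' block in the Halmos decomposition, and that block is implemented by a particle-number-preserving (hence even) unitary $\Gamma(\scrV)$ that must be separated from the genuinely odd creation/annihilation part before the parity count is meaningful. This is precisely what the paper's proof does: it uses the Avron--Seiler--Simon structure to extract the finite-rank projectors $\scrN_\pm$, cites a result from \cite{bols2024absolutely} to produce a unitary $\scrV$ with $\scrV - \I$ trace class and $\widetilde\scrQ^{\qa}_\pi = \scrV\scrP_\pi^{(\mu)}\scrV^*$, and then invokes Shale--Stinespring to get an even implementer $\Gamma(\scrV)$, so that $\Omega^+ = AB\,\Gamma(\scrV)\,\Omega$ with $A, B$ contributing exactly $\dim\scrN_- + \dim\scrN_+$ odd factors. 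Your proposal leaves this architecture entirely implicit.

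Your fallback --- passing to the general trace-class case by trace-norm approximation plus ``weak-$*$ continuity of charge along the approximants (cf. Lemma~\ref{lem:continuity of parity})'' --- misapplies that lemma. Lemma~\ref{lem:continuity of parity} is about a weak-$*$ continuous family of states related to a \emph{product} state by uniformly almost-local homogeneous unitaries in the observable algebra $\caA^{\aloc}$; it does not speak to Fock-space implementers of Bogoliubov automorphisms, which are generally not in $\caA^{\aloc}$, and trace-norm approximation of the one-body projector does not give the uniform almost-locality the lemma requires. You would need a genuinely different continuity or index-stability argument (or, more simply, the exact decomposition the paper uses, which handles the general trace-class case in one shot). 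As written, the parity-equals-trace step remains a gap.
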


\begin{proof}
    We recall that the index of a pair of projections $\scrP_1, \scrP_2$ acting on the same Hilbert space and such that $\scrP_1 - \scrP_2$ is compact is given by
    \begin{equation}
        \ind(\scrP_1, \scrP_2) = \mathrm{dim}\Ker(\scrP_1 - \scrP_2 - \I) - \mathrm{dim}\Ker( \scrP_1 - \scrP_2 + \I),
    \end{equation}
    see~\cite{AvronSeilerSimon}. 
    Since the difference $\scrQ^{\qa}_{\pi} - \scrP_{\pi}^{(\mu)}$ is trace class by Proposition \ref{prop:adiabatic vs spectral} the index of the pair $(\scrQ^{\qa}_{\pi}, \scrP_{\pi}^{(\mu)})$ is well-defined and given by the difference of the ranks of the orthogonal projectors $\scrN_{\pm}$ onto the finite dimensional kernels of $\scrQ^{\qa}_{\pi} - \scrP_{\pi}^{(\mu)} \mp \I$. We claim that $\scrN_+$ is a subprojector of $\scrQ^{\qa}_{\pi}$ while $\scrN_-$ is orthogonal to $ \scrQ^{\qa}_{\pi} $. First, $0\leq \scrN_\pm \scrQ^{\qa}_{\pi} \scrN_\pm \leq \scrN_\pm$ follows from $0\leq \scrQ^{\qa}_{\pi} \leq \I$ and the fact that $\scrN_\pm$ is an orthogonal projector. Second, the definition of $\scrN_\pm$ is equivalent to $\scrN_\pm ( \scrQ^{\qa}_{\pi} - \scrP_{\pi}^{(\mu)} ) \scrN_\pm = \pm \scrN_\pm$, which implies that
    \begin{equation}\label{eq:NQP}
        \scrN_\pm \scrQ^{\qa}_{\pi} \scrN_\pm = \scrN_\pm (\pm \I + \scrP_{\pi}^{(\mu)}  ) \scrN_\pm .
    \end{equation}
    Hence $\scrN_+ \scrQ^{\qa}_{\pi} \scrN_+\geq \scrN_+$. We conclude that $\scrN_+ \scrQ^{\qa}_{\pi} = \scrQ^{\qa}_{\pi} \scrN_+ = \scrN_+$ and in turn that $\scrP_{\pi}^{(\mu)} \scrN_+ = \scrN_+ \scrP_{\pi}^{(\mu)} = 0$. The second alternative of~(\ref{eq:NQP}) yields $\scrN_- \scrQ^{\qa}_{\pi} = \scrQ^{\qa}_{\pi} \scrN_- =0$. Since, moreover, $\scrN_\pm$ are orthogonal to each other, we conclude that $\widetilde \scrQ^{\qa}_{\pi} := \scrQ^{\qa}_{\pi} - \scrN_+ + \scrN_-$ is an orthogonal projector. Hence,
    \begin{equation}
        \ind(\widetilde \scrQ^{\qa}_{\pi}, \scrP_{\pi}^{(\mu)}) = \Tr \lbrace \widetilde \scrQ^{\qa}_{\pi} - \scrP_{\pi}^{(\mu)} \rbrace = \Tr \lbrace \scrQ^{\qa}_{\pi} - \scrP_{\pi}^{(\mu)} \rbrace - \Tr \lbrace \scrN_+ - \scrN_- \rbrace = 0,
    \end{equation}
    where we used the fact that the index of a pair of projections whose difference is trace class is given by the trace of their difference (\cf Proposition 2.2 of \cite{AvronSeilerSimon_Charge})). It now follows from the argument in Section 4.2.1 of \cite{bols2024absolutely} (see also Section VII B of \cite{cedzich2018topological}) that there exists a unitary $\scrV$ such that $\scrV - \I$ is trace class and $\widetilde \scrQ^{\qa}_{\pi} = \scrV \scrP_{\pi}^{(\mu)} \scrV^*$. We therefore have
    \begin{equation} \label{eq:adiabatic is local perturbation of spectral}
        \scrQ^{\qa}_{\pi} = \scrV \scrP_{\pi}^{(\mu)} \scrV^* - \scrN_+ + \scrN_-.
    \end{equation}
    Since $\scrH_{-\pi} = \scrH_{\pi}$ we have $\scrT \scrP^{(\mu)}_{\pi} \scrT^* = \scrP_{-\pi}^{(\mu)} = \scrP_{\pi}^{(\mu)}$ is time reversal invariant and therefore so is the quasi-free state $\omega_{\scrP_{\pi}^{(\mu)}}$. By Lemma \ref{lem:antiunitary implementation} there is an antiunitary $T$ that implements time reversal in the GNS representation $(\Pi_{\pi}, \caH_{\pi}, \Omega )$ of $\omega_{\scrP_{\pi}^{(\mu)}}$ which we can take such that $T \Omega =  \Omega$. It follows from~\eqref{eq:adiabatic is local perturbation of spectral} that the quasi-free state $\omega_{\scrQ^{\qa}_{\pi}}$ is represented in $\caH_{\pi}$ by the vector
    \begin{equation}
     \Omega^+ = A B \Gamma(\scrV) \Omega
    \end{equation}
    where $\Gamma(\scrV)$ is the unitary which implements the Bogoliubov transformation $a(f) \mapsto a(\scrV f)$ (existence of $\Gamma(\scrV)$ is guaranteed by the Shale-Stinespring criterion~\cite{ShaleStinespring}, see also \cite[Theorem 6.16]{evans1998quantum}) and
    \begin{equation}
        A = \Pi_{\pi} \left(  \prod_{i = 1}^{\dim \scrN_-} \, a^*(u_i) \right), \quad B = \Pi_{\pi} \left( \prod_{i = 1}^{\dim \scrN_+} \, a(v_i) \right)
    \end{equation}
    where $\{u_i\}$ and $\{v_i\}$ are orthonormal bases of the range of $\scrN_-$ and $\scrN_+$ respectively. Since $\Gamma(\scrV)$ is even (it preserves particle number) we find
    \begin{align}
        T^2 \Omega^+  &= T^* A B \Gamma(\scrV) \Omega  = (-1)^{\dim \scrN_+ - \dim \scrN_-} \, \Omega^+ \\ 
        &= (-1)^{\Tr \lbrace \scrQ^{\qa}_{\pi} - \scrP_{\pi}^{(\mu)} \rbrace} \,  \Omega^+  = (-1)^{\FKM(\scrP)} \, \Omega^+ 
    \end{align}
    where we used Proposition \ref{prop:adiabatic vs spectral} in the last equality. This concludes the proof.
\end{proof}

\subsection{The index extends the Fu-Kane-Mele index for $\AII$ states}

The state $\psi^+:=\omega_{\scrQ^{\qa}_{\pi}}$ was obtained from $\omega := \omega_{\scrP}$ by inserting a $\pi$ flux defect using a quasi-adiabatic generator~(\ref{eq:Kato generator for flux insertion}) in the single particle Hilbert space. Since we defined our many-body index using defect states obtained through Hasting's quasi-adiabatic evolution, it remains to show that both flux insertion protocols lead to the same index. For this, we show that states obtained using the second quantization of~(\ref{eq:Kato generator for flux insertion}) and using Hastings' evolution are unitarily equivalent for all values of the flux $\phi\in[0,\pi]$. The argument follows the same strategy used in the proof of Proposition~\ref{prop:almost local unitary equivalence}.

In order to compare the defect state $\psi^+$ obtained using the free fermion generator with a defect state $\omega^+$ obtained by Hasting's quasi-adiabatic evolution we first give a many-body description of the free fermion generator \eqref{eq:Kato generator for flux insertion}. It is shown in the Appendix, Lemma \ref{lem:basic locality estimates}, that $\scrK_{\phi}^h$ is exponentially local, and exponentially localised near the horizontal axis, namely
\begin{equation}
    \abs{\scrK^h_{\phi}(x, i; y, j)} \leq c \mathrm{e}^{-\xi ( \abs{x_1-y_1}  +  \abs{x_2} + \abs{y_2})}
\end{equation} 
for some positive constants $c$ and $\xi$. It follows immediately that the free fermion generator $\scrK_{\phi}$, see~(\ref{eq:Kato generator for flux insertion}), is exponentially localised near the left horizontal axis $l = (0, -(1, 0))$.

In particular, $\scrK_{\phi}$ satisfies the assumptions of Lemma \ref{lem:path of Fermi projections yields path of equivalent states} so that the projections $\scrQ^{\qa}_{\phi}$ defined in~(\ref{eq:projection with phi flux}) correspond to quasi-free pure states $\psi^{\phi}$ such that $\psi^+ = \psi^{\pi}$ and there is a TDI $K^l$ such that
\begin{equation}
    \psi^{\phi} = \omega \circ \al_{\phi}^{K^l}
\end{equation}
for all $\phi \in [-\pi, \pi]$. The TDI may be taken to be (see the proof of Lemma \ref{lem:path of Fermi projections yields path of equivalent states})
\begin{equation} \label{eq:second quantised Kato}
    K^l_x(\phi) = \frac{1}{2} \sum_{y \in \Z^2} \sum_{i, j = 1}^n \, \left(  \scrK_{\phi}(x, i ; y, j) a_{x, i}^* a_{y, j} + \overline{ \scrK_{\phi}(x, i; y, j) } a_{y, j}^* a_{x, i}  \right).
\end{equation}

Let $H$ be any symmetric parent Hamiltonian for $\omega$ (which exists by Lemma \ref{lem:existence of symmetric parent Hamiltonian}) and denote by $\omega^{\phi} = \omega \circ \gamma_{\phi}^{l, H}$ the corresponding defect states obtained by quasi-adiabatic evolution described in Section \ref{sec:defect states}.

\begin{lemma} \label{lem:from Kato to Hastings}
    There is an $f \in \caF$ and for each $\phi \in [0, \pi]$ an $f$-localized even unitary $V_{\phi} \in \caA^{\aloc}$ such that $\omega^{\phi} = \psi^{\phi}\circ\Ad[V_\phi]$.
\end{lemma}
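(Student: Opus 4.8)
The plan is to show that the two defect states $\psi^\phi$ and $\omega^\phi$ are $f$-close in the sense of Definition~\ref{def:f-close}, with $f \in \caF$ uniform in $\phi \in [0,\pi]$, and then to extract the unitaries $V_\phi$ from Proposition~\ref{prop:f-close implies almost local unitary equivalence}, following the strategy of Proposition~\ref{prop:almost local unitary equivalence}. Both states are of the form $\omega$ composed with a flux--insertion LGA along the left horizontal half-line $l$: $\omega^\phi = \omega \circ \gamma_\phi^{l,H}$ with $\gamma_\phi^{l,H} = \al_\phi^{K^{l,H}}$, and $\psi^\phi = \omega \circ \al_\phi^{K^l}$ with $K^l$ given by~\eqref{eq:second quantised Kato}. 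Each of these half-line generators is the restriction to $l$ of a generator supported near the full horizontal axis $\partial h$: $K^{l,H}$ is the restriction of $K^{h,H}$, see~\eqref{Flux Insertion K}, while $K^l$, built from $\scrK_\phi = \Pi_L \scrK^h_\phi \Pi_L$ of~\eqref{eq:Kato generator for flux insertion}, is the analogous restriction of the TDI $\widehat K$ obtained from~\eqref{eq:second quantised Kato} with $\scrK^h_\phi$ in place of $\scrK_\phi$. The key inputs are: (i) $\omega \circ \al_\phi^{K^{h,H}} = \omega^{h,\phi} = \omega \circ \rho_\phi^h$ by~\eqref{parallel transport}; (ii) $\omega \circ \al_\phi^{\widehat K} = \omega_{\scrP^h_\phi} = \omega \circ \rho_\phi^h$, because $\al_\phi^{\widehat K}$ is the Bogoliubov flow $a(f) \mapsto a(\scrU^h_\phi f)$ (Lemma~\ref{lem:path of Fermi projections yields path of equivalent states}) and $\scrU^h_\phi \scrP (\scrU^h_\phi)^* = \scrP^h_\phi$; and (iii) all four generators satisfy the hypotheses of Lemma~\ref{lem:Locality along line} — the first two near $\partial h$ and $l$ by~\eqref{Support of K} and~\eqref{Flux Insertion K}, the last two by the exponential localization of $\scrK^h_\phi$ near the horizontal axis recorded before the statement via Lemma~\ref{lem:basic locality estimates}.

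The second step is to prove $f$-closeness by splitting according to the support of the test observable, with $\Lambda := \Lambda_{l,\varsigma}$ a thin cone around $l$ (small $\varsigma$) and apex $a = 0$. For $A \in \caA_{\Lambda^c \cap B_0(r)^c}$: since both $K^{l,H}$ and $K^l$ are localized near $l$, Lemma~\ref{lem:Locality along line} gives $\norm{\gamma_\phi^{l,H}(A) - A}, \norm{\al_\phi^{K^l}(A) - A} \le f(r)\norm{A}$ uniformly in $\phi$, so $\abs{\omega^\phi(A) - \psi^\phi(A)} \le 2 f(r)\norm{A}$. For $A \in \caA_{\Lambda \cap B_0(r)^c}$: arguing as in the proof of Lemma~\ref{lem:f-close on half-planes}, the LGAs $(\gamma_\phi^{l,H})^{-1} \circ \al_\phi^{K^{h,H}}$ and $(\al_\phi^{K^l})^{-1} \circ \al_\phi^{\widehat K}$ are generated — via the composition formula~\eqref{eq:complicated G} — by TDIs satisfying the hypotheses of Lemma~\ref{lem:Locality along line} for the opposite half-line $-l = (0,(1,0))$, since $K^{h,H} - K^{l,H}$ and $\widehat K - K^l$ are supported near $-l$; as $A$ is supported far from $-l$, Lemma~\ref{lem:Locality along line} yields $\norm{\al_\phi^{K^{h,H}}(A) - \gamma_\phi^{l,H}(A)} \le f(r)\norm{A}$ and $\norm{\al_\phi^{\widehat K}(A) - \al_\phi^{K^l}(A)} \le f(r)\norm{A}$. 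Combining with (i)--(ii), which give $\omega \circ \al_\phi^{K^{h,H}} = \omega \circ \al_\phi^{\widehat K}$, one gets
\begin{equation*}
\abs{\omega^\phi(A) - \psi^\phi(A)} \le \abs{\omega\big(\gamma_\phi^{l,H}(A) - \al_\phi^{K^{h,H}}(A)\big)} + \abs{\omega\big(\al_\phi^{\widehat K}(A) - \al_\phi^{K^l}(A)\big)} \le 2 f(r)\norm{A} .
\end{equation*}
Lemma~\ref{lem:f-close on cones implies f-close} then upgrades these two cone estimates to genuine $f$-closeness of $\psi^\phi$ and $\omega^\phi$, with $f$ independent of $\phi \in [0,\pi]$.

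Finally, both $\psi^\phi$ and $\omega^\phi$ are homogeneous SRE states (each is $\omega$ composed with an LGA), so Proposition~\ref{prop:f-close implies almost local unitary equivalence} produces, for each $\phi$, a homogeneous almost local unitary $V_\phi$ with $\omega^\phi = \psi^\phi \circ \Ad[V_\phi]$; since the $f$-closeness bound is uniform in $\phi$, the $V_\phi$ may be taken $f'$-localized for a single $f' \in \caF$. Evenness follows as elsewhere in the paper: at $\phi = 0$ both flows are trivial, so $\psi^0 = \omega = \omega^0$ and we may take $V_0 = \I$; the family $\phi \mapsto \psi^\phi$ (equivalently $\phi \mapsto \omega^\phi$) is weak-$*$ continuous, so Lemma~\ref{lem:continuity of parity} forces every $V_\phi$ to be even. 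I expect the main obstacle to be the locality bookkeeping in the cone $\Lambda$ — in particular verifying that the generators produced by~\eqref{eq:complicated G} are indeed localized near $-l$ despite the conjugations by the $l$-localized flows — but this is precisely the estimate already carried out in the proof of Lemma~\ref{lem:f-close on half-planes}, together with tracking uniformity of the decay functions over the compact flux interval $[0,\pi]$, which is automatic since the quasi-adiabatic and Bogoliubov generators involved are uniformly local there.
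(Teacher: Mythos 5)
Your overall plan and the core comparisons (splitting by a cone around $l$, using Lemma~\ref{lem:Locality along line} to match the half-line flows to the full-axis flows, and the key identity $\omega \circ \al_\phi^{K^{h,H}} = \omega^{h,\phi} = \omega_{\scrP^h_\phi} = \omega \circ \al_\phi^{\widehat K}$) mirror the paper's argument and are correct. The gap is in how you invoke the two appendix results at the end. Lemma~\ref{lem:f-close on cones implies f-close} requires one of the two states to be a homogeneous \emph{pure product state}, and Proposition~\ref{prop:f-close implies almost local unitary equivalence} likewise has the product state $\upsilon$ as a hypothesis; it is not enough that $\psi^\phi$ and $\omega^\phi$ are homogeneous SRE states. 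The proofs of those results genuinely use the product structure (zero entanglement entropy of $\upsilon$, and the explicit tensor decomposition of its GNS Hilbert space), so the conclusions are not available for a pair of general SRE states, however $f$-close.

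The fix is exactly the pullback you gesture at by citing Proposition~\ref{prop:almost local unitary equivalence} but do not execute: define $\tilde\psi^\phi := \psi^\phi \circ (\al \circ \al_\phi^{K^l})^{-1} = \omega_0$ and $\tilde\omega^\phi := \omega^\phi \circ (\al \circ \al_\phi^{K^l})^{-1}$, establish the two cone estimates between $\omega_0$ and $\tilde\omega^\phi$ (this is where the second part of Lemma~\ref{lem:Locality along line} is used, to push your comparisons of flows through the additional conjugation by $\al \circ \al_\phi^{K^l}$), apply Lemma~\ref{lem:f-close on cones implies f-close} and Proposition~\ref{prop:f-close implies almost local unitary equivalence} to get $f'$-local even unitaries $W_\phi$ with $\tilde\omega^\phi = \omega_0 \circ \Ad[W_\phi]$, and finally set $V_\phi = (\al \circ \al_\phi^{K^l})^{-1}(W_\phi)$, which stays almost local by~\eqref{LR bound} and even since the LGA commutes with parity. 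Your flow-level locality estimates are exactly the right inputs; they just need to be assembled after, not before, the conjugation back to the product state.
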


\begin{proof}
    Since the argument is essentially the same as the ones proving Lemma \ref{lem:f-close on half-planes} and Proposition \ref{prop:almost local unitary equivalence}, we allow ourselves to gloss over some of the technical details.
    
    Let $l$ be the left horizontal axis and $h = h_{l}$ be the upper half plane. The projections $\scrP_{\phi}^h$ correspond to the states $\omega^{h, \phi} := \omega \circ \rho_{\phi}^h$. By quasi-adiabatic evolution, we have that $\omega^{h, \phi} := \omega \circ \alpha_{\phi}^{h,H}$, see~(\ref{parallel transport}).

    Similarly to~\eqref{eq:second quantised Kato} we can introduce a TDI $K^h$ given by
    \begin{equation}\label{eq:second quantised Kato on plane}
        K^h_x(\phi) = \frac{1}{2} \sum_{y \in \Z^2} \sum_{i, j = 1}^n \, \left(  \scrK^h_{\phi}(x, i ; y, j) a_{x, i}^* a_{y, j} + \overline{ \scrK^h_{\phi}(x, i; y, j) } a_{y, j}^* a_{x, i}  \right),
    \end{equation}
    which is so that
    \begin{equation}
        \omega^{h, \phi} = \omega \circ \al_{\phi}^{K^h}
    \end{equation}
    for all $\phi \in \R$.

    Since $\omega$ is SRE there is a homogeneous product state $\omega_0$ and an LGA $\al$ such that $\omega = \omega_0 \circ \al$. For all $\phi \in \R$ we define
    \begin{equation}
        \tilde \psi^{\phi} = \psi^{\phi} \circ (\al \circ \al_{\phi}^{K^l} )^{-1}, \quad \text{and} \quad
        \tilde \omega^{\phi} = \omega^{\phi} \circ (\al \circ \al_{\phi}^{K^l})^{-1}.
    \end{equation}
    Then
    \begin{equation}
        \tilde \psi^{\phi} = \omega_0, \quad \text{and} \quad \tilde \omega^{\phi} = \omega_0 \circ \tilde \beta^{\phi}
    \end{equation}
    with
    \begin{equation}
        \tilde \beta^{\phi} = \al \circ \gamma_{\phi}^{l, H} \circ ( \al_{\phi}^{K^l} )^{-1} \circ \al^{-1}.
    \end{equation}
    Let
    \begin{equation}
        \beta^{\phi} = \al \circ \al_{\phi}^{h, H} \circ ( \al_{\phi}^{K^h} )^{-1} \circ \al^{-1}.
    \end{equation}
We now argue by locality again. Because $K^l$ is a restriction of $K^h$ to the left half plane, see~(\ref{eq:Kato generator for flux insertion}) and~(\ref{eq:second quantised Kato},\ref{eq:second quantised Kato on plane}), the LGAs $\al_{\phi}^{K^l}$ and $\al_{\phi}^{K^h}$ are almost equal on any observable localized in the left half-plane $A \in \Lambda_{\pi, l} \cap B(r)^c$, see Lemma~\ref{lem:Locality along line}. The same holds for $\gamma_{\phi}^{l, H}$ and $\al_{\phi}^{h, H}$ with the former being generated by the restriction of the generator of the latter to the left half-plane. We conclude that there is $g_1 \in \caF$ such that
    \begin{equation}
        \norm{ \tilde \beta^{\phi}(A) - \beta^{\phi}(A) } \leq g_1(r) \norm{A}
    \end{equation}
    for all $\phi \in [0, \pi]$ and all $A \in \Lambda_{\pi, l} \cap B(r)^c$. Since $\tilde \psi^{\phi}= \omega_0$, we have $\tilde \psi^{\phi} \circ \beta^{\phi} = \tilde \psi^{\phi}$. It follows that
    \begin{equation}
        \abs{ \tilde \psi^{\phi}(A) -  \tilde \omega^{\phi}(A)} = \abs{ \omega_0 \big( \beta^{\phi}(A) - \tilde \beta^{\phi}(A)  \big)} \leq g_1(r) \norm{A}
    \end{equation}
    for all $A \in \Lambda_{\pi, l} \cap B(r)^c$. What is more, both $\gamma_{\phi}^{l, H}$ and $\al_{\phi}^{K^l}$ act almost trivially on any observable localized in the right half-plane $A \in \Lambda_{\pi, l} \cap B(r)^c$, again by Lemma~\ref{lem:Locality along line} and the fact that the generators are almost localised on the left half-plane. In turn, this implies that there is $g_2\in\caF$ such that 
    \begin{equation}
        \norm{\tilde \beta(A) - A} \leq g_2(r) \norm{A}
    \end{equation}
for all $\phi \in [0, \pi]$ and all $A \in \Lambda_{\pi, -l} \cap B(r)^c$. This implies that
    \begin{equation}
        \abs{ \tilde \psi^{\phi}(A) -  \tilde \omega^{\phi}(A)} = \abs{ \omega_0 \big( A - \tilde \beta^{\phi}(A)  \big)} \leq g_2(r) \norm{A}
    \end{equation}
    for all $A \in \Lambda_{\pi, -l} \cap B(r)^c$.
    
    Since $\tilde \psi^{\phi} = \omega_0$ is a homogeneous product state it now follows from Lemma \ref{lem:f-close on cones implies f-close} that the states $\tilde \omega^{\phi}$ are all $g$-close to $\omega_0$ for $\phi \in [0, \pi]$ for a $g \in \caF$ that depends only on $g_1,g_2$, and by Proposition \ref{prop:f-close implies almost local unitary equivalence} there is an $\tilde f\in\caF$ depending only on $g$, and $\tilde f$-local homogeneous unitaries $W_{\phi} \in \caA^{\aloc}$ such that $\tilde \omega^{\phi} = \omega_0 \circ \Ad[W_{\phi}]$.  It follows that $\omega^{\phi} = \psi^{\phi} \circ \Ad[V_{\phi}]$ with $V_{\phi} = (\al \circ \al_{\phi}^{K})^{-1} (W_{\phi})$, which are uniformly $f$-local for some $f \in \caF$.

    The family of states $\tilde \omega^{\phi}$ are related to the pure homogeneous product state $\omega_0$ by conjugation with the homogeneous almost local unitaries unitaries $W_{\phi}$ which satisfy the assumptions of Lemma \ref{lem:continuity of parity}. Since $\tilde \omega^{0} = \omega_0$ we must have that $W_{0}$ is even and so all the $W_{\phi}$ are even. Since $\al$ and $\al_{\phi}^K$ are parity preserving, we conclude that also all the $V_{\phi}$ are even.
\end{proof}

\begin{proposition} \label{prop:index of AII states extends FKM}
    We have $\Ind(\omega_{\scrP}) = \FKM(\scrP)$.
\end{proposition}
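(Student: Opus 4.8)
The proof will be a short chain linking the three results just established in this section, so the plan is essentially bookkeeping. First, by the remarks at the start of Section~\ref{sec:free fermion examples} we may assume $\omega := \omega_{\scrP}$ is itself SRE, so that $\Ind(\omega_{\scrP})$ is given by Definition~\ref{def:index}. By Propositions~\ref{prop:parent} and~\ref{prop:Indep of line} we are free to evaluate it with any symmetric parent Hamiltonian $H$ for $\omega$ (one exists by Lemma~\ref{lem:existence of symmetric parent Hamiltonian}) and any half-line, and we take $l=(0,-(1,0))$, the left horizontal axis. With these choices $\Ind(\omega_{\scrP})$ equals $+1$ if the Hastings defect state $\omega^+=\omega\circ\flux_{\pi}^{l,H}$ is a Kramers singlet and $-1$ if it belongs to a Kramers pair (Definition~\ref{def:index of triple} together with Lemma~\ref{lem:Kramers or not}).

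Next I would invoke Lemma~\ref{lem:from Kato to Hastings} at $\phi=\pi$: it produces an even unitary $V_{\pi}\in\caA^{\aloc}$ with $\omega^+=\psi^+\circ\Ad[V_{\pi}]$, where $\psi^+=\omega_{\scrQ^{\qa}_{\pi}}$ is the defect state obtained from the single-particle Kato generator. Since $V_{\pi}$ is even, Lemma~\ref{lem:independence under even unitaries} (applied with $V=V_{\pi}^*$) shows that $\omega^+$ is a Kramers singlet precisely when $\psi^+$ is, and belongs to a Kramers pair precisely when $\psi^+$ does. Hence the value of $\Ind(\omega_{\scrP})$ is governed entirely by the time-reversal type of $\psi^+$, which is a well-defined notion because $\psi^+$ is a pure homogeneous quasi-free state with $\tau$ implemented antiunitarily in its GNS representation, as exhibited in the proof of Proposition~\ref{prop:connection between Kramers and spectral flow}.

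Finally, Proposition~\ref{prop:connection between Kramers and spectral flow} identifies that type with the Fu--Kane--Mele index: $\psi^+=\omega_{\scrQ^{\qa}_{\pi}}$ is a Kramers singlet when $\FKM(\scrP)=0$ and belongs to a Kramers pair when $\FKM(\scrP)=1$. Chaining the three steps gives $\Ind(\omega_{\scrP})=+1$ iff $\FKM(\scrP)=0$ and $\Ind(\omega_{\scrP})=-1$ iff $\FKM(\scrP)=1$, i.e. $\Ind(\omega_{\scrP})=(-1)^{\FKM(\scrP)}$, which is exactly the claimed identity once one reads the $\Z_2$-valued $\FKM$ of~\eqref{eq:FKM is spectral flow} multiplicatively under the identification $\Z_2=\{0,1\}=\{+1,-1\}$ via $n\mapsto(-1)^n$ used throughout the paper.

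Because all the substantive analysis has already been carried out — the trace-class comparison of $\scrQ^{\qa}_{\pi}$ with the spectral projection $\scrP^{(\mu)}_{\pi}$ and the Shale--Stinespring computation of $T^2\Omega^+$ in Propositions~\ref{prop:adiabatic vs spectral}--\ref{prop:connection between Kramers and spectral flow}, together with the locality estimates that convert the Kato generator into Hastings' quasi-adiabatic flow in Lemma~\ref{lem:from Kato to Hastings} — there is no genuine obstacle left in this proof itself. The only points requiring a word of care are the ones flagged above: that the choices of $H$ and $l$ are immaterial (by the independence results of Section~\ref{sec:independence}), and the multiplicative $\Z_2$ sign convention relating $\FKM$ to the value of $\Ind$.
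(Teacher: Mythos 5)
Your proof is correct and follows essentially the same chain as the paper's own argument: reduce to the SRE case via the remark at the start of Section~\ref{sec:free fermion examples}, transfer the Kramers type from the Hastings defect state $\omega^+$ to the free-fermion defect state $\psi^+ = \omega_{\scrQ^{\qa}_{\pi}}$ via Lemma~\ref{lem:from Kato to Hastings} and Lemma~\ref{lem:independence under even unitaries}, and then read off the result from Proposition~\ref{prop:connection between Kramers and spectral flow}. The extra bookkeeping you include (independence of the choice of $H$ and $l$, and the $\Z_2$ sign convention) is implicit in the paper's version but harmless to spell out.
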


\begin{proof}
    We must show that the $\pi$-flux state $\omega^+$ obtained from $\omega_{\scrP}$ by quasi-adiabatic flux insertion is a Kramers singlet if $\FKM(\scrP) = 0$, and belongs to a Kramers pair of $\FKM(\scrP) = 1$ (see Definitions \ref{def:index of triple} and~\ref{def:index for symmetric stably SRE states}). From Proposition \ref{prop:connection between Kramers and spectral flow} we known this to be true for the $\pi$-flux state $\psi^+ = \omega_{\scrQ^{\qa}_{\pi}}$ obtained from $\omega_{\scrP}$ using a free fermion generator. By Lemma \ref{lem:from Kato to Hastings} there is an even unitary $V \in \caA^{\aloc}$ such that $\psi^+ = \omega^+ \circ \Ad[V]$. We conclude by Lemma \ref{lem:independence under even unitaries}
\end{proof}

\section{Proof of the main theorem}\label{sec:proof of the main theorem}

We collect the results obtained in the previous sections to give a proof of Theorem \ref{thm:main theorem}.

\begin{proof}
    \textbf{Proof of (\ref{thm:item 1}).} Triviality of the index for symmetric product states is Lemma \ref{lem:index for symmetric product states}. The claim about the $\AII$ states is Proposition~\ref{prop:index of AII states extends FKM}.
    
    \textbf{Proof of (\ref{thm:item 2}).} Multiplicativity of the index under stacking is the statement of Proposition~\ref{prop:stacking}.
    
    \textbf{Proof of (\ref{thm:item 3}).} By items (\ref{thm:item 1}) and (\ref{thm:item 2}), the index is invariant under stacking with a product state. With this, the claim (\ref{thm:item 3}) follows from the invariance under symmetry preserving LGAs, Proposition~\ref{prop:invariance under symmetry preserving LGAs}.

\end{proof}

\noindent \textbf{Acknowledgments.} The work of S.B. and M.R. is supported by NSERC of Canada.

\noindent \textbf{Data availability.} There is no data associated with this work.

\appendix

\section{Almost local transitivity} \label{app:almost local transitivity}

The classical Kadison transitivity theorem implies that two GNS equivalent representations associated with two pure states are necessarily unitarily equivalent through a unitary element of the algebra, see e.g.~\cite{Dixmier}. Our goal in this section is two-fold. First to prove that if $\psi_0$ is a product state and $\psi_0,\psi$ are almost local perturbations of each other when restricted to a cone and its complement, then $\psi_0,\psi$ are almost local perturbations of each other, see Definition~\ref{def:f-close}. Second of all to show that if they are moreover pure, then they are unitarily equivalent through a unitary in the almost local algebra.

\subsection{Lifting closeness at infinity on cones to closeness at infinity}

We start with the first goal, a version of which can also be found in Appendix~C of~\cite{kapustin2020hall}. It is a rather classical argument using relative entropy, except for the fact that we need Araki's extension~\cite{araki2003equilibrium} for the subadditivity of the quantum entropy. Recall the definition of a cone given in Section~\ref{sec:locality}, and Definition \ref{def:f-close} for two states to be almost local perturbations of each other.
\begin{lemma} \label{lem:f-close on cones implies f-close}
    Let $\Lambda$ be a cone. Let $f \in \caF$ and suppose $\psi_0$ is a homogeneous product state and $\psi$ a homogeneous state on $\caA$ such that
    \begin{equation} \label{eq:f-close on cone}
        \abs{ \psi_0(A) - \psi(A) } \leq f(r) \norm{A} \quad \forall A \in \caA^{\loc}_{\Lambda\cap B_a(r)^c}
    \end{equation}
    and
    \begin{equation} \label{eq:f-close on complement cone}
        \abs{\psi_0(A) - \psi(A)} \leq f(r) \norm{A} \quad \forall A \in \caA^{\loc}_{\Lambda^c\cap B_a(r)^c}.
    \end{equation}
    Then $\psi$ is an almost local perturbation of $\psi$.
\end{lemma}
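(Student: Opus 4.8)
The plan is to verify directly the defining estimate of Definition~\ref{def:f-close}: to produce a single $h\in\caF$ with $\abs{\psi(A)-\psi_0(A)}\leq h(r)\norm{A}$ for every $r>0$ and every $A\in\caA^{\loc}_{B_a(r)^c}$, uniformly in $A$. So fix $A\in\caA_X$ with $X\Subset B_a(r)^c$ and $\norm{A}\leq 1$, and split $X=X_1\sqcup X_2$ with $X_1:=X\cap\Lambda$ and $X_2:=X\setminus\Lambda$, so that $\caA_X\simeq\caA_{X_1}\gotimes\caA_{X_2}$ while $X_1\subset\Lambda\cap B_a(r)^c$ and $X_2\subset\Lambda^c\cap B_a(r)^c$; hence estimate (\ref{eq:f-close on cone}) applies to operators supported on $X_1$ and (\ref{eq:f-close on complement cone}) to operators supported on $X_2$.

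The first step exploits that $\psi_0$ is a \emph{pure} product state: its restriction $\eta:=\psi_0|_{\caA_{X_1}}$ is then a pure state, the vector state of a rank-one projector $P_1\in\caA_{X_1}$, which is even because $\psi_0$ is homogeneous. Testing (\ref{eq:f-close on cone}) against every $B\in\caA_{X_1}$ (a finite-dimensional algebra) shows that $\psi|_{\caA_{X_1}}$ lies within $f(r)$ of $\eta$ in the dual norm; equivalently, the reduced state of $\psi$ on $X_1$ is within $f(r)$, in trace norm, of the one-dimensional projector $P_1$. By the standard gentle-measurement estimate — a state whose reduction to a subsystem is $\epsilon$-close to a pure state $P$ is itself $O(\sqrt\epsilon)$-close to the product of $P$ with its own reduction to the complementary subsystem — there is an absolute constant $c$ with
\begin{equation}
    \norm{\,\psi|_{\caA_X}-\big(\eta\gotimes\psi|_{\caA_{X_2}}\big)\,}\;\leq\;c\,\sqrt{f(r)},
\end{equation}
where $\eta\gotimes\psi|_{\caA_{X_2}}$ denotes the state on $\caA_X\simeq\caA_{X_1}\gotimes\caA_{X_2}$ equal to $\eta$ on the first factor and to $\psi$ on the second.

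Now evaluate on $A$. Let $E_\eta:\caA_X\to\caA_{X_2}$ be the (even) conditional expectation determined by $\eta$ on the $X_1$-factor and put $B:=E_\eta(A)\in\caA_{X_2}$, so that $\norm{B}\leq\norm{A}\leq 1$ and $\big(\eta\gotimes\psi|_{\caA_{X_2}}\big)(A)=\psi(B)$. Since $B\in\caA_{X_2}\subset\caA_{\Lambda^c\cap B_a(r)^c}$, estimate (\ref{eq:f-close on complement cone}) gives $\abs{\psi(B)-\psi_0(B)}\leq f(r)$, while $\psi_0(B)=\psi_0(A)$ because $\psi_0$ is a product state with $\psi_0|_{\caA_{X_1}}=\eta$. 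Chaining the three bounds,
\begin{equation}
    \abs{\psi(A)-\psi_0(A)}\;\leq\;c\,\sqrt{f(r)}+f(r)\;=:\;h(r),
\end{equation}
where $h$ is independent of $X$ and $A$, non-increasing and strictly positive, and lies in $\caF$ because $r^p\sqrt{f(r)}=\sqrt{r^{2p}f(r)}\to0$ for every $p>0$. (Working with balls about the apex $a$ rather than about the origin, as in Definition~\ref{def:f-close}, changes $h$ only by a bounded shift of its argument, which keeps it in $\caF$.) Thus $\psi$ is an almost local perturbation of $\psi_0$.

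The difficulties are bookkeeping rather than conceptual, and I expect two points to need care. First, the fermionic grading: $\caA_X\simeq\caA_{X_1}\gotimes\caA_{X_2}$ is a \emph{graded} tensor product, so the gentle-measurement estimate and the conditional expectation $E_\eta$ must be set up accordingly — the fact that $P_1$ is even makes $E_\eta$ well behaved on all of $\caA_X$ and makes the graded and ungraded products agree, which streamlines this. Second, the decisive input is the purity of the finite-region restrictions of $\psi_0$ — available here since in every application $\psi_0$ is a pure product state — which is precisely what permits the decoupling of $X_1$ from $X_2$ inside $\psi$; for a merely mixed product reference this decoupling can fail. (An alternative, closer to the remark preceding the lemma, is a relative-entropy argument relying on superadditivity of relative entropy against a product reference together with Araki's extension of the subadditivity of the quantum entropy; the direct route above appears shorter.)
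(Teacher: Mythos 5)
Your proof is correct in substance, and it takes a genuinely different route from the paper's. The paper argues through entropy: Fannes' inequality shows the single-site entropies $S^{(x)}$ of $\psi$ are controlled by $f(|x|)$; Araki's fermionic extension of subadditivity then bounds the relative entropy $S\big(\rho^{(\Gamma)}\,\|\,\rho^{(\Gamma\cap\Lambda)}\gotimes\rho^{(\Gamma\cap\Lambda^c)}\big)$ by $\sum_{|x|>r}S^{(x)}$; quantum Pinsker converts this into a trace-norm estimate, and a final triangle inequality using both hypotheses finishes the job. You instead bypass entropy entirely: the purity of the $X_1$-marginal of the product reference state, combined with the near-purity of $\psi$'s $X_1$-marginal forced by \eqref{eq:f-close on cone}, lets the gentle-measurement lemma factorise $\psi|_{\caA_X}$ to within $O(\sqrt{f(r)})$, after which \eqref{eq:f-close on complement cone} applied to the conditional expectation $E_\eta(A)\in\caA_{X_2}$ closes the estimate. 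Your route is shorter and more elementary — no Araki subadditivity, no Pinsker — and gives the cleaner rate $c\sqrt{f(r)}+f(r)$ rather than one involving a tail sum $\sum_{|x|>r}f(|x|)\log(1/f(|x|))$. What the paper's route buys in return is that it inherits the fermionic grading for free from Araki's theorem, whereas the decoupling and conditional-expectation steps in your argument require some care with the graded tensor product $\caA_{X_1}\gotimes\caA_{X_2}$ (Jordan–Wigner ordering with $X_1$ first, evenness of the rank-one projector $P_1$ so that it acts as $p_1\otimes\I$ and the slice map is a unital contraction, evenness of the $X_2$-marginal). You flag this, and it does work out, but it is the part that deserves a full write-up in a polished version; the paper's choice of the entropy route is plausibly motivated by wanting Araki's machinery to absorb exactly those complications.
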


\begin{proof}

    For any $x\in\Z^2$,  we denote by $\rho^{(x)}$ and $\rho_{0}^{(x)}$ the density matrices of the restrictions to $\caA_x$ of $\psi$ and $\psi_0$ respectively. By assumption we have for any $A \in \caA_x$ that $\abs{\psi(A) - \psi_0(A)} \leq f(\vert x\vert ) \norm{A}$, which implies $\norm{\rho^{(x)} - \rho_{0}^{(x)}}_1 \leq f( \vert x\vert )$. Let $S^{(x)} = - \Tr (\rho^{(x)} \log \rho^{(x)})$ be the von Neumann entropy of $\rho^{(x)}$. Since $\psi_0$ is pure, its restriction is pure as well by definition and so $S^{(x)}_0 = 0$. If $\vert x\vert $ is large enough so that $f( \vert x\vert ) < 1/e$ then Fannes' inequality yields
    \begin{equation}\label{eq:fannes}
        S^{(x)}\leq f( \vert x\vert ) \log( 2^n / f(\vert x\vert) ),
    \end{equation}
    where $2^n$ is the dimension of the local Fock space for $\caA_x$. Since $f$ decays faster than any polynomial we have that
    \begin{equation}
        S := \sum_{x \in \Lambda} S^{(x)} < \infty.
    \end{equation}
    By subadditivity of the von Neumann entropy (see Theorem 3.7 of \cite{araki2003equilibrium} for subadditivity for fermions), the quantity $S$ is an upper bound for the entanglement entropy of any finite $\Gamma \Subset \Z^2$ in the state $\psi$.

    We now show that $\psi_0$ and $\psi$ are $g$-close for some $g \in \caF$. Let $\Gamma \Subset \overline{B(r)^c}$ be a finite region outside the ball $B(r)$ and denote by $\rho^{(\Gamma)}$ and $\rho_0^{(\Gamma)}$ the density matrices of the restrictions to $\caA_{\Gamma}$ of the states $\psi$ and $\psi_0$ respectively. Similarly, let $\rho^{(\Gamma \cap \Lambda)}$ and $\rho^{(\Gamma \cap \Lambda^c)}$ be the density matrices of the restrictions of $\psi$ to $\caA_{\Gamma \cap \Lambda}$ and to $\caA_{\Gamma \cap \Lambda^c}$ respectively. Equivalently, $\rho^{(\Gamma \cap \Lambda)}$ is the restriction of $\rho^{(\Gamma)}$ to $\caA_{\Gamma \cap \Lambda}$, and similarly for $\rho^{(\Gamma \cap \Lambda^c)}$. Let
    \begin{equation}
        S(\rho^{(\Gamma)} \| \rho^{(\Gamma \cap \Lambda)} \gotimes \rho^{(\Gamma \cap \Lambda^c)}) = S(\rho^{(\Gamma)}) - S( \rho^{(\Gamma \cap \Lambda)} ) - S( \rho^{(\Gamma \cap \Lambda^c)} )
    \end{equation}
    be the relative entropy of $\rho^{(\Gamma)}$ and $\rho^{(\Gamma \cap \Lambda)} \gotimes \rho^{(\Gamma \cap \Lambda^c)}$. By non-negativity and subadditivity of the von Neumann entropy, and the bound~(\ref{eq:fannes}) on $S^{(x)}$ obtained above,
    \begin{equation}
        S(\rho^{(\Gamma)} \| \rho^{(\Gamma \cap \Lambda)} \gotimes \rho^{(\Gamma \cap \Lambda^c)}) \leq S(\rho^{(\Gamma)}) \leq \sum_{x  : \vert x\vert > r} S^{(x)} \leq \frac{h(r)}{2}
    \end{equation}
    for some $h \in \caF$ that only depends on $f$ and $n$. The quantum Pinsker inequality now yields
    \begin{equation}
        \norm{ \rho^{(\Gamma)} -  \rho^{(\Gamma \cap \Lambda)} \gotimes \rho^{(\Gamma \cap \Lambda^c)} }_1^2 \leq 2 S(\rho^{(\Gamma)} \| \rho^{(\Gamma \cap \Lambda)} \otimes \rho^{(\Gamma \cap \Lambda^c)}) \leq h(r).
    \end{equation}
    It follows from the assumptions~(\ref{eq:f-close on cone},\ref{eq:f-close on complement cone}) that
    \begin{equation}
        \norm{ \rho^{(\Gamma \cap \Lambda)} \gotimes \rho^{(\Gamma \cap \Lambda^c)} - \rho_0^{\Gamma} }_1 \leq f(r).
    \end{equation}
    Combining the two last inequalities finally yields
    \begin{equation}
        \norm{ \rho^{(\Gamma)} - \rho_0^{(\Gamma)} } \leq f(r) + h(r),
    \end{equation}
    thus proving the claim since $ f + h \in \caF$.
\end{proof}

\subsection{Closeness at infinity implies almost local unitary equivalence}

We now move to the transitivity itself. A similar result in the context of one-dimensional spin chains can be found in~\cite{kapustin2021classification}, and extended to the two-dimensional setting in~\cite{kapustin2020hall}

\begin{proposition} \label{prop:f-close implies almost local unitary equivalence}
    Let $\upsilon$ be a homogeneous pure product state. Assume that $\psi$ is a homogeneous pure state that is $f$-close to $\upsilon$. Then there is $\tilde f\in\caF$ which depends only on $f$, and a homogeneous $\tilde f$-local unitary $U \in \caA^{\aloc}$ such that $\psi = \upsilon \circ \Ad_U$. The unitary $U$ may be taken even if $\psi$ has the same parity as $\upsilon$, and odd of $\psi$ has opposite parity to $\upsilon$.
\end{proposition}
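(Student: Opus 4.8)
The plan is to realise $U$ as a norm-convergent infinite product $U = U_1 U_2 \cdots$ of homogeneous unitaries $U_n \in \caA_{B(r_n)}$ supported on a rapidly growing sequence of balls, with $\norm{U_n - \I}$ summable and small enough that the tail $\sum_{m > n}\norm{U_m - \I}$ is dominated by $\tilde f(r_n)$ for some $\tilde f \in \caF$ depending only on $f$ and $n$. Then the partial products $U_1\cdots U_N \in \caA_{B(r_N)}$ form a norm-Cauchy sequence with limit a unitary $U$ satisfying $\norm{U - U_1\cdots U_N}\le \tilde f(r_N)$, so $U$ is $\tilde f$-localised near the origin, hence in $\caA^{\aloc}$, and homogeneous as a norm limit of homogeneous unitaries. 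The $U_n$ are chosen so that the states
\[
    \psi_N := \psi \circ \Ad_{(U_1 \cdots U_N)^{*}}
\]
converge weak-$*$ to $\upsilon$; since $\Ad_{(U_1\cdots U_N)^{*}} \to \Ad_{U^{*}}$ pointwise in norm, the limit equals $\psi \circ \Ad_{U^{*}}$, whence $\psi = \upsilon \circ \Ad_U$.

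To set up the induction I would maintain at stage $n$ a homogeneous pure state $\psi_{n-1}$ with two properties: \emph{(a)} it already agrees with $\upsilon$ on $\caA_{B(\varrho_{n-1})}$ for a radius $\varrho_{n-1} < r_n$, up to a trace-norm error $\epsilon_{n-1}$ with $\epsilon_{n-1}\to 0$; and \emph{(b)} it remains $f$-close to $\upsilon$ on $\caA_\Gamma$ for every finite $\Gamma$ far from the origin. Property \emph{(b)} is propagated because each previously chosen $U_k$ is \emph{even} and supported in a fixed ball, so acts as the identity on $\caA_{B(s)^c}$ once $s$ exceeds those supports — here the fermionic subtlety enters, since an \emph{odd} unitary would instead act on the complement by $\theta$. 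Running the argument in the proof of Lemma~\ref{lem:f-close on cones implies f-close} (Fannes' inequality together with Araki's subadditivity of entropy for fermions), property \emph{(b)} upgrades to a \emph{uniform} bound: for all finite $\Gamma\subset B(s)^c$ the reduced density matrix of $\psi_{n-1}$ on $\caA_\Gamma$ is within $F(s)$, for some $F = f+h\in\caF$, of the (pure, product) reduced density matrix of $\upsilon$.

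The inductive step is a local correction localised in a shell $B(r_n)\setminus B(\varrho_{n-1})$ \emph{disjoint from the matched ball}, so property \emph{(a)} is preserved. Choosing radii $\varrho_{n-1} < \varrho_n < r_n$, regard the shell as a graded tensor factorisation into an inner annulus $B(\varrho_n)\setminus B(\varrho_{n-1})$, on which the corrected state should match $\upsilon$, and an outer reservoir shell $B(r_n)\setminus B(\varrho_n)$. Since $\psi_{n-1}$ is pure and (nearly) factorises across $B(\varrho_{n-1})$, the relevant object is its reduced density matrix $\sigma$ on the shell, which by the uniform bound is $F(\varrho_{n-1})$-close to the pure product state $\upsilon$ restricted to the shell. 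A gentle-measurement (Alicki--Fannes) argument, together with the elementary fact that a pure state whose restriction to a subalgebra is pure factorises across the corresponding cut, then produces an even unitary $U_n$ supported in the shell, with $\norm{U_n - \I}$ bounded by a fixed power of $F(\varrho_{n-1})$ up to logarithmic corrections, which aligns the dominant eigenvector of $\sigma$ with the $\upsilon$-vector of the annulus and absorbs the residual entanglement of $\sigma$ into the reservoir; then $\psi_n := \psi_{n-1}\circ\Ad_{U_n}$ agrees with $\upsilon$ on $\caA_{B(\varrho_n)}$ up to an error $\epsilon_n$ much smaller than $\epsilon_{n-1}$. Letting the interleaved radius sequences grow superpolynomially makes $\sum_n\norm{U_n-\I}<\infty$ with the required $\caF$-decay of tails, and with $\varrho_n\uparrow\infty$ the $\psi_N$ match $\upsilon$ on exhausting balls, so $\psi_N\to\upsilon$ weak-$*$. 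For the parity statement, the unitary implementing $\psi = \upsilon\circ\Ad_W$ (unique up to a phase, as both GNS representations are irreducible) is necessarily homogeneous, and the construction can be kept even precisely when this intrinsic parity is even; in the odd case one first replaces $\upsilon$ by its conjugate under the on-site odd self-adjoint unitary $a_{x_0}+a_{x_0}^*$ (still a homogeneous pure product state), reducing to the even case at the cost of multiplying $U$ by that odd unitary.

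The main obstacle is the quantitative control of the iteration: one must ensure that the square-root losses incurred in the gentle-measurement/``purification'' steps do not leave the matching error stuck at a positive value — this is why the correction must leave the already-matched ball \emph{exactly} fixed (so the reservoir shell must sit strictly outside the annulus being matched) and why the radii must grow fast — and one must keep the size of each correction expressed purely through $F(\varrho_{n-1})$ so that the resulting $\tilde f$ depends only on $f$ and $n$. The second, fermionic, difficulty is the systematic use of the graded tensor product and the even/odd bookkeeping, the latter being genuinely necessary since an odd local unitary conjugates its complement by the fermion parity rather than by the identity.
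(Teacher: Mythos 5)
The high-level strategy (build $U$ as a norm-convergent product of homogeneous almost-local unitaries supported on growing balls, with tail bounds that yield the $\tilde f$-locality) is the same as the paper's. But the inductive step you describe cannot work, and you invoke some machinery that is redundant here.

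The gap: you want $\psi_n$ to agree with $\upsilon$ on $\caA_{B(\varrho_n)}$ up to a trace-norm error $\epsilon_n\ll\epsilon_{n-1}$, while the correcting unitary $U_n$ sits in $B(r_n)\setminus B(\varrho_{n-1})$, \emph{disjoint} from the previously matched ball. Since $U_n$ is even and does not touch $B(\varrho_{n-1})$, we have $\psi_n|_{\caA_{B(\varrho_{n-1})}}=\psi_{n-1}|_{\caA_{B(\varrho_{n-1})}}$, and by monotonicity of the trace distance under restriction to a subalgebra,
\begin{equation*}
\epsilon_n \;=\; \bigl\|\psi_n|_{B(\varrho_n)}-\upsilon|_{B(\varrho_n)}\bigr\|_1 \;\ge\; \bigl\|\psi_{n-1}|_{B(\varrho_{n-1})}-\upsilon|_{B(\varrho_{n-1})}\bigr\|_1 \;=\;\epsilon_{n-1}.
\end{equation*}
The error inside the already-matched ball is frozen once you stop acting there: it cannot shrink, so the iteration can never drive the matching to zero and the limit of $\psi\circ\Ad_{(U_1\cdots U_N)^*}$ cannot equal $\upsilon$. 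Also, you re-run the Fannes$+$Araki subadditivity argument to "upgrade" property (b), but that is not needed here: the $f$-closeness \emph{is} the hypothesis of the proposition; the entropy argument lives in Lemma~\ref{lem:f-close on cones implies f-close}, which establishes that hypothesis and should not be repeated inside this proof.

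The paper avoids the monotonicity obstruction by inverting the direction of the march and by achieving \emph{exact} (not approximate) agreement outside a growing ball. Working in the GNS space of $\upsilon$ built via Jordan--Wigner, it represents $\psi$ by a rank-one projection $\rho$ (the limit of $\rho_r\gotimes Q_r$), and defines $\tilde\rho_r=(\I\otimes Q_r)\rho(\I\otimes Q_r)/\Tr\{\cdot\}$, where $Q_r$ is the rank-one projection onto the $\upsilon$-vector of $B_0(r)^c$. The states $\tilde\psi_r$ then agree \emph{exactly} with $\upsilon$ outside $B_0(r)$, are pure, and converge to $\psi$ as $r\to\infty$. The unitary $U_r$ between $\tilde\psi_r$ and $\tilde\psi_{r+1}$ is simply the two-dimensional rotation aligning their GNS vectors, and the overlap bound from the $f$-closeness gives $\|\I-U_r\|\le 6 f(r)^{1/4}$ directly — no gentle-measurement step. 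There is no residual error to track: the discrepancy inside $B_0(r_0)$ is handled once, at the base case, by a single (possibly odd) local unitary $V\in\caA_{B_0(r_0)}$ with $\tilde\psi_{r_0}=\upsilon\circ\Ad_V$, which is also where the parity dichotomy is resolved. You should reorganise your argument along these lines: match exactly on the complement of a ball, march towards $\psi$ rather than towards $\upsilon$, and absorb the base-case discrepancy into one final local unitary.
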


We first establish several lemmas which will be combined into a proof of the proposition at the end of Section \ref{sec:proof of transitivity} below.

\subsubsection{GNS representation for homogeneous pure product states}

Let $\upsilon = \gotimes_{x \in \Z^2} \, \upsilon_x$ be a homogeneous pure product state on $\caA$. For each $x \in \Z^2$ let $\pi_x : \caA_x \rightarrow \caB(\caH_x)$ be the defining representation of $\caA_x$. Since we assumed that each $\caA_x$ is a CAR algebra on $n$ fermion modes we have $\caH_x \simeq \C^{2^n}$. Let $\upsilon_x \in \caH_x$ be a unit vector representing the state $\upsilon_x$ in $\pi_x$ and let $\Theta_x \in \caA_x$ be the unique unitary such that $\theta(A) = \Ad_{\Theta_x}(A)$ for all $A \in \caA_x$ and $\pi_x(\Theta_x) \Psi_x = \Psi_x$. Note that $\Theta_x$ is even and $\Theta_x^2 = \I$. We call a vector $\Psi \in \caH_x$ homogeneous if it is an eigenvector of $\pi_x(\Theta_x)$, equivalently if $\Psi$ represents a homogeneous state on $\caA_x$.

Let $\{ \Upsilon_x^{(i)} \}_{i = 1}^{2^n}$ be an orthonormal basis of $\caH_x$ consisting of homogeneous vectors, and such that $\Upsilon_x^{(1)} = \Upsilon_x$.

Let $\caK$ be the linear space of sequences $\{ \Psi_x \}_{x \in \Z^2}$ such that $\Psi_x \in \caH_x$ and such that $\Psi_x \neq \Upsilon_x$ for only finitely many $x \in \Z^2$. Then the inner product
\begin{equation}
    \langle \{ \Psi_x \}, \{ \Psi'_x \} \rangle = \prod_{x \in \Z^2} \, \langle \Psi_x, \Psi'_x \rangle
\end{equation}
is well-defined for all $\{ \Psi_x \}, \{ \Psi'_x \} \in \caK$ and extends to the whole of $\caK$ by linearity. We let $\caH$ be the Hilbert space obtained by closing $\caK$ w.r.t. the norm induced by this inner product. The set of sequences $\{ \Upsilon_x^{(i_x)} \}$ with only finitely many $i_x \neq 1$ form an orthonormal basis for $\caH$.

We write $\otimes_{x \in \Z^2} \, \Psi_x := \{ \Psi_x \}$ for any sequence in $\caH_{\upsilon}$. It is clear that for any $X \Subset \Z^2$, the Hilbert space $\caH$ decomposes as a tensor product $\caH = \caH_X \otimes \caH_{X^c}$ and if $X$ is finite then $\caH_X$ is identified with $\bigotimes_{x \in X} \, \caH_x$.

We now represent $\caA$ on $\caH$ using a Jordan-Wigner construction. Fix an enumeration $x : \N \rightarrow \Z^2$ of the sites And let $X_N = \{ x(n) \,:\,  1 \leq n \leq N \}$ for all $N \in \N$. Assume moreover that the enumeration is such that $B_0(r) = X_{N_r}$ for all $r \in \N$ and some $N_r \in \N$. For any finite $X \Subset \Z^2$ and any homogeneous operator $A \in \caA_{x(N)}$ we let
\begin{equation}
    \pi(A) = \begin{cases} \I_{\caH_{X_{N-1}}} \, \otimes \,  \pi_{x(N)}(A) \, \otimes \, \I_{\caH_{X_N}^c} \quad &\text{if} \, A \, \text{is even} \\
    \left( \bigotimes_{x \in X_{N-1}} \pi_{x}( \Theta_x ) \right)  \, \otimes \, \pi_{x(N)}(A) \, \otimes \, \I_{\caH_{X_N^c}} \quad &\text{if} \, A \, \text{is odd}. \end{cases}
\end{equation}
The `tail' of grading operators $\Theta_x$ ensures that odd operators supported on disjoint regions anticommute in the representation. Since the on-site homogeneous operators generate the whole algebra, this defines the representation uniquely.

To see that this is a GNS representation for $\upsilon$, let $A_{x_1} \cdots A_{x_l}$ be a monomial of homogeneous operators $A_{x_i} \in \caA_{x_i}$, writing $\Upsilon := \otimes_{x} \, \Upsilon_x$, and use the fact that the $\Upsilon_x$ are even to find
\begin{equation}
    \langle \Upsilon, \pi( A_{x_1} \cdots A_{x_l} ) \, \Upsilon \rangle = \upsilon(A_{x_1}) \cdots \upsilon(A_{x_l}) = \upsilon( A_{x_1} \cdots A_{x_l}  ).
\end{equation}
By linearity and continuity it follows that $\Upsilon(A) = \langle \Upsilon, \pi(A) \, \Upsilon \rangle$ for any $A \in \caA$. Finally, the vector $\Upsilon \in \caH$ is cyclic for the representation $\pi$ by construction.

\subsubsection{Construction of a unitary} \label{sec:proof of transitivity}

Let $\psi$ be a homogeneous pure state on $\caA$ such that the hypotheses of Proposition \ref{prop:f-close implies almost local unitary equivalence} hold w.r.t. the product state $\upsilon$. \ie the state $\psi$ is $f$-close to $\upsilon$ for some $f \in \caF$.

For any $X \Subset \Z^2$ let $P_X$ be the projection on the state $\Upsilon_X = \otimes_{x \in X} \Upsilon_x \in \caH_X$. Note that each $P_X$ is even. Let $r ' > r$ and let $P_{r r'} = P_{B_{0}(r') \setminus B_0(r)}$. Denote by $\rho_r$ the density matrix acting on $\caH_{r} := \caH_{B_0(r)}$ corresponding to the restriction of $\psi$ to $\caA_{B_0(r)}$. Note that for all $A\in\caA_{B_0(r)}$,
\begin{equation}
    \Tr(\rho_r \pi(A)) = \psi(A) = \psi\circ\theta(A) = \Tr(\Theta_r^* \rho_r\Theta_r \pi(A)),
\end{equation}
by cyclicity of the trace, where $\Theta_r$ is the restriction of $\Theta$ to the $\caH_r$. Since the density matrix is unique, we conclude that $\rho_r$ is even. 

\begin{lemma} \label{lem:rho agrees with phi far out}
    For any $r, r' \in \N$ such that $r' > r$ we have
    \begin{equation} \label{eq:crucial bound}
        \norm{\rho_{r'} ( \I \gotimes P^\perp_{r r'})}_1^2 \leq f(r)
    \end{equation}
    where we interpret all operators as operators on $\caH_{r'}$ by tensoring with the identity if needed.
\end{lemma}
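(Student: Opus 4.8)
The plan is to reduce the statement to a trace-norm bound on the reduced density matrix of $\psi$ on the annulus $\Gamma := B_0(r') \setminus B_0(r)$ --- where $f$-closeness applies verbatim --- and then to bootstrap that bound to the stated one via a Hilbert--Schmidt Cauchy--Schwarz estimate. Note first that, because the enumeration $x : \N \to \Z^2$ was chosen so that every ball $B_0(r)$ is an initial segment $X_{N_r}$, the annulus $\Gamma$ is a consecutive block $\{x(n) : N_r < n \le N_{r'}\}$ of the enumeration. Consequently, for any \emph{even} $A \in \caA_\Gamma^+$ the Jordan--Wigner tails in $\pi(A)$ cancel on the $\caH_{B_0(r)}$-factor, so $\pi(A) = \I_{\caH_r} \gotimes \hat\pi(A)$ for an isometric representation $\hat\pi$ of $\caA_\Gamma^+$ on $\caH_\Gamma$ whose image is the even part of $\caB(\caH_\Gamma)$; since $\psi$ and $\upsilon$ are homogeneous, it suffices to test against such even observables. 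Set $\sigma_\Gamma := \Tr_{\caH_r}(\rho_{r'})$, a density matrix on $\caH_\Gamma$ which is even (by the same computation that shows $\rho_{r'}$ is even, using $\Theta_{r'} = \Theta_r \gotimes \Theta_\Gamma$). By the definition of the partial trace, $\Tr_{\caH_\Gamma}(\sigma_\Gamma\, \hat\pi(A)) = \Tr_{\caH_{r'}}(\rho_{r'}\pi(A)) = \psi(A)$ for all $A \in \caA_\Gamma^+$, while $\Tr_{\caH_\Gamma}(P_{rr'}\,\hat\pi(A)) = \langle \Upsilon_\Gamma, \hat\pi(A)\Upsilon_\Gamma\rangle = \upsilon(A)$, since $P_{rr'}$ is the projection onto $\Upsilon_\Gamma = \gotimes_{x\in\Gamma}\Upsilon_x$ and $\upsilon$ is a product state.

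Next I would show $\norm{\sigma_\Gamma - P_{rr'}}_1 \le f(r)$. Since $\Gamma \subset B_0(r)^c$, every $A$ in the unit ball of $\caA_\Gamma^+$ lies in $\caA^{\loc}_{B(r)^c}$, so $f$-closeness of $\psi$ and $\upsilon$ gives $\abs{\Tr_{\caH_\Gamma}\big((\sigma_\Gamma - P_{rr'})\,\hat\pi(A)\big)} = \abs{\psi(A) - \upsilon(A)} \le f(r)$. Because $M := \sigma_\Gamma - P_{rr'}$ is even, its trace norm is already attained by testing against even operators: for any $Y \in \caB(\caH_\Gamma)$, setting $Y_\pm := \tfrac12(Y \pm \Theta_\Gamma Y \Theta_\Gamma)$ one has $\Tr(MY) = \Tr(MY_+)$ and $\norm{Y_+} \le \norm{Y}$, and the even $Y$ of norm $\le 1$ are precisely the $\hat\pi(A)$ with $A$ in the unit ball of $\caA_\Gamma^+$. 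Hence $\norm{\sigma_\Gamma - P_{rr'}}_1 = \sup_{\norm{Y}\le 1}\abs{\Tr(MY)} \le f(r)$.

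Finally I would pass from $\sigma_\Gamma$ to $\rho_{r'}$ by Hölder's inequality for Schatten norms: writing $\norm{\cdot}_2$ for the Hilbert--Schmidt norm, $\norm{\rho_{r'}(\I \gotimes P^\perp_{rr'})}_1 \le \norm{\rho_{r'}^{1/2}}_2\, \norm{\rho_{r'}^{1/2}(\I \gotimes P^\perp_{rr'})}_2$. Here $\norm{\rho_{r'}^{1/2}}_2^2 = \Tr(\rho_{r'}) = 1$, and $\norm{\rho_{r'}^{1/2}(\I \gotimes P^\perp_{rr'})}_2^2 = \Tr\big(\rho_{r'}(\I \gotimes P^\perp_{rr'})\big) = \Tr_{\caH_\Gamma}(\sigma_\Gamma P^\perp_{rr'}) = \Tr\big((\sigma_\Gamma - P_{rr'})P^\perp_{rr'}\big) \le \norm{\sigma_\Gamma - P_{rr'}}_1 \le f(r)$, using $(P^\perp_{rr'})^2 = P^\perp_{rr'}$, $P_{rr'}P^\perp_{rr'} = 0$, and $\norm{P^\perp_{rr'}} \le 1$. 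Multiplying the two estimates yields $\norm{\rho_{r'}(\I \gotimes P^\perp_{rr'})}_1^2 \le f(r)$, which is the claim.

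The only step that requires genuine care is the identification in the first paragraph: using the specific enumeration and the Jordan--Wigner construction to realize the partial trace $\Tr_{\caH_r}(\rho_{r'})$ as (a faithful representation of) the restriction of $\psi$ to $\caA_\Gamma^+$, and tracking the grading so that the trace-norm duality may be restricted to even observables. Everything downstream --- partial traces, the $L^1$--$L^\infty$ duality, and Hölder for Hilbert--Schmidt norms --- is standard.
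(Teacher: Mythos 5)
Your proof is correct and uses essentially the same approach as the paper: the Cauchy--Schwarz/H\"older step $\norm{\rho_{r'}(\I\gotimes P_{rr'}^\perp)}_1^2\le\Tr(\rho_{r'})\,\Tr(\rho_{r'}(\I\gotimes P_{rr'}^\perp))$ followed by an appeal to $f$-closeness on the even observable $P_{rr'}^\perp$, together with the observation that parity and the block structure of the enumeration let one ignore the Jordan--Wigner tail. The only difference is a detour: you first establish the stronger trace-norm bound $\norm{\sigma_\Gamma-P_{rr'}}_1\le f(r)$ via duality and parity symmetrization, whereas the paper skips this and directly evaluates $\Tr(\rho_{r'}P_{rr'}^\perp)=\psi(P_{rr'}^\perp)=(\psi-\upsilon)(P_{rr'}^\perp)\le f(r)$, using that $\upsilon(P_{rr'}^\perp)=0$ because $\upsilon$ restricts to the pure product vector $\Upsilon_\Gamma$ on the annulus.
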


\begin{proof}
    Since $P_{r r'}^\perp$ is even we have $\I \gotimes P_{r r'}^{\perp} = \I \otimes P_{r r'}^{\perp}$. Then using H\"older's inequality we get
\begin{align}
    \norm{\rho_{r'} P_{r r'}^{\perp}}^2_{1} &\leq \norm{\sqrt{ \rho_{r'} }}_2^2 \norm{\sqrt{\rho_{r'} P_{r r'}^{\perp}}}_2^2 = \Tr \lbrace \rho_{r'} \rbrace \times \Tr \lbrace \rho_{r'} P_{r r'}^{\perp} \rbrace \\
							&= \psi(P_{r r'}^{\perp}) = (\psi - \upsilon)(P_{r r'}^{\perp}) \leq f(r)
\end{align}
by the assumption that $\psi$ and $\upsilon$ are $f$-close.
\end{proof}

Now let $Q_r = P_{B_0(r)^c}$, which is an even rank 1 projection on $\caH_{B_r^c}$. We have
\begin{lemma} \label{lem:a Cauchy sequence of density matrices}
    The sequence $\{ \rho_r \gotimes Q_r \}_{r \in \N}$ is a Cauchy sequence in trace norm.
\end{lemma}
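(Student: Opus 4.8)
The plan is to bound $\norm{\rho_{r'}\gotimes Q_{r'} - \rho_r\gotimes Q_r}_1$ by a universal multiple of $\sqrt{f(r)}$, uniformly in $r'>r$. Since $f\in\caF$ is non‑increasing with $f(r)\to 0$, this immediately yields the Cauchy property: given $\varepsilon>0$, choose $R$ with $4\sqrt{f(R)}<\varepsilon$; then for $r\le r'$ with $r\ge R$ we get $\norm{\rho_{r'}\gotimes Q_{r'}-\rho_r\gotimes Q_r}_1\le 4\sqrt{f(r)}\le 4\sqrt{f(R)}<\varepsilon$.

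First I would unwind the graded tensor products. Decompose $\caH=\caH_r\otimes\caH_{rr'}\otimes\caH_{B_0(r')^c}$, where $\caH_{rr'}:=\caH_{B_0(r')\setminus B_0(r)}$. Every operator appearing below --- $\rho_r$, $\rho_{r'}$, and the rank‑one projections $Q_r$, $Q_{r'}$, $P_{rr'}$ onto the even reference vectors $\Upsilon$ --- is even (evenness of $\rho_r$ is already recorded in the text, and the others are projections onto even vectors), so each $\gotimes$ coincides with the ordinary tensor product. Since $\Upsilon_{B_0(r)^c}=\Upsilon_{rr'}\otimes\Upsilon_{B_0(r')^c}$ we have $Q_r=P_{rr'}\otimes Q_{r'}$, hence $\rho_r\gotimes Q_r=\rho_r\otimes P_{rr'}\otimes Q_{r'}$ while $\rho_{r'}\gotimes Q_{r'}=\rho_{r'}\otimes Q_{r'}$. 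The difference therefore factors as $(\rho_{r'}-\rho_r\otimes P_{rr'})\otimes Q_{r'}$, and, $Q_{r'}$ being a rank‑one projection, $\norm{\rho_{r'}\gotimes Q_{r'}-\rho_r\gotimes Q_r}_1=\norm{\rho_{r'}-\rho_r\otimes P_{rr'}}_1$. So it remains to show that across the annulus $\rho_{r'}$ is close to a product with the pure state $P_{rr'}$.

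Next I would set $P:=\I_{\caH_r}\otimes P_{rr'}$, the projection onto $\caH_r\otimes\C\Upsilon_{rr'}$, and invoke Lemma~\ref{lem:rho agrees with phi far out}, which gives $\norm{\rho_{r'}P^\perp}_1\le\sqrt{f(r)}$, hence also $\norm{P^\perp\rho_{r'}}_1\le\sqrt{f(r)}$ since $\rho_{r'}$ is self‑adjoint. Expanding $\rho_{r'}=P\rho_{r'}P+P\rho_{r'}P^\perp+P^\perp\rho_{r'}P+P^\perp\rho_{r'}P^\perp$ and bounding each of the last three summands in trace norm by $\sqrt{f(r)}$ (using $\norm{XY}_1\le\norm{X}\,\norm{Y}_1$) yields $\norm{\rho_{r'}-P\rho_{r'}P}_1\le 3\sqrt{f(r)}$. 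Now $P\rho_{r'}P$ is supported on $\caH_r\otimes\C\Upsilon_{rr'}$, so it equals $\sigma\otimes P_{rr'}$ for a uniquely determined positive operator $\sigma$ on $\caH_r$; and since $\rho_r=\Tr_{\caH_{rr'}}\rho_{r'}$, choosing an orthonormal basis of $\caH_{rr'}$ whose first vector is $\Upsilon_{rr'}$ shows $\rho_r-\sigma=\Tr_{\caH_{rr'}}(P^\perp\rho_{r'}P^\perp)\ge 0$ with $\Tr(\rho_r-\sigma)=\Tr(\rho_{r'}P^\perp)\le\norm{\rho_{r'}P^\perp}_1\le\sqrt{f(r)}$, hence $\norm{\rho_r-\sigma}_1\le\sqrt{f(r)}$. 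The triangle inequality then gives $\norm{\rho_{r'}-\rho_r\otimes P_{rr'}}_1\le\norm{\rho_{r'}-\sigma\otimes P_{rr'}}_1+\norm{(\sigma-\rho_r)\otimes P_{rr'}}_1\le 3\sqrt{f(r)}+\sqrt{f(r)}=4\sqrt{f(r)}$, which combined with the first step completes the estimate.

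I do not expect a serious obstacle here: the only delicate points are the bookkeeping of graded versus ordinary tensor products (all operators in sight are even, so the distinction disappears) and the identification $P\rho_{r'}P=\sigma\otimes P_{rr'}$ together with the partial‑trace comparison $\rho_r-\sigma\ge 0$, both routine once the decomposition of $\caH$ is fixed. All the substantive content is already packaged in Lemma~\ref{lem:rho agrees with phi far out}.
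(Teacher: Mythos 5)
Your proposal is correct and follows essentially the same route as the paper: both reduce to estimating $\norm{\rho_{r'}-\rho_r\otimes P_{rr'}}_1$ via $Q_r = P_{rr'}\otimes Q_{r'}$, and both arrive at the same four-term decomposition (three off-diagonal/corner terms plus the partial-trace correction $(\rho_r-\sigma)\otimes P_{rr'}$), each bounded by $\sqrt{f(r)}$ via Lemma~\ref{lem:rho agrees with phi far out} and H\"older. Your packaging is marginally cleaner in that it isolates the identity $\rho_r-\sigma = \Tr_{rr'}\{P^\perp\rho_{r'}P^\perp\}$ as an explicit step and records why the graded and ordinary tensor products agree here, but the underlying argument and the resulting bound $4\sqrt{f(r)}$ are identical.
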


\begin{proof}
    Note first that $Q_r = P_{r r'} \otimes Q_{r'}$ so
    \begin{equation} \label{eQ:step 1}
        \norm{\rho_{r'} \otimes Q_{r'} - \rho_r \otimes Q_r}_1 \leq \norm{\rho_{r'} - \rho_r \otimes P_{r r'}}_1.
    \end{equation}
    Let us denote by $\Tr_{r r'}$ the partial trace over $\caH_{r r'} = \caH_{B_0(r') \setminus B_0(r)}$. We bound the right-hand side of \eqref{eQ:step 1} by noting that
    \begin{align}
        \rho_r \otimes P_{r r'} &= \Tr_{r r'} \lbrace P_{r r'} \rho_{r'} P_{r r'} \rbrace \otimes P_{r r'} + \Tr_{r r'} \lbrace P_{r r'}^{\perp} \rho_{r'} P_{r r'}^{\perp} \rbrace \otimes P_{r r'} \\
                    &= P_{r r'} \rho_{r'} P_{r r'} + \Tr_{r r'} \lbrace P_{r r'}^{\perp} \rho_{r'} P_{r r'}^{\perp} \rbrace \otimes P_{r r'} \\
                    &= \rho_{r'} - P_{r r'} \rho_{r'} P_{r r'}^{\perp} - P_{r r'}^{\perp} \rho_{r'} P_{r r'} - P_{r r'}^{\perp} \rho_{r'} P_{r r'}^\perp \\
                    &\quad + \Tr_{r r'}\lbrace P_{r r'}^{\perp} \rho_{r'} P_{r r'}^{\perp} \rbrace \otimes P_{r r'}.
    \end{align}
    Here the last four terms are bounded in trace norm by $\sqrt{f(r)}$ using H\"older's inequality and  Lemma \ref{lem:rho agrees with phi far out} yielding
    \begin{equation}
        \norm{ \rho_{r'} \gotimes Q_{r'} - \rho_{r} \gotimes Q_r }_1 \leq 4 \sqrt{f(r)}.
    \end{equation}
    We conclude that the sequence $\{ \rho_{r} \gotimes Q_r \}_{r}$ is indeed Cauchy.
\end{proof}

Let us denote by $\rho \in \caB(\caH)$ the limit of the Cauchy sequence $\{ \rho_r \gotimes Q_r \}$. By construction $\rho$ represents the state $\psi$ which is homogeneous and pure, so $\rho$ is actually an even rank one projection $\rho = | \Psi \rangle \langle \Psi |$ for some homogeneous unit vector $| \Psi \rangle \in \caH$. The next step towards proving Proposition \ref{prop:f-close implies almost local unitary equivalence} is to construct a sequence of pure density matrices $\tilde \rho_r$ that converge to $\rho$, and such that $\tilde \rho_r$ agrees with product state $\Upsilon$ outside the ball $B_0(r)$.

\begin{lemma} \label{lem:sequence of pure sensity matrices}
    Let $r_0 \in \N$ be the smallest natural number such that $36 \sqrt{f(r)} < 1$ for all $r \geq r_0$. Then for all $r \geq r_0$ we have that the pure density matrices
    \begin{equation}
        \tilde \rho_r := \frac{(\I \otimes Q_r) \rho (\I \otimes Q_r)}{\Tr \lbrace (\I \otimes Q_r) \rho (\I \otimes Q_r) \rbrace }
    \end{equation}
    are well defined and 
    \begin{equation} \label{eq:crucial bound 2}
        \norm{ \rho - \tilde \rho_r }_1 \leq 9 \sqrt{ f(r) }.
    \end{equation}
\end{lemma}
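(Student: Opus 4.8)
The strategy is to reduce everything to one sharp estimate on the ``mass'' that the limit vector $\rho$ carries away from the reference product vector at spatial infinity. Recall from Lemma~\ref{lem:a Cauchy sequence of density matrices} and the discussion after it that $\rho$ is the trace-norm limit of $\rho_r\otimes Q_r$, that it is an even rank-one projection $\rho = |\Psi\rangle\langle\Psi|$ with $\Psi$ a homogeneous unit vector, and that letting $r\to\infty$ in the Cauchy estimate of that lemma gives $\norm{\rho - \rho_{r'}\otimes Q_{r'}}_1 \leq 4\sqrt{f(r')}$ for every $r'\in\N$. I write $E_r := \I\otimes Q_r$ and $E_r^\perp := \I - E_r = \I_{\caH_{B_0(r)}}\otimes Q_r^\perp$, and set $\delta_r := \Tr\{\rho\, E_r^\perp\} = \langle\Psi, E_r^\perp\Psi\rangle \geq 0$. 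The whole point of the proof is the \emph{sharp} bound $\delta_r \leq f(r)$; I stress that the naive route---combining $\Tr\{(\rho_r\otimes Q_r)E_r^\perp\} = 0$ with the Cauchy estimate---only yields $\delta_r \leq 4\sqrt{f(r)}$, one square root too weak (it would give $\norm{\rho-\tilde\rho_r}_1 = O(f(r)^{1/4})$).

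To obtain $\delta_r\leq f(r)$ I repeat the mechanism of Lemma~\ref{lem:rho agrees with phi far out}. Fix $r'>r$. Since $Q_r = P_{B_0(r)^c}$ is the rank-one projection onto $\Upsilon_{B_0(r)^c} = \Upsilon_{r r'}\otimes\Upsilon_{B_0(r')^c}$, it factorises as $Q_r = P_{r r'}\otimes Q_{r'}$, hence $Q_r^\perp = P_{r r'}^\perp\otimes\I + P_{r r'}\otimes Q_{r'}^\perp$. Substituting this into $\Tr\{(\rho_{r'}\otimes Q_{r'})E_r^\perp\}$, the term containing $Q_{r'}^\perp$ vanishes because $Q_{r'}Q_{r'}^\perp = 0$, and the remaining term equals $\Tr\{\rho_{r'}(\I\otimes P_{r r'}^\perp)\} = \psi(P_{r r'}^\perp) = (\psi-\upsilon)(P_{r r'}^\perp)\leq f(r)$, using that $P_{r r'}^\perp$ is a norm-one local observable supported in $B_0(r)^c$ with $\upsilon(P_{r r'}^\perp)=0$ and that $\psi$ is $f$-close to $\upsilon$. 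Since $\rho_{r'}\otimes Q_{r'}\to\rho$ in trace norm and $E_r^\perp$ is bounded, letting $r'\to\infty$ gives $\delta_r\leq f(r)$.

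Both assertions of the lemma then follow quickly. For $r\geq r_0$ the condition $36\sqrt{f(r)}<1$ forces $f(r)<1$, so $\Tr\{E_r\rho E_r\} = \Tr\{\rho E_r\} = 1-\delta_r \geq 1-f(r) > 0$; thus $\tilde\rho_r$ is well defined, it is a pure density matrix because $E_r\rho E_r = |E_r\Psi\rangle\langle E_r\Psi|$ has rank one, and since $Q_r$ has rank one it is of the form $\tilde\sigma_r\otimes Q_r$, i.e.\ it agrees with $\upsilon$ outside $B_0(r)$. For the trace-norm estimate I use the identity $\rho - \tilde\rho_r = \bigl(E_r^\perp\rho + E_r\rho E_r^\perp\bigr) - \tfrac{\delta_r}{1-\delta_r}E_r\rho E_r$ and bound the three pieces by $\sqrt{\delta_r}$, $\sqrt{\delta_r}$ and $\delta_r$ respectively, via H\"older's inequality: $\norm{E_r^\perp\rho}_1 \leq \norm{E_r^\perp\rho^{1/2}}_2\,\norm{\rho^{1/2}}_2 = (\Tr\{\rho E_r^\perp\})^{1/2} = \sqrt{\delta_r}$, likewise $\norm{E_r\rho E_r^\perp}_1 = \norm{E_r^\perp\rho E_r}_1 \leq \sqrt{\delta_r}$, and $\tfrac{\delta_r}{1-\delta_r}\norm{E_r\rho E_r}_1 = \tfrac{\delta_r}{1-\delta_r}\,(1-\delta_r) = \delta_r$. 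Since $\delta_r\leq f(r)\leq\sqrt{f(r)}$, these sum to at most $3\sqrt{f(r)} \leq 9\sqrt{f(r)}$, which is~\eqref{eq:crucial bound 2}.

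The only genuinely delicate step is the sharp bound $\delta_r\leq f(r)$. Everything hinges on exploiting the product factorisation $Q_r = P_{r r'}\otimes Q_{r'}$ together with the single-projection closeness estimate $\psi(P_{r r'}^\perp)\leq f(r)$ (which is just $f$-closeness applied to one norm-one observable) and a limiting argument, rather than the coarser trace-norm closeness $\norm{\rho-\rho_r\otimes Q_r}_1\leq 4\sqrt{f(r)}$; the latter is exactly one square root too lossy, and after that the manipulations in the last two paragraphs are entirely routine.
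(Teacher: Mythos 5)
Your proof is correct and follows essentially the same route as the paper. The paper also establishes the sharp estimate $\norm{\rho(\I\otimes Q_r^\perp)}_1 \leq \sqrt{f(r)}$, which is exactly your $\delta_r \leq f(r)$ since $\rho$ is a rank-one projection, and it does so by the very mechanism you describe — factoring $Q_r = P_{rr'}\otimes Q_{r'}$, invoking the single-observable estimate $\psi(P_{rr'}^\perp)\leq f(r)$ from $f$-closeness, and passing to the limit $r'\to\infty$ — so the caution about the ``naive'' Cauchy-only route, while a valid observation, is not a road the paper takes. The only genuine difference is cosmetic: your two-term splitting $\rho - E_r\rho E_r = E_r^\perp\rho + E_r\rho E_r^\perp$ is a little tighter than the paper's three-term expansion $E_r^\perp\rho E_r + E_r\rho E_r^\perp + E_r^\perp\rho E_r^\perp$ and yields the slightly better constant $3\sqrt{f(r)}$ (vs.\ the paper's $6\sqrt{f(r)}$), both comfortably within the quoted $9\sqrt{f(r)}$.
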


\begin{proof}
First of all, 
    \begin{equation}
        \rho_{r'} P_{r r'}^{\perp}Q_{r'} - \rho Q_r^{\perp} = (\rho_{r'} Q_{r'} - \rho) + (\rho - \rho_{r'} Q_{r'}) Q_r
    \end{equation}
converges to zero as $r' \rightarrow \infty$ in trace norm by Lemma \ref{lem:a Cauchy sequence of density matrices}. ~\eqref{eq:crucial bound} then implies that
    \begin{equation} \label{eq:crucial bound 3}
        \norm{\rho ( \I \otimes Q_r^{\perp})}_1 \leq \sqrt{f(r)}.
    \end{equation}
    From
    \begin{equation}
        1 = \norm{ \rho }_1 \leq \norm{\rho (\I \otimes Q_r) }_1 + \norm{ \rho (\I \otimes Q_r^\perp) }_1
    \end{equation}
    we then obtain
    \begin{equation}
        \norm{ \rho (\I \otimes Q_r) }_1 \geq 1 - \sqrt{f(r)}.
    \end{equation}
    Using that $\rho$ is a rank one projection it follows that
    \begin{equation}
        \Tr \lbrace (\I \otimes Q_r) \rho (\I \otimes \, Q_r) \rbrace = \norm{\rho (\I \otimes \, Q_r)}_2^2 \leq \norm{\rho (\I \otimes \, Q_r)}_1^2 \leq \big(1 - \sqrt{f(r)} \big)^2.
    \end{equation}
    For $r \geq r_0$ we have $36 \sqrt{f(r)} < 1$ so certainly $\Tr \lbrace (\I \otimes Q_r) \rho (\I \otimes \, Q_r) \rbrace > 1/2$ and therefore $\tilde \rho_r$ is well-defined. (Note that $\tilde \rho_r$ is the rank one projection onto the span of $(\I \otimes \, Q_r) | \Psi \rangle$).

    To show the bound \eqref{eq:crucial bound 2} we note that~\eqref{eq:crucial bound 3} implies $\norm{\rho - Q_r \rho Q_r}_1 \leq \norm{Q_r^{\perp} \rho Q_r}_1 + \norm{Q_r \rho Q_r^\perp}_1 + \norm{Q_r^{\perp} \rho Q_r^{\perp}}_1 \leq 3 \sqrt{f(r)}$. Combining this with $\abs{ 1 - \Tr \lbrace Q_r \rho Q_r \rbrace} \leq \norm{\rho - Q_r \rho Q_r}_1$ and $\Tr \lbrace Q_r \rho Q_r \rbrace > 1/2$ for all $r \geq r_0$ we find
    \begin{equation}
        \norm{ \rho - \tilde \rho_r }_1 = \norm{\rho - Q_r \rho Q_r}_1 + \frac{1}{ \Tr \lbrace Q_r \rho Q_r \rbrace} \norm{ \big( 1 - \Tr \lbrace Q_r \rho Q_r \rbrace \big) \, Q_r \rho Q_r}_1 \leq 9 \sqrt{f(r)}
    \end{equation}
    as required.
\end{proof}

Let us denote by $\tilde \psi_r$ the pure states corresponding to the density matrices $\tilde \rho_r$. \ie $\tilde \psi_r(A) = \Tr \lbrace \tilde \rho_r \pi(A) \rbrace$ for all $A \in \caA$. Furthermore, (\ref{eq:crucial bound 2}) implies that the density matrices $\tilde \rho_r$ are all even, for $r$ large enough.
Indeed, vectors with different parity are orthogonal to each other, so the trace norm of the difference of their density matrices equals 2, but the bound is $9 \sqrt{f(r)} < 2$ for all $r$ large enough.

\begin{lemma} \label{lem:unitary steps}
    For all $r \geq r_0$ there is an even unitary $U_r \in \caA_{B_0(r+1)}$ such that $\tilde \psi_{r+1} = \tilde \psi_{r} \circ \Ad_{U_r}$ and such that
    \begin{equation} \label{eq:unitary step is small}
        \norm{\I - U_r} \leq 6 f(r)^{1/4}.
    \end{equation}
\end{lemma}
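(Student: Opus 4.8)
The plan is to exploit the explicit product form of the density matrices $\tilde\rho_r$ supplied by Lemma~\ref{lem:sequence of pure sensity matrices}, rotate $\tilde\rho_{r+1}$ onto $\tilde\rho_r$ inside the finite‑dimensional space $\caH_{B_0(r+1)}$, and transport that rotation back to $\caA_{B_0(r+1)}$ through the Jordan-Wigner representation $\pi$.

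By Lemma~\ref{lem:sequence of pure sensity matrices}, $\tilde\rho_r$ is the rank one projection onto the ray of $(\I\otimes Q_r)|\Psi\rangle$, where $\rho=|\Psi\rangle\langle\Psi|$ and $Q_r=P_{B_0(r)^c}$ is the rank one projection onto the product vector $\Upsilon_{B_0(r)^c}\in\caH_{B_0(r)^c}$. Hence $(\I\otimes Q_r)|\Psi\rangle$ factorises as $|\hat\xi_r\rangle\otimes\Upsilon_{B_0(r)^c}$ for a unit vector $\hat\xi_r\in\caH_{B_0(r)}$, so $\tilde\rho_r=|\hat\xi_r\rangle\langle\hat\xi_r|\otimes Q_r$. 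Writing $B_0(r)^c=(B_0(r+1)\setminus B_0(r))\sqcup B_0(r+1)^c$ and absorbing the on-site vectors on the annulus into the first factor yields
\begin{equation*}
\tilde\rho_r=|\hat\zeta_r\rangle\langle\hat\zeta_r|\otimes Q_{r+1},\qquad \tilde\rho_{r+1}=|\hat\zeta_{r+1}\rangle\langle\hat\zeta_{r+1}|\otimes Q_{r+1},
\end{equation*}
for unit vectors $\hat\zeta_r,\hat\zeta_{r+1}\in\caH_{B_0(r+1)}$. Since the $\tilde\rho_r$ are even for $r\ge r_0$ (as noted just before the lemma) and the $\Upsilon_x$ are even, the vectors $\hat\zeta_r,\hat\zeta_{r+1}$ are homogeneous; and because $\|\tilde\rho_r-\tilde\rho_{r+1}\|_1<2$ (by \eqref{eq:crucial bound 2}, the triangle inequality, and $f$ non-increasing) they have the same parity, i.e.\ both lie in a common parity eigenspace $\caH^\epsilon$ of the parity operator on $\caH_{B_0(r+1)}$.

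I would then construct the unitary as follows. After multiplying $\hat\zeta_r$ by a phase we may assume $c:=\langle\hat\zeta_r,\hat\zeta_{r+1}\rangle\ge 0$. Let $V$ be the unitary of $\caH_{B_0(r+1)}$ equal to the identity on $\caH^{-\epsilon}$ and on the orthogonal complement inside $\caH^\epsilon$ of $\mathrm{span}\{\hat\zeta_r,\hat\zeta_{r+1}\}$, and equal on that span to the plane rotation sending $\hat\zeta_{r+1}$ to $\hat\zeta_r$. Then $V$ commutes with the parity operator, $V\hat\zeta_{r+1}=\hat\zeta_r$, and a one-line computation in the plane gives $\|\I-V\|=\sqrt{2(1-c)}=\|\hat\zeta_r-\hat\zeta_{r+1}\|$. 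Because $B_0(r+1)=X_{N_{r+1}}$ is an initial segment of the chosen enumeration, no Jordan-Wigner tail escapes it and $\pi$ restricts to an isometric $*$-isomorphism of $\caA_{B_0(r+1)}$ onto $\caB(\caH_{B_0(r+1)})\otimes\I$, carrying the even subalgebra $\caA^+_{B_0(r+1)}$ onto the operators block diagonal in the parity grading. Let $U_r\in\caA_{B_0(r+1)}$ be the even unitary with $\pi(U_r)=V\otimes\I$. Since $\tilde\rho_r,\tilde\rho_{r+1}$ are products over $B_0(r+1)$ and its complement, $\pi(U_r)^*\tilde\rho_r\pi(U_r)=(V^*|\hat\zeta_r\rangle\langle\hat\zeta_r|V)\otimes Q_{r+1}=|\hat\zeta_{r+1}\rangle\langle\hat\zeta_{r+1}|\otimes Q_{r+1}=\tilde\rho_{r+1}$, which is precisely $\tilde\psi_{r+1}=\tilde\psi_r\circ\Ad_{U_r}$.

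It remains to bound $\|\I-U_r\|=\|\I-V\|$. As $\tilde\rho_r,\tilde\rho_{r+1}$ are (tensored by $Q_{r+1}$) rank one projections, $\|\tilde\rho_r-\tilde\rho_{r+1}\|_1=2\sqrt{1-c^2}$; combining with the triangle inequality and \eqref{eq:crucial bound 2} gives $2\sqrt{1-c^2}\le 18\sqrt{f(r)}$, hence $1-c\le 1-c^2\le 81f(r)$ and
\begin{equation*}
\|\I-U_r\|^2=\|\hat\zeta_r-\hat\zeta_{r+1}\|^2=2(1-c)\le 162\,f(r).
\end{equation*}
For $r\ge r_0$ we have $\sqrt{f(r)}<1/36$, so $f(r)^{1/4}<1/6$ and therefore $\|\I-U_r\|\le 9\sqrt2\,\sqrt{f(r)}=9\sqrt2\,\big(f(r)^{1/4}\big)^2<\tfrac{3\sqrt2}{2}\,f(r)^{1/4}\le 6\,f(r)^{1/4}$, which is \eqref{eq:unitary step is small}. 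The one place that needs genuine care is the parity bookkeeping: making sure the rotation relating $\hat\zeta_{r+1}$ to $\hat\zeta_r$ can be taken block diagonal in the grading, so that $U_r$ comes out even. This is exactly where the established evenness of the $\tilde\rho_r$ and their mutual closeness are used; everything else is routine.
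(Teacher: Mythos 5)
Your proof is correct and follows essentially the same route as the paper: choose homogeneous vector representatives of $\tilde\rho_r$ and $\tilde\rho_{r+1}$, implement the plane rotation between them by an even unitary in $\caA_{B_0(r+1)}$ via the Jordan--Wigner representation, and control the rotation angle through the trace distance of the density matrices. The one noteworthy refinement is in the last step: whereas the paper bounds $1-\cos^2\theta_r$ by $\|\tilde\rho_{r+1}-\tilde\rho_r\|_1\le 18\sqrt{f(r)}$ via H\"older, you use the exact rank-one identity $\|\tilde\rho_{r+1}-\tilde\rho_r\|_1=2\sqrt{1-c^2}$, which yields the sharper $\|\I-U_r\|\le 9\sqrt{2}\,\sqrt{f(r)}$, subsequently loosened to $6f(r)^{1/4}$ only to match the stated form.
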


\begin{proof}
    From ~\eqref{eq:crucial bound 2} we have
    \begin{equation}
        \norm{\tilde \rho_{r+1} - \tilde \rho_r }_1 \leq \norm{\tilde \rho_{r+1} - \rho }_1 + \norm{\rho - \tilde \rho_r}_1 \leq 18 \sqrt{f(r)}.
    \end{equation}
    for all $r \geq r_0$. We can choose normalised vector representatives $|\Psi_r \rangle$ of $\tilde \rho_r$, all with the same parity and such that $a_r := \langle \Psi_r, \Psi_{r+1} \rangle > 0$ are real and positive. Note that once $\Psi_{r_0}$ is chosen, this uniquely fixes all the other $\Psi_r$. Then
    \begin{equation}
        \cos(\theta_r)^2 := \abs{ \langle \Psi_r, \Psi_{r+1} \rangle }^2 = \Tr \lbrace \tilde \rho_{r+1} \tilde \rho_r \rbrace = 1 + \Tr \lbrace (\tilde \rho_{r+1} - \tilde \rho_r) \tilde \rho_r \rbrace
    \end{equation}
    hence
    \begin{equation} \label{eq:bound on a_r}
        \cos(\theta_r)^2 \geq 1 - 18 \sqrt{f(r)}.
    \end{equation}
    We can decompose $\Psi_{r+1}$ as
    \begin{equation}
        \Psi_{r+1} = \cos(\theta_r)\Psi_r + \sin(\theta_r)\Psi_r^{\perp}
    \end{equation}
    where $\Psi_r^{\perp}$ is the normalized component of $\Psi_r$ that is perpendicular to $\Psi_{r+1}$.

By construction the states corresponding to the $\tilde \rho_r$ agree with the product state $\upsilon$ outside of $B_0(r)$, so we have $\Psi_r = \Psi'_r \otimes \Upsilon_{B_0(r)^c}$ for all $r \geq r_0$. Here $\Psi'_r \in \caH_{r}$ is a homogeneous vector and $\Upsilon_{B_0(r)^c} = \bigotimes _{x \in B_0(r)^c} \, \Upsilon_x \in \caH_{B_0(r)^c}$ is the restriction of $\Upsilon$ to the complement of $B_0(r)$. Let $V_r$ be the unitary which acts on the subspace spanned by $\Psi_r$ and $\Psi_{r+1}$ as the $\theta_r$-rotation which rotates $\Psi_{r+1}$ into $\Psi_{r}$, and acts as the identity on the complement of this subspace. Then $V_r$ is even and $V_r = \pi(U_r)$ for a unique even unitary $U_r \in \caA_{B_0(r+1)}$. Since $\pi(U_r) \Psi_{r+1} = \Psi_{r}$ we have $\tilde \psi_{r+1} = \tilde \psi_{r} \circ \Ad_{U_r}$ as required. With~\eqref{eq:bound on a_r} we find that
    \begin{equation}
        \norm{\I - U_r}^2 = \norm{\I - V_r}^2 \leq \theta_r^2 \leq \frac{  18  \sqrt{f(r)} }{1 - 18 \sqrt{f}} \leq 36 \sqrt{f(r)}
    \end{equation}
    which establishes the bound \eqref{eq:unitary step is small}.
\end{proof}

Let us finally combine the unitary steps provided by Lemma \ref{lem:unitary steps} into a single unitary, providing a proof of the main proposition.

\begin{proof} [Proof of Proposition\ref{prop:f-close implies almost local unitary equivalence}]
    Let $g_r := 6 f(r)^{1/4}$ so the unitaries $U_r \in \caA_{B_0(r+1)}$ provided by Lemma \ref{lem:unitary steps} are even and satisfy $\norm{U_r - \I} \leq g_r$. Since $f \in \caF$ the sequence $\{g_r\}_{r \geq r_0}$ is summable. For $R \geq r_0$, define $U^{(R)} := \prod_{r = r_0}^R U_r$, which is even. Then for $R' > R \geq r_0$ we have
    \begin{equation}
        \norm{ U^{(R')} - U^{(R)} } 
        \leq \sum_{r = R+1}^{R'} \, \norm{U^{(r)} - U^{(r-1)}}
        =  \sum_{r = R+1}^{R'} \, g_r.
    \end{equation}
    Since $\{g_r\}_{r \geq r_0}$ is summable, this shows that $\{ U^{(R)} \}_{R \geq r_0}$ is Cauchy and converges to a unitary $U^{(\infty)}$ which satisfies
    \begin{equation}
        \tilde \psi_{r_0} \circ \Ad_{U^{(\infty)}} = \lim_{R \uparrow \infty} \,\, \tilde \psi_{r_0} \circ \Ad_{U^{(R)}} = \lim_{R \uparrow \infty} \,\, \tilde \psi_{R} = \psi.
    \end{equation}
    Note moreover that $U^{(\infty)}$ is almost local because $\norm{U^{(\infty)} - U^{(R)}} \leq G(R)$ with $G(R) = \sum_{r = R}^{\infty} g_r$, which defines a rapidly decaying function in $\caF$. It is also even as the norm limit of even operators. 

    Finally, since $\tilde \psi_{r_0}$ and $\upsilon$ are both homogeneous and differ only in the region $B_{0}(r_0)$, there is a homogeneous unitary $V \in \caA_{B_0(r_0)}$ such that $\tilde \psi_{r_0} = \upsilon \circ \Ad_V$. It can be chosen even provided $\tilde \psi_{r_0},\upsilon$, and therefore $\psi,\upsilon$, have the same parity. It follows that the claim of the proposition holds true with $U = V U^{(\infty)}$.
\end{proof}

\subsection{A continuity result for parity of states}

\begin{lemma} \label{lem:continuity of parity}
    Let $a<b$ and let $[a,b] \ni \phi \mapsto \omega_{\phi}$ be a weakly-* continuous family of homogeneous pure states that are all unitarily equivalent to a homogeneous pure product state $\omega_0$. Suppose the unitaries $V_\phi$ are homogeneous and $f$-localised near the origin, where $f$ is independent of $\phi$. Then all $\omega_{\phi}$ have the same parity and so do all the $V_{\phi}$.
\end{lemma}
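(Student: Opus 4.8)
The plan is to argue that the map $\phi \mapsto (\text{parity of } V_\phi)$ is locally constant on the connected interval $[a,b]$, hence constant. First I would fix $\phi_0 \in [a,b]$ and recall that, since $\omega_{\phi_0}$ is pure and homogeneous and unitarily equivalent to the pure homogeneous product state $\omega_0$, the parity of $V_{\phi_0}$ is well-defined: if $V_{\phi_0}$ were not homogeneous then $\omega_0 \circ \Ad[V_{\phi_0}] = \omega_{\phi_0}$ together with $\omega_0 \circ \theta = \omega_0$ and $\theta \circ \Ad[V_{\phi_0}] \circ \theta = \Ad[\theta(V_{\phi_0})]$ would force $\omega_{\phi_0}$ to be fixed by $\Ad[\theta(V_{\phi_0}) V_{\phi_0}^*]$, and purity plus the hypothesis that $V_{\phi_0}$ is homogeneous (given in the statement) pins down the parity as $\pm 1$ up to the identification $\theta(V_{\phi_0}) = \pm V_{\phi_0}$.

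The key quantitative input is that the $V_\phi$ are all $f$-localised near the origin for a \emph{single} $f \in \caF$. I would pick $r$ large enough that $f(r) < 1$ and choose for each $\phi$ an approximant $V_{\phi,r} \in \caA_{B_0(r)}$ with $\norm{V_\phi - V_{\phi,r}} \leq f(r)\norm{V_\phi} = f(r)$. Because $\Ad[V_\phi]$ transports $\omega_0$ to $\omega_\phi$, I would compare the restrictions of $\omega_\phi$ to $\caA_{B_0(r)}$: writing $\Theta$ for the even unitary implementing $\theta$ on $\caA_{B_0(r)}$, the parity of $V_\phi$ is detected by whether $\omega_0(V_{\phi}^* \Theta V_\phi \Theta^* A)$ for $A \in \caA_{B_0(r)}$ agrees with $\pm\omega_0(A)$; more concretely, the sign is $\omega_0\big(V_\phi^* \theta(V_\phi)\big) \in \{+1,-1\}$ after normalisation, an expression that depends continuously on $\phi$ through $\omega_\phi$ and the fixed localisation data. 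The point is that this $\{\pm 1\}$-valued quantity is continuous in $\phi$: weak-$*$ continuity of $\phi \mapsto \omega_\phi$ controls $\omega_\phi$ on the finite-dimensional algebra $\caA_{B_0(r)}$, and the uniform localisation bound controls the error made by truncating $V_\phi$ to that region. A continuous function on $[a,b]$ taking values in the discrete set $\{+1,-1\}$ is constant, so all $V_\phi$ have the same parity, and consequently all $\omega_\phi = \omega_0 \circ \Ad[V_\phi]$ have the same parity as well.

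The main obstacle I anticipate is making the continuity argument airtight: one must show that the scalar $\omega_\phi$ assigns to the (truncated) parity witness really does vary continuously, which requires carefully separating the $\phi$-dependence of the state from the $\phi$-dependence of $V_\phi$ and using that the truncation error is $\phi$-uniform. A clean way to do this is: for $\phi, \phi'$ close, $\abs{\omega_\phi(A) - \omega_{\phi'}(A)} \leq \abs{\omega_0(\Ad[V_{\phi,r}^*](A)) - \omega_0(\Ad[V_{\phi',r}^*](A))} + 4 f(r)\norm{A}$ for $A \in \caA_{B_0(r)}$, where the first term is small by weak-$*$ continuity on the finite-dimensional algebra $\caA_{B_0(r+ \mathrm{diam})}$ once $r$ is fixed, and then let $r \to \infty$. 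Since the parity is an exact $\pm 1$ and the oscillation can be made smaller than $2$, it cannot jump. I would also need the small remark that the choice of vector representatives and of $\Theta$ only affects an overall sign consistently across $\phi$, which follows from connectedness of $[a,b]$ and continuity of the GNS data for $\omega_0$ (which is $\phi$-independent).
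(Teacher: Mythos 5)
Your overall strategy is the same as the paper's: identify the parity of $V_\phi$ with the scalar $\lambda_\phi = \omega_0(V_\phi^*\theta(V_\phi)) \in \{\pm 1\}$, argue it varies continuously in $\phi$ using uniform almost-locality plus weak-$*$ continuity, and conclude constancy from discreteness and connectedness of $[a,b]$. However, the displayed inequality in your final paragraph does not deliver the required continuity of $\phi \mapsto \lambda_\phi$. It bounds $\abs{\omega_\phi(A) - \omega_{\phi'}(A)}$ --- a quantity that is already controlled directly by the weak-$*$ continuity hypothesis --- by a term involving the truncations $V_{\phi,r}$. But $\omega_0(\Ad[V_{\phi,r}^*](A))$ is a function of $V_{\phi,r}$, not of $\omega_\phi$, and the truncated unitaries are neither canonically chosen nor known to depend continuously on $\phi$; so "the first term is small by weak-$*$ continuity" does not follow, and the inequality is circular. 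In short, you never show that the $\pm 1$-valued quantity $\lambda_\phi$ is close to anything that is manifestly continuous in $\phi$.

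The paper closes this gap by truncating the \emph{parity automorphism} rather than the intertwining unitary. It introduces the local parity unitary $\Theta_L \in \caA_{B(L)}$ implementing the restriction $\theta_L$ of $\theta$ to $\caA_{B(L)}$ and fixing the GNS vector $\Omega$, shows $\Theta_L \to \Theta$ weakly, and then establishes the estimate
\[
\abs{\lambda_\phi - \lambda_{\phi'}} \;\leq\; \abs{\omega_\phi(\Theta_L) - \omega_{\phi'}(\Theta_L)} + c f(L)
\]
uniformly in $\phi, \phi'$. The point is that $\Theta_L$ is a fixed, $\phi$-independent local observable, so $\phi \mapsto \omega_\phi(\Theta_L)$ really is continuous by the weak-$*$ hypothesis, while the error $c f(L)$ comes from the uniform $f$-localisation of the $V_\phi$. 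Taking $L$ large and then $\phi' \to \phi$ forces $\lambda_\phi = \lambda_{\phi'}$ locally, hence globally. Your instinct to use truncation and uniform locality is right; the one ingredient to add is to route the parity through a fixed local observable evaluated against $\omega_\phi$, so that weak-$*$ continuity can actually be applied.
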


\begin{proof}
    Let $(\Pi, \caH, \Omega)$ be the GNS representation of $\omega_0$. Since $\omega_0$ is homogeneous, there is a unique unitary $\Theta \in \caB(\caH)$ such that
    \begin{equation}
        \Theta \Omega = \Omega, \quad \Pi( \theta(A) ) = \Theta^* \, \Pi(A) \, \Theta, \,\,\, \forall A \in \caA.
    \end{equation}
    Since $\Omega$ is cyclic and $\theta^2 = \id$, the first equality above implies that $\Theta^2 = \I$. Moreover, the purity of $\omega_0$ implies that $\caB(\caH)$ is the weak closure of $\Pi(\caA)$ and so the adjoint action of $\Theta$ extends the action of $\theta$ to all of $\caB(\caH)$.
    
    Let $\theta_L$ be the restriction of $\theta$ to the algebra $\caA_{B(L)}$. Since $\omega_0$ is a product state, it is invariant under $\theta_L$ and so there is a unitary $\Theta_L$ implementing that partial parity such that $\Theta_L \Omega = \Omega$. For any $A,B\in\caA_{\loc}$, there is $L$ large enough such that $\theta_L(A) = \theta(A)$ and so
    \begin{equation}
        \langle \Pi(A)\Omega,  \Theta_L \Pi(B)\Omega\rangle
        = \langle \Theta_L ^*\Pi(A) \Theta_L \Omega,  \Pi(B)\Omega\rangle
        = \langle \Pi(\theta(A)) \Omega, \Pi(B)\Omega\rangle
        =\langle \Pi(A)\Omega,  \Theta \Pi(B)\Omega\rangle.
    \end{equation}
    It now follows by density that $\Theta_L$ converges to $\Theta$ in the weak operator topology. 

    By assumption, the vectors $\Omega_{\phi} = \Pi(V_{\phi}) \Omega$ are unit vector representatives of the states $\omega_{\phi}$ in the GNS space $\caH$. Since all $\omega_{\phi}$ are homogeneous,
    \begin{equation}
        \Theta \Omega_{\phi} = \lambda_{\phi} \Omega_{\phi},\qquad \lambda_{\phi} \in \{+1, -1\}.
    \end{equation}
    We claim that $\phi \mapsto \lambda_{\phi}$ is constant. For any $\phi,\phi'$, 
    \begin{align}
        \lambda_{\phi} - \lambda_{\phi'}
        &= \langle \Omega_{\phi}, \Theta \, \Omega_{\phi} \rangle - \langle \Omega_{\phi'}, \Theta \, \Omega_{\phi'} \rangle
        = \langle \Omega, \Pi(V_{\phi})^* \Theta \Pi(V_{\phi}) \, \Omega \rangle -  \langle \Omega, \Pi(V_{\phi'})^* \Theta \Pi(V_{\phi'}) \, \Omega \rangle\\
        &= \langle \Omega, \Pi(\theta(V_{\phi}))^*  \Pi(V_{\phi})  \Omega \rangle -  \langle \Omega, \Pi(\theta(V_{\phi'}))^*   \Pi(U_{\phi'}) \Omega \rangle
    \end{align}
    since $\Theta \Omega = \Omega$. By the uniform almost locality of the family $V_{\phi}$, we can replace $\theta$ by its restriction $\theta_L$, run the sequence of equalities in reverse order with $\Theta\to\Theta_L$ and conclude that 
    \begin{equation}
        \abs{\lambda_{\phi} - \lambda_{\phi'}}
        \leq \abs{ \omega_{\phi}(\Theta_L) - \omega_{\phi'}(\Theta_L) } + c f(L).
    \end{equation}
    Taking $L$ large enough such that $f(L) < 1$ and choosing then $\phi$ and $\phi'$ close enough such that $\abs{ \omega_{\phi}(\Theta_L) - \omega_{\phi'}(\Theta_L) } < 1$ (which is possible by weak-* continuity) yields $\abs{\lambda_{\phi} - \lambda_{\phi'}} < 1$. Since $\lambda_{\phi}$ and $\lambda_{\phi'}$ take values in $\{+1, -1\}$, it follows that $\lambda_{\phi} = \lambda_{\phi'}$ if $\phi$ and $\phi'$ are close enough, which yields the claim.
\end{proof}

\section{Proof that translation invariant \texorpdfstring{$\AII$}{AII} states are stably SRE} \label{app:free SRE}

For any Hilbert space $V$ we denote by $\Proj_r(V)$ the space of rank $r$ orthogonal projections in $\caB(V)$. Recall that any continuous map $\hat \scrP : \T \rightarrow \Proj_r(V)$ yields a rank $r$ complex vector bundle $E_{\hat \scrP} = \{ (k, v) \in \T \times V \, : \, v \in \Ran \hat \scrP(k)  \}$ over $\T$ which is a subbundle of the trivial bundle $\T \times V$. We define the Chern number $\Ch(\hat \scrP)$ of the map $\hat \scrP$ to be the Chern number of this vector bundle.

\begin{lemma} \label{lem:homotopy of trivial bundle}
    Let $0 \leq r \leq n$ be integers and let $\hat \scrP : \T \rightarrow \Proj_{r}(\C^n)$ be a smooth map. If $\Ch(\hat \scrP) = 0$ and $n \geq 7r$ then there is a smooth homotopy $\hat \scrP : [0, 1] \times \T \rightarrow \Proj_r(\C^n) : (t, k) \mapsto \hat \scrP_t(k)$ such that $\hat  \scrP_0 = \hat \scrP$ and $\hat \scrP_1$ is constant.
\end{lemma}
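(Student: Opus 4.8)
The plan is to view $\hat\scrP$ as the classifying map of the bundle $E_{\hat\scrP}\to\T$ and to deform it to a constant once its Chern number is zero. One identifies $\Proj_r(\C^n)$ with the complex Grassmannian of $r$-planes in $\C^n$, so that $\hat\scrP$ is, up to homotopy, the classifying map of $E_{\hat\scrP}$. Since $\T=\R^2/\Z^2$ is a two-dimensional CW complex, complex vector bundles over it are classified by obstruction theory, the only relevant invariant being the first Chern class in $H^2(\T;\Z)\cong\Z$; equivalently, any bundle of rank $r\ge1$ over $\T$ splits as the direct sum of a line bundle $L$ with $c_1(L)=c_1$ and a trivial bundle of rank $r-1$. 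Hence the hypothesis $\Ch(\hat\scrP)=0$ forces $E_{\hat\scrP}$ to be topologically — and therefore smoothly — trivial. So the first step is to produce a smooth orthonormal frame, \ie a smooth map $V:\T\to\mathrm{St}_r(\C^n)$ into the complex Stiefel manifold of orthonormal $r$-frames (regarded as the $n\times r$ matrices $V$ with $V^*V=\I$) such that $\hat\scrP(k)=V(k)V(k)^*$ for all $k\in\T$; this is obtained by smoothing a continuous trivialization of $E_{\hat\scrP}$.

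The second step is to contract $V$. Here one uses that $\mathrm{St}_r(\C^n)\cong U(n)/U(n-r)$ is $2(n-r)$-connected, so $\pi_i(\mathrm{St}_r(\C^n))=0$ for all $i\le 2(n-r)$. The hypothesis $n\ge 7r$ is far more than enough to ensure $2(n-r)\ge 2$ (in fact only $n\ge r+1$ is genuinely needed at this point), so $V:\T\to\mathrm{St}_r(\C^n)$, being a map out of a two-dimensional complex, is null-homotopic, and one upgrades the continuous null-homotopy to a smooth one $[0,1]\times\T\ni(t,k)\mapsto V_t(k)$ with $V_0=V$ and $V_1\equiv V_*$ a constant frame. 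Setting $\hat\scrP_t(k):=V_t(k)V_t(k)^*$ then yields a smooth homotopy inside $\Proj_r(\C^n)$ from $\hat\scrP$ to the constant projection $V_*V_*^*$, which is what is wanted.

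I expect the main obstacle to be the smooth bookkeeping rather than anything topological: one must justify that the topologically trivial smooth bundle $E_{\hat\scrP}$ admits a \emph{smooth} global frame, and that the continuous null-homotopy supplied by the connectivity of the Stiefel manifold can be replaced by a smooth one — both via standard smooth-approximation arguments, but worth stating carefully. As a variant that sidesteps introducing the frame explicitly, one can argue cell by cell on $\T$: simple connectedness of $\Proj_r(\C^n)$ lets one first homotope $\hat\scrP$ to be constant on the $1$-skeleton $S^1\vee S^1$, after which the restriction to the top cell represents a class in $\pi_2(\Proj_r(\C^n))\cong\Z$ equal, up to sign, to $\Ch(\hat\scrP)$; vanishing of the Chern number makes this class trivial, hence $\hat\scrP$ homotopic to a constant. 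In this route $n\ge 7r$ is again used only to put $\Proj_r(\C^n)$ in the stable range where $\pi_1=0$ and $\pi_2\cong\Z$ is detected by $c_1$.
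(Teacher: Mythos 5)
Your proposal is correct and takes a genuinely different route from the paper's. The paper deduces the conclusion from the representability theorem for $\mathrm{Vect}^r$ by the finite Grassmannian (Bott--Tu, Theorem 23.10): if $M$ has a good open cover by $p$ sets, then $[M,\mathrm{Gr}_r(\C^n)]\to\mathrm{Vect}^r(M)$ is a bijection for $n\geq pr$; combined with the fact that $\T$ admits a good cover by $7$ sets this yields the condition $n\geq 7r$, and then a smooth-approximation theorem upgrades the continuous homotopy to a smooth one. You instead first use $\Ch(\hat\scrP)=0$ to trivialise $E_{\hat\scrP}$ and pick a smooth orthonormal frame $V:\T\to\mathrm{St}_r(\C^n)$, and then contract $V$ using the $2(n-r)$-connectivity of the Stiefel manifold and the fact that $\T$ is a $2$-complex; the induced homotopy $\hat\scrP_t=V_tV_t^*$ does the job. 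This buys a cleaner argument and a strictly weaker numerical hypothesis ($n\geq r+1$ suffices), at the cost of one extra lift (from $\mathrm{Gr}_r$ to $\mathrm{St}_r$) and the remark that the smooth trivialisation and smooth null-homotopy exist, both of which you flag correctly as standard smooth-approximation facts. Your cell-by-cell variant (kill the map on the $1$-skeleton using $\pi_1(\mathrm{Gr}_r(\C^n))=0$, then observe the top-cell class in $\pi_2\cong\Z$ is exactly $\Ch(\hat\scrP)$) is likewise valid under $0<r<n$ and, again, does not use the full strength of $n\geq 7r$; it is essentially the obstruction-theoretic content of the Bott--Tu theorem unwound by hand in this low-dimensional case.
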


\begin{proof}
    There is a one-to-one correspondence between elements of $\Proj_r(\C^n)$ and elements of the Grassmannian $\mathrm{Gr}_r(\C^n)$, which consists of the $r$-dimensional subspaces of $\C^n$. So the map $\hat \scrP$ can equivalently be regarded as a smooth map from $\T$ to $\mathrm{Gr}_r(\C^n)$, and we shall use this identification throughout the proof.

    Since isomorphism classes of complex vector bundles over the torus are completely classified by the Chern number, our assumption $\Ch(\hat\scrP) = 0$ implies that the bundle $E_{\hat \scrP}$ is isomorphic to a trivial bundle. It now follows from \cite[Theorem 23.10]{bott1982differential}, and noting that the torus has a good open cover of 7 sets (\cite[Theorem 5.3]{karoubi2017covering}), that there is a homotopy $\hat \scrP : [0, 1] \times \T \rightarrow \mathrm{Gr}_r(\C^n)$ such that $\hat \scrP_0 = \hat P$ and $\hat \scrP_1$ is constant. By~\cite[Theorem~10.22]{LeeSmoothManifolds} we can moreover take this homotopy to be smooth.
\end{proof}

Any smooth matrix-valued map $\hat \scrA: \T \rightarrow \caB(\C^n)$ determines a translation invariant operator $\scrA$ on $l^2(\Z^2 ; \C^n)$ by inverse Fourier transform through
\begin{equation}
    (\scrA f)(x) = \frac{1}{4\pi^2}\int_{\bbT} \,  \dd k \, \mathrm{e}^{\iu k\cdot x}\hat \scrA(k) \hat f(k)
\end{equation}
where $\hat f(k) = \sum_{x\in\bbZ^2} \mathrm{e}^{-\iu k\cdot x}f(x)\in\bbC^n$ for any $f\in l^2(\Z^2 ; \C^n) $. The smoothness of $\hat \scrA$ implies moreover that $\scrA$ has super-polynomial off-diagonal decay in the sense that there is a $g_A \in \caF$ such that
\begin{equation}\label{SuperPolyODD}
    \vert \langle \delta_x \otimes u, \scrA \, \delta_y \otimes v \rangle \vert \leq g_A( \dist(x, y))
\end{equation}
where $\{\delta_x\}_{x \in \Z^2}$ is the position basis of $l^2(\Z^2)$ and $u, v \in \C^n$ are arbitrary. A fortiori, a smooth projection valued map $\hat \scrP : \T \rightarrow \Proj_r(\C^n)$ determines a translation invariant projection $\scrP$ on~$l^2(\Z^2 ; \C^n)$.
\begin{lemma} \label{lem:a smooth homotopy of bundles determines a local Kato generator}
    Let 
    \begin{equation}
        \hat \scrP : [0, 1] \times \T \rightarrow \Proj_{r}(\C^n) : (t, k) \mapsto \hat \scrP_t(k)
    \end{equation}
    be a smooth homotopy and denote by $\scrP_t$ the corresponding projections on $l^2(\Z^2 ; \C^n)$. Then there is a one-parameter norm-continuous family of unitaries $[0, 1] \mapsto \scrV_t$ on $l^2(\Z^2; \C^n)$ and a family of bounded self-adjoint operators $[0, 1] \mapsto \scrG_t$ on $l^2(\Z^2 ; \C^n)$ such that
    \begin{equation}\label{SchrodingerEq}
        \iu \frac{\dd \scrV_t}{\dd t} = \scrG_t \scrV_t, \qquad \scrV_0 = \I, \qquad \text{and} \quad \scrP_t = \scrV_t \scrP_0 \scrV_t^*.
    \end{equation}
    The generators $\scrG_t$ have super-polynomial off-diagonal decay in the sense of~(\ref{SuperPolyODD}), where $g\in\caF$ is independent of $t$.
\end{lemma}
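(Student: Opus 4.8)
The plan is to run Kato's parallel transport construction at the level of Bloch symbols and then transfer everything to $l^2(\Z^2;\C^n)$ by inverse Fourier transform. First I would set
\begin{equation}
    \hat\scrG_t(k) := \iu\,[\,\partial_t\hat\scrP_t(k),\,\hat\scrP_t(k)\,],
\end{equation}
a smooth $\caB(\C^n)$-valued function of $(t,k)\in[0,1]\times\T$ which is self-adjoint for each $(t,k)$ since $\partial_t\hat\scrP_t(k)$ and $\hat\scrP_t(k)$ are. Differentiating $\hat\scrP_t(k)^2 = \hat\scrP_t(k)$ gives $\hat\scrP_t\,\partial_t\hat\scrP_t\,\hat\scrP_t = 0$ and $\hat\scrP_t^\perp\,\partial_t\hat\scrP_t\,\hat\scrP_t^\perp = 0$, from which the identity $[\hat\scrG_t(k),\hat\scrP_t(k)] = \iu\,\partial_t\hat\scrP_t(k)$ follows by exactly the computation used after~\eqref{eq:Kato generator of U(1) transformations}.

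Next I would solve, for each fixed $k\in\T$, the linear equation $\iu\,\partial_t\hat\scrV_t(k) = \hat\scrG_t(k)\,\hat\scrV_t(k)$ with $\hat\scrV_0(k) = \I$. Self-adjointness of $\hat\scrG_t(k)$ makes $\hat\scrV_t(k)$ unitary, and since the equation is linear with coefficients smooth in $(t,k)$, smooth dependence of solutions on parameters gives that $(t,k)\mapsto\hat\scrV_t(k)$ is smooth. A direct computation using the commutator identity above shows $\partial_t\big(\hat\scrV_t(k)^*\hat\scrP_t(k)\hat\scrV_t(k)\big) = 0$, hence $\hat\scrV_t(k)^*\hat\scrP_t(k)\hat\scrV_t(k) = \hat\scrP_0(k)$, i.e.\ $\hat\scrP_t(k) = \hat\scrV_t(k)\hat\scrP_0(k)\hat\scrV_t(k)^*$.

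To conclude I would pass to operators: inverse Fourier transform is an isometric $*$-isomorphism from $C(\T,\caB(\C^n))$ onto the algebra of translation invariant bounded operators on $l^2(\Z^2;\C^n)$. Defining $\scrG_t$ and $\scrV_t$ as the operators with symbols $\hat\scrG_t$ and $\hat\scrV_t$, one reads off immediately that $\scrG_t$ is bounded (by $\sup_k\norm{\hat\scrG_t(k)}<\infty$, finite by continuity on the compact torus) and self-adjoint, that $\scrV_t$ is unitary with $\scrV_0 = \I$, and that $\scrP_t = \scrV_t\scrP_0\scrV_t^*$ by applying the $*$-isomorphism to the fiberwise intertwining identity. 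Joint smoothness of $\hat\scrV_t(k)$ together with compactness of $\T$ show that $t\mapsto\hat\scrV_t$ is a $C^1$ curve in $C(\T,\caB(\C^n))$ with sup-norm derivative $-\iu\,\hat\scrG_t\hat\scrV_t$, so after applying the (bounded, linear) inverse Fourier transform, $t\mapsto\scrV_t$ is continuously differentiable in operator norm and satisfies $\iu\,\dd\scrV_t/\dd t = \scrG_t\scrV_t$, which is~\eqref{SchrodingerEq}.

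For the off-diagonal decay of $\scrG_t$ I would argue exactly as in the remark preceding the lemma: the matrix element $\langle\delta_x\otimes u,\scrG_t\,\delta_y\otimes v\rangle$ is the $(x-y)$-th Fourier coefficient of $k\mapsto\langle u,\hat\scrG_t(k)\,v\rangle$, and repeated integration by parts bounds $\norm{x-y}_1^N$ times it by $\norm{u}\,\norm{v}\,\sup_{t,k}\norm{\partial_k^\alpha\hat\scrG_t(k)}$ with $|\alpha|=N$, which is finite by smoothness of $\hat\scrG$ on $[0,1]\times\T$ and, crucially, independent of $t$; combining these bounds over all $N$ produces a single $g\in\caF$ valid for every $t$. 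The argument is essentially bookkeeping: the one genuinely structural input is Kato's intertwining identity, whose proof is the short algebraic manipulation indicated above, and the only point that needs real care is that all estimates — norm-continuity of $t\mapsto\scrV_t$, the bound on $\scrG_t$, and the off-diagonal decay — be uniform in $t$, which in each case reduces to joint smoothness of $\hat\scrP$ together with compactness of the parameter interval $[0,1]$.
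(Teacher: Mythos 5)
Your proposal is correct and follows exactly the paper's strategy: define the fiberwise Kato generator $\hat\scrG_t(k) = \iu[\partial_t\hat\scrP_t(k),\hat\scrP_t(k)]$, solve the resulting Schr\"odinger equation to obtain $\hat\scrV_t(k)$ intertwining the projections, and then pass back to $l^2(\bbZ^2;\bbC^n)$ by inverse Fourier transform, with smoothness in $(t,k)$ and compactness of $[0,1]\times\bbT$ giving all bounds uniform in $t$. You fill in a few details the paper leaves implicit (the intertwining computation, uniformity of the off-diagonal bounds, and $C^1$ dependence of $\scrV_t$ in norm), but there is no difference in route.
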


\begin{proof}
    Let
    \begin{equation}
        \hat \scrG_t(k) = \iu \left[\frac{\partial}{\partial t}\hat \scrP_t(k), \hat \scrP_t(k)\right]
    \end{equation}
    be the well-known self-adjoint Kato generator~\cite{KatoAdiabatic} and let $\hat \scrV_t(k)$ be the unique family of unitaries that solve
    \begin{equation}
        \iu\frac{\partial}{\partial t} \hat \scrV_t(k) = \hat \scrG_t(k) \hat \scrV_t(k),\qquad \hat \scrV_0(k) = \I.
    \end{equation}
    Then
    \begin{equation}
        \hat \scrP_t(k) = \hat \scrV_t(k)  \hat \scrP_0(k) \hat \scrV_t(k)^*.
    \end{equation}
    for all $t \in [0, 1]$ and all $k \in \T$. Since the family of projections is smooth, so is the map $(t, k) \mapsto \hat \scrG_t(k)$. All the claims of the lemma now follow by inverse Fourier transform.
    \end{proof}

Finally we show that the data provided by the previous lemma yields an equivalence of symmetric quasi-free states.
\begin{lemma} \label{lem:path of Fermi projections yields path of equivalent states}
Let $[0, 1] \ni t \mapsto \scrG_t$ be a norm-continuous family of self-adjoint operators on $l^2(\Z^2; \C^n)$ such that~(\ref{SuperPolyODD}) holds uniformly in $t$. Let $\scrV_t, \scrP_t$ be the corresponding one-parameter family of unitaries given by~(\ref{SchrodingerEq}). Let $\caA$ be the CAR algebra over $l^2(\Z^2; \C^n)$ and let $\omega_t = \omega_{\scrP_t}$ be the gauge-invariant quasi-free state on $\caA$ corresponding to the projection $\scrP_t$, see~(\ref{QuasiFreeState}). Then there is a $U(1)$ symmetry preserving TDI $F$ such that $\omega_t = \omega_0 \circ \al_t^F$ for all $t \in [0, 1]$.
\end{lemma}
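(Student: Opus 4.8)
The plan is to second-quantise the single-particle flow. Define the $0$-chain $F(t)$ by
\[
    F_x(t) := \frac{1}{2} \sum_{y \in \Z^2} \sum_{i, j = 1}^n \Big( \scrG_t(x, i ; y, j) \, a_{x, i}^* a_{y, j} + \overline{\scrG_t(x, i; y, j)} \, a_{y, j}^* a_{x, i} \Big) , \qquad x \in \Z^2,\ t \in [0,1].
\]
First I would check that $F$ is a $U(1)$ symmetry preserving TDI. Each $F_x(t)$ is self-adjoint by construction, and it is even because it is a norm-convergent sum (the series converges absolutely since $\norm{a^*_{x,i} a_{y,j}} \le 1$) of the even operators $a^*a$; the same observation --- $\rho_\phi$ multiplies $a_{x,i}$ by $\ed^{-\iu\phi}$ and $a_{x,i}^*$ by $\ed^{\iu\phi}$ --- gives $\rho_\phi\big( F_x(t) \big) = F_x(t)$ for all $\phi \in \R$. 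The uniform off-diagonal decay hypothesis~(\ref{SuperPolyODD}) yields $\sup_{t,x}\norm{F_x(t)} \le n^2 \sup_x \sum_{y} g(\dist(x,y)) < \infty$, and it shows that each $F_x(t)$ is $f$-localised near $x$ for a single $f \in \caF$ depending only on $g$: truncating the $y$-sum to $B_x(r)$ yields an element of $\caA_{B_x(r)}$ at norm-distance at most $n^2\sum_{\dist(x,y)>r} g(\dist(x,y))$ from $F_x(t)$. Norm-continuity of $t \mapsto F_x(t)$ follows from that of $t \mapsto \scrG_t$ together with the same tail bound (dominated convergence). The same estimates show moreover that $F$ inherits the spatial localisation of $\scrG$, which is the feature used when this lemma is applied to the Kato generator of flux insertion.

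The heart of the argument is to identify $\al_t^F$ with a quasi-free automorphism. Using the canonical anticommutation relations and the self-adjointness of $\scrG_t$, one computes
\[
    [F(t), a(f)] = - a(\scrG_t f) \qquad \text{in } \caA
\]
for all $f \in l^2(\Z^2;\C^n)$, the series converging in norm by~(\ref{SuperPolyODD}). Inserting this into the defining equation~(\ref{LGA DE}) shows that, for $f$ with super-polynomial decay (in particular for $f = \delta_x\otimes e_i$ and its iterates under the $\scrG_u$), the operators $c_t(f) := \al_t^F(a(f))$ obey the Volterra equation $c_t(f) = a(f) - \iu\int_0^t c_u(\scrG_u f)\,\dd u$. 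On the other hand, differentiating $a(\scrV_t^* f)$ and using that $\scrV_t$ solves~(\ref{SchrodingerEq}) shows that $t\mapsto a(\scrV_t^* f)$ solves the very same equation with the same initial value. A Gronwall estimate (using $\sup_{t\in[0,1]}\norm{\scrG_t}<\infty$ and $\norm{a(f)}=\norm{f}$) gives uniqueness, so $\al_t^F(a(f)) = a(\scrV_t^* f)$ for all $f$, and likewise $\al_t^F(a^*(f)) = a^*(\scrV_t^* f)$. Since the $a_{x,i}, a_{x,i}^*$ generate $\caA$ and $\al_t^F$ is an automorphism, we conclude that $\al_t^F = \varrho_{\scrV_t^*}$, where for a unitary $W$ we write $\varrho_W$ for the Bogoliubov automorphism of $\caA$ determined by $\varrho_W(a(f)) = a(W f)$.

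To conclude, I note that~(\ref{QuasiFreeState}) together with the multilinearity of the determinant gives $\omega_\scrP \circ \varrho_W = \omega_{W^*\scrP W}$ for any orthogonal projection $\scrP$ and any unitary $W$. Taking $W = \scrV_t^*$ and using $\scrP_t = \scrV_t\scrP_0\scrV_t^*$ from~(\ref{SchrodingerEq}),
\[
    \omega_0\circ\al_t^F = \omega_{\scrP_0}\circ\varrho_{\scrV_t^*} = \omega_{\scrV_t\scrP_0\scrV_t^*} = \omega_{\scrP_t} = \omega_t
\]
for all $t \in [0,1]$, which is exactly the claim.

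I expect the main obstacle to be the rigorous version of the second paragraph: passing from the formal picture ``$\al_t^F$ is the Heisenberg flow of the quadratic Hamiltonian $\sum_x F_x(t)$'' to the honest identity $\al_t^F(a(f)) = a(\scrV_t^* f)$. This entails controlling the infinite series defining $[F(t),a(f)]$ in the $C^*$-norm, checking that the Volterra equation for the $c_t(f)$ genuinely sits within the framework of~(\ref{LGA DE}) (so that its solution is indeed $\al_t^F(a(f))$), running the uniqueness argument, and --- a purely bookkeeping matter --- pinning down the direction and conjugation $\scrV_t$ versus $\scrV_t^*$ consistently with the conventions for $a(f)$ and $\rho_\phi$ used throughout the paper.
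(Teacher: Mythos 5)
Your proof is correct and follows essentially the same route as the paper: define the same quadratic $0$-chain $F$ from the matrix elements of $\scrG_t$, verify it is a $U(1)$-preserving TDI from the uniform super-polynomial decay, compute $[F(t),a(f)]=-a(\scrG_t f)$, match the resulting Volterra equation for $\al_t^F(a(f))$ with that for $a(\scrV_t^*f)$, and conclude by comparing two-point functions using quasi-freeness. Your write-up is somewhat more detailed than the paper's (which simply asserts the commutator computation and the resulting identity $\al_t^F(a(f))=a(\scrV_t^*f)$), but no new idea is involved.
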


\begin{proof}
    Let $\{ \delta_x \}_{x \in \Z^2}$ be the position basis of $l^2(\Z^2)$ and let $\{e_i \}_{i = 1}^n$ be an orthonormal basis of~$\C^n$. Let us write
    \begin{equation}
        \scrG_t(x, i ;y, j) := \langle \delta_x \otimes e_i, \, \scrG_t \, \delta_y \otimes e_j \rangle
    \end{equation}
    for the matrix elements of $\scrG_t$ with respect to the orthonormal basis $\{ \delta_x \otimes e_i \}$. We take
    \begin{equation}
        F_x(t) = \frac{1}{2}\sum_{y\in\Z^2} \sum_{i, j = 1}^n \, \left( \scrG_t(x, i; y, j) a^*_{x, i} a_{y, j} + \overline{ \scrG_{t}(x, i; y, j) } a^*_{y, j} a_{x, i} \right)
    \end{equation}
    for all $x \in \Z^2$. These are homogeneous self-adjoint elements of $\caA$, and they define a TDI by the super-polynomial decay of the $\scrG_t(x, i ;y, j)$. A short calculation yields
    \begin{equation}
        \sum_{x\in\Z^2} \iu \left[ F_x(t), a(f)\right] = a(\iu \scrG_t f)
    \end{equation}
    which implies that
    \begin{equation}
        \alpha^F_t(a(f)) = a(\scrV_t^*f).
    \end{equation}
    for any $f \in l^2(\Z^2; \C^n)$.
    
    With this, we check that
    \begin{equation}
        (\omega_0 \circ\alpha^F_t) \big(  a^*(f)a(g)  \big) = \omega_0 \big(  a^*(\scrV_t^*f)a(\scrV_t^* g)  \big) =  \left\langle \scrV_t^* g, \scrP_0 \, \scrV_t^* f  \right\rangle = \left\langle g,  \scrP_t \, f\right\rangle = \omega_t \big(  a^*(g) a(f)  \big).
    \end{equation}
    Since the $\omega_t$ are quasi-free we conclude that $\omega_t = \omega_0 \circ \al_t^F$, as required.
\end{proof}

We are now equipped to prove Proposition~\ref{prop:free SRE}. We briefly recall the setting described in Section \ref{sec:non interacting examples}. The single particle Hilbert space is $\caK_m = l^2(\Z^2 ; \C^{2m})$ equipped with a fermionic time-reversal $\scrT$. We consider quasi-free states $\omega_{\scrP}$ corresponding to projections $\scrP$ that are exponentially local \eqref{eq:locality of gapped Fermi projections}, time-reversal invariant ($\scrT \scrP \scrT^* = \scrP$), and translation invariant. It follows that the Fourier transform of $\scrP$ is a smooth projection-valued map $\hat \scrP : \T \rightarrow \Proj_{2r}(\C^{2m})$ with $\Ch(\hat \scrP) = 0$ for some $0 \leq r \leq m$ (recall that the rank of a time-reversal invariant bundle is always even and has vanishing Chern number).

\begin{proof}[Proof of Proposition \ref{prop:free SRE}] \label{proof:free SRE}
    If $2m < 14r$ then in order to apply Lemma~\ref{lem:homotopy of trivial bundle}, we expand the `ambient dimension' from $2m$ to $14r$ by considering the algebra $\caA'$ over $l^2(\Z^2; \C^{2m})\oplus l^2(\Z^2; \C^{14r - 2m})$ and the symmetric state $\omega' = \omega_{\scrP} \gotimes \omega_{\mathrm{empty}}$ corresponding to the projection $\scrQ = \scrP \oplus 0$ and consequently $\hat \scrQ = \hat \scrP \oplus 0$. If $2m > 14r$ we simply take $\scrQ = \scrP$. The projection-valued map $\hat \scrQ = \hat \scrP \oplus 0$ has $\Ch(\hat \scrQ) = 0$ and the ambient dimension is large enough so that Lemma \ref{lem:homotopy of trivial bundle} provides a smooth homotopy $(t, k) \mapsto \hat \scrQ_t(k)$ interpolating between $\hat \scrQ_0 = \hat \scrQ$ and a constant projection $\scrQ_1$.

    Let $\scrQ_t$ be the inverse Fourier transform of $\hat \scrQ_t$. We then apply Lemma \ref{lem:a smooth homotopy of bundles determines a local Kato generator} to obtain unitaries $\scrV_t$ and generators $\scrG_t$ satisfying the assumption of Lemma \ref{lem:path of Fermi projections yields path of equivalent states}. The latter lemma now provides a $U(1)$-invariant TDI $F$ such that $\omega_t' = \omega_0 \circ \al_t^F$ where $\omega_t' = \omega'_{\scrQ_t}$ is the quasi-free state on $\caA'$ determined by the projection $\scrQ_t$. This family of states interpolates between $\omega_{\scrP} \gotimes \omega_{\mathrm{empty}}$ and $\omega_{\scrQ_1}$. Since $\hat \scrQ_1$ is constant, its Fourier transform $\scrQ_1$ is strictly local and $\omega_1 = \omega_{\scrQ_1}$ is a product state. We conclude that $\omega_{\scrP} \gotimes \omega_{\mathrm{empty}}$ is LGA equivalent to a product state. Since $\omega_{\mathrm{empty}}$ is also a product state, we conclude that $\omega_{\scrP}$ is stably SRE, see Section~\ref{sec:states}.
\end{proof}

\section{Flux insertion for free fermions} \label{app:flux insertion for free fermions}

We take up the setting of Section \ref{sec:free fermion examples} and prove the various bounds on free fermion operators on $\scrK_{m}$ used in the arguments presented there. We prove in particular the crucial Proposition~\ref{prop:spectral - qa is trace class} relating the quasi-adiabatic flow and the spectral flow whenever the gap does not close. Recall that we fix an orthonormal basis $\{e_i\}_{i = 1, \cdots, 2m}$ of $\C^{2m}$ so that $\{ \delta_x \otimes e_i\}_{x, i}$ is an orthonormal basis of $\caK_m$. For any operator $\scrA$ on $\caK_m$ we denote by
\begin{equation}
    \scrA(x, i ; y, j) := \langle \delta_x \otimes e_i, \scrA \, \delta_y \otimes e_j \rangle
\end{equation}
its matrix elements with respect to this basis.

Throughout this appendix we will use the following notations; $L = \{  x \in \Z^2 \, : \, x_1 \leq 0 \}$ is the left half plane and $R = \Z^2 \setminus L$ is the right half plane. We let $l_L = (0, -(1, 0))$ be the left horizontal axis and $l_R = (0, (1, 0))$ the right horizontal axis, regarded as a subsets of $\R^2$, and $l = l_L \cup l_R$ for the horizontal axis. We denote by $h = h_{l_L}$ the upper half plane. For any subset $S \subset \Z^2$ we write $\Pi_S$ for the projection on S, \ie $\Pi_S | \delta_x \otimes i \rangle = \delta_{x \in S} | \delta_x \otimes i \rangle$. For any $S \subset \R^2$ and any $x \in \Z^2$ we write $\dist(x, S) = \inf_{y \in S} \norm{x - y}_1$ for the distance from $x$ to the set $S$.

We consider a projector $\scrP$ that is exponentially local, \ie for all $x, y \in \Z^2$ and all $i, j$ we have
\begin{equation} \label{eq:locality of Fermi projection}
    \abs{\scrP(x, i; y, j)} \leq C \ed^{-\eta \norm{x - y}_1}
\end{equation}
for some constants $C, \eta > 0$, and such that $\scrT \scrP \scrT^* = \scrP$. We let $\scrH = \I - \scrP$ be the Hamiltonian with gap $\Delta = (0, 1)$ and Fermi projector $\scrP = \chi_{(-\infty, \mu]}(\scrH)$ for any $\mu \in \Delta$. We denote by $\scrH_{\phi}$ the flux Hamiltonians defined in \eqref{eq:free fermion flux Hamiltonians}. We have $\abs{\scrH_{\phi}(x, i; y, j)} \leq C \ed^{-\eta \norm{x - y}_1}$ for all $\phi$ and all $x, y, i$ and $j$. Recall also that for any $\mu \in \Delta$ we denote by $\scrP^{(\mu)}_{\phi} = \chi_{(-\infty, \mu]}(\scrH_{\phi})$ the Fermi projections of $\scrH_{\phi}$ for Fermi energy $\mu$.

Corresponding to the projectors $\scrP^h_{\phi} = \ed^{\iu \phi \Pi_h} \, \scrP \, \ed^{-\iu \phi \Pi_h}$, where $h$ is the upper half-plane, we have Hamiltonians $\scrH_{\phi}^h = \I - \scrP^h_{\phi} = \ed^{\iu \phi \Pi_h} \,  \scrH_{\phi}^h \, \ed^{-\iu \phi \Pi_h}$ all with spectral gap $\Delta$.

\subsection{Basic locality estimates}

\begin{lemma} \label{lem:basic locality estimates}
   We have
    \begin{align}
        \abs{\scrP_{\phi}^h(x, i ; y, j)} &\leq C \ed^{-\eta \norm{x-y}_1} \label{eq:basic bound 1} \\
        \abs{\partial_{\phi} \scrP_{\phi}^h(x, i; y, j)  } &\leq c \ed^{-\xi ( \abs{x_1 - y_1}  + \abs{x_2} + \abs{y_2} )}  \label{eq:basic bound 2} \\
        \abs{\scrK_{\phi}^h(x, i ; y, j)} &\leq c \ed^{-\xi ( \abs{x_1 - y_1}  + \abs{x_2} + \abs{y_2} )} \label{eq:basic bound 3}
    \end{align}
    for some constants $c, \xi > 0$ and for all $\phi \in \R$, all $x = (x_1, x_2), y = (y_1, y_2) \in \Z^2$ and all $i, j \in \{1,\ldots 2m\}$. 
\end{lemma}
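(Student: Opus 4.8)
The plan is to prove the three bounds in order, each one building on the previous. For \eqref{eq:basic bound 1}, I would note that $\scrP^h_\phi = \ed^{\iu\phi\Pi_h}\scrP\ed^{-\iu\phi\Pi_h}$, so its matrix elements are $\scrP^h_\phi(x,i;y,j) = \ed^{\iu\phi(\chi_h(x)-\chi_h(y))}\scrP(x,i;y,j)$. Since the prefactor is a unimodular phase, \eqref{eq:basic bound 1} follows immediately from the assumed exponential locality \eqref{eq:locality of Fermi projection} of $\scrP$, with the same constants $C,\eta$.

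For \eqref{eq:basic bound 2}, I would compute $\partial_\phi\scrP^h_\phi = \iu[\Pi_h,\scrP^h_\phi]$. Writing this out in matrix elements gives $\partial_\phi\scrP^h_\phi(x,i;y,j) = \iu(\chi_h(x)-\chi_h(y))\scrP^h_\phi(x,i;y,j)$. The key observation is that the factor $\chi_h(x)-\chi_h(y)$ vanishes unless exactly one of $x,y$ lies in the upper half-plane $h$ and the other in its complement; in that case $x$ and $y$ are on opposite sides of the horizontal axis, so $\abs{x_2}+\abs{y_2} \le \abs{x_2-y_2} \le \norm{x-y}_1$. Combined with $\abs{x_1-y_1}\le\norm{x-y}_1$, this yields $\abs{x_1-y_1}+\abs{x_2}+\abs{y_2} \le 2\norm{x-y}_1$ on the support of the characteristic-function difference, so $\eta\norm{x-y}_1 \ge \tfrac{\eta}{2}(\abs{x_1-y_1}+\abs{x_2}+\abs{y_2})$ there. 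Using \eqref{eq:basic bound 1} then gives \eqref{eq:basic bound 2} with $\xi = \eta/2$ (up to adjusting the constant $c$).

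For \eqref{eq:basic bound 3}, recall from \eqref{eq:Kato generator of U(1) transformations} that $\scrK^h_\phi = -\iu[\partial_\phi\scrP^h_\phi,\scrP^h_\phi]$. I would expand the commutator in matrix elements as a sum over an intermediate site $z$: $\scrK^h_\phi(x,i;y,j) = -\iu\sum_{z,k}\big(\partial_\phi\scrP^h_\phi(x,i;z,k)\,\scrP^h_\phi(z,k;y,j) - \scrP^h_\phi(x,i;z,k)\,\partial_\phi\scrP^h_\phi(z,k;y,j)\big)$, bound each summand using \eqref{eq:basic bound 1} and \eqref{eq:basic bound 2}, and carry out the convolution. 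The main technical point — and the step I expect to require the most care — is the convolution estimate: in the first term one has a factor decaying like $\ed^{-\xi(\abs{x_1-z_1}+\abs{x_2}+\abs{z_2})}$ times one decaying like $\ed^{-\eta\norm{z-y}_1}$, and I need to extract from the sum over $z$ a clean bound of the form $c'\ed^{-\xi'(\abs{x_1-y_1}+\abs{x_2}+\abs{y_2})}$. This works because the first factor already controls $\abs{x_2}$ and the $z$-dependence, while $\abs{x_1-y_1} \le \abs{x_1-z_1}+\abs{z_1-y_1}$ and $\abs{y_2}\le\abs{z_2}+\abs{z_2-y_2}$ let one pass the decay to the $x$–$y$ separation at the cost of a slightly smaller rate; the remaining sum over $z$ of $\ed^{-c''\norm{z-y}_1}$ (or $\ed^{-c''\norm{x-z}_1}$) converges to a constant independent of $x,y,\phi$. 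The second term is handled symmetrically. Shrinking $\xi$ once more and enlarging $c$ absorbs all the lost rates, yielding \eqref{eq:basic bound 3} with uniform constants.
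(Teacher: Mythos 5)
Your proposal is correct and follows essentially the same route as the paper: conjugation by a phase for \eqref{eq:basic bound 1}, the observation that $[\Pi_h,\scrP^h_\phi]$ is supported on pairs straddling the boundary of $h$ (where $\norm{x-y}_1 = \abs{x_1-y_1}+\abs{x_2}+\abs{y_2}$) for \eqref{eq:basic bound 2}, and a convolution estimate for \eqref{eq:basic bound 3}. The paper leaves the last step implicit, whereas you spell out the triangle-inequality bookkeeping, and in \eqref{eq:basic bound 2} you use the slightly wasteful bound $\abs{x_1-y_1}+\abs{x_2}+\abs{y_2}\le 2\norm{x-y}_1$ rather than the equality the paper exploits, but both are valid.
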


\begin{proof}
    The first claim follows immediately from Eq \eqref{eq:locality of Fermi projection} and the definition $\scrP_{\phi}^h = \ed^{\iu \phi \Pi_h} \scrP \ed^{-\iu \phi \Pi_h}$.

    Let us now bound the matrix elements of $\partial_{\phi} \scrP_{\phi}^h = \iu [ \Pi_h, \scrP_{\phi}^h ]$. If $x, y \in h$ or $x, y \notin h$ then $(\partial_{\phi} \scrP_{\phi}^h)(x, i; y, j) = 0$. If $x \in h$ and $y \not\in h$ then using the first claim of this lemma we obtain
    \begin{equation}
        \abs{ (\partial_{\phi} \scrP_{\phi}^h)(x, i; y, j) } = \abs{ \scrP_{\phi}(x, i ; y , j) } \leq C \ed^{- \eta \norm{x - y}_1} \leq C \ed^{- \eta ( \abs{x_1 - y_1} + \abs{x_2} + \abs{y_2}  )}.
    \end{equation}
    where for the last inequality we used $x \in h$ and $y \not\in h$ so $\norm{x - y}_1 = \abs{ x_1 - y_1} + \abs{x_2 - y_2}  = \abs{x_1 - y_1} + \abs{x_2} + \abs{y_2}$. The same result is obtained if $x \not\in h$ and $y \in h$, thus establishing the second bound. The last bound of the lemma now follows straightforwardly from the definition $\scrK_{\phi}^h = \iu [ \partial_{\phi} \scrP_{\phi}^h, \scrP_{\phi}^h ]$ and the bounds on the matrix elements of $\partial_{\phi} \scrP_{\phi}^h$ and $\scrP_{\phi}^h$.
\end{proof}

\subsection{Comparison of quasi-adiabatic and spectral generators} \label{subsec:comparison of quasi-adiabatic and spectral generators}

Fix a Fermi energy $\mu \in \Delta$ and an interval $[\phi^{(i)}, \phi^{(f)}]$ such that $\mu$ is in a spectral gap $\Delta' = [\mu - \delta, \mu + \delta]$ of $\scrH_{\phi}$ for all $\phi \in [\phi^{(i)}, \phi^{(f)}]$. Then the Kato generators
\begin{equation}
    \scrG^{(\mu)}_{\phi} := -\iu [ \partial_{\phi} \scrP^{(\mu)}_{\phi}, \scrP_{\phi}^{(\mu)} ]
\end{equation}
are well-defined for all $\phi \in [\phi^{(i)}, \phi^{(f)}]$. We want to show that these generators are close to the generators $\scrK_{\phi} = \Pi_L \scrK_{\phi}^h \Pi_L$. First of all, we show some locality preperties. 

\begin{lemma} \label{lem:spectral locality estimates}
    There are constants $c, \eta > 0$ which depend only on the interval $[\phi^{(i)}, \phi^{(f)}]$ such that
    \begin{align}
        \abs{ \scrP_{\phi}^{(\mu)}(x, i; y, j) } &\leq \, c \ed^{-\xi ( \norm{x - y}_1 )} \label{eq:spectral locality estimates bound 1} \\
        \abs{ \partial_{\phi} \scrP_{\phi}^{(\mu)}(x, i; y, j)  } & \leq \, c \ed^{-\xi ( \norm{x - y}_1 + \dist(x, l_L) )} \label{eq:spectral locality estimates bound 2} \\
        \abs{ (\scrP^{(\mu)}_{\phi} - \scrP^h_{\phi})(x, i ; y, j)  }, \abs{ \partial_{\phi} ( \scrP_{\phi}^{(\mu)} - \scrP^h_{\phi})(x, i; y, j)  } &\leq \, c \ed^{- \xi (\norm{ x - y }_1 + \dist(x, l_R) ) } \label{eq:spectral locality estimates bound 3} \\
        \abs{  \scrG_{\phi}^{(\mu)}(x, i; y, j)  } & \leq \, c \ed^{-\xi ( \norm{x - y}_1 + \dist(x, l_L) )} \label{eq:spectral locality estimates bound 4}
    \end{align}
    for all $x, i, y, j$ and $\phi \in [\phi_1, \phi_2]$.
\end{lemma}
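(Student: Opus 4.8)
\textbf{Proof plan for Lemma~\ref{lem:spectral locality estimates}.}
The strategy is to obtain all four bounds from the combinatorial (Combes--Thomas) locality of resolvents together with contour integral representations of the relevant spectral projections, exploiting that the only place where $\scrH_\phi$ differs from $\scrH_\phi^h$ is near the right horizontal axis $l_R$, and that the $\phi$-derivative of a projection only ``sees'' the flux-carrying region. First I would record the basic resolvent estimate: since $\scrH_\phi$ has matrix elements decaying like $C\ed^{-\eta\norm{x-y}_1}$ uniformly in $\phi$, and since $\mu$ lies in the fixed gap $\Delta'$ for all $\phi\in[\phi^{(i)},\phi^{(f)}]$, a standard Combes--Thomas argument gives constants $c,\xi>0$ (depending only on the interval) such that $\abs{(\scrH_\phi - z)^{-1}(x,i;y,j)}\leq c\,\ed^{-\xi\norm{x-y}_1}$ for all $z$ on a fixed contour $\mathcal C$ encircling $(-\infty,\mu]\cap\spec(\scrH_\phi)$ at distance bounded below by $\delta/2$. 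Writing $\scrP^{(\mu)}_\phi = \frac{1}{2\pi\iu}\oint_{\mathcal C}(z-\scrH_\phi)^{-1}\,\dd z$ and summing the resulting geometric series over intermediate sites yields~\eqref{eq:spectral locality estimates bound 1}; the same representation applied to $\scrP^h_\phi$ works verbatim since $\scrH^h_\phi$ has the same gap and locality.

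For~\eqref{eq:spectral locality estimates bound 3} the key observation is that $\scrD_\phi := \scrH_\phi - \scrH^h_\phi$ is supported on pairs $(x,y)$ both in the left half plane $L$ near the horizontal axis, so $\abs{\scrD_\phi(x,i;y,j)}\leq c\,\ed^{-\xi(\norm{x-y}_1 + \dist(x,l_R))}$ --- wait, more carefully: comparing~\eqref{eq:free fermion flux Hamiltonians} with the upper-half-plane gauge, $\scrH_\phi$ and $\scrH^h_\phi$ agree except in a neighbourhood of $l_R$, so $\scrD_\phi$ is exponentially localized near $l_R$. Then the resolvent identity $(z-\scrH_\phi)^{-1} - (z-\scrH^h_\phi)^{-1} = (z-\scrH_\phi)^{-1}\scrD_\phi (z-\scrH^h_\phi)^{-1}$, inserted into the contour integral for $\scrP^{(\mu)}_\phi - \scrP^h_\phi$, produces a convolution of two exponentially decaying kernels with the $l_R$-localized kernel $\scrD_\phi$ in the middle; carrying out the sum over the two intermediate sites and using $\dist(x,l_R)\leq \norm{x-x'}_1 + \dist(x',l_R)$ gives the stated bound, and the derivative version follows by differentiating the resolvent identity in $\phi$ (the derivative $\partial_\phi\scrD_\phi$ is likewise $l_R$-localized). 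For~\eqref{eq:spectral locality estimates bound 2}, I would write $\partial_\phi\scrP^{(\mu)}_\phi = \partial_\phi(\scrP^{(\mu)}_\phi - \scrP^h_\phi) + \partial_\phi\scrP^h_\phi$; the first term is controlled by~\eqref{eq:spectral locality estimates bound 3} (hence decays in $\dist(x,l_R)$, a fortiori in $\dist(x,l_L)$ is \emph{not} what we want --- here one must instead use that $\partial_\phi\scrP^h_\phi$ is localized near $l_L$ by Lemma~\ref{lem:basic locality estimates}, and that the difference term, being localized near $l_R$, is automatically also acceptable once we note both contributions must be recombined). More precisely: $\partial_\phi\scrP^h_\phi$ satisfies $\abs{\partial_\phi\scrP^h_\phi(x,i;y,j)}\leq c\,\ed^{-\xi(\abs{x_1-y_1}+\abs{x_2}+\abs{y_2})}\leq c\,\ed^{-\xi(\norm{x-y}_1 + \dist(x,l_L))/2}$ up to adjusting constants (since $\abs{x_2}\gtrsim\dist(x,l_L)$ when $x_1\leq 0$, and for $x_1 > 0$ the term vanishes anyway), while the difference contributes decay in $\dist(x,l_R)$; one then checks the sum of these two estimates is dominated by $c\,\ed^{-\xi'(\norm{x-y}_1+\dist(x,l_L))}$ on all of $\Z^2$, possibly after shrinking $\xi$.

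Finally,~\eqref{eq:spectral locality estimates bound 4} follows mechanically from the definition $\scrG^{(\mu)}_\phi = -\iu[\partial_\phi\scrP^{(\mu)}_\phi, \scrP^{(\mu)}_\phi]$: expanding the commutator as a sum over one intermediate site, bounding one factor by~\eqref{eq:spectral locality estimates bound 2} and the other by~\eqref{eq:spectral locality estimates bound 1}, and using the triangle inequalities $\norm{x-y}_1\leq\norm{x-x'}_1+\norm{x'-y}_1$ and $\dist(x,l_L)\leq\norm{x-x'}_1 + \dist(x',l_L)$, the sum over $x'$ again converges and reproduces the claimed decay (with a slightly smaller $\xi$). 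I expect the main obstacle to be bookkeeping rather than conceptual: keeping the two distinct ``defect lines'' $l_L$ (where the quasi-adiabatic generator lives) and $l_R$ (where $\scrH_\phi$ differs from $\scrH^h_\phi$) straight throughout, and making sure that when one decay factor is in terms of $\dist(\cdot,l_R)$ and another in terms of $\dist(\cdot,l_L)$, the geometric relation between these distances and $\norm{x-y}_1$ still lets the convolution sums close with a uniform (in $\phi$) constant. All constants depend only on $[\phi^{(i)},\phi^{(f)}]$ because the gap lower bound $\delta$ and the contour $\mathcal C$ can be chosen once for the whole interval.
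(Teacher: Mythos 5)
Your handling of bounds~\eqref{eq:spectral locality estimates bound 1}, \eqref{eq:spectral locality estimates bound 3}, and~\eqref{eq:spectral locality estimates bound 4} matches the paper's proof: a single contour $\Gamma$ encircling the spectrum below $\mu$ of both $\scrH_\phi$ and $\scrH_\phi^h$ with uniformly positive distance, Combes--Thomas estimates for the resolvents, the Riesz integral, the resolvent identity with the $l_R$-localized difference $\scrH_\phi - \scrH_\phi^h$ for~\eqref{eq:spectral locality estimates bound 3}, and finally~\eqref{eq:spectral locality estimates bound 4} from~\eqref{eq:spectral locality estimates bound 1} and~\eqref{eq:spectral locality estimates bound 2} by a one-site convolution.

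The argument you give for~\eqref{eq:spectral locality estimates bound 2} has a genuine gap, and you seem to sense it when you write ``is not what we want'' and then try to repair things. The decomposition $\partial_\phi\scrP^{(\mu)}_\phi = \partial_\phi(\scrP^{(\mu)}_\phi - \scrP^h_\phi) + \partial_\phi\scrP^h_\phi$ cannot yield~\eqref{eq:spectral locality estimates bound 2}. The first summand decays in $\dist(x,l_R)$, and the second summand decays only in $\dist(x,l) = \min(\dist(x,l_L),\dist(x,l_R))$, \emph{not} in $\dist(x,l_L)$. Your parenthetical justification --- that $\abs{x_2}\gtrsim\dist(x,l_L)$ when $x_1\leq 0$ ``and for $x_1>0$ the term vanishes anyway'' --- is false. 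Indeed $\partial_\phi\scrP^h_\phi = \iu[\Pi_h,\scrP^h_\phi]$, whose $(x,y)$ matrix element is $\iu(\chi_h(x)-\chi_h(y))\,\scrP^h_\phi(x,i;y,j)$; for a nearest-neighbour pair straddling the horizontal axis far out on the right, say $x=(100,1)$, $y=(100,-1)$, this is of order $1$ while $\dist(x,l_L)$ is of order $100$. You are conflating $\partial_\phi\scrP^h_\phi$ with $\partial_\phi\scrH_\phi$: it is the \emph{latter}, not the former, that vanishes for $x_1>0$. Since you only have upper bounds on the two summands, you cannot recover the cancellation between them that must occur near $l_R$, so the best this route gives is decay in $\min(\dist(x,l_L),\dist(x,l_R))$, which is strictly weaker than what is claimed. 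The correct route --- and the paper's --- is to differentiate the Riesz formula for $\scrP^{(\mu)}_\phi$ directly:
\begin{equation}
    \partial_\phi\scrP^{(\mu)}_\phi = \frac{1}{2\pi\iu}\oint_{\Gamma} (\scrH_\phi - z)^{-1}\,(\partial_\phi\scrH_\phi)\,(\scrH_\phi - z)^{-1}\,\dd z,
\end{equation}
and then exploit that $\partial_\phi\scrH_\phi$ is exponentially localized near $l_L$ by construction (its matrix elements vanish unless $x_1,y_1\leq 0$ and the hop crosses the horizontal axis, so $\norm{x-y}_1\geq\dist(x,l_L)$ on the support). Summing over the two intermediate sites with $\dist(x,l_L)\leq\norm{x-z_1}_1 + \dist(z_1,l_L)$ then yields~\eqref{eq:spectral locality estimates bound 2} directly, without reference to $\scrP^h_\phi$.
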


\begin{proof}
    By the gap assumption for $\scrH_{\phi}$ we can find a contour $\Gamma$ in the complex plane the circles the spectra of $\scrH_{\phi}$ and $\scrH_{\phi}^h$ below $\mu$ counterclockwise, and such that $\dist(\Gamma, \sigma(\scrH_{\phi})), \dist(\Gamma, \sigma(\scrH^h_{\phi}))$ are bounded below uniformly in $\phi \in [\phi^{(i)}, \phi^{(f)}]$. Then we have
    \begin{equation}
        \scrP^{(\mu)}_{\phi} = - \frac{1}{2 \pi \iu} \oint_{\Gamma} \dd z \, (\scrH_{\phi} - z)^{-1},
    \end{equation}
    from which we obtain
    \begin{equation}
        \partial_{\phi} \scrP_{\phi}^{(\mu)} = \frac{1}{2 \pi \iu} \oint_{\Gamma} \dd z (\scrH_{\phi} - z)^{-1} (\partial_{\phi} \scrH_{\phi}) (\scrH_{\phi} - z)^{-1}.
    \end{equation}
    The bounds \eqref{eq:spectral locality estimates bound 1} and \eqref{eq:spectral locality estimates bound 2} now follow from the exponential locality of the Hamiltonians $\scrH_{\phi}, \scrH^h_{\phi}$, the Combes-Thomas estimates for the resolvents (see for example Theorem 10.5 of \cite{aizenman2015random}) and the fact that $\partial_{\phi} \scrH_{\phi}$ is exponentially localized near the left half-line $l_L$ by construction. The bounds \eqref{eq:spectral locality estimates bound 3} follow in the same way using a Riesz projector formula for $\scrP_{\phi}^h$ and the fact that $\scrH^{h}_{\phi} - \scrH_{\phi}$ is exponentially localized near the right half-line $l_R$. 
    
    Finally, \eqref{eq:spectral locality estimates bound 4} follows from \eqref{eq:spectral locality estimates bound 1} and \eqref{eq:spectral locality estimates bound 2}, and the definitions.
\end{proof}

\begin{lemma} \label{lem:G and K agree at infinity}
    There are constants $c, \xi > 0$ such that
    \begin{equation} \label{eq:G and K^h}
        \abs{ (\scrG^{(\mu)}_{\phi} - \scrK^h_{\phi})(x, i ; y, j)  } \leq c \, \ed^{- \xi (\norm{ x - y }_1 + \dist(x, l_R) ) }
    \end{equation}
    and
    \begin{equation}
        \abs{ (\scrG_{\phi}^{(\mu)} - \scrK_{\phi})(x, i; y, j) } \leq c  \, \ed^{-\xi (\norm{x - y}_1 + \norm{x}_1)}
    \end{equation}
    for all $x, i, y, j$ and $\phi \in [\phi_1, \phi_2]$.
\end{lemma}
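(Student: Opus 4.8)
The plan is to prove the two bounds in Lemma~\ref{lem:G and K agree at infinity} by tracking where the various generators differ from each other as bounded operators with super-polynomial (here exponential) off-diagonal decay, using the locality estimates already established in Lemmas~\ref{lem:basic locality estimates} and~\ref{lem:spectral locality estimates}. First I would establish~\eqref{eq:G and K^h}. Recall $\scrK^h_\phi = \iu[\partial_\phi \scrP^h_\phi, \scrP^h_\phi]$ and $\scrG^{(\mu)}_\phi = -\iu[\partial_\phi \scrP^{(\mu)}_\phi, \scrP^{(\mu)}_\phi]$; wait — comparing signs, the natural object is $\scrG^{(\mu)}_\phi = -\iu[\partial_\phi\scrP^{(\mu)}_\phi,\scrP^{(\mu)}_\phi]$ versus $\scrK^h_\phi = \iu[\partial_\phi\scrP^h_\phi,\scrP^h_\phi]$, but since both $\scrP^{(\mu)}_\phi$ and $\scrP^h_\phi$ are projections one has $[\partial_\phi\scrP,\scrP]$ antisymmetric under $\scrP\mapsto\scrP^\perp$, so the relevant comparison is between $[\partial_\phi\scrP^{(\mu)}_\phi,\scrP^{(\mu)}_\phi]$ and $[\partial_\phi\scrP^h_\phi,\scrP^h_\phi]$ up to a harmless sign coming from the conventions in~\eqref{eq:Kato generator of U(1) transformations} versus $\scrG$. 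The key input is~\eqref{eq:spectral locality estimates bound 3}: both $\scrP^{(\mu)}_\phi - \scrP^h_\phi$ and $\partial_\phi(\scrP^{(\mu)}_\phi - \scrP^h_\phi)$ have matrix elements bounded by $c\,\ed^{-\xi(\norm{x-y}_1 + \dist(x,l_R))}$. Writing
\begin{align}
    [\partial_\phi \scrP^{(\mu)}_\phi, \scrP^{(\mu)}_\phi] - [\partial_\phi \scrP^h_\phi, \scrP^h_\phi]
    &= [\partial_\phi(\scrP^{(\mu)}_\phi - \scrP^h_\phi), \scrP^{(\mu)}_\phi] + [\partial_\phi \scrP^h_\phi, \scrP^{(\mu)}_\phi - \scrP^h_\phi],
\end{align}
each term is a product of one factor that is exponentially localized near $l_R$ in the sense of~\eqref{eq:spectral locality estimates bound 3} with one factor that has ordinary exponential off-diagonal decay (from~\eqref{eq:basic bound 1}, \eqref{eq:basic bound 2}, \eqref{eq:spectral locality estimates bound 1}, \eqref{eq:spectral locality estimates bound 2}). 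I would then invoke the standard fact that a product of two exponentially-decaying operators is exponentially decaying, with the localization near $l_R$ surviving (this is the elementary "convolution of exponentials" estimate: $\sum_z \ed^{-\xi(\norm{x-z}_1 + \dist(z,l_R))}\ed^{-\xi\norm{z-y}_1}$ is bounded by $c'\,\ed^{-\xi'(\norm{x-y}_1 + \dist(x,l_R))}$ after decreasing $\xi$ slightly, because $\dist(z,l_R) \geq \dist(x,l_R) - \norm{x-z}_1$). This yields~\eqref{eq:G and K^h}.

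For the second bound I would combine~\eqref{eq:G and K^h} with the relation $\scrK_\phi = \Pi_L \scrK^h_\phi \Pi_L$. Decompose
\begin{align}
    \scrG^{(\mu)}_\phi - \scrK_\phi
    &= (\scrG^{(\mu)}_\phi - \Pi_L \scrG^{(\mu)}_\phi \Pi_L) + \Pi_L(\scrG^{(\mu)}_\phi - \scrK^h_\phi)\Pi_L.
\end{align}
The second term is controlled by~\eqref{eq:G and K^h} together with the observation that $\Pi_L$ restricting both indices to the left half-plane forces $x$ to lie in $L$, and combined with $\dist(x,l_R)$ being comparable to $\norm{x}_1$ when $x \in L$ (indeed $\dist(x,l_R) \geq \abs{x_1} $ there, and in fact $\dist(x, l_R) = \norm{x}_1$ for $x$ with $x_1 \le 0$), one gets the stated $\ed^{-\xi(\norm{x-y}_1 + \norm{x}_1)}$ decay after again shrinking $\xi$. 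For the first term, $\scrG^{(\mu)}_\phi - \Pi_L\scrG^{(\mu)}_\phi\Pi_L$ is supported on matrix elements where at least one of $x,y$ lies in $R$; using~\eqref{eq:spectral locality estimates bound 4}, $\abs{\scrG^{(\mu)}_\phi(x,i;y,j)} \le c\,\ed^{-\xi(\norm{x-y}_1 + \dist(x,l_L))}$, and when $x$ or $y$ is forced into $R$ one can trade a portion of the $\norm{x-y}_1$ decay and the $\dist(x,l_L)$ decay for $\norm{x}_1$ decay (again: if $y \in R$ then $\norm{x-y}_1 + \dist(x,l_L) \gtrsim \norm{x}_1$, and symmetrically using that $\scrG^{(\mu)}_\phi$ is self-adjoint so its matrix elements also decay in $\dist(y,l_L)$). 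Collecting terms and taking the smallest of the exponential rates gives the claim.

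I expect the main obstacle — really the only place requiring care rather than bookkeeping — to be organizing the geometric inequalities relating $\dist(x,l_L)$, $\dist(x,l_R)$, $\norm{x-y}_1$ and $\norm{x}_1$ cleanly enough that the exponential rates combine without loss, and handling the self-adjointness symmetry so that the "at least one index in $R$" terms in $\scrG^{(\mu)}_\phi - \Pi_L\scrG^{(\mu)}_\phi\Pi_L$ are all covered. The analytic content (Combes--Thomas, convolution of exponentials, Riesz projector formulas) is entirely contained in the preceding lemmas, so the proof is essentially an exercise in propagating the localization structure through the commutator identities above; I would present it by first stating the convolution lemma for exponentially decaying kernels as a remark, then carrying out the two decompositions displayed above, and finally reading off the rates.
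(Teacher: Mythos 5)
Your proof is correct, and for the first bound it is essentially the paper's argument made explicit (the paper merely says the bound is an "immediate consequence" of the definitions and Lemma~\ref{lem:spectral locality estimates}; your commutator decomposition $[\partial_\phi(\scrP^{(\mu)}_\phi-\scrP^h_\phi),\scrP^{(\mu)}_\phi]+[\partial_\phi\scrP^h_\phi,\scrP^{(\mu)}_\phi-\scrP^h_\phi]$ is exactly what that means, and the convolution-of-exponentials estimate with the shift $\dist(z,l_R)\geq\dist(x,l_R)-\norm{x-z}_1$ is the right way to close it).

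One small remark before the comparison: there is no sign discrepancy to resolve. The paper defines \emph{both} $\scrG^{(\mu)}_\phi=-\iu[\partial_\phi\scrP^{(\mu)}_\phi,\scrP^{(\mu)}_\phi]$ and (in \eqref{eq:Kato generator of U(1) transformations}) $\scrK^h_\phi=-\iu[\partial_\phi\scrP^h_\phi,\scrP^h_\phi]$ with the same $-\iu$. You misremembered $\scrK^h_\phi$ as $+\iu[\cdot,\cdot]$ and then tried to argue the mismatch away via a $\scrP\mapsto\scrP^\perp$ symmetry; that particular argument would not work ($[\partial_\phi\scrP,\scrP]$ is in fact \emph{invariant}, not antisymmetric, under $\scrP\mapsto\I-\scrP$), but since the discrepancy does not exist in the first place this is harmless and does not affect the proof.

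For the second bound your route is genuinely different from the paper's. The paper proves two separate estimates,
\begin{equation}
\abs{(\scrG^{(\mu)}_\phi-\scrK_\phi)(x,i;y,j)}\leq c\,\ed^{-\xi(\norm{x-y}_1+\dist(x,l_L))}
\qquad\text{and}\qquad
\abs{(\scrG^{(\mu)}_\phi-\scrK_\phi)(x,i;y,j)}\leq c\,\ed^{-\xi(\norm{x-y}_1+\dist(x,l_R))},
\end{equation}
the first by bounding $\scrG^{(\mu)}_\phi$ and $\scrK_\phi=\Pi_L\scrK^h_\phi\Pi_L$ separately (each decays near $l_L$) and the second via $(\scrG^{(\mu)}_\phi-\scrK^h_\phi)+(\scrK^h_\phi-\scrK_\phi)$ (each piece decays near $l_R$), then takes the pointwise minimum and uses $\max(\dist(x,l_L),\dist(x,l_R))\gtrsim\norm{x}_1$. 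You instead use a single algebraic decomposition
\begin{equation}
\scrG^{(\mu)}_\phi-\scrK_\phi=(\scrG^{(\mu)}_\phi-\Pi_L\scrG^{(\mu)}_\phi\Pi_L)+\Pi_L(\scrG^{(\mu)}_\phi-\scrK^h_\phi)\Pi_L,
\end{equation}
where the first term has at least one index forced into $R$ so that $\dist(x,l_L)$ from \eqref{eq:spectral locality estimates bound 4} upgrades to $\norm{x}_1$, and the second term has both indices in $L$ so that $\dist(x,l_R)=\norm{x}_1$ and \eqref{eq:G and K^h} applies directly. Both are sound; your version isolates the $\norm{x}_1$ decay term-by-term at the operator level, which is arguably cleaner, while the paper's version avoids inserting $\Pi_L$ into $\scrG^{(\mu)}_\phi$ and simply combines the two one-sided bounds at the end. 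The geometric inequalities you invoke ($\dist(x,l_R)=\norm{x}_1$ for $x\in L$, and $\norm{x-y}_1+\dist(x,l_L)\geq\norm{x}_1$ when $x\in R$ or $y\in R$) are all correct, and the appeal to self-adjointness is available but, as your own $y\in R,\ x\in L$ case shows, not actually needed.
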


\begin{proof}
    The first bound is an immediate consequence of the definitions $\scrG^{(\mu)}_{\phi} = -\iu [ \partial_{\phi} \scrP^{(\mu)}_{\phi}, \scrP^{(\mu)}_{\phi} ]$ and $\scrK^{h}_{\phi} = -\iu [ \partial_{\phi} \scrP^h_{\phi}, \scrP^h_{\phi} ]$ and the bounds provided by Lemma \ref{lem:spectral locality estimates}.

    Now recall that $\scrK_{\phi} = \Pi_L \scrK^h_{\phi} \Pi_L$ and combine this with Lemma \ref{lem:basic locality estimates} to obtain $\abs{\scrK_{\phi}(x, i; y, j)} \leq c' \ed^{-\xi'(\norm{x - y}_1 + \dist(x, l_L))}$ for some $c', \xi' > 0$. Together with the last bound of Lemma \ref{lem:spectral locality estimates} we obtain
    \begin{equation} \label{eq:lemma G and K first bound}
        \abs{ (\scrG_{\phi}^{(\mu)} - \scrK_{\phi})(x, i; y, j) } \leq c'' \ed^{-\xi'' (\norm{x - y}_1 + \dist(x, l_L))}
    \end{equation}
    for some $c'', \xi'' > 0$.

    Similarly, using Lemma \ref{lem:basic locality estimates} we get $\abs{ (\scrK^h_{\phi} - \scrK_{\phi})(x, i; y, j) } \leq c' \ed^{-\xi'( \norm{x - y}_1 + \dist{x, l_R} )}$. Together with the bound \eqref{eq:G and K^h} we obtain
    \begin{equation} \label{eq:lemma G and K second bound}
        \abs{ (\scrG_{\phi}^{(\mu)} - \scrK_{\phi})(x, i; y, j) } \leq c'' \ed^{-\xi'' (\norm{x - y}_1 + \dist(x, l_R))}.
    \end{equation}
    Taking the minimum of \eqref{eq:lemma G and K first bound} and \eqref{eq:lemma G and K second bound} yields the second bound of the Lemma.
\end{proof}

\subsection{Comparison of quasi-adiabatic and spectral evolutions} \label{subsec:comparison of quasi-adiabatic and spectral evolutions}

Remaining in the setting of the previous subsection, we consider a closed interval $[\phi^{(i)}, \phi^{(f)}] \subset \R$ of fluxes where $\mu$ is in a spectral gap $\Delta'$ of $\scrH_{\phi}$. For any $\phi \in [\phi^{(i)}, \phi^{(f)}]$ we let $\scrU^{\qa}_{(\phi, \phi^{(i)})}$ be the unique solution to
\begin{equation}
    \scrU^{\qa}_{(\phi^{(i)}, \phi^{(i)})} = \I, \quad \iu \partial_{\phi} \scrU^{\qa}_{(\phi, \phi^{(i)})} = \scrK_{\phi} \scrU^{\qa}_{(\phi, \phi^{(i)})}.
\end{equation}
Similarly, let $\scrU^{(\mu)}_{(\phi, \phi^{(i)})}$ be the unique solution to
\begin{equation}
    \scrU^{(\mu)}_{(\phi^{(i)}, \phi^{(i)})} = \I, \quad \iu \partial_{\phi} \scrU^{(\mu)}_{(\phi, \phi^{(i)})} = \scrG^{(\mu)}_{\phi} \scrU^{(\mu)}_{(\phi, \phi^{(i)})}.
\end{equation}

Both unitary evolutions are generated by exponentially local generators, so that the following lemma applies to both:
\begin{lemma} \label{lem:locally generated unitaries are local}
    Let $[\phi^{(i)}, \phi^{(f)}] \ni \phi \mapsto \scrA_{\phi}$ be a family of bounded self adjoint operators and suppose there are constants $c, \xi > 0$ such that
    \begin{equation} \label{eq:decay of scrA}
        \abs{ \scrA_{\phi}(x, i; y, j) } \leq c \ed^{- 2\xi \norm{x - y}_1}
    \end{equation}
    for all $\phi \in [\phi^{(i)}, \phi^{(f)}]$. Let $\scrV_{\phi}$ be the unique solution to $\scrV_{\phi^{(i)}} = \I$ and $\iu \partial_{\phi} \scrV_{\phi} = \scrA_{\phi} \scrV_{\phi}$. Then there is a constant $C > 0$ such that
    \begin{equation}
        \abs{ \scrV_{\phi}(x, i; y, j) } \leq C \ed^{- \xi \norm{x - y}_1}.
    \end{equation}
\end{lemma}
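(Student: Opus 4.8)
The plan is to obtain the bound by the standard Combes--Thomas device of conjugating the evolution by an exponential weight, using that we are handed decay rate $2\xi$ while only rate $\xi$ needs to survive the conjugation. Fix sites $x,y\in\Z^2$ and internal indices $i,j$, and let $\ell(z):=v\cdot z$ be the linear functional on $\Z^2$ with $v=(\sgn(x_1-y_1),\sgn(x_2-y_2))\in\{-1,0,1\}^2$, so that $\ell(x)-\ell(y)=\norm{x-y}_1$ while $\abs{\ell(z)-\ell(w)}\le\norm{z-w}_1$ for all $z,w$. Morally one wants to work with $\scrV^{(\ell)}_\phi:=\ed^{\xi\ell}\,\scrV_\phi\,\ed^{-\xi\ell}$, which formally solves $\iu\partial_\phi\scrV^{(\ell)}_\phi=\scrA^{(\ell)}_\phi\scrV^{(\ell)}_\phi$ with $\scrV^{(\ell)}_{\phi^{(i)}}=\I$, where $\scrA^{(\ell)}_\phi:=\ed^{\xi\ell}\,\scrA_\phi\,\ed^{-\xi\ell}$ has matrix elements of modulus at most $c\,\ed^{\xi\norm{z-w}_1}\ed^{-2\xi\norm{z-w}_1}=c\,\ed^{-\xi\norm{z-w}_1}$. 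The bound would then follow from $\abs{\scrV_\phi(x,i;y,j)}=\ed^{-\xi(\ell(x)-\ell(y))}\abs{\scrV^{(\ell)}_\phi(x,i;y,j)}$ together with a norm bound on $\scrV^{(\ell)}_\phi$.

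The genuine issue — and the only real subtlety — is that $\ed^{\xi\ell}$ is unbounded, so I would replace $\ell$ by its truncation $\ell^{(R)}:=\max(-R,\min(R,\ell))$, $R>0$. Truncation to an interval is a contraction, so $\ell^{(R)}$ is still $1$-Lipschitz for $\norm{\cdot}_1$, and the multiplication operator $W_R$ defined by $(W_Rf)(z,k):=\ed^{\xi\ell^{(R)}(z)}f(z,k)$ is now bounded with bounded inverse. Setting $\scrV^{(R)}_\phi:=W_R\scrV_\phi W_R^{-1}$ gives a bona fide bounded operator solving $\iu\partial_\phi\scrV^{(R)}_\phi=\scrA^{(R)}_\phi\scrV^{(R)}_\phi$, $\scrV^{(R)}_{\phi^{(i)}}=\I$, with $\scrA^{(R)}_\phi:=W_R\scrA_\phi W_R^{-1}$. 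Exactly as above — and crucially uniformly in $R$ — one has $\abs{\scrA^{(R)}_\phi(z,k;w,l)}\le c\,\ed^{-\xi\norm{z-w}_1}$, so a Schur test gives $\sup_{R,\,\phi\in[\phi^{(i)},\phi^{(f)}]}\norm{\scrA^{(R)}_\phi}\le C_0$ with $C_0:=2m\,c\sum_{z\in\Z^2}\ed^{-\xi\norm{z}_1}<\infty$, and then Gronwall's inequality (or the Dyson series directly) yields $\norm{\scrV^{(R)}_\phi}\le\ed^{C_0(\phi^{(f)}-\phi^{(i)})}=:C$ for all $R>0$ and all $\phi\in[\phi^{(i)},\phi^{(f)}]$.

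Finally, since $\scrV_\phi=W_R^{-1}\scrV^{(R)}_\phi W_R$, I get
\[
\abs{\scrV_\phi(x,i;y,j)}=\ed^{-\xi(\ell^{(R)}(x)-\ell^{(R)}(y))}\,\abs{\scrV^{(R)}_\phi(x,i;y,j)}\le C\,\ed^{-\xi(\ell^{(R)}(x)-\ell^{(R)}(y))},
\]
and letting $R\to\infty$ at fixed $x,y$ — the truncation is inactive at $x$ and $y$ once $R\ge\max\{\abs{\ell(x)},\abs{\ell(y)}\}$ — gives $\ell^{(R)}(x)-\ell^{(R)}(y)\to\ell(x)-\ell(y)=\norm{x-y}_1$, hence $\abs{\scrV_\phi(x,i;y,j)}\le C\,\ed^{-\xi\norm{x-y}_1}$ with $C$ independent of $x,y,i,j$ and $\phi$. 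As $x,y,i,j$ were arbitrary, this is the assertion. (Implicit here is that $\phi\mapsto\scrA_\phi$ is norm continuous — as it is in all applications of the lemma — so that $\scrV_\phi$ is the norm-differentiable Dyson solution and the conjugated equations are well posed; the main obstacle is genuinely just the bookkeeping around the unbounded weight, everything else being the routine Combes--Thomas / Schur-test estimate.)
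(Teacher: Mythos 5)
Your proof is correct, and it takes a genuinely different route from the paper's. The paper bounds the Dyson series term by term: it introduces
$a_n(x,y):=\sup\bigl|\bigl(\scrA_{\phi_1}\cdots\scrA_{\phi_n}\bigr)(x,i;y,j)\bigr|$
and shows inductively that $a_n(x,y)\le c^n\kappa^{n-1}\ed^{-\xi\norm{x-y}_1}$ with $\kappa=2m\sup_x\sum_z\ed^{-\xi\norm{x-z}_1}$, the inductive step being exactly the triangle-inequality trick that trades a decay rate of $2\xi$ on one hop for a surviving rate $\xi$ plus an $\ell^1$-summable factor; the final bound comes from summing $\sum_n\frac{|\phi|^n}{n!}a_n(x,y)$. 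You instead package that same triangle-inequality idea once and for all as a Combes--Thomas conjugation: truncate the linear weight $\ell$ so that $W_R=\ed^{\xi\ell^{(R)}}$ is bounded with bounded inverse, observe that the conjugated generator $\scrA^{(R)}_\phi$ has decay rate $\xi$ uniformly in $R$, get a uniform operator-norm bound on $\scrV^{(R)}_\phi=W_R\scrV_\phi W_R^{-1}$ by Schur plus Gronwall, and read the matrix element off the conjugation identity once $R$ exceeds $\max\{|\ell(x)|,|\ell(y)|\}$. Both arguments produce comparable explicit constants (indeed $C_0=c\kappa$ in the paper's notation, so your $C=\ed^{C_0(\phi^{(f)}-\phi^{(i)})}$ matches the paper's geometric series bound up to an inessential factor); the paper's version is more elementary and self-contained, while yours exposes the mechanism as a standard weight-conjugation estimate and avoids the $a_n$ bookkeeping. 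Your parenthetical remark about norm-continuity of $\phi\mapsto\scrA_\phi$ is justified---the lemma is invoked for $\scrG_\phi^{(\mu)}$ and $\scrK_\phi$, both of which are norm-continuous in $\phi$ on the relevant closed interval.
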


\begin{proof}
    Define
    \begin{equation}
        a_n( x, y ) := \sup_{\substack{\vec \phi \in [\phi^{(i)}, \phi^{(f)}]^n \\ i, j \in \{1, \cdots, 2m \}}} \, \abs{ ( \scrA_{\phi_1} \cdots \scrA_{\phi_n} )( x, i ; y, j )   }. \label{eq:an defined}
    \end{equation}
    for all $n \geq 1$. We show by induction that
    \begin{equation}
        a_n(x, y) \leq c^n \kappa^{n-1} \ed^{- \xi \norm{x - y}_1}
    \end{equation}
    where
    \begin{equation}
        \kappa = 2m \, \sup_{x \in \Z^2} \, \sum_{z \in \Z^2} \ed^{- \xi \norm{x - z}_1}.
    \end{equation}
    By the assumption \eqref{eq:decay of scrA} the claimed bound holds true for $n = 1$. Proceeding by induction,
    \begin{align}
        a_{n+1}(x, y) &\leq \sup_{\phi \in [\phi_1, \phi_2]} \, \sup_{i \in \{1, \cdots, 2m\}} \, \sum_{z, i'} \, \abs{ \scrA_{\phi}(x, i; z, i') } \times a_n(z, y) \\
                        &\leq c^{n+1} \kappa^{n-1} \sum_{z, i'}  \ed^{ -2\xi \norm{x - z}_1 } \times \ed^{- \xi \norm{z - y}_1} \leq c^{n+1} \kappa^n \ed^{-\xi \norm{x - y}_1}
    \end{align}
    as required.

    The bound on the matrix elements of $\scrV_{\phi}$ now follows from
    \begin{equation}
        \abs{ (\scrV_{\phi} - \I)(x, i; y, j) } \leq \sum_{n = 1}^{\infty} \frac{\abs{\phi}^n}{n!} a_n(x, y).
    \end{equation}
\end{proof}

\begin{lemma} \label{lem:qa and spectral evolutions are close}
    The difference $\scrU^{\qa}_{(\phi, \phi^{(i)})} - \scrU^{(\mu)}_{(\phi, \phi^{(i)})}$ is trace class for all $\phi \in [\phi^{(i)}, \phi^{(f)}]$.
\end{lemma}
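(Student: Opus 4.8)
\textbf{Plan for the proof of Lemma~\ref{lem:qa and spectral evolutions are close}.}

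The idea is the standard Duhamel (interpolation) comparison of two unitary evolutions, combined with the fact that the \emph{difference} of the two generators, $\scrG^{(\mu)}_{\phi}-\scrK_{\phi}$, is exponentially localised near the origin by Lemma~\ref{lem:G and K agree at infinity}. First I would write
\begin{equation}
    \scrU^{\qa}_{(\phi,\phi^{(i)})} - \scrU^{(\mu)}_{(\phi,\phi^{(i)})} = -\iu \int_{\phi^{(i)}}^{\phi} \dd \phi' \, \scrU^{\qa}_{(\phi,\phi')} \big( \scrK_{\phi'} - \scrG^{(\mu)}_{\phi'} \big) \scrU^{(\mu)}_{(\phi',\phi^{(i)})},
\end{equation}
which one verifies by differentiating in $\phi$ (the boundary terms vanish at $\phi=\phi^{(i)}$) and using the two defining ODEs. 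Since $\scrU^{\qa}_{(\phi,\phi')}$ and $\scrU^{(\mu)}_{(\phi',\phi^{(i)})}$ are bounded (unitary), it suffices to show that $\scrK_{\phi'} - \scrG^{(\mu)}_{\phi'}$ is trace class with trace norm bounded uniformly in $\phi' \in [\phi^{(i)},\phi^{(f)}]$; then the integrand is trace class with uniformly bounded trace norm, and the integral over a compact interval is trace class.

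So the core of the argument is the bound $\scrK_{\phi'}-\scrG^{(\mu)}_{\phi'}\in\caL^1(\scrK_m)$. By the second estimate of Lemma~\ref{lem:G and K agree at infinity} we have
\begin{equation}
    \big| (\scrG_{\phi'}^{(\mu)} - \scrK_{\phi'})(x, i; y, j) \big| \leq c \, \ed^{-\xi(\norm{x - y}_1 + \norm{x}_1)}
\end{equation}
for all $x,y\in\Z^2$, all $i,j$, and all $\phi'\in[\phi^{(i)},\phi^{(f)}]$. The exponential decay in $\norm{x-y}_1$ alone already gives boundedness of the operator norm (via the Schur test), but the additional factor $\ed^{-\xi\norm{x}_1}$ makes the operator trace class: one can bound the trace norm, e.g., by the Hilbert–Schmidt norm of $\ed^{\xi\norm{x}_1/2}(\scrG^{(\mu)}_{\phi'}-\scrK_{\phi'})$ times that of $\ed^{-\xi\norm{x}_1/2}$, or more simply estimate $\|\scrG^{(\mu)}_{\phi'}-\scrK_{\phi'}\|_1$ directly by
\begin{equation}
    \sum_{x, i} \sum_{y, j} \big| (\scrG^{(\mu)}_{\phi'} - \scrK_{\phi'})(x, i; y, j) \big| \leq (2m)^2 c \sum_{x} \ed^{-\xi \norm{x}_1} \sum_{w} \ed^{-\xi \norm{w}_1} < \infty,
\end{equation}
which is a uniform-in-$\phi'$ bound; recall that for an operator with matrix elements $A(x,i;y,j)$ the trace norm is controlled by the $\ell^1$ norm of the matrix entries. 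Assembling these pieces gives that the Duhamel integrand is trace class with trace norm bounded by a constant times the above finite sum times $\|\scrU^{\qa}\|\,\|\scrU^{(\mu)}\| = 1$, uniformly in $\phi'$, so integrating over $[\phi^{(i)},\phi]$ yields a trace-class operator.

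The only subtlety — and the part deserving a careful word rather than a one-line dismissal — is the interchange of the integral with the trace-class considerations: one should note that $\phi'\mapsto \scrU^{\qa}_{(\phi,\phi')}(\scrK_{\phi'}-\scrG^{(\mu)}_{\phi'})\scrU^{(\mu)}_{(\phi',\phi^{(i)})}$ is continuous in $\phi'$ in trace norm (the unitaries are norm-continuous by their ODEs, and $\phi'\mapsto\scrK_{\phi'}-\scrG^{(\mu)}_{\phi'}$ is trace-norm continuous since the matrix elements are continuous with a uniform summable dominating bound), so the Bochner integral in $\caL^1$ makes sense and coincides with the strong integral defining the difference of the unitaries. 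I expect no real obstacle here: the whole lemma is a soft consequence of the locality estimates already established, with the genuinely new input being the factor $\ed^{-\xi\norm{x}_1}$ in Lemma~\ref{lem:G and K agree at infinity} that upgrades boundedness to trace-class.
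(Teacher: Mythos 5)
Your proof is correct, and it differs from the paper's in a way worth noting. The paper proves the same Duhamel identity (phrased for $(\scrU^{(\mu)})^*\scrU^{\qa}-\I$, which is the same up to a unitary factor), but then propagates the pointwise matrix-element bound through the integrand: it first invokes Lemma~\ref{lem:locally generated unitaries are local} to show the unitaries $\scrU^{(\mu)}$ and $\scrU^{\qa}$ are themselves exponentially local, multiplies the three matrices to get an estimate of the form $\abs{\cdot(x,i;y,j)}\leq c\,\ed^{-\xi(\norm{x-y}_1+\norm{x}_1)}$ for the full integrand, and only then invokes a standard lemma (Lemma~1 of Aizenman--Graf) to conclude trace class. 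You instead conclude trace class of $\scrG^{(\mu)}_{\phi'}-\scrK_{\phi'}$ directly from the $\ell^1$ bound on its matrix entries (the inequality $\Vert A\Vert_1\leq\sum\abs{A(x,i;y,j)}$, which follows by expanding $A$ in rank-one pieces and using the triangle inequality in $\caL^1$), and then use only the operator-norm ideal property $\Vert UBV\Vert_1\leq\Vert U\Vert\,\Vert B\Vert_1\,\Vert V\Vert$ with $\Vert U\Vert=\Vert V\Vert=1$. This buys you a genuinely simpler argument: you never need the exponential locality of the unitaries, so Lemma~\ref{lem:locally generated unitaries are local} becomes unnecessary for this result; the cost is that you do not obtain the refined pointwise decay of the matrix elements of $\scrU^{\qa}-\scrU^{(\mu)}$, which the paper's route does produce (though it is not used elsewhere). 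Your treatment of the integral as a Bochner integral in $\caL^1$ via dominated convergence on the matrix entries is the right way to make the exchange rigorous, and your alternative Hilbert--Schmidt factorisation through $\ed^{\pm\xi\norm{x}_1/2}$ is equally valid.
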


\begin{proof}
    It is sufficient to show that $(\scrU^{(\mu)}_{(\phi, \phi^{(i)})})^* \scrU^{\qa}_{(\phi, \phi^{(i)})} - \I$ is trace class. We have
    \begin{equation}
        (\scrU^{(\mu)}_{(\phi, \phi^{(i)})})^* \scrU^{\qa}_{(\phi, \phi^{(i)})} - \I = \iu \int_{\phi^{(i)}}^{\phi} \dd \phi' ( \scrU^{(\mu)}_{(\phi', \phi^{(i)})} )^*\, ( \scrG^{(\mu)}_{\phi'} - \scrK_{\phi'} ) \, \scrU^{\qa}_{(\phi', \phi^{(i)})}.
    \end{equation}
    By Lemma \ref{lem:locally generated unitaries are local} and the exponential locality of $\scrG_{\phi}^{(\mu)}$ and $\scrK_{\phi}$, the unitaries $\scrU^{(\mu)}_{(\phi', \phi^{(i)})}$ and $\scrU^{\qa}_{(\phi', \phi^{(i)})}$ are exponentially local. Combining this with the bound on $\scrG^{(\mu)}_{\phi} - \scrK_{\phi}$ from Lemma \ref{lem:G and K agree at infinity} we find that there are constants $c, \xi > 0$ such that
    \begin{equation}
        \abs{ \big(  (\scrU^{(\mu)}_{(\phi, \phi^{(i)})})^* \scrU^{\qa}_{(\phi, \phi^{(i)})} - \I \big)(x, i; y, j)} \leq c \ed^{-\xi ( \norm{x - y}_1 + \norm{x}_1 )}.
    \end{equation}
    The trace class claim now follows from Lemma~1 of~\cite{aizenman1998localization}.
\end{proof}

\begin{proposition} \label{prop:spectral - qa is trace class}
    For all $\phi \in [\phi^{(i)}, \phi^{(f)}]$, the difference of projections $$\scrU^{\qa}_{(\phi, \phi^{(i)})} \scrP_{\phi_1}^{(\mu)} \scrU^{\qa}_{(\phi, \phi^{(i)})} - \scrP^{(\mu)}_{\phi}$$ is trace class, and the trace vanishes.
\end{proposition}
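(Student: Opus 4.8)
The key point is that $\scrU^{\qa}_{(\phi,\phi^{(i)})}$ implements the \emph{spectral} flow up to a trace-class correction. Concretely, I would first record that $\scrU^{(\mu)}_{(\phi,\phi^{(i)})} \scrP^{(\mu)}_{\phi^{(i)}} (\scrU^{(\mu)}_{(\phi,\phi^{(i)})})^* = \scrP^{(\mu)}_{\phi}$ \emph{exactly} for all $\phi \in [\phi^{(i)}, \phi^{(f)}]$; this is the defining property of the Kato parallel transport generated by $\scrG^{(\mu)}_{\phi} = -\iu[\partial_\phi \scrP^{(\mu)}_\phi, \scrP^{(\mu)}_\phi]$, valid here because $\mu$ stays in a spectral gap of $\scrH_\phi$ on the whole interval, so $\phi \mapsto \scrP^{(\mu)}_\phi$ is norm-differentiable (via the Riesz contour $\Gamma$ as in the proof of Lemma~\ref{lem:spectral locality estimates}). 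Thus $\scrP^{(\mu)}_\phi$ is \emph{unitarily conjugate} to $\scrP^{(\mu)}_{\phi^{(i)}}$, not merely trace-class-close.

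\textbf{Main step.} Now write
\begin{align}
    \scrU^{\qa}_{(\phi, \phi^{(i)})} \scrP^{(\mu)}_{\phi^{(i)}} (\scrU^{\qa}_{(\phi, \phi^{(i)})})^* - \scrP^{(\mu)}_{\phi}
    &= \scrU^{\qa}_{(\phi, \phi^{(i)})} \scrP^{(\mu)}_{\phi^{(i)}} (\scrU^{\qa}_{(\phi, \phi^{(i)})})^*
       - \scrU^{(\mu)}_{(\phi, \phi^{(i)})} \scrP^{(\mu)}_{\phi^{(i)}} (\scrU^{(\mu)}_{(\phi, \phi^{(i)})})^*.
\end{align}
Setting $\scrD := \scrU^{\qa}_{(\phi, \phi^{(i)})} - \scrU^{(\mu)}_{(\phi, \phi^{(i)})}$, which is trace class by Lemma~\ref{lem:qa and spectral evolutions are close}, this difference equals $\scrD\, \scrP^{(\mu)}_{\phi^{(i)}} (\scrU^{\qa})^* + \scrU^{(\mu)} \scrP^{(\mu)}_{\phi^{(i)}} \scrD^*$, a sum of products with a trace-class factor and bounded factors, hence trace class. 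This establishes the trace-class claim. For the vanishing of the trace, I would argue as follows: the two unitaries agree at $\phi = \phi^{(i)}$, so the trace of the difference is zero there; I then show $\phi \mapsto \Tr\{ \scrU^{\qa} \scrP^{(\mu)}_{\phi^{(i)}} (\scrU^{\qa})^* - \scrP^{(\mu)}_\phi \}$ is constant. Differentiating in $\phi$ and using cyclicity of the trace on the trace-class pieces, together with $\partial_\phi \scrP^{(\mu)}_\phi = \iu[\scrG^{(\mu)}_\phi, \scrP^{(\mu)}_\phi]$ and the analogous identity $\partial_\phi(\scrU^{\qa}\scrP\scrU^{\qa*}) = \iu[\scrK_\phi, \scrU^{\qa}\scrP\scrU^{\qa*}]$, the derivative is $\iu\Tr\{[\scrK_\phi - \scrG^{(\mu)}_\phi,\, \text{(trace-class)}]\} + \iu\Tr\{[\scrK_\phi, \scrP^{(\mu)}_\phi] - [\scrG^{(\mu)}_\phi, \scrP^{(\mu)}_\phi]\}$... this requires care since $[\scrK_\phi, \scrP^{(\mu)}_\phi]$ alone need not be trace class. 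A cleaner route: write the difference as $\scrD \scrP^{(\mu)}_{\phi^{(i)}}(\scrU^{\qa})^* + \scrU^{(\mu)}\scrP^{(\mu)}_{\phi^{(i)}}\scrD^*$ and compute its trace as $\Tr\{\scrP^{(\mu)}_{\phi^{(i)}}(\scrU^{\qa})^*\scrD\} + \Tr\{\scrD^* \scrU^{(\mu)}\scrP^{(\mu)}_{\phi^{(i)}}\} = \Tr\{\scrP^{(\mu)}_{\phi^{(i)}}(\scrU^{\qa})^*\scrD\} + \overline{\Tr\{\scrP^{(\mu)}_{\phi^{(i)}}(\scrU^{(\mu)})^*\scrD\}}$; but $(\scrU^{\qa})^*\scrD = (\scrU^{\qa})^*\scrU^{\qa} - (\scrU^{\qa})^*\scrU^{(\mu)} = \I - (\scrU^{\qa})^*\scrU^{(\mu)}$, which is trace class with trace equal to $-\Tr\{(\scrU^{\qa})^*\scrU^{(\mu)} - \I\}$; and similarly $(\scrU^{(\mu)})^*\scrD = (\scrU^{(\mu)})^*\scrU^{\qa} - \I$. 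So the total trace becomes $\Tr\{\scrP^{(\mu)}_{\phi^{(i)}}[(\I - V) + \overline{(W - \I)}]\}$ where $V = (\scrU^{\qa})^*\scrU^{(\mu)}$ and $W = (\scrU^{(\mu)})^*\scrU^{\qa} = V^*$; since $\overline{\Tr\{\scrP(V^* - \I)\}} = \Tr\{\scrP(V - \I)\}$ (as $\scrP$ is self-adjoint, trace-class-truncated), the two contributions cancel.

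\textbf{Main obstacle.} The delicate point is making the trace manipulations rigorous: several intermediate operators such as $[\scrK_\phi,\scrP^{(\mu)}_\phi]$ are bounded but not individually trace class, so one must never split a trace-class quantity into non-trace-class summands. The safe strategy is to keep everything expressed through the trace-class operator $\scrD$ (Lemma~\ref{lem:qa and spectral evolutions are close}) and its relatives $(\scrU^{\qa})^*\scrU^{(\mu)} - \I$, use only cyclicity of the trace applied to a product with at least one trace-class factor, and exploit the conjugation relation $\scrD^* = (\scrU^{\qa})^* - (\scrU^{(\mu)})^*$ to recognize the complex-conjugate cancellation above. An alternative, perhaps the cleanest for write-up, is the derivative argument but organized so that one differentiates $\Tr\{(\scrU^{(\mu)}_{(\phi,\phi^{(i)})})^* \scrU^{\qa}_{(\phi,\phi^{(i)})}\, \scrP^{(\mu)}_{\phi^{(i)}}\} - \Tr\{\scrP^{(\mu)}_{\phi^{(i)}}\}$ instead, i.e. transport everything back to $\phi^{(i)}$ first using the exact conjugation by $\scrU^{(\mu)}$; then the object is $\Tr\{(V - \I)\scrP^{(\mu)}_{\phi^{(i)}}\}$ with $V = (\scrU^{(\mu)})^*\scrU^{\qa}$ solving $\iu\partial_\phi V = (\scrU^{(\mu)})^*(\scrK_\phi - \scrG^{(\mu)}_\phi)\scrU^{\qa}$ whose right side is trace class by Lemma~\ref{lem:G and K agree at infinity}, and one checks $\partial_\phi \Tr\{(V-\I)\scrP^{(\mu)}_{\phi^{(i)}}\} = \iu\Tr\{(\scrU^{(\mu)})^*(\scrK_\phi - \scrG^{(\mu)}_\phi)\scrU^{\qa}\scrP^{(\mu)}_{\phi^{(i)}}\}$, and the vanishing of this at the endpoint $\phi = \phi^{(i)}$... actually this is not obviously zero for intermediate $\phi$. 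So in the final write-up I would commit to the conjugate-cancellation computation of the previous paragraph, which is a genuine identity rather than an integration argument, and is self-contained given the cited lemmas.
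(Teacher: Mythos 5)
Your argument is correct. For the trace-class claim you proceed exactly as the paper does: use the exact Kato conjugation $\scrP^{(\mu)}_{\phi} = \scrU^{(\mu)}_{(\phi,\phi^{(i)})}\scrP^{(\mu)}_{\phi^{(i)}}(\scrU^{(\mu)}_{(\phi,\phi^{(i)})})^*$ together with Lemma~\ref{lem:qa and spectral evolutions are close} to reduce everything to the trace-class operator $\scrD = \scrU^{\qa}-\scrU^{(\mu)}$. For the vanishing of the trace, however, you take a genuinely different route from the paper. The paper's proof invokes Theorem~5.2 of Avron--Seiler--Simon, which identifies the trace of the difference of two projections with the Fredholm index of $\scrP^{(\mu)}_{\phi^{(i)}}V\scrP^{(\mu)}_{\phi^{(i)}}$ on $\Ran\scrP^{(\mu)}_{\phi^{(i)}}$, where $V = (\scrU^{(\mu)})^*\scrU^{\qa}$, and then uses that a compact perturbation of the identity has vanishing index. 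You instead carry out a direct algebraic cancellation: writing the difference as $\scrD\scrP(\scrU^{\qa})^* + \scrU^{(\mu)}\scrP\scrD^*$ with $\scrP = \scrP^{(\mu)}_{\phi^{(i)}}$, using cyclicity on each trace-class piece to reduce to $\Tr\{\scrP(\I-V)\} + \Tr\{\scrP(V-\I)\} = 0$ (after taking the adjoint inside the second trace and recycling). This is equivalent in content — it is precisely a hands-on verification that the index vanishes for a unitary conjugation by something trace-class-close to $\I$ — but it is more elementary and self-contained, needing nothing beyond cyclicity of the trace and $\overline{\Tr A} = \Tr A^*$; the paper's route is more conceptual and ties the statement into the Fredholm-index machinery used elsewhere. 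One small remark on your write-up: you should state cleanly that you discard the derivative/integration approach (whose pitfalls you correctly identified) and present only the conjugate-cancellation identity, and make the appeal to cyclicity explicit at the step $\Tr\{(V-\I)\scrP\} = \Tr\{\scrP(V-\I)\}$, which is legitimate because $V - \I$ is trace class.
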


\begin{proof}
    The fact that this difference is trace class follows immediately from $$\scrP_{\phi}^{(\mu)} = \scrU^{(\mu)}_{(\phi, \phi^{(i)})} \scrP_{\phi^{(i)}}^{(\mu)} (\scrU^{(\mu)}_{(\phi, \phi^{(i)})})^*$$ and Lemma \ref{lem:qa and spectral evolutions are close}. The trace of the difference is the index of the pair of projections, which is also the Fredholm index of the Fredholm operator $\scrP_{\phi^{(i)}}^{(\mu)}  (\scrU^{(\mu)}_{(\phi, \phi^{(i)})})^* \scrU^{\qa}_{(\phi, \phi^{(i)})}  \scrP_{\phi^{(i)}}^{(\mu)}$ seen as an operator on $\Ran \scrP_{\phi^{(i)}}^{(\mu)}$ (Theorem 5.2 of \cite{AvronSeilerSimon}). Since the index of a Fredholm operator which is a compact perturbation of the identity vanishes, the vanishing of the trace follows again from Lemma \ref{lem:qa and spectral evolutions are close}.
\end{proof}

\bibliographystyle{abbrvArXiv}
\bibliography{bib}


\end{document}